\tikzset{
  every picture/.style={very thick},
}
\newcommand\DLN{Datalog$^\neg$\xspace}
\newcommand{\dng}[1]{{{\sim}#1}}
\newcommand{\gr}[1]{\mathsf{gr}({#1})}
\newcommand{\adg}[1]{\mathsf{adg}({#1})}
\newcommand{\head}[1]{\mathsf{h}({#1})}
\newcommand{\body}[1]{\mathsf{b}({#1})}
\newcommand{\bodyf}[1]{\mathsf{bf}({#1})}
\newcommand{\pbody}[1]{\mathsf{b}^+({#1})}
\newcommand{\nbody}[1]{\mathsf{b}^-({#1})}
\newcommand{\comp}[1]{\mathsf{comp}({#1})}
\newcommand{\lfp}[1]{\mathsf{lfp}({#1})}
\newcommand{\tgsp}[1]{\mathsf{tg}_{sp}({#1})}
\newcommand{\tgst}[1]{\mathsf{tg}_{st}({#1})}
\newcommand{\rhs}[1]{\mathsf{rhs}_{P}({#1})}
\newcommand{\hb}[1]{\mathsf{HB}_{#1}}
\newcommand{\cset}[1]{\mathcal{S}({#1})}
\newcommand{\Top}[1]{T_{P}({#1})}
\newcommand{\Fop}[1]{F_{P}({#1})}
\newcommand{\twod}[1]{\mathbb{B}}
\newcommand{\threed}[1]{\mathbb{B}_{\star}}
\newcommand{\tval}[0]{1}
\newcommand{\fval}[0]{0}
\newcommand{\uval}[0]{\star}
\newcommand{\uniCovStTS}[1]{\langle #1\rangle_{P}^{st}}
\newcommand{\uniCovSuTS}[1]{\langle #1\rangle_{P}^{sp}}
\newcommand{\pname}[0]{\DLN program\xspace}
\newcommand{\pnames}[0]{\DLN programs\xspace}
\newcommand{\upname}[0]{uni-rule \DLN program\xspace}
\newcommand{\upnames}[0]{uni-rule \DLN programs\xspace}
\newcommand{\anbn}[0]{AND-NOT BN\xspace}
\newcommand{\anbns}[0]{AND-NOT BNs\xspace}
\newcommand{\delt}[0]{delocalizing triple\xspace}
\newcommand{\deltfree}[0]{delocalizing-triple-free\xspace}
\newcommand{\mins}[0]{\mathsf{min}_{\leq_s}}
\newcommand{\var}[1]{\mathsf{var}_{{#1}}}
\newcommand{\ig}[1]{\mathsf{G}({#1})}
\newcommand{\syng}[1]{\mathsf{SG}({#1})}
\newcommand{\astg}[1]{\mathsf{astg}({#1})}
\newcommand{\sstg}[1]{\mathsf{sstg}({#1})}
\newcommand{\INneg}[2]{\mathsf{IN}^{-}({#1}, {#2})}
\newcommand{\INpos}[2]{\mathsf{IN}^{+}({#1}, {#2})}
\newcommand{\INall}[2]{\mathsf{IN}({#1}, {#2})}
\newcommand{\sync}[1]{\mathsf{SynC}({#1})}
\newcommand{\onesynper}[1]{\mathcal{SP}({#1})}
\newcommand{\synper}[1]{\mathcal{SP}^{\omega}({#1})}
\newcommand{\translfp}[2]{\Pi_{{#1}}{#2}}
\newcommand{\acnlp}[1]{\text{NLP}}
\newcommand{\acdg}[1]{\text{DG}}
\newcommand{\acstpm}[1]{\text{StPM}}
\newcommand{\acsupm}[1]{\text{SuPM}}
\newcommand{\acstm}[1]{\text{StM}}
\newcommand{\acsum}[1]{\text{SuM}}
\newcommand{\acregm}[1]{\text{RegM}}
\newcommand{\acbn}[0]{\text{BN}\xspace}
\newcommand{\acbns}[0]{\text{BNs}\xspace}
\newcommand{\acstg}[0]{\text{STG}\xspace}
\newcommand{\nlp}[0]{\text{NLP}\xspace}
\newcommand{\nlps}[0]{\text{NLPs}\xspace}
\newcommand{\ifftext}{\text{iff}\xspace}
\newcommand{\wrttext}{\text{w.r.t.}\xspace}
\newcommand{\unitext}{\text{uniqueness}\xspace}
\newcommand{\unitextu}{\text{Uniqueness}\xspace}
\newtheorem{theorem}{Theorem}[section]
\newtheorem{lemma}{Lemma}[section]
\newtheorem{corollary}{Corollary}[section]
\newtheorem{proposition}{Proposition}[section]
\newtheorem{definition}{Definition}[section]
\newtheorem{example}{Example}[section]
\newtheorem{counter-example}{Counter-example}[section]
\newtheorem{remark}{Remark}[section]
\newtheorem{conjecture}{Conjecture}[section]
\begin{document}

\lefttitle{Trinh, Benhamou, Soliman and Fages}

\jnlPage{}{xx}

\title{On the Boolean Network Theory of \DLN}

\begin{authgrp}
	\author{\sn{Van-Giang} \gn{Trinh}}
	\affiliation{Inria Saclay, EP Lifeware, Palaiseau, France}
	\author{\sn{Belaid} \gn{Benhamou}}
	\affiliation{LIRICA team, LIS, Aix-Marseille University, Marseille, France}
	\author{\sn{Sylvain} \gn{Soliman}}
	\affiliation{Inria Saclay, EP Lifeware, Palaiseau, France}
	\author{\sn{Fran\c{c}ois} \gn{Fages}}
	\affiliation{Inria Saclay, EP Lifeware, Palaiseau, France}
\end{authgrp}


\maketitle

\begin{abstract}
Datalog$^\neg$ is a central formalism used in a variety of domains ranging from deductive databases and abstract argumentation frameworks to answer set programming. Its model theory is the finite counterpart of the logical semantics developed for normal logic programs, mainly based on the notions of Clark's completion and two-valued or three-valued canonical models including supported, stable, regular and well-founded models. In this paper we establish a formal link between Datalog$^\neg$ and Boolean network theory first introduced for gene regulatory networks. We show that in the absence of odd cycles in a Datalog$^\neg$ program, the regular models coincide with the stable models, which entails the existence of stable models, and in the absence of even cycles, we prove the uniqueness of stable partial models and regular models. This connection also gives new upper bounds on the numbers of stable partial, regular, and stable models of a Datalog$^\neg$ program using the cardinality of a feedback vertex set in its atom dependency graph. Interestingly, our connection to Boolean network theory also points us to the notion of trap spaces. In particular we show the equivalence between subset-minimal stable trap spaces and regular models.
\end{abstract}

\begin{keywords}
	deductive database, Datalog, model-theoretic semantics, dynamics, Boolean network, trap space, attractor, model counting, feedback vertex set
\end{keywords}



\section{Introduction}\label{sec:introduction}

\emph{\DLN} is a central formalism of non-monotonic logic  that plays a vital role in a variety of computational domains, including deductive databases, answer set programming, and abstract argumentation frameworks~\citep{CGT1990,ELS1997,SS1997,Niemela1999,WCG2009,AFGL2012,JN2016,CS2017}.
As a syntactic restriction of normal logic programs to function-free rules and finite Herbrand universes, \DLN benefits from favorable computational properties while retaining expressive non-monotonic logical semantics~\citep{CGT1990,Sato1990,GV1994,Niemela1999,BFG2002}.

The model theory of \DLN is closely tied to the well-established logical semantics developed for normal logic programs~\citep{YY1995}.
These include two-valued and three-valued semantics such as \emph{supported models}, \emph{stable models}, \emph{regular models}, and the \emph{well-founded models}, which are typically defined in terms of Clark's completion and its associated fixpoint or model-theoretic characterizations~\citep{Clark1977,Lloyd1984,Przymusinski1990,P1994,YY1994,SZ1997,ELS1997}.
In addition, some semantics such as the stable, or supported class semantics have been proposed to represent the dynamical behavior of a normal logic program~\citep{BS1992,IS2012}.
Each of these semantics provides different insights into the meaning of a logic program, and understanding their relationships is crucial for both theoretical and practical applications.

In this paper\footnote{This paper is a significantly extended and revised version of the paper presented at ICLP 2024~\citep{TBSF24} which derived first results from Boolean network theory to propositional logic programs. Here we adopt the first-order setting of \DLN programs and present some results for normal logic programs.}, we investigate the semantics of \pnames through a new and perhaps unexpected lens: the theory of \emph{Boolean networks}, a formalism originally introduced by Stuart Kauffman and René Thomas to model gene regulatory networks~\citep{Kauffman1969,Thomas1973}.
Boolean networks have since evolved into a rich mathematical framework used to study the dynamical behavior of discrete systems, leading to a wide range of applications from science to engineering, especially in systems biology~\citep{SKIKK2020,TBS2023}.
Notably, \pnames have been widely applied to modeling and analysis of Boolean networks~\citep{Inoue2011,KBS2015,TBS2023,KBT2023,TBP2024}.

The preliminary link between \pnames and Boolean networks can be traced back to the theoretical work by~\cite{Inoue2011}.
It defined a Boolean network encoding for \pnames, which relies on the notion of Clark’s completion~\citep{Clark1977}, and pointed out that the two-valued models of the Clark's completion of a \pname one-to-one correspond to the fixed points of the encoded Boolean network.
The subsequent work~\citep{IS2012} pointed out that the strict supported classes of a \pname one-to-one correspond to the synchronous attractors of the encoded Boolean network.
However, this line of work did not explore the structural relationships in detail, nor did it examine connections with other higher-level semantics.
Moreover, the discussion remained largely conceptual and did not extend to concrete implications for the analysis of \pnames.

In this work, we establish a formal connection between \DLN and Boolean network theory by 1) mapping the dependency structure of atoms in a \pname~\citep{ABW1988} to the influence graph of a Boolean network~\citep{Richard2019}, and 2) relating the supported or stable partial model semantics~\citep{P1994,SZ1997} and the regular model semantics~\citep{YY1994,SZ1997} to the notion of trap spaces in Boolean networks~\citep{KBS2015,TBS2023}.
This connection allows us to transfer a variety of structural results from Boolean network theory~\citep{RMCT2003,RR2013,Richard2019,SKIKK2020,RT2023} to the analysis of \pnames.

Relating graphical representations of a normal logic program and its semantics is an interesting research direction in theory that also has many useful applications in practice~\citep{Sato1990,Fages1994,CT1999,Linke2001,LZ2004,Costantini2006,Fichte2011,FH2021}.
Historically, the first studies in this direction focused on the existence of a unique stable model in classes of programs with special graphical properties on (positive) atom dependency graphs, including positive programs~\citep{GL1988}, acyclic programs~\citep{AB1991}, and locally stratified programs~\citep{GL1988}.
The work by~\cite{Fages1991} gave a simple characterization of stable models as well-supported models, and then showed that for \emph{tight} programs (i.e.~without non-well-founded positive justifications), the stable models of the program coincide with the two-valued models of its Clark's completion~\citep{Fages1994}.
Being finer-represented but more computationally expensive than atom dependency graphs, several other graphical representations (e.g., cycle and extended dependency graphs, rule graphs, block graphs) were introduced and several improved results were obtained~\citep{DT1996,Linke2001,Costantini2006,CP2011}.
There are some recent studies on atom dependency graphs~\citep{FL2023,TB2024}, but they still focus only on stable models.
In contrast, very few studies~\citep{YY1994,ELS1997} have been made about stable partial models or regular models.

By exploiting the established connection between \pnames and Boolean networks, we derive the following key results \wrttext the graphical analysis and theory of \pnames:

\begin{itemize}
	\item[(1)] \textbf{Coincidence of Semantics under Structural Conditions.} We show that in the absence of \emph{odd cycles} in the atom dependency graph of a \pname, the \emph{regular models} coincide with the \emph{stable models}. This collapse of semantics not only simplifies the model-theoretic landscape but also guarantees the \emph{existence} of stable models in such cases.
	
	\item[(2)] \textbf{\unitextu Results.} In the absence of \emph{even cycles} in the atom dependency graph of a \pname, we prove the \emph{\unitext of stable partial models}, which further implies the \emph{\unitext of regular models}. 
	This result provides a clear structural condition under which the semantics of a program becomes deterministic.
	
	\item[(3)] \textbf{Correction and Refinement of Previous Work.} We revisit a seminal work by~\cite{YY1994}, which claimed similar properties (the coincidence of regular models and stable models under the absence of odd cycles, and the \unitext of regular models under the absence of even cycles) for well-founded stratification normal logic programs.
	We reveal problems in their definition of \emph{well-founded stratification} and prove results in the first-order setting for negative programs only.
	
	\item[(4)] \textbf{Upper Bounds via Feedback Vertex Sets.} We derive several upper bounds on the number of stable, regular, and stable partial models for a \pname. 
	These bounds are expressed in terms of the \emph{cardinality of a feedback vertex set} in the atom dependency graph, providing a structural measure of semantics complexity.
	To the best of our knowledge, these insights are new to the theory of \pnames.
	
	\item[(5)] \textbf{Stronger Graphical Analysis Results.} We obtain several stronger graphical analysis results on an important subclass of \pnames, namely \emph{uni-rule} \pnames~\citep{SS1997,CSAD2015}.
	
	\item[(6)] \textbf{Trap Spaces and Semantics Correspondences.} Our exploration reveals that the Boolean network notion of \emph{trap space}---a set of states closed under the system's dynamics---has meaningful analogues in the context of \DLN.
	We introduce the notions of \emph{supported trap space} and \emph{stable trap space}, prove their basic properties, and relate them to other semantics for \pnames, in particular show that the \emph{subset-minimal stable trap spaces} coincide with the \emph{regular models}.
\end{itemize}

\paragraph{Paper Structure.} The remainder of the paper is structured as follows. 
In~\Cref{sec:preliminaries}, we provide background on normal logic programs, \DLN programs, and Boolean networks. \Cref{sec:connection} presents the formal link between \pnames and BNs.
Section~\ref{sec:graphical-analysis} revisits the work of~\cite{YY1994} and provides the new graphical analysis results exploiting the established connection.
\Cref{sec:Datalog-trap-spaces} introduces the notions of stable and supported trap spaces for \pnames, shows their basic properties, and relates them to other semantics.
Finally,~\Cref{sec:conclusion} concludes the paper.

\section{Preliminaries}\label{sec:preliminaries}

\subsection{Normal Logic Programs}\label{subsec:preliminaries-NLP}

We assume that the reader is familiar with the logic program theory and the stable model semantics~\citep{GL1988}.
Unless specifically stated, \nlp means normal logic program.
In addition, we consider the Boolean domain \(\twod{} = \{\fval, \tval\}\), the three-valued domain \(\threed{} = \{\fval, \tval, \uval\}\), and the logical connectives used in this paper are \(\land\) (conjunction), \(\lor\) (disjunction), \(\neg\) (negation), and \(\leftrightarrow\) (equivalence).

\subsubsection{Syntax}

We consider a first-order language built over an infinite alphabet of variables,
and finite alphabets of constant, function and predicate symbols.
The set of first-order \emph{terms} is the least set containing variables, constants and closed by application of function symbols.
An \emph{atom} is a formula of the form \(p(t_1, \dots, t_k)\) where \(p\) is a predicate symbol and \(t_i\) are terms.
A \emph{normal logic program} \(P\) is a \emph{finite} set of \emph{rules} of the form
\[p \gets p_1, \dots, p_m, \dng{p_{m + 1}}, \dots, \dng{p_{k}}\] where \(p\) and \(p_i\) are atoms, \(k \geq m \geq 0\),
and \(\sim\) is a symbol for default negation.
A fact is a rule with \(k = 0\).
For any rule \(r\) of the above form, \(\head{r} = p\) is called the \emph{head} of \(r\), \(\pbody{r} = \{p_1, \dots, p_m\}\) is called the \emph{positive body} of \(r\), \(\nbody{r} = \{p_{m + 1}, \dots, p_{k}\}\) is called the \emph{negative body} of \(r\), and \(\body{r} = \pbody{r} \cup \nbody{r}\) is called the \emph{body} of \(r\).
For convenience, we denote \(\bodyf{r} = \bigwedge_{v \in \pbody{r}}v \wedge \bigwedge_{v \in \nbody{r}}\neg v\) as the \emph{body formula} of \(r\); if \(k = 0\), then \(\bodyf{r} = \tval\).
If \(\nbody{r} = \emptyset, \forall r \in P\), then \(P\) is called a \emph{positive} \nlp.
If \(\pbody{r} = \emptyset, \forall r \in P\), then \(P\) is called a \emph{negative} \nlp.

A term, an atom or an \nlp is \emph{ground} if it contains no variable.
The \emph{Herbrand base} of an \nlp \(P\) (denoted by \(\hb{P}\)) is the set of ground atoms formed over the alphabet of \(P\).
It is finite in absence of function symbols, which is the case of \emph{\DLN} programs~\citep{CGT1990}.
The \emph{ground instantiation} of an \nlp \(P\) (denoted by \(\gr{P}\)) is the set of the ground instances of all the rules in \(P\).
An \nlp \(P\) is called \emph{uni-rule} if for every atom \(a \in \hb{P}\), \(\gr{P}\) contains at most one rule whose head is \(a\)~\citep{SS1997}.

\subsubsection{Atom Dependency Graph}

We first recall some basic concepts in the graph theory.

\begin{definition}\label{def:signed-directed-graph}
	A signed directed graph \(G\) on \(\{\oplus, \ominus\}\) is defined as a tuple \((V(G), E(G))\) where \(V(G)\) is a (possibly infinite) set vertices and \(E(G)\) is a (possibly infinite) set of arcs of the form \((uv, s)\) where \(u, v \in V(G)\) (unnecessarily distinct) and either \(s = \oplus\) or \(s = \ominus\).
	The \emph{in-degree} of a vertex \(v\) is defined as the number of arcs ending at \(v\).
	Then the \emph{minimum in-degree} of \(G\) is the minimum value of the in-degrees of all vertices in \(V(G)\).
	An arc \((uv, \oplus)\) (resp.\ \((uv, \ominus)\)) is called a positive (resp.\ negative) arc, and can be written as \(u \xrightarrow{\oplus} v\) or \(v \xleftarrow{\oplus} u\) (resp.\ \(u \xrightarrow{\ominus} v\) or \(v \xleftarrow{\ominus} u\)).
	The graph \(G\) is \emph{strongly connected} if there is always a path between two any vertices of \(G\).
	It is \emph{sign-definite} if there cannot be two arcs with different signs between two different vertices.
\end{definition}

\begin{definition}\label{def:sub-graph}
	A signed directed graph \(G\) is called a \emph{sub-graph} of a signed directed graph \(H\) \ifftext \(V(G) \subseteq V(H)\) and \(E(G) \subseteq E(H)\).
\end{definition}

\begin{definition}\label{def:path-cycle}
	Consider a signed directed graph \(G\).
	A path of \(G\) is defined as a (possibly infinite) sequence of vertices \(v_0v_1v_2\dots\) such that \((v_iv_{i + 1}, \oplus) \in E(G)\) or \((v_iv_{i + 1}, \ominus) \in E(G)\), for all \(i \geq 0\) and except the starting and ending vertices, any other pair of two vertices are distinct.
	When the starting and ending vertices coincide, the path is called a \emph{cycle}.
	In this case, the number of vertices in this cycle is finite.
	A cycle is called \emph{even} (resp.\ \emph{odd}) if its number of negative arcs is even (resp.\ odd).
	In addition, a cycle \(C\) of \(G\) can be seen as a sub-graph of \(G\) such that \(V(C)\) is the set of vertices of \(C\) and \(E(C)\) is the set of arcs of \(C\).
	It is easy to derive that \(C\) is strongly connected and the in-degree of each vertex in \(V(C)\) is 1 within \(C\).
\end{definition}

\begin{definition}\label{def:adding-substracting-operators}
	Given a signed directed graph \(G\) and an arc \((uv, \epsilon)\) with \(u, v \in V(G)\), \(G + (uv, \epsilon)\) is a signed directed graph \((V(G), E(G) \cup \{(uv, \epsilon)\})\) and \(G - (uv, \epsilon)\) is a signed directed graph \((V(G), E(G) \setminus \{(uv, \epsilon)\})\).
\end{definition}

Then we show the formal definition of \emph{atom dependency graph}, the most prominent graphical representation of an \nlp~\citep{Fages1994}.

\begin{definition}\label{def:NLP-atom-dependency-graph}
	Consider an \nlp \(P\).
	The atom dependency graph of \(P\), denoted by \(\adg{P}\), is a signed directed graph over \(\{\oplus, \ominus\}\) defined as follows:
	\begin{itemize}
		\item \(V(\adg{P}) = HB_P\)
		\item \((uv, \oplus) \in E(\adg{P})\) \ifftext there is a rule \(r \in \gr{P}\) such that \(v = \head{r}\) and \(u \in \pbody{r}\)
		\item \((uv, \ominus) \in E(\adg{P})\) \ifftext there is a rule \(r \in \gr{P}\) such that \(v = \head{r}\) and \(u \in \nbody{r}\)
	\end{itemize}
\end{definition}

We next recall the notion of tightness, which plays an important role in characterizing certain semantics properties of \nlps.
An \nlp is called \emph{tight} if its atom dependency graph contains no infinite descending chain \(v_0 \xleftarrow{\oplus} v_1 \xleftarrow{\oplus} v_2 \xleftarrow{\oplus} \dots\) of only positive arcs~\citep{Fages1994}.
Note that for \pnames, the atom dependency graph is always a finite graph; hence a \pname is tight if its atom dependency graph contains no cycle of only positive arcs~\citep{DHW2014}.

\subsubsection{Semantics of Negation}

In this section, we recall several semantics of \nlps in presence of default negation.

\paragraph{Stable and Supported Partial Models.}

A \emph{three-valued interpretation} \(I\) of an \nlp \(P\) is a mapping \(I \colon \hb{P} \to \threed{}\).
If \(I(a) \neq \uval, \forall a \in \hb{P}\), then \(I\) is a \emph{two-valued interpretation} of \(P\).
Usually, a two-valued interpretation is used interchangeably with the set of ground atoms that are true in this interpretation.
A three-valued interpretation \(I\) characterizes  the set of two-valued interpretations denoted by \(\cset{I}\) as \(\cset{I} = \{J | J \in 2^{\hb{P}}, J(a) = I(a), \forall a \in \hb{P}, I(a) \neq \uval\}\).
For example, if \(I = \{p = \tval, q = \fval, r = \uval\}\), then \(\cset{I} = \{\{p\}, \{p, r\}\}\).

We consider two orders on three-valued interpretations.
The truth order \(\leq_{t}\) is given by \(\fval <_{t} \uval <_{t} \tval\).
Then, \(I_1 \leq_t I_2\) \ifftext \(I_1(a) \leq_{t} I_2(a), \forall a \in \hb{P}\).
The order \(\leq_{s}\) is given by \(\fval <_{s} \uval\), \(\tval <_{s} \uval\), and it contains no other relation.
Then, \(I_1 \leq_{s} I_2\) \ifftext \(I_1(a) \leq_{s} I_2(a), \forall a \in \hb{P}\). 
In addition, \(I_1 \leq_{s} I_2\) \ifftext \(\cset{I_1} \subseteq \cset{I_2}\), i.e., \(\leq_{s}\) is identical to the subset partial order.

Let \(e\) be a propositional formula on \(\hb{P}\).
Then the valuation of \(e\) under a three-valued interpretation \(I\) (denoted by \(I(e)\)) following the three-valued logic is defined recursively as follows:
\begin{align*}
	I(e) = \begin{cases}
		e &\text{if } e \in \threed{} \\
		I(e) &\text{if } e = a, a \in \hb{P} \\
		\neg I(e_1) &\text{if } e = \neg e_1\\
		\text{min}_{\leq_t}(I(e_1), I(e_2)) &\text{if } e = e_1 \land e_2\\
		\text{max}_{\leq_t}(I(e_1), I(e_2)) &\text{if } e = e_1 \lor e_2
	\end{cases}
\end{align*} where \(\neg \tval = \fval, \neg \fval = \tval, \neg \uval = \uval\), and \(\text{min}_{\leq_t}\) (resp.\ \(\text{max}_{\leq_t}\)) is the function to get the minimum (resp.\ maximum) value of two values \wrttext the order \(\leq_t\).
We say three-valued interpretation \(I\) is a \emph{three-valued model} of an \nlp \(P\) \ifftext for each rule \(r \in \gr{P}\), \(I(\bodyf{r}) \leq_{t} I(\head{r})\).

Let \(I\) be a three-valued interpretation of an \nlp \(P\).
We build the \emph{reduct} of \(P\) \wrttext \(I\) (denoted by \(P^I\)) as follows.
\begin{itemize}
	\item Remove any rule \(a \leftarrow a_1, \dots, a_m, \dng{b_1}, \dots, \dng{b_k} \in \gr{P}\) if \(I(b_i) = \tval\) for some \(1 \leq i \leq k\).
	\item Afterwards, remove any occurrence of \(\dng{b_i}\) from \(\gr{P}\) such that \(I(b_i) = \fval\).
	\item Then, replace any occurrence of \(\dng{b_i}\) left by a special atom \textbf{u} such that \(\textbf{u} \not \in \hb{P}\) and it always receives the value \(\uval\).
\end{itemize} The reduct \(P^I\) is positive and has a unique \(\leq_{t}\)-least three-valued model.
See~\cite{Przymusinski1990} for the method for computing this \(\leq_{t}\)-least three-valued model of \(P^I\).
Then \(I\) is a \emph{stable partial model} of \(P\) \ifftext \(I\) is equal to the \(\leq_{t}\)-least three-valued model of \(P^I\).
A stable partial model \(I\) is a \emph{regular model} if it is \(\leq_s\)-minimal~\citep{ELS1997}.
A regular model is non-trivial if it is not two-valued.
In the two-valued setting, \(P^I\) is identical to the reduct defined by~\cite{GL1988}.
A two-valued interpretation \(I\) is a \emph{stable model} of \(P\) \ifftext \(I\) is equal to the \(\leq_{t}\)-least two-valued model of \(P^I\).
It is easy to derive that a stable model is a two-valued regular model, as well as a two-valued stable partial model~\citep{AD1995}.

To further relate the above two-valued or three-valued stable semantics to classical logic, we now recall the notion of Clark's completion and the associated concept of two-valued or three-valued supported models.
The (propositional) \emph{Clark's completion} of an \nlp \(P\) (denoted by \(\comp{P}\)) consists of the following equivalences: \(p \leftrightarrow \bigvee_{r \in \gr{P}, \head{r} = p}\bodyf{r}\), for each \(p \in \hb{P}\); if there is no rule whose head is \(p\), then the equivalence is \(p \leftrightarrow \fval\).
Let \(\rhs{a}\) denote the right hand side of the equivalene of ground atom \(a \in \hb{P}\) in \(\comp{P}\).
A three-valued interpretation \(I\) is a three-valued model of \(\comp{P}\) \ifftext for every \(a \in \hb{P}\), \(I(a) = I(\rhs{a})\).
In this work, we define a \emph{supported partial model} of \(P\) as a three-valued model of \(\comp{P}\), and a \emph{supported model} of \(P\) as a two-valued model of \(\comp{P}\).
Of course, a supported model is a (two-valued) partial supported model.
It has been pointed out that a stable partial model (resp.\ stable model) is a supported partial model (resp.\ supported model), but the reverse may be not true~\citep{DHW2014}.

\paragraph{Stable and Supported Classes.} 

We first recall two semantic operators that capture key aspects of the (two-valued) stable and supported model semantics.
Let \(P\) be an NLP and let \(I\) be any two-valued interpretation of \(P\).
We have that \(P^I\) is positive, and has a unique \(\leq_{t}\)-least two-valued model (say \(J\)).
We define the operator \(F_P\) as \(\Fop{I} = J\).
In contrast, we define the operator \(T_P\) as \(\Top{I} = J\) where \(J\) is a two-valued interpretation such that for every \(a \in \hb{P}\), \(J(a) = I(\rhs{a})\).

\begin{definition}[\cite{BS1992}]\label{def:NLP-StC}
	A non-empty set \(S\) of two-valued interpretations is a \emph{stable class} of an \nlp \(P\) \ifftext it holds that \(S = \{\Fop{I} \mid I \in S\}\).
\end{definition}

\begin{definition}[\cite{IS2012}]\label{def:NLP-SuC}
	A non-empty set \(S\) of two-valued interpretations is a \emph{supported class} of an \nlp \(P\) \ifftext it holds that \(S = \{\Top{I} \mid I \in S\}\).
\end{definition}

Trivially, a stable (resp.\ supported) class of size 1 is equivalent to a stable (resp.\ supported) model.
It is well-known that a stable model is a supported model.
However, a stable class may not be a supported class.
A stable (resp.\ supported) class \(S\) of \(P\) is \emph{strict} \ifftext no proper subset of \(S\) is a stable (resp.\ supported) class of \(P\).
An \nlp \(P\) always has at least one stable class~\citep{BS1992}.
If \(P\) is a \pname, it has at least one strict stable class~\citep{BS1992}.
Similarly, a \pname has at least one supported class, as well as at least one strict supported class~\citep{IS2012}.

To better understand the dynamics underlying stable and supported classes, we now turn to a graph-theoretic characterization based on the stable and supported semantics operators.
The \emph{stable} (resp.\ \emph{supported}) \emph{transition graph} of \(P\) is a directed graph (denoted by \(\tgst{P}\) (resp.\ \(\tgsp{P}\))) on the set of all possible two-valued interpretations of \(P\) such that \((I, J)\) is an arc of \(\tgst{P}\) (resp.\ \(\tgsp{P}\)) \ifftext \(J = \Fop{I}\) (resp.\ \(J = \Top{I}\)).
The stable or supported transition graph exhibits the dynamical behavior of an \nlp~\citep{IS2012}.
It has been pointed that the strict stable (resp.\ supported) classes of \(P\) coincide with the simple cycles of \(\tgst{P}\) (resp.\ \(\tgsp{P}\))~\citep{IS2012}.

\subsubsection{Least Fixpoint}

We shall use the fixpoint semantics of normal logic programs~\citep{DK1989} to prove many new results in the next sections.
To be self-contained, we briefly recall the definition of the \emph{least fixpoint} of an \nlp \(P\) as follows.
Let \(r\) be the rule \(p \leftarrow \dng{p_1}, \dots, \dng{p_k}, q_1, \dots, q_j\) in \(\gr{P}\) and let \(r_i\) be the rules \(q_i \leftarrow \dng{q^1_i}, \dots, \dng{q^{l_i}_i}\) in \(\gr{P}\) where \(1 \leq i \leq j\) and \(l_i \geq 0\).
Then \(\translfp{r}{(\{r_1, \dots, r_j\})}\) is the following rule \[p \leftarrow \dng{p_1}, \dots, \dng{p_k}, \dng{q_1^1}, \dots, \dng{q_1^{l_1}}, \dots, \dng{q_j^1}, \dots, \dng{q_j^{l_j}},\] which means that each atom \(q_j\) in the positive body of \(r\) is substituted with the body of the rule \(r_j\).
Then \(\translfp{P}{}\) is the transformation on negative normal logic programs: \[\translfp{P}{(Q)} = \{\translfp{r}{(\{r_1, \dots, r_j\})} | r \in \gr{P}, r_i \in Q, 1 \leq i \leq j\}.\]
Let \(\lfp{P}_i = (\translfp{P}{(\emptyset}))^i = \translfp{P}{(\translfp{P}{(\dots \translfp{P}{(\emptyset)}))}}\), then \(\lfp{P} = \bigcup_{i \geq 1}\lfp{P}_i\) is the least fixpoint of \(P\).
In the case of \pnames, \(\lfp{P}\) is finite and also negative~\citep{DK1989}.

Finally, we now illustrate the key notions introduced so far---such as stable and supported partial models, regular models, tightness, least fixpoint, and transition graphs---through a concrete example of a \pname.

\begin{example}\label{exam:Datalog-all}
	Consider \pname \(P\) (taken from Example 2.1 of~\cite{IS2012}) where \(P = \{p \leftarrow \dng{q}; q \leftarrow \dng{p}; r \leftarrow q\}\).
	Herein, we use ';' to separate program rules. 
	Note that \(\gr{P} = P\).
	The least fixpoint of \(P\) is \(\lfp{P} = \{p \leftarrow \dng{q}; q \leftarrow \dng{p}; r \leftarrow \dng{p}\}\).
	Figures~\ref{fig:exam-Datalog-adg-tgst-tgsp}~(a),~(b), and~(c) show the atom dependency graph, the stable transition graph, and the supported transition graph of \(P\), respectively.
	The program \(P\) is tight, since every cycle of \(\adg{P}\) has a negative arc.
	Consider five three-valued interpretations of \(P\): \(I_1 = \{p = \tval, q = \fval, r = \uval\}\), \(I_2 = \{p = \fval, q = \tval, r = \uval\}\), \(I_3 = \{p = \uval, q = \uval, r = \uval\}\), \(I_4 = \{p = \tval, q = \fval, r = \fval\}\),  and \(I_5 = \{p = \fval, q = \tval, r = \tval\}\).
	Among them, only \(I_3\), \(I_4\), and \(I_5\) are stable (also supported) partial models of \(P\).
	The program \(P\) has two regular models (\(I_4\) and \(I_5\)) that are also stable (also supported) models of \(P\).
	It has three strict stable classes that correspond to three cycles of the stable transition graph of \(P\) (see~\Cref{fig:exam-Datalog-adg-tgst-tgsp}~(b)): \(C_1 = \{p\} \rightarrow \{p\}\), \(C_2 = \{q, r\} \rightarrow \{q, r\}\), and \(C_3 = \emptyset \rightarrow \{p, q, r\} \rightarrow \emptyset\).
	It has three strict supported classes that correspond to three cycles of the supported transition graph of \(P\) (see~\Cref{fig:exam-Datalog-adg-tgst-tgsp}~(c)): \(C_1 = \{p\} \rightarrow \{p\}\), \(C_2 = \{q, r\} \rightarrow \{q, r\}\), and \(C_4 = \{p, q\} \rightarrow \{r\} \rightarrow \{p, q\}\).
	Note that \(C_1\) and \(C_2\) are cycles of size 1 and correspond to stable (also supported) models of \(P\).
\end{example}

\begin{figure}[!ht]
	\centering
	\begin{subfigure}{0.25\textwidth}
		\centering
		\begin{tikzpicture}[node distance=1cm and 1cm, every node/.style={scale=0.8}]
			\node[circle, draw] (p) [] {$p$};
			\node[circle, draw] (q) [below=of p] {$q$};
			\node[circle, draw] (r) [below=of q] {$r$};
			
			\draw[->] (q) edge [bend right=30] node [midway, above, fill=white] {$\ominus$} (p);
			\draw[->] (p) edge [bend right=30] node [midway, above, fill=white] {$\ominus$} (q);
			\draw[->] (q) edge [] node [midway, above, fill=white] {$\oplus$} (r);
		\end{tikzpicture}
		\caption{}
	\end{subfigure}
	\begin{subfigure}{0.35\textwidth}
		\centering
		\begin{tikzpicture}[node distance=1cm and 1cm, every node/.style={scale=0.8}]
			\node[] (pr) [] {$\{p, r\}$};
			\node[] (p) [below=of pr] {$\{p\}$};
			\node[] (q) [right=of pr] {$\{q\}$};
			\node[] (qr) [below=of q] {$\{q, r\}$};
			\node[] (e) [below=of p] {$\emptyset$};
			\node[] (pqr) [below=of qr] {$\{p, q, r\}$};
			\node[] (r) [right=of pqr] {$\{r\}$};
			\node[] (pq) [right=of qr] {$\{p, q\}$};
			
			\draw[->] (pr) edge [] (p);
			\draw[->] (q) edge [] (qr);
			
			\draw[->] (pq) edge [] (e);
			\draw[->] (r) edge [] (pqr);
			
			\draw[->] (e) edge [] (pqr);
			\draw[->] (pqr) edge [bend left=20] (e);
			
			\draw[->] (p) edge [in=0, out=45, loop] (p);
			\draw[->] (qr) edge [in=0, out=45, loop] (qr);
		\end{tikzpicture}
		\caption{}
	\end{subfigure}
	\begin{subfigure}{0.35\textwidth}  
		\centering
		\begin{tikzpicture}[node distance=1cm and 1cm, every node/.style={scale=0.8}]
			\node[] (pr) [] {$\{p, r\}$};
			\node[] (p) [below=of pr] {$\{p\}$};
			\node[] (q) [right=of pr] {$\{q\}$};
			\node[] (qr) [below=of q] {$\{q, r\}$};
			\node[] (e) [below=of p] {$\emptyset$};
			\node[] (pqr) [below=of qr] {$\{p, q, r\}$};
			\node[] (r) [right=of pqr] {$\{r\}$};
			\node[] (pq) [right=of qr] {$\{p, q\}$};
			
			\draw[->] (pr) edge [] (p);
			\draw[->] (q) edge [] (qr);
			\draw[->] (e) edge [] (pq);
			\draw[->] (pqr) edge [] (r);
			\draw[->] (pq) edge [bend left=15] (r);
			\draw[->] (r) edge [bend left=15] (pq);
			
			\draw[->] (p) edge [in=0, out=45, loop] (p);
			\draw[->] (qr) edge [in=0, out=45, loop] (qr);
		\end{tikzpicture}
		\caption{}
	\end{subfigure}
	\caption{(a) Atom dependency graph \(\adg{P}\), (b) stable transition graph \(\tgst{P}\), and (c) supported transition graph \(\tgsp{P}\) of \pname \(P\) of~\Cref{exam:Datalog-all}.}\label{fig:exam-Datalog-adg-tgst-tgsp}
\end{figure}

\subsection{Boolean Networks}\label{subsec:preliminaries-BN}

\subsubsection{Definition}

A Boolean Network (\acbn) \(f\) is a \emph{finite} set of Boolean functions on a finite set of Boolean variables denoted by \(\var{f}\).
Each variable \(v \in \var{f}\) is associated with a Boolean function \(f_v \colon \mathbb{B}^{|\var{f}|} \rightarrow \mathbb{B}\).
Function \(f_v\) is called \emph{constant} if it is always evaluated to either 0 or 1 regardless of its arguments.
A state \(s\) of \(f\) is a Boolean vector \(s \in \twod{}^{|\var{f}|}\).
State \(s\) can be also seen as a mapping \(s \colon \var{f} \to \twod{}\) that assigns either 0 (inactive) or 1 (active) to each variable.
We can write \(s_v\) instead of \(s(v)\) for short.
For convenience, we write a state simply as a string of values of variables in this state, for instance, we write \(010\) instead of \((0, 1, 0)\).

Among various types of \acbns, one particularly simple yet expressive class is the so-called \emph{AND-NOT} \acbns~\citep{RR2013}, which we introduce next.
A \acbn \(f\) is called an \emph{AND-NOT} \acbn if for every \(v \in \var{f}\), \(f_v\) is 0, 1, or a conjunction of literals (i.e., variables or their negations connected by logical-and rules).
We assume that \(f_v\) does not contain two literals of the same variable.

\subsubsection{Influence Graph}

We now introduce the concept of \emph{Influence Graph} (IG), a graphical representation that captures the local effects of variables on one another within a \acbn~\citep{Richard2019}.
This structure provides an intuitive way to visualize how changes in the value of one variable can influence the outcome of another, either positively or negatively. 
The following formal definition makes this precise.

Let \(x\) be a state of \(f\).
We use \(x[v \leftarrow a]\) to denote the state \(y\) so that \(y_v = a\) and \(y_u = x_u, \forall u \in \var{f} \setminus \{v\}\) where \(a \in \twod{}\).
The influence graph of \(f\) (denoted by \(\ig{f} \)) is a signed directed graph over the set of signs \(\{\oplus, \ominus\}\) where \(V(\ig{f}) = \var{f}\), \((uv, \oplus) \in E(\ig{f})\) (i.e., \(u\) positively affects the value of \(f_v\)) \ifftext there is a state \(x\) such that \(f_v(x[u \leftarrow 0]) < f_v(x[u \leftarrow 1])\), and \((uv, \ominus) \in E(\ig{f})\) (i.e., \(u\) negatively affects the value of \(f_v\)) \ifftext there is a state \(x\) such that \(f_v(x[u \leftarrow 0]) > f_v(x[u \leftarrow 1])\).

To facilitate the structural analysis of the influence graph, we introduce the concept of \emph{feedback vertex set}, which is instrumental in characterizing the behavior of cycles within the graph.
A feedback vertex set allows us to control such feedback by identifying a subset of variables whose removal breaks all cycles of a certain type.
Specifically, an even (resp.\ odd) feedback vertex set of a signed directed graph \(G\) is a (possibly empty) subset of \(V(G)\) that intersects every even (resp.\ odd) cycle of \(G\).
These notions will be particularly useful in later sections when we study the graphical properties of models in \pnames.

\subsubsection{Dynamics and Attractors}

Given a \acbn \(f\), an update scheme specifies the way the variables in \(\var{f}\) update their states at each time step.
There are two major types of update schemes~\citep{SKIKK2020}: synchronous (in which all variables update simultaneously) and asynchronous (in which a single variable is non-deterministically chosen for updating).
Under the update scheme employed, the dynamics of the \acbn are represented by a directed graph, called the \emph{State Transition Graph} (STG).
We use \(\sstg{f}\) (resp.\ \(\astg{f}\)) to denote the synchronous (resp.\ asynchronous) STG of \(f\), and formally define them as follows.

\begin{definition}\label{def:BN-SSTG}
	Given a \acbn \(f\),
	the \emph{synchronous state transition graph} of \(f\) (denoted by \(\sstg{f}\)) is given as: \(V(\sstg{f}) = \twod{}^{|\var{f}|}\) (the set of possible states of \(f\)) and \((x, y) \in E(\sstg{f})\) \ifftext \(y_v = f_v(x)\) for every \(v \in \var{f}\).
\end{definition}

\begin{definition}\label{def:BN-ASTG}
	Given a \acbn \(f\), the \emph{asynchronous state transition graph} of \(f\) (denoted by \(\astg{f}\)) is given as: \(V(\astg{f}) = \twod{}^{|\var{f}|}\) (the set of possible states of \(f\)) and \((x, y) \in E(\astg{f})\) \ifftext there is a variable \(v \in \var{f}\) such that \(y(v) = f_v(x) \neq x(v)\) and \(y(u) = x(u)\) for all \(u \in \var{f} \setminus \{v\}\).
\end{definition}

An \emph{attractor} of a \acbn is defined as a subset-minimal \emph{trap set}, which depends on the employed update scheme.
Equivalently, an attractor is a terminal strongly connected component of the STG corresponding to the employed update scheme~\citep{R2009}.
To ease the statement of our results, we define the concepts of trap set and attractor for directed graphs in general.

\begin{definition}\label{def:digraph-attractor}
	Given a directed graph \(G\),
	a trap set \(A\) of \(G\) is a non-empty subset of \(V(G)\) such that there is no arc in \(G\) going out of \(A\) (formally, there do not exist two vertices \(x \in A\) and \(y \in V(G) \setminus A\) such that \((x, y) \in E(G)\)).
	An \emph{attractor} of \(G\) is defined as a \(\subseteq\)-minimal trap set of \(G\).
	An attractor is called a \emph{fixed point} if it consists of only one vertex, and a \emph{cyclic attractor} otherwise.
	Equivalently, \(A\) is an attractor of \(G\) \ifftext \(A\) is a terminal strongly connected component of \(G\).
\end{definition}

\begin{definition}\label{def:BN-attractor}
	Given a \acbn \(f\), a synchronous (resp.\ an asynchronous) attractor of \(f\) is defined as an attractor of \(\sstg{f}\) (resp.\ \(\astg{f}\)).
\end{definition}

Regarding the synchronous update scheme, each vertex of \(\sstg{f}\) has exactly one out-going arc (possibly a self-arc).
Hence, an attractor of \(\sstg{f}\) is equivalent to a simple cycle of \(\sstg{f}\).
Regarding the asynchronous update scheme, the graph \(\astg{f}\) may contain multiple (up to \(|\var{f}|\)) arcs out from a vertex.
Hence, an attractor of \(\astg{f}\) may be a terminal strongly connected component comprising multiple overlapping cycles.

\subsubsection{Trap Spaces}

To better understand the long-term behavior of a \acbn, it is useful to reason not just about individual states, but also about collections of states that share common properties.
One such useful abstraction is the notion of \emph{trap space}~\citep{KBS2015}, which generalizes the concept of trap sets by allowing partial specification of variable values.
Trap spaces serve as a compact representation of dynamically invariant regions of the state space, within which the system cannot escape once entered.
This notion is particularly attractive because it is independent of the specific update scheme (e.g., synchronous or asynchronous) and can be analyzed purely from the network's structure~\citep{KBS2015,TBS2023}.
We begin by formally defining trap spaces and explaining their relationship with sub-spaces and attractors in the context of \acbns, followed by a concrete example.

A \emph{sub-space} \(m\) of a \acbn \(f\) is a mapping \(m \colon \var{f} \to \threed{}\).
A sub-space \(m\) represents the set of all states \(s\) (denoted by \(\cset{m}\)) such that \(s(v) = m(v), \forall v \in \var{f}, m(v) \neq \uval\).
It is easy to see that the notion of sub-space (resp.\ state) in \acbns is identical to the notion of three-valued interpretation (resp.\ two-valued interpretation) in normal logic programs.
For convenience, we write a sub-space simply as a string of values of variables in this sub-space following an order of variables, for instance, we write \(01\star\) instead of \(\{p = \fval, q = \tval, r = \uval\}\) \wrttext the variable order \((p, q, r)\).
If a sub-space is also a trap set, it is a \emph{trap space}.
Unlike trap sets and attractors, trap spaces of a \acbn are independent of the employed update scheme~\citep{KBS2015}.
Then a trap space \(m\) is minimal \ifftext there is no other trap space \(m'\) such that \(\cset{m'} \subset \cset{m}\).
It is easy to derive that a minimal trap space contains at least one attractor of the \acbn regardless of the employed update scheme~\citep{KBS2015}.

\begin{example}\label{exam:throughout-BN}
	Consider \acbn \(f\) with \(\var{f} = \{p, q, r\}, f_p = \neg q, f_q = \neg p, f_r = q\).
	\Cref{fig:BN-IG-SSTG-ASTG}~(a), \Cref{fig:BN-IG-SSTG-ASTG}~(b), and \Cref{fig:BN-IG-SSTG-ASTG}~(c) show the influence graph, the synchronous \acstg, and the asynchronous \acstg of \(f\), respectively.
	Attractor states are highlighted with boxes.
	The synchronous \acstg \(\sstg{f}\) has two fixed points and one cyclic attractor, whereas the asynchronous \acstg \(\astg{f}\) has only two fixed points.
	The \acbn \(f\) has five trap spaces: \(m_1 = 10\star\), \(m_2 = 01\star\), \(m_3 = \star\star\star\), \(m_4 = 100\),  and \(m_5 = 011\).
	Among them, \(m_4\) and \(m_5\) are minimal; they are also two fixed points of \(f\).
\end{example}

\begin{figure}[!ht]
	\centering
	\begin{subfigure}{0.2\textwidth}
		\centering
		\begin{tikzpicture}[node distance=1cm and 1cm, every node/.style={scale=0.8}]
			\node[circle, draw] (p) [] {$p$};
			\node[circle, draw] (q) [below=of p] {$q$};
			\node[circle, draw] (r) [below=of q] {$r$};
			
			\draw[->] (q) edge [bend right=30] node [midway, above, fill=white] {$\ominus$} (p);
			\draw[->] (p) edge [bend right=30] node [midway, above, fill=white] {$\ominus$} (q);
			\draw[->] (q) edge [] node [midway, above, fill=white] {$\oplus$} (r);
		\end{tikzpicture}
		\caption{}
	\end{subfigure}
	\begin{subfigure}{0.35\textwidth}  
		\centering
		\begin{tikzpicture}[node distance=1cm and 1cm, every node/.style={scale=0.8}]
			\node[] (pr) [] {101};
			\node[draw] (p) [below=of pr] {100};
			\node[] (q) [right=of pr] {010};
			\node[draw] (qr) [below=of q] {011};
			\node[] (e) [below=of p] {000};
			\node[] (pqr) [below=of qr] {111};
			\node[draw] (r) [right=of pqr] {001};
			\node[draw] (pq) [right=of qr] {110};
			
			\draw[->] (pr) edge [] (p);
			\draw[->] (q) edge [] (qr);
			\draw[->] (e) edge [] (pq);
			\draw[->] (pqr) edge [] (r);
			\draw[->] (pq) edge [bend left=15] (r);
			\draw[->] (r) edge [bend left=15] (pq);
			
			\draw[->] (p) edge [in=0, out=45, loop] (p);
			\draw[->] (qr) edge [in=0, out=45, loop] (qr);
		\end{tikzpicture}
		\caption{}
	\end{subfigure}
	\begin{subfigure}{0.4\textwidth}  
		\centering
		\begin{tikzpicture}[node distance=1cm and 1cm, every node/.style={scale=0.8}]
			\node[] (pr) [] {101};
			\node[draw] (p) [below=of pr] {100};
			\node[] (q) [right=of pr] {010};
			\node[draw] (qr) [below=of q] {011};
			\node[] (e) [below=of p] {000};
			\node[] (pqr) [below=of qr] {111};
			\node[] (r) [right=of pqr] {001};
			\node[] (pq) [right=of qr] {110};
			
			\draw[->] (pr) edge [] (p);
			\draw[->] (q) edge [] (qr);
			
			\draw[->] (e) edge [] (p);
			\draw[->] (e) edge [] (q);
			
			\draw[->] (pqr) edge [] (qr);
			\draw[->] (pqr) edge [bend right=15] (pr);
			
			\draw[->] (pq) edge [bend left=15] (q);
			\draw[->] (pq) edge [bend left=25] (p);
			\draw[->] (pq) edge [bend left=15] (pqr);
			
			\draw[->] (r) edge [bend left=25] (pr);
			\draw[->] (r) edge [bend left=15] (qr);
			\draw[->] (r) edge [bend left=15] (e);
			
			\draw[->] (p) edge [in=180, out=135, loop] (p);
			\draw[->] (qr) edge [in=0, out=45, loop] (qr);
			
			\draw[->] (pr) edge [out=225, in=180, loop] (pr);
			\draw[->] (q) edge [out=-45, in=0, loop] (q);
		\end{tikzpicture}
		\caption{}
	\end{subfigure}
	\caption{(a) \(\ig{f}\), (b) \(\sstg{f}\), and (c) \(\astg{f}\). The \acbn \(f\) is given in~\Cref{exam:throughout-BN}.}\label{fig:BN-IG-SSTG-ASTG}
\end{figure}

\section{\DLN Programs and Boolean Networks}\label{sec:connection}

To bridge the gap between logic programming and Boolean network analysis, we now introduce a way to encode a \pname as a \acbn. 
This translation enables us to analyze the logical structure and dynamic behavior of a \pname using tools and concepts from the \acbn theory. 
Intuitively, each atom in the \pname becomes a variable in the \acbn, and the update function of each variable reflects the conditions under which the atom can be derived in the program.
This encoding provides a natural and faithful representation of \pnames in terms of Boolean dynamics, and serves as a foundation for structural and dynamical analysis in the subsequent sections.

\begin{definition}\label{def:Datalog-2-BN-encoding}
	Let \(P\) be a \pname.
	We define a BN \(f\) that corresponds to \(P\) as follows: \(\var{f} = \hb{P}\) and for each \(v \in \var{f}\), \[f_v = \bigvee_{r \in \gr{P},\ v = \head{r}}\bodyf{r}.\]
	By convention, if there is no rule \(r \in \gr{P}\) such that \(\head{r} = v\), then \(f_v = \fval\).
\end{definition}

The graphical connection between \pnames and \acbns is given in~\Cref{prop:Datalog-BN-graphs}.
The atom dependency graph is purely syntactic—it reflects all occurrences of atoms in rule bodies, whereas the influence graph is (locally) dynamical—it only includes edges that actually cause a change in the Boolean update function.
Intuitively, the influence graph of the \acbn derived from a \pname is always contained within its atom dependency graph. This reflects the fact that dynamical influence is grounded in syntactic dependency, but not all syntactic dependencies are dynamically active.

This graphical relation is very important because all the subsequent results rely on it and the atom dependency graph of a \pname can be efficiently built based on the syntax only, whereas the construction of the influence graph of a \acbn may be exponential in general.
Note however that the influence graph of a \acbn is usually built by using Binary Decision Diagrams (BDDs)~\citep{Richard2019}.
In this case, the influence graph can be efficiently obtained because each Boolean function is already in Disjunctive Normal Form (DNF) (see~\Cref{def:Datalog-2-BN-encoding}), thus the BDD of this function would be not too large.

\begin{proposition}
  \label{prop:Datalog-BN-graphs}
	Consider a \pname \(P\).
	Let \(f\) be the encoded \acbn of \(P\).
	Then the influence graph of \(f\) is a sub-graph of the atom dependency graph of \(P\).
\end{proposition}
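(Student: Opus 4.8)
The plan is to show that every arc of $\ig{f}$ is also an arc of $\adg{P}$, with the same sign, by unwinding the definitions of influence graph and atom dependency graph. Fix a \pname $P$ and let $f$ be its encoded \acbn, so that $\var{f} = \hb{P}$ and $f_v = \bigvee_{r \in \gr{P},\ v = \head{r}} \bodyf{r}$ for each $v \in \hb{P}$. Since $V(\ig{f}) = \var{f} = \hb{P} = V(\adg{P})$, the vertex sets already coincide, and it remains to prove $E(\ig{f}) \subseteq E(\adg{P})$.

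First I would take an arbitrary positive arc $(uv, \oplus) \in E(\ig{f})$. By definition of the influence graph, there is a state $x$ with $f_v(x[u \leftarrow 0]) < f_v(x[u \leftarrow 1])$; in particular $u$ occurs syntactically in the formula $f_v$, since otherwise $f_v$ would be insensitive to the value of $u$ and the two evaluations would be equal. Because $f_v = \bigvee_{r \in \gr{P},\ v = \head{r}} \bodyf{r}$, the variable $u$ must occur in $\bodyf{r}$ for some rule $r \in \gr{P}$ with $\head{r} = v$. Now $\bodyf{r} = \bigwedge_{w \in \pbody{r}} w \wedge \bigwedge_{w \in \nbody{r}} \neg w$, so $u \in \pbody{r}$ or $u \in \nbody{r}$. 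The remaining work is to rule out the "wrong sign" case: I claim $u \in \pbody{r}$. Indeed, if $u$ occurred only negatively (in $\nbody{r}$ of every rule $r$ with head $v$ in which it appears), then setting $u$ from $0$ to $1$ can only turn some disjunct $\bodyf{r}$ from possibly-true to false, hence can only decrease $f_v$, contradicting $f_v(x[u \leftarrow 0]) < f_v(x[u \leftarrow 1])$. More carefully, monotonicity: $f_v$ as a Boolean formula is antitone in $u$ if $u$ appears only in negated literals across all disjuncts, so a strict increase forces $u$ to appear positively (in $\pbody{r}$) in at least one rule $r$ with $\head{r} = v$. By \Cref{def:NLP-atom-dependency-graph} this gives $(uv, \oplus) \in E(\adg{P})$. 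The negative case $(uv, \ominus) \in E(\ig{f})$ is entirely symmetric: a state $x$ with $f_v(x[u \leftarrow 0]) > f_v(x[u \leftarrow 1])$ witnesses that $f_v$ is not antitone-free-of-positive-occurrences in $u$; by the dual monotonicity argument $u$ must occur in $\nbody{r}$ for some rule $r \in \gr{P}$ with $\head{r} = v$, whence $(uv, \ominus) \in E(\adg{P})$.

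The main obstacle — really the only subtle point — is making the monotonicity argument precise when $u$ may occur in several disjuncts with possibly mixed roles across different rules. The clean way is: write $f_v = g \vee h$ where $g$ collects the disjuncts $\bodyf{r}$ in which $u$ occurs positively and $h$ collects the rest (disjuncts where $u$ occurs negatively or not at all, after noting no single $\bodyf{r}$ contains both $u$ and $\neg u$ by the AND-NOT convention assumed for $f_v$). Then $g$ is monotone (antitone-free) in $u$ and the part of $h$ depending on $u$ is antitone in $u$. A strict inequality $f_v(x[u\leftarrow 0]) < f_v(x[u\leftarrow 1])$ is only possible if the monotone part $g$ is nonempty, i.e. some rule has $u$ in its positive body; this is the needed conclusion. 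I would present this splitting argument once and invoke it by symmetry for the negative arcs. Everything else is routine unwinding of \Cref{def:Datalog-2-BN-encoding}, \Cref{def:NLP-atom-dependency-graph}, and the definition of $\ig{f}$.
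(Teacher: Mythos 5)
Your proposal is correct and follows essentially the same route as the paper's proof: both reduce to the observation that if $u$ occurs only negatively (or not at all) in the DNF $f_v$, then $f_v$ is antitone in $u$, contradicting a witness state for a positive influence, with the negative case handled symmetrically. The paper phrases this contrapositively via two cases (no arc at all versus only a wrong-signed arc) while you split $f_v$ into disjuncts by the sign of $u$'s occurrence, but the underlying monotonicity argument is identical; the only nitpick is that your appeal to an ``AND-NOT convention'' for $f_v$ is not actually guaranteed for general \pnames (a rule may contain both $u$ and $\dng{u}$), though such a disjunct is constantly false and so does not affect the argument.
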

\begin{proof}
	By construction, \(\ig{f}\) and \(\adg{P}\) have the same set of vertices (i.e., \(\hb{P}\)).
	Let \((uv, s)\) be an arc in \(\ig{f}\).
	We show that \((uv, s)\) is also an arc in \(\adg{P}\).
	Without loss of generality, suppose that \(s = \oplus\).
	
	Assume that \((uv, \oplus)\) is not an arc of \(\adg{P}\).
	There are two cases.
	\textbf{Case 1}: there is no arc from \(u\) to \(v\) in \(\adg{P}\).
	In this case, both \(u\) and \(\neg u\) clearly do not appear in \(f_v\).
	This implies that \(\ig{f}\) has no arc from \(u\) to \(v\), which is a contradiction.
	\textbf{Case 2}: there is only a negative arc from \(u\) to \(v\) in \(\adg{P}\).
	It follows that \(\neg u\) appears in \(f_v\) but \(u\) does not because \(f_v\) is in DNF.
	Then, for any state \(x\) and for every conjunction \(c\) of \(f_v\), we have that \(c(x[u \leftarrow 0]) \geq c(x[u \leftarrow 1])\).
	This implies that \(f_v(x[u \leftarrow 0]) \geq f_v(x[u \leftarrow 1])\) for any state \(x\).
	Since \((uv, \oplus)\) is an arc in \(\ig{f}\), there is a state \(x\) such that \(f_v(x[u \leftarrow 0]) < f_v(x[u \leftarrow 1])\).
	This leads to a contradiction.
	Hence, \((uv, \oplus)\) is an arc in \(\adg{P}\).
	
	Now, we can conclude that \(\ig{f}\) is a sub-graph of \(\adg{P}\)).
\end{proof}

The above result establishes a structural correspondence between a \pname and its encoded \acbn. 
This connection lays a foundation for transferring concepts and techniques between the two domains. 
In particular, it motivates the application of ideas from argumentation theory and \acbn dynamics to \pnames. 
Inspired by the concept of \emph{complete extension} in abstract argumentation frameworks, the notion of \emph{complete trap space} has been proposed for \acbns~\citep{TBR2025}.

\begin{definition}[\cite{TBR2025}]\label{def:BN-CoTS}
	Consider a \acbn \(f\).
	A sub-space \(m\) is a \emph{complete trap space} of \(f\) \ifftext for every \(v \in \var{f}\), \(m(v) = m(f_v)\).
\end{definition}

\begin{theorem}
  \label{theo:Datalog-SuPM-BN-CoTS}
	Let \(P\) be a \pname and \(f\) be its encoded \acbn.
	Then the supported partial models of \(P\) coincide with the complete trap spaces of \(f\).
\end{theorem}
\begin{proof}
	A three-valued interpretation \(I\) of \(P\) is a supported partial model of \(P\) \\
	\ifftext \(I\) is a three-valued model of \(\comp{P}\) \\
	\ifftext \(I\) is a three-valued model of \(\bigwedge_{v \in \hb{P}}(v \leftrightarrow f_v)\) \\
	\ifftext \(I\) is a complete trap space of \(f\).
\end{proof}

Building on this result, we now recall a key result concerning the relationship between trap spaces and complete trap spaces in \acbns (see~\Cref{lem:BN-TS-inclusion-CoTS}), thereby proving a direct consequence (see~\Cref{theo:BN-min-CoTS-min-TS}).

\begin{lemma}[Proposition 3 of~\cite{TBR2025}]\label{lem:BN-TS-inclusion-CoTS}
	Let \(f\) be a \acbn.
	For every trap space \(m\) of \(f\), there is a complete trap space \(\widehat{m}\) of \(f\) such that \(\widehat{m} \leq_s m\).
\end{lemma}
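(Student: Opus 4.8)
The plan is to start from $m$ and repeatedly \emph{tighten} it — fixing every currently-free variable whose update value is already forced on $\cset{m}$ — until the result becomes a complete trap space; finiteness of $\var{f}$ guarantees that this process halts. First I would recall the characterization of trap spaces that makes \Cref{def:BN-CoTS} workable: a sub-space $m$ is a trap space iff for every $v \in \var{f}$ with $m(v) \neq \uval$, the function $f_v$ is constant on $\cset{m}$ and equal to $m(v)$; equivalently, using soundness of three-valued evaluation (if $m(f_v) \neq \uval$ then $f_v$ is constantly $m(f_v)$ on $\cset{m}$), this says $m(f_v) = m(v)$ for every such $v$. This is immediate from \Cref{def:digraph-attractor} applied to the synchronous state transition graph of \Cref{def:BN-SSTG}, together with update-scheme independence of trap spaces. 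The extra condition a \emph{complete} trap space must additionally meet is $m(f_v) = \uval$ whenever $m(v) = \uval$.

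Next I would introduce a tightening operator $\Theta$ on sub-spaces: $\Theta(m)(v) = m(v)$ if $m(v) \neq \uval$, and $\Theta(m)(v) = m(f_v)$ if $m(v) = \uval$. By construction $\Theta(m)$ agrees with $m$ on the support of $m$ and is possibly more defined, so $\cset{\Theta(m)} \subseteq \cset{m}$, i.e.\ $\Theta(m) \leq_s m$. The key step — and the main technical obstacle — is to show that $\Theta$ maps trap spaces to trap spaces. So let $m$ be a trap space and let $v$ be a variable with $\Theta(m)(v) \neq \uval$; I must check that $f_v$ is constant $= \Theta(m)(v)$ on $\cset{\Theta(m)}$. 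If $\Theta(m)(v) = m(v)$ (i.e.\ $v$ was already fixed in $m$), this holds because $m$ is a trap space, hence $f_v \equiv m(v)$ on $\cset{m} \supseteq \cset{\Theta(m)}$. If instead $\Theta(m)(v) = m(f_v) \neq \uval$ for a variable $v$ free in $m$, then by soundness of three-valued evaluation $f_v$ is constantly $m(f_v)$ on $\cset{m}$, hence on the smaller set $\cset{\Theta(m)}$ as well. Either way the trap-space condition holds, so $\Theta(m)$ is again a trap space. The subtle point to get right here is precisely that fixing a free variable only \emph{shrinks} $\cset{\cdot}$, so every constancy statement established over $\cset{m}$ is inherited, and no previously-satisfied component condition can be destroyed.

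Finally I would iterate: setting $m^{(0)} = m$ and $m^{(i+1)} = \Theta(m^{(i)})$, the previous step yields a chain $m = m^{(0)} \geq_s m^{(1)} \geq_s m^{(2)} \geq_s \cdots$ of trap spaces of $f$. Since $\var{f}$ is finite there are only finitely many sub-spaces, so this $\leq_s$-descending chain stabilizes: there is $k$ with $\widehat{m} := m^{(k)} = \Theta(\widehat{m})$. Unfolding $\widehat{m} = \Theta(\widehat{m})$: for every $v$ with $\widehat{m}(v) = \uval$ we get $\widehat{m}(f_v) = \uval$, and for every $v$ with $\widehat{m}(v) \neq \uval$ the trap-space characterization recalled above gives $\widehat{m}(f_v) = \widehat{m}(v)$; hence $\widehat{m}(v) = \widehat{m}(f_v)$ for all $v$, so $\widehat{m}$ is a complete trap space by \Cref{def:BN-CoTS}, and $\widehat{m} \leq_s m$ by transitivity of $\leq_s$. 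I expect the only delicate point to be the trap-space-preservation of $\Theta$ (hinging on monotonicity/soundness of three-valued evaluation and on the shrinking of $\cset{\cdot}$), with everything else being bookkeeping with $\leq_s$ and finiteness.
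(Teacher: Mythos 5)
The paper does not prove this lemma: it is imported verbatim as Proposition~3 of \cite{TBR2025}, so there is no in-paper argument to compare against. Your tightening argument is correct and is essentially the standard ``percolation'' proof of this fact. The decomposition is the right one: (i) the semantic characterization of trap spaces (for every fixed $v$, $f_v$ is constant $=m(v)$ on $\cset{m}$); (ii) the operator $\Theta$ maps trap spaces to trap spaces and satisfies $\Theta(m)\leq_s m$, precisely because fixing additional variables only shrinks $\cset{\cdot}$, so constancy statements established over $\cset{m}$ are inherited by $\cset{\Theta(m)}$; (iii) stabilization by finiteness, with the fixpoint equation $\widehat{m}=\Theta(\widehat{m})$ unfolding exactly to \Cref{def:BN-CoTS}.

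One point should be made explicit, because your final step silently relies on it. You use both directions of the equivalence ``$f_v$ is constant $=c$ on $\cset{m}$'' iff ``$m(f_v)=c$''. Soundness always holds, but the converse (needed to conclude $\widehat{m}(f_v)=\widehat{m}(v)$ on the fixed coordinates of $\widehat{m}$) fails if $m(f_v)$ is the Kleene evaluation of a particular formula representing $f_v$: for $f_a=b\lor\neg b$ (which actually arises in \Cref{exam:Datalog-2-BN-connection}) and $m(b)=\uval$ one gets $m(f_a)=\uval$ even though $f_a\equiv 1$ on $\cset{m}$; adding $f_b=\neg b$, the trap space $\{a=1,\,b=\uval\}$ would then contain no complete trap space at all, so under that reading the lemma itself is false. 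The statement is true---and your proof goes through verbatim---when $m(f_v)$ denotes the partial evaluation of the Boolean \emph{function} (the constant value of $f_v$ on $\cset{m}$ if it is constant there, and $\uval$ otherwise), which is the reading under which \Cref{prop:BN-char-TS} holds and the one intended in \cite{TBR2025}. State that convention once at the outset; with it, your proof is complete.
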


\begin{theorem}\label{theo:BN-min-CoTS-min-TS}
	Let \(f\) be a \acbn.
	A sub-space \(m\) is a \(\leq_s\)-minimal complete trap space of \(f\) \ifftext \(m\) is a \(\leq_s\)-minimal trap space of \(f\).
\end{theorem}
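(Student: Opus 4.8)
The plan is to reduce \Cref{theo:BN-min-CoTS-min-TS} to \Cref{lem:BN-TS-inclusion-CoTS} together with one auxiliary observation: \emph{every complete trap space of $f$ is a trap space of $f$}. First I would establish this observation. Let $m$ be a complete trap space of $f$ and let $s \in \cset{m}$; the goal is $f(s) \in \cset{m}$. Fix $v \in \var{f}$ with $m(v) \neq \uval$. Since $m$ is complete, $m(f_v) = m(v) \neq \uval$. By the monotonicity of three-valued valuation with respect to the information order $\leq_s$ (if a three-valued interpretation assigns a classical truth value to a formula, every one of its two-valued completions agrees with it on that formula), from $s \in \cset{m}$ and $m(f_v) \neq \uval$ we get $f_v(s) = m(f_v) = m(v)$. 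Hence $f(s)$ agrees with $m$ on every non-$\uval$ coordinate, i.e.\ $f(s) \in \cset{m}$, so $\cset{m}$ is closed under $f$ and $m$ is a trap space. Consequently the complete trap spaces of $f$ form a sub-collection of its trap spaces, under the same order $\leq_s$ (equivalently, $\subseteq$ on the represented state sets).

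\textbf{Forward direction.} Suppose $m$ is a $\leq_s$-minimal complete trap space. By the observation, $m$ is a trap space. If $m$ were not $\leq_s$-minimal among trap spaces, there would be a trap space $m'$ with $m' <_s m$. \Cref{lem:BN-TS-inclusion-CoTS} yields a complete trap space $\widehat{m'}$ with $\widehat{m'} \leq_s m'$, hence $\cset{\widehat{m'}} \subseteq \cset{m'} \subsetneq \cset{m}$, i.e.\ $\widehat{m'} <_s m$, contradicting the $\leq_s$-minimality of $m$ within the family of complete trap spaces. So $m$ is a $\leq_s$-minimal trap space.

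\textbf{Backward direction.} Suppose $m$ is a $\leq_s$-minimal trap space. \Cref{lem:BN-TS-inclusion-CoTS} gives a complete trap space $\widehat{m} \leq_s m$; since $\widehat{m}$ is a trap space and $m$ is $\leq_s$-minimal among trap spaces, $\widehat{m} = m$, so $m$ is itself a complete trap space. Its $\leq_s$-minimality among complete trap spaces is then immediate, since a complete trap space strictly below $m$ would in particular be a trap space strictly below $m$, which is impossible.

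\textbf{Main obstacle.} There is no deep difficulty: the statement is essentially a formal consequence of \Cref{lem:BN-TS-inclusion-CoTS}. The only point requiring care is the auxiliary observation that complete trap spaces are trap spaces, which rests on the $\leq_s$-monotonicity of three-valued valuation rather than merely on the syntactic shape of \Cref{def:BN-CoTS}; the remainder is just transitivity of $\leq_s$ and $<_s$ and the containment of one family in the other.
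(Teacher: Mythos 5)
Your proof is correct and follows essentially the same route as the paper: both directions reduce to \Cref{lem:BN-TS-inclusion-CoTS} together with the containment of complete trap spaces in trap spaces, and your forward direction is the paper's argument verbatim. You are in fact somewhat more careful than the paper, which dismisses the backward direction as trivial from the containment alone, whereas (as you correctly note) one must invoke \Cref{lem:BN-TS-inclusion-CoTS} again to see that a \(\leq_s\)-minimal trap space is itself complete; your explicit justification that every complete trap space is a trap space (via \(\leq_s\)-monotonicity of the three-valued valuation) is likewise a detail the paper leaves implicit.
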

\begin{proof}
	Regarding the forward direction, assume that \(m\) is not a \(\leq_s\)-minimal trap space of \(f\).
	Then there is a trap space \(m'\) of \(f\) such that \(m' <_s m\).
	By Lemma~\ref{lem:BN-TS-inclusion-CoTS}, there is a complete trap space \(\widehat{m}\) such that \(\widehat{m} \leq_s m'\).
	It follows that \(\widehat{m} <_s m\), which is a contradiction.
	Hence, \(m\) is a \(\leq_s\)-minimal trap space of \(f\).
	The backward direction is trivial since the set of complete trap spaces is a subset of the set of trap spaces.
\end{proof}

\Cref{theo:BN-min-CoTS-min-TS} shows that minimality under the subset ordering is preserved when restricting to complete trap spaces, thereby aligning the minimal elements in both the sets of trap spaces and complete trap spaces. 
This entails the key connection expressed by
\begin{corollary}
  \label{cor:Datalog-min-SuPM-BN-min-TS}
	Let \(P\) be a \pname and \(f\) be its encoded \acbn.
	Then the \(\leq_s\)-minimal supported partial models of \(P\) coincide with the \(\leq_s\)-minimal trap spaces of \(f\).
\end{corollary}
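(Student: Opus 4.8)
The plan is simply to chain the two results just proved. Before doing so I would make explicit the identification, already noted in the preliminaries, that since $\var{f} = \hb{P}$ a sub-space of $f$ and a three-valued interpretation of $P$ are the very same object, and the order $\leq_s$ on $\threed{}^{\hb{P}}$ is literally the same whether we regard its elements as interpretations of $P$ or as sub-spaces of $f$. Consequently ``$\leq_s$-minimal in a set $X$'' is a purely order-theoretic notion that depends only on $X$ and on the ambient poset, which is common to both sides.

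With this in hand, the first step is Theorem~\ref{theo:Datalog-SuPM-BN-CoTS}: the set of supported partial models of $P$ equals the set of complete trap spaces of $f$. Since these are the same subset of the same poset, their $\leq_s$-minimal elements are the same; that is, the $\leq_s$-minimal supported partial models of $P$ are exactly the $\leq_s$-minimal complete trap spaces of $f$. The second step is Theorem~\ref{theo:BN-min-CoTS-min-TS}, which says that a sub-space is a $\leq_s$-minimal complete trap space of $f$ if and only if it is a $\leq_s$-minimal trap space of $f$. Composing the two equalities yields the statement.

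I do not anticipate any real obstacle: the corollary is a formal consequence of Theorems~\ref{theo:Datalog-SuPM-BN-CoTS} and~\ref{theo:BN-min-CoTS-min-TS}, with no case analysis or computation. The only point requiring a moment's care is the routine observation that ``two sets coincide'' upgrades to ``their $\leq_s$-minimal elements coincide'' precisely because minimality is evaluated inside the same ambient poset on both sides; I would state that identification up front so that the two-line argument goes through cleanly.
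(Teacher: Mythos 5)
Your proposal is correct and matches the paper's own proof, which likewise derives the corollary immediately by combining Theorem~\ref{theo:Datalog-SuPM-BN-CoTS} with Theorem~\ref{theo:BN-min-CoTS-min-TS}. Your extra remark that sub-spaces of $f$ and three-valued interpretations of $P$ live in the same ambient poset, so that set equality transfers to equality of $\leq_s$-minimal elements, is a sensible explicit justification of what the paper leaves implicit.
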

\begin{proof}
	This immediately follows from~\Cref{theo:BN-min-CoTS-min-TS} and~\Cref{theo:Datalog-SuPM-BN-CoTS}.
\end{proof}

By restricting our attention to two-valued interpretations, we naturally obtain the following correspondence between supported models in \pnames and fixed points in \acbns.

\begin{corollary}\label{cor:Datalog-SuM-BN-fix}
	Let \(P\) be a \pname and \(f\) be its encoded \acbn.
	Then the supported models of \(P\) coincide with the fixed points of \(f\).
\end{corollary}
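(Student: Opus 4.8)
The plan is to derive this corollary by specializing the already-established three-valued correspondence (\Cref{theo:Datalog-SuPM-BN-CoTS}) to two-valued interpretations, and then observing that a two-valued complete trap space is exactly a fixed point. Recall that a supported model of $P$ is, by definition, a two-valued model of $\comp{P}$, hence in particular a supported partial model that happens to be two-valued. Dually, under the identification of three-valued interpretations of $P$ with sub-spaces of $f$, a two-valued interpretation corresponds precisely to a state of $f$. So \Cref{theo:Datalog-SuPM-BN-CoTS}, read on the two-valued fragment, already tells us that the supported models of $P$ coincide with the \emph{two-valued} complete trap spaces of $f$, i.e.\ the states $s$ with $s(v) = s(f_v)$ for every $v \in \var{f}$.

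The remaining step is to show that, for a state $s$ (a two-valued sub-space), the condition ``$s(v) = s(f_v)$ for all $v$'' — i.e.\ $s$ is a complete trap space — is equivalent to ``$s_v = f_v(s)$ for all $v$'' — i.e.\ $s$ is a fixed point of $f$ in the sense of \Cref{def:BN-attractor}. The key observation is that the three-valued valuation $s(f_v)$ of the propositional formula $f_v$, when $s$ assigns only values in $\twod{}$, coincides with the classical Boolean evaluation $f_v(s)$: the three-valued truth tables for $\neg$, $\land$, $\lor$ restrict to the classical ones on $\{\fval, \tval\}$, and $\uval$ never arises in the recursion since $s$ is $\uval$-free. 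Hence $s(f_v) = f_v(s)$ for every $v$, and the two conditions are literally the same. One should also note the minor point that a ``fixed point'' was defined as a singleton attractor $\{s\}$, and such singletons are in bijection with states $s$ satisfying $s_v = f_v(s)$ for all $v$ (a single state forms a trap set iff it has no outgoing non-loop arc in either STG iff it is a fixed point of the update), so nothing is lost by identifying the two.

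Putting these together: $s$ is a supported model of $P$ $\iff$ $s$ is a two-valued complete trap space of $f$ (by \Cref{theo:Datalog-SuPM-BN-CoTS}) $\iff$ $s$ is a state with $s_v = f_v(s)$ for all $v$ (by the valuation-restriction argument) $\iff$ $\{s\}$ is a fixed point of $f$. I do not expect any real obstacle here; the only thing requiring a word of care is the passage from the three-valued valuation used in the definition of complete trap space to the Boolean evaluation used in the definition of the STG, which is immediate once one records that the two logics agree on total (two-valued) assignments. Everything else is a direct instantiation of \Cref{theo:Datalog-SuPM-BN-CoTS}.
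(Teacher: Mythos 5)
Your proof is correct, but it routes through a different intermediate result than the paper does. The paper's proof invokes \Cref{cor:Datalog-min-SuPM-BN-min-TS} (the correspondence between $\leq_s$-minimal supported partial models and $\leq_s$-minimal trap spaces), together with the observations that supported models are the two-valued supported partial models (hence automatically $\leq_s$-minimal) and that fixed points are the two-valued $\leq_s$-minimal trap spaces. You instead specialize \Cref{theo:Datalog-SuPM-BN-CoTS} directly: supported models are the two-valued complete trap spaces, and a two-valued complete trap space is literally a fixed point once one notes that the three-valued valuation restricts to the classical Boolean one on $\uval$-free assignments. Your route is slightly more elementary, since it bypasses the minimality machinery (\Cref{lem:BN-TS-inclusion-CoTS} and \Cref{theo:BN-min-CoTS-min-TS}) entirely and only needs the basic completion-based correspondence; the paper's route has the mild advantage of reading the corollary off as a literal specialization of the statement it is placed directly after in the text. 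The one point you rightly flag --- the agreement of the three-valued and two-valued evaluations on total assignments, and the identification of singleton attractors with states satisfying $s_v = f_v(s)$ --- is handled correctly and is the only nontrivial content either way.
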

\begin{proof}
	This immediately follows from~\Cref{cor:Datalog-min-SuPM-BN-min-TS} and the fact that the supported models of \(P\) are the two-valued supported partial models of \(P\) (thus are always \(\leq_s\)-minimal) and the fixed points of \(f\) are the two-valued \(\leq_s\)-minimal trap spaces of \(f\).
\end{proof}

To illustrate the correspondence between \pnames and \acbns established above, we present the following concrete example.

\begin{example}\label{exam:Datalog-2-BN-connection}
	Consider \pname \(P = \{a \leftarrow b; a \leftarrow \dng{b}; b \leftarrow \dng{b}, c; c \leftarrow b\}\).
	We use ``;'' to separate program rules.
	The encoded \acbn \(f\) of \(P\) is: \(\var{f} = \{a, b, c\}, f_a = b \lor \neg b, f_b = \neg b \land c, f_c = b\).
	\Cref{fig:exam-Datalog-2-BN-connection}~(a) and~\Cref{fig:exam-Datalog-2-BN-connection}~(b) show the atom dependency graph of \(P\) and the influence graph of \(f\), respectively.
	We can see that \(V(\ig{f}) = V(\adg{P})\) and \(E(\ig{f}) \subset E(\adg{P})\).
	The \acbn \(f\) has four trap spaces: \(m_1 = \{a = \star, b = 0, c = 0\}\), \(m_2 = \{a = \star, b = \star, c = \star\}\), \(m_3 = \{a = 1, b = 0, c = 0\}\), \(m_4 = \{a = 1, b = \star, c = \star\}\).
	Among others, \(m_2\) and \(m_3\) are two complete trap spaces of \(f\) and are also two supported partial models of \(P\).
	In particular, \(m_3\) is a fixed point of \(f\) and is also a supported model of \(P\).
\end{example}

\begin{figure}[!ht]
	\centering
	\begin{subfigure}[b]{0.5\textwidth}
		\centering
		\begin{tikzpicture}[node distance=1.5cm and 1.5cm]
			\node[circle, draw] (a) [] {$a$};
			\node[circle, draw] (b) [right=of a, xshift=0cm] {$b$};
			\node[circle, draw] (c) [right=of b, xshift=0cm] {$c$};
			
			\draw[->] (b) edge [bend right=25] node [midway, above, fill=white] {$\oplus$} (a);
			\draw[->] (b) edge [bend left=25] node [midway, above, fill=white] {$\ominus$} (a);
			
			\draw[->] (c) edge [bend left=25] node [midway, above, fill=white] {$\oplus$} (b);
			\draw[->] (b) edge [bend left=25] node [midway, above, fill=white] {$\oplus$} (c);
			
			\path [] (b) edge [out=60, in=90, loop] node [midway, above, fill=white] {$\ominus$} (b);
		\end{tikzpicture}
		\caption{}
	\end{subfigure}%
	\begin{subfigure}[b]{0.5\textwidth}
		\centering
		\begin{tikzpicture}[node distance=1.5cm and 1.5cm]
			\node[circle, draw] (a) [] {$a$};
			\node[circle, draw] (b) [right=of a, xshift=0cm] {$b$};
			\node[circle, draw] (c) [right=of b, xshift=0cm] {$c$};
			
			\draw[->] (c) edge [bend left=25] node [midway, above, fill=white] {$\oplus$} (b);
			\draw[->] (b) edge [bend left=25] node [midway, above, fill=white] {$\oplus$} (c);
			
			\path [] (b) edge [out=60, in=90, loop] node [midway, above, fill=white] {$\ominus$} (b);
		\end{tikzpicture}
		\caption{}
	\end{subfigure}%
	\caption{(a) Atom dependency graph \(\adg{P}\) of the \pname \(P\), (b) influence graph \(\ig{f}\) of the \acbn \(f\). The details of \(P\) and \(f\) are given in~\Cref{exam:Datalog-2-BN-connection}.}
	\label{fig:exam-Datalog-2-BN-connection}
\end{figure}
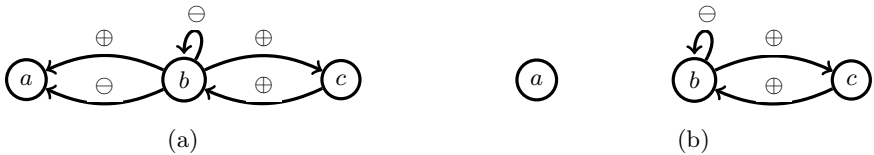

\section{Graphical Analyses of \DLN Programs}\label{sec:graphical-analysis}

In this section, we present new graphical analysis results on \pnames.
We first present some problems in the claims published in~\cite{YY1994} (see~\Cref{subsec:graphical-analysis-revisit}).
We then introduce precise definitions for proving our main results on odd cycles (see~\Cref{subsec:model-existence-odd-cycle}), even cycles (see~\Cref{subsec:model-unicity-even-cycle}), and upper bounds (see~\Cref{subsec:num-model-fvs}).
To improve readability and help the reader quickly identify key findings,~\Cref{tab:summary-graphical-analysis} provides a structured overview of the results we prove for \pnames, with the key contributions highlighted in bold.

\begin{table}[!ht]
	\centering
	\caption{Summary of graphical analysis results on \pnames. Main results are highlighted in bold.}
	\label{tab:summary-graphical-analysis}
	\renewcommand{\arraystretch}{1.3}
	\begin{tabular}{l|p{3.5cm}|p{3.5cm}|p{3.5cm}}
		\toprule
		\textbf{Class} & \textbf{Odd Cycles} & \textbf{Even Cycles} & \textbf{Upper Bounds} \\
		\midrule
		\textbf{General} & \begin{minipage}[t]{\linewidth}
			\begin{itemize}[leftmargin=*]
				\item \textbf{\Cref{theo:Datalog-no-odd-cycle-RegM-StM}}
				\item \Cref{cor:Datalog-no-odd-cycle-StM}
				\item \Cref{theo:Datalog-one-SCC-two-stable-models}
			\end{itemize}
		\end{minipage} & \begin{minipage}[t]{\linewidth}
		\begin{itemize}[leftmargin=*]
			\item \textbf{\Cref{theo:Datalog-no-even-cycle-unique-SuPM}}
			\item \Cref{cor:Datalog-no-even-cycle-unique-StPM}
			\item \Cref{cor:Datalog-no-even-cycle-unique-RegM}
			\item \Cref{cor:Datalog-no-even-cycle-at-most-one-StM}
		\end{itemize}
		\end{minipage} & \begin{minipage}[t]{\linewidth}
		\begin{itemize}[leftmargin=*]
			\item \Cref{theo:Datalog-num-SuPM-even-cycle}
			\item \Cref{cor:Datalog-num-StPM-RegM-StM-even-cycle}
			\item \Cref{prop:Datalog-num-RegM-non-tight-trivial}
			\item \Cref{theo:Datalog-num-RegM-tight-even-cycle}
			\item \Cref{theo:Datalog-num-StM-general-even-cycle}
		\end{itemize}
		\end{minipage} \\
		\midrule
		\textbf{Uni-rule} & \begin{minipage}[t]{\linewidth}
			\begin{itemize}[leftmargin=*]
				\item \Cref{cor:uni-rule-Datalog-tight-odd-cycle-internal-del-triple-StM}
				\item \textbf{\Cref{theo:uni-rule-Datalog-tight-odd-cycle-internal-del-triple-RegM}}
			\end{itemize}
		\end{minipage} & \begin{minipage}[t]{\linewidth}
		\begin{itemize}[leftmargin=*]
			\item \Cref{theo:uni-rule-Datalog-no-even-cycle-del-triple-at-most-one-StM}
			\item \Cref{theo:tight-uni-rule-Datalog-no-even-cycle-del-triple-unique-RegM}
		\end{itemize}
		\end{minipage} & \begin{minipage}[t]{\linewidth}
		\begin{itemize}[leftmargin=*]
			\item \Cref{theo:uni-rule-Datalog-num-StM-even-cycle-del-triple}
			\item \Cref{theo:uni-rule-Datalog-num-RegM-tight-even-cycle-del-triple}
		\end{itemize}
		\end{minipage} \\
		\bottomrule
	\end{tabular}
\end{table}

\subsection{Previous Claims Made for Normal Logic Programs}\label{subsec:graphical-analysis-revisit}

In~\cite{YY1994}, the authors used the following notion of stratification which exists for any \nlp:

\begin{definition}[\cite{YY1994}]\label{def:NLP-stratification}
	Let \(P\) be an \nlp.
	A \emph{stratification} of \(P\) is a partial order \(\leq\) over subsets of \(\hb{P}\) such that
	\begin{itemize}
		\item every (ground) atom belongs to one and only one stratum; and
		\item two atoms \(a\) and \(b\) are in the same stratum if they are on a common cycle in \(\adg{P}\), or there exists an atom \(c\) such that \(a\) and \(c\) are in the same stratum and the same holds true for \(b\) and \(c\); and these are the only atoms that can be in the same stratum.
	\end{itemize}
	Let \([a]\) denote the stratum of an atom \(a\).
	\([a]\) is a lower stratum than \([b]\), denoted by \([a] \leq [b]\), \ifftext there is a path from some atom in \([a]\) to some atom in \([b]\).
\end{definition}

\begin{definition}[\cite{YY1994}]\label{def:NLP-wf-stratification-wrong}
	Let \(P\) be an \nlp.
	A stratification of \(P\) is said to be \emph{well-founded} \ifftext for every stratum \([b]\), there exists \([a]\) such that \([a] \leq [b]\) and for any stratum \([c]\), if \([c] \leq [a]\) then there are only positive arcs from atoms in \([c]\) to atoms in \([a]\).
\end{definition}

However, the given \Cref{def:NLP-wf-stratification-wrong} appears to be inconsistent with the proofs presented in~\cite{YY1994}, since they do not use the positive arc requirement, as well as with their later definition given in~\cite{LY2002}.
Let us thus consider the following standard definition:
\begin{definition}[well-founded stratification]\label{def:NLP-wf-stratification-right}
	Let \(P\) be an NLP.
	A stratification of \(P\) is said to be \emph{well-founded} \ifftext there is no infinite descending chain of strata \([a_0] > [a_1] > [a_2] > \dots\).
\end{definition}

With that definition, an \nlp whose Herbrand base is finite always has a well-founded stratification.
In contrast, it is well known that a general \nlp may not have a well-founded stratification, e.g.,~\nlp \(\{p(X) \leftarrow p(s(X))\}\).
By mimicking the proof of Theorem 5.3(i) of~\cite{YY1994}, we can prove the following result for negative \nlps:

\begin{theorem}\label{theo:You-Yuan-Theorem-5-3-i}
	Consider a \textbf{negative} \nlp \(P\).
	If \(P\) has a well-founded stratification and \(\adg{P}\) has no odd cycle, then all the regular models of \(P\) are two-valued.
\end{theorem}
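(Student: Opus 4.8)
The plan is to adapt the argument of Theorem 5.3(i) of \cite{YY1994} to the clean setting of negative programs, where the atom dependency graph has a particularly simple structure: for a negative \nlp every arc of $\adg{P}$ is negative, so ``odd cycle'' just means a cycle of odd length, and the body formula of each rule is a conjunction of negative literals only. Let $P$ be a negative \nlp with a well-founded stratification in the sense of~\Cref{def:NLP-wf-stratification-right}, and suppose $\adg{P}$ has no odd cycle. Let $I$ be a regular model of $P$; I want to show $I$ is two-valued, i.e.\ $I(a)\neq\uval$ for all $a\in\hb{P}$. Suppose toward a contradiction that the set $U=\{a\in\hb{P}\mid I(a)=\uval\}$ is nonempty.

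First I would use the well-founded stratification to pick a ``lowest'' offending stratum: among the strata containing an atom of $U$, choose one, say $[c]$, that is minimal with respect to $>$ (this exists because the chain condition in~\Cref{def:NLP-wf-stratification-right} forbids an infinite descending chain, so every nonempty set of strata has a minimal element). By minimality, for every atom $b$ lying in a stratum strictly below $[c]$ we have $I(b)\neq\uval$; and atoms in $[c]\cap U$ are exactly the undefined atoms reachable only through $[c]$ or higher. The key structural point is that, since $P$ is negative and $\adg{P}$ has no odd cycle, the induced sub-graph of $\adg{P}$ on a single stratum $[c]$ is a (negative, hence here purely-sign-$\ominus$) strongly connected graph with no odd cycle — which, for a graph whose arcs are all of one ``parity-flipping'' type, forces it to be ``bipartite-like''. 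Concretely, one can $2$-colour $V(C)$ where $C=\adg{P}$ restricted to $[c]$, so that every arc goes between the two colour classes. I would then define the two-valued interpretation $I'$ that agrees with $I$ outside $[c]\cap U$, and on $[c]\cap U$ sets one colour class to $\tval$ and the other to $\fval$ (the choice of which class gets $\tval$ dictated by making the reduct computation go through), and argue that $I'$ is again a stable partial model of $P$ with $I' <_s I$, contradicting $\leq_s$-minimality of $I$.

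The verification that $I'$ is stable splits along strata: for atoms in strata strictly below $[c]$, the reduct rules and the least-model computation are unaffected because $I$ and $I'$ agree there and those rules depend (through the stratification) only on atoms at the same or lower level; for atoms at level $[c]$, one computes $P^{I'}$ directly and checks that the $\leq_t$-least two-valued model of $P^{I'}$ restricted to $[c]$ is exactly the $\{\tval/\fval\}$ pattern we assigned — here the bipartite colouring is exactly what makes the ``chase'' of negative literals around the stratum terminate consistently in two values rather than stabilizing at $\uval$; for atoms in strata strictly above $[c]$, the value may still be $\uval$ for now, but we only need $I'\leq_s I$ overall and $I'\neq I$, i.e.\ strictly fewer undefined atoms, which holds because $[c]\cap U\neq\emptyset$ became defined and nothing previously defined became undefined (monotonicity of the $\leq_t$-least model under making negative-body atoms two-valued). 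Iterating, or simply invoking minimality once, yields the contradiction. As a sanity check the plan should reproduce the classical fact that a negative \nlp without odd cycles has its well-founded model total only after regularization, consistent with \cite{You-Yuan-Theorem-5-3-i}'s statement.

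The main obstacle I anticipate is the middle step: showing that ``no odd cycle in a purely negative strongly connected stratum'' actually yields a consistent two-valued assignment that survives the reduct/least-fixpoint computation. Naively $2$-colouring handles a single cycle, but a stratum is a general strongly connected graph, and one must check that all its cycles being even gives a genuine global $2$-colouring compatible with the rule bodies (which may be conjunctions of several negative literals, not just single ones) — this is where one may need to lean on the least-fixpoint transformation $\Pi_P$ recalled in the preliminaries, rewriting $P$ so that each rule body is traced back through the stratum, and argue that the absence of odd cycles makes the induced ``negation parity'' well-defined. A secondary subtlety is making the argument first-order/ground-correct: the stratification is over $\hb{P}$ which is finite for \DLN but the theorem is stated for general negative \nlps, so the minimal-stratum choice must be justified via the no-infinite-descending-chain hypothesis rather than finiteness, exactly as in~\Cref{def:NLP-wf-stratification-right}.
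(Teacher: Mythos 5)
The paper itself does not write this proof out -- it only says the result is obtained ``by mimicking the proof of Theorem 5.3(i) of You--Yuan'' -- so I am judging your skeleton on its own merits. Its first half is sound and matches the intended route: for a negative program the strata are the strongly connected components of \(\adg{P}\), all arcs are negative, and a strongly connected all-negative graph with no odd cycle admits a \(2\)-colouring in which every arc crosses between the colour classes (the balance theorem the paper invokes via \cite{AMT12} in the proof of \Cref{theo:Datalog-one-SCC-two-stable-models}); moreover, for a \emph{negative} program stable partial models coincide with supported partial models (\Cref{cor:neg-Datalog-StPM-SuPM}), so one only has to check the fixpoint equation \(I(a)=I(\rhs{a})\) atom by atom, and your verification that the coloured assignment satisfies it on the chosen minimal stratum \([c]\) goes through (undefined in-neighbours of an atom of \([c]\cap U\) lie in \([c]\) by minimality, hence get the opposite colour, while already-defined body atoms keep their values).

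The genuine gap is the final step. You define \(I'\) to agree with \(I\) outside \([c]\cap U\) and then claim that ``we only need \(I'\leq_s I\) and \(I'\neq I\).'' That is not enough: to contradict \(\leq_s\)-minimality of the regular model \(I\) you need \(I'\) to be a \emph{stable partial model}, and your \(I'\) generally is not one, because resolving the undefined atoms of \([c]\) forces previously undefined atoms in \emph{higher} strata to become defined. Concretely, take \(P=\{a\leftarrow\dng{b};\; b\leftarrow\dng{a};\; d\leftarrow\dng{a},\dng{e};\; e\leftarrow\dng{d}\}\): there is no odd cycle, the all-\(\uval\) interpretation is a stable partial model, the minimal undefined stratum is \(\{a,b\}\), and your \(I'\) sets \(a=\tval\), \(b=\fval\), \(d=e=\uval\); but then \(I'(\rhs{d})=I'(\neg a\wedge\neg e)=\fval\neq\uval=I'(d)\), so \(I'\) is not a supported (hence not a stable) partial model. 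Repairing this requires propagating the refinement upward through the stratification -- re-evaluating each higher stratum, applying the same \(2\)-colouring argument to any stratum whose atoms remain undefined, and proving convergence of this iteration -- and it is precisely here that the well-founded stratification hypothesis does its work (for a general negative \nlp the Herbrand base may be infinite, so termination of the upward sweep is not free). Your single sentence ``iterating, or simply invoking minimality once, yields the contradiction'' conflates two options of which the second fails outright and the first is the substantive remaining half of the proof.
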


The result for general \nlps claimed in~\cite{YY1994} is unfortunately not correctly proved there, nor in any subsequent work to the best of our knowledge~\citep{You25}.
We can similarly prove Theorem 5.3(ii) of~\cite{YY1994} for negative \nlps only:
\begin{theorem}\label{theo:You-Yuan-Theorem-5-3-ii}
	Consider a \textbf{negative} \nlp \(P\).
	If \(P\) has a well-founded stratification and \(\adg{P}\) has no even cycle, then \(P\) has a unique regular model.
\end{theorem}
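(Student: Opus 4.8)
The plan is to follow the scheme of the proof of Theorem~5.3(ii) of~\cite{YY1994}, parallel to the argument for~\Cref{theo:You-Yuan-Theorem-5-3-i}: a well-founded induction along the stratification. Since $P$ is negative, every arc of $\adg{P}$ is negative, so a cycle is \emph{even} in the sense of~\Cref{def:path-cycle} exactly when it has even length; the hypothesis therefore says that every directed cycle of $\adg{P}$ has odd length. The strata of a stratification are exactly the strongly connected components of $\adg{P}$ (a cycle-free atom forming its own singleton stratum), and the stratum order is well-founded by~\Cref{def:NLP-wf-stratification-right}. Note also that for a downward-closed union of strata $D$, every rule with head in $D$ has its body over $D$, so the sub-program $P_D$ consisting of the rules of $\gr{P}$ with head in $D$ is self-contained, and the usual splitting properties of the three-valued stable semantics give that the restriction to $D$ of a stable partial model of $P$ is a stable partial model of $P_D$, and that stable partial models can be composed along the strata.

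I would actually prove the stronger statement that $P$ has a unique \emph{stable partial model} $I^\ast$; since a regular model is in particular a stable partial model, and $I^\ast$, being the only one, is trivially $\leq_s$-minimal, this yields the theorem. The proof is by well-founded induction on the strata of $\adg{P}$. Assuming that the restriction of every stable partial model to the strata strictly below a stratum $S$ is uniquely determined and equal to $L$, I would form the residual negative program $P_S$ on the atoms of $S$ by evaluating under $L$ the negated atoms that lie below $S$: delete a rule if such an atom is true under $L$, delete the literal if it is false, and replace it by a fresh always-undefined atom $\mathbf{u}$ if it is undefined. The dependency graph of $P_S$ is contained in the restriction of $\adg{P}$ to $S$ together with the sink $\mathbf{u}$, and hence still has no even cycle. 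By the splitting properties, every stable partial model of $P$ restricts on $S$ to a stable partial model of $P_S$, so it remains to show that $P_S$ has a unique stable partial model, which then pins down $I^\ast$ on $S$. The intuition is that in a negative program odd cycles, unlike even ones, cannot sustain a two-valued ``flip-flop'' assignment, so they introduce no branching: the only source of multiple (and in particular of multiple $\leq_s$-minimal) stable partial models in a negative program is an even cycle, exactly as the even cycle $p \gets \dng{q},\ q \gets \dng{p}$ yields the two regular models $\{p\}$ and $\{q\}$.

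The main obstacle is precisely this within-stratum uniqueness statement, together with the first-order/infinitary bookkeeping it requires: the residual program $P_S$ and the splitting of the three-valued stable semantics must be set up carefully when $\gr{P}$ and the strongly connected components are infinite (which is consistent with a well-founded stratification), and the argument showing that the absence of even cycles removes all two-valued choices must be made robust when a component carries several overlapping odd cycles and chords, since the rule witnessing that an atom is true need not be the rule that provides a given cycle arc. As a sanity check on the dividing line, and as a shortcut for the propositional case, one can instead go through the Boolean-network side: a negative program is tight, so its three-valued stable models coincide with its three-valued supported models (a three-valued form of Fages' theorem), which at the $\leq_s$-minimal level correspond to the minimal trap spaces of the encoded \acbn $f$ by~\Cref{cor:Datalog-min-SuPM-BN-min-TS}; and ``no even cycle in $\adg{P}$'' translates to ``no positive cycle in $\ig{f}$'', for which Boolean-network theory provides a unique minimal trap space, hence a unique regular model. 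This shortcut does not cover the first-order case, so the well-founded inductive argument above is the one to carry out in full.
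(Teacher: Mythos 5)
Your scaffolding---well-founded induction along the strata, residuating each stratum against the already-determined lower part, and strengthening the goal to uniqueness of the stable partial model---is exactly the route the paper intends: it proves this statement by mimicking the proof of Theorem~5.3(ii) of You and Yuan for negative programs, and your diagnosis of why negativity rescues that proof (no least-fixpoint transformation is needed, so even-cycle-freeness is not destroyed, which is precisely the paper's counter-example to the general claim) is on target. Your propositional ``shortcut'' is likewise the paper's own even-cycle machinery: negativity gives tightness, so stable partial models collapse to supported partial models (\Cref{theo:tight-Datalog-StPM-SuPM}, \Cref{cor:neg-Datalog-StPM-SuPM}), and even-cycle-freeness forces a unique complete trap space (\Cref{theo:BN-no-even-cycle-unique-CoTS}), whence \Cref{cor:Datalog-no-even-cycle-unique-StPM} and \Cref{cor:Datalog-no-even-cycle-unique-RegM}.

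The problem is that you never actually prove the step on which everything rests. After the reduction, what must be shown is: a negative program on a single stratum whose dependency graph has no even cycle has a unique stable partial model. You file this under ``the main obstacle,'' offer only the flip-flop intuition, and move on---but this \emph{is} the theorem; the stratification bookkeeping around it is routine. The argument that closes it is the one in \Cref{lem:BN-no-constant-no-even-cycle-unique-CoTS}: since the residual program is negative, hence tight, its stable partial models are the three-valued models of its completion, so it suffices to show that no atom can take a two-valued value. If some \(v_0\) did, the completion equation at \(v_0\) forces some in-neighbour \(v_1\) to take a two-valued value whose relation to \(v_0\)'s value is dictated by the sign of the arc \(v_1 \to v_0\); iterating produces a backward chain of two-valued atoms which, when \(v_j = v_{j+k}\), closes into a cycle whose number of negative arcs is forced to be even---contradiction. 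Note that this chain is extracted from the completion equations, not from a chosen cycle, which is exactly what dissolves your worry about overlapping odd cycles and chords. What this leaves genuinely open---and what you flag but also do not resolve---is the infinitary case: a well-founded stratification only forbids infinite descending chains \emph{of strata}, so a single stratum of a negative \nlp may contain infinitely many ground atoms, and then the backward chain need not revisit a vertex and close into a cycle. Your proposal therefore establishes the result neither in the propositional case (where you should simply invoke the paper's \Cref{cor:Datalog-no-even-cycle-unique-RegM} or reproduce the chain argument) nor in the first-order case (where the extra argument for infinite strata is still owed).
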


That second result is claimed in~\cite{YY1994} for general \nlps using the property that ``if \(\adg{P}\) has no even cycle, then \(\adg{\lfp{P}}\) has no even cycle'', but this is not correct:

\begin{counter-example}
Let \(P = \{a \leftarrow c; b \leftarrow c; c \leftarrow \dng{a}, \dng{b}\}\).
Then \(\lfp{P} = \{a \leftarrow \dng{a}, \dng{b}; b \leftarrow \dng{a}, \dng{b}; c \leftarrow \dng{a}, \dng{b}\}\).
The graph \(\adg{P}\) has no even cycle, but \(\adg{\lfp{P}}\) does (there is the even cycle \(a \xrightarrow{\ominus} b \xrightarrow{\ominus} a\) in \(\adg{\lfp{P}}\)).
\end{counter-example}

Below, we fix these issues for general \pnames with~\Cref{theo:Datalog-no-odd-cycle-RegM-StM} and~\Cref{cor:Datalog-no-even-cycle-unique-RegM},
but it is worth noting that the questions for general \nlps are still open to the best of our knowledge~\citep{You25}.

\subsection{Definitions for Graphical Analysis}\label{subsec:graphical-analysis-preparations}

To facilitate our graphical analysis of \pnames, we begin by establishing a set of technical definitions and auxiliary results. 
These foundational elements provide the necessary tools to reason about the structural properties of programs via their associated graphs. 
In particular, we focus on the interplay between atom dependencies, program rules, and semantic interpretations, setting the stage for the results that follow.

We first start with the natural yet important insight that the least fixpoint of a \pname is also a \pname.

\begin{proposition}\label{prop:LFP-finiteness}
	Consider a \pname \(P\).
	Then the least fixpoint of \(P\) is also a \pname.
\end{proposition}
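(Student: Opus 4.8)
The plan is to show that $\lfp{P}$ is a finite, function-free normal logic program, since by definition a \pname is precisely that. Function-freeness is immediate: $P$ is function-free, and the construction of $\lfp{P}$ only copies and rearranges atoms that already occur in $\gr{P}$, so it never introduces a new constant, function, or predicate symbol; hence every atom occurring in $\lfp{P}$ belongs to $\hb{P}$ and $\lfp{P}$ is function-free. The only substantive point is finiteness of $\lfp{P}$ as a set of rules, and for this I would either invoke the fact recalled above from~\citep{DK1989} that $\lfp{P}$ is finite and negative, or give the following self-contained argument.

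First I would prove, by induction on $i$, that every $\lfp{P}_i$ is a set of ground rules with empty positive body whose head and negative-body atoms all lie in $\hb{P}$ (i.e.\ negative ground rules over $\hb{P}$). For the base case, $\lfp{P}_1 = \translfp{P}{(\emptyset)}$: since the empty family provides no rules with which to substitute positive-body atoms, a rule $r \in \gr{P}$ contributes only when its positive body is empty ($j = 0$), in which case $\translfp{r}{(\emptyset)} = r$, a negative ground rule over $\hb{P}$; thus $\lfp{P}_1$ is exactly the set of negative ground rules of $\gr{P}$. For the inductive step, suppose $Q = \lfp{P}_i$ has the stated property; then for any $r \in \gr{P}$ with positive body $q_1,\dots,q_j$ and any choice $r_1,\dots,r_j \in Q$ with $\head{r_\ell} = q_\ell$, the rule $\translfp{r}{(\{r_1,\dots,r_j\})}$ has head $\head{r} \in \hb{P}$, no positive body, and negative body contained in $\nbody{r} \cup \bigcup_{\ell=1}^{j}\nbody{r_\ell} \subseteq \hb{P}$ by the induction hypothesis, hence is again a negative ground rule over $\hb{P}$.

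Next I would observe that, since $P$ is function-free, $\hb{P}$ is finite, and — identifying rules that differ only by the order or multiplicity of their body literals, a convention that affects neither the rule as a logical formula nor any of the semantic notions used in this paper — a negative ground rule over $\hb{P}$ is completely determined by its head (an element of $\hb{P}$) together with its set of negative-body atoms (a subset of $\hb{P}$). There are therefore at most $|\hb{P}| \cdot 2^{|\hb{P}|}$ such rules, so $\lfp{P} = \bigcup_{i \geq 1}\lfp{P}_i$ is a finite set. Being a finite, function-free set of rules of the allowed (here purely negative) form, $\lfp{P}$ is a \pname, which completes the proof.

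The statement is elementary, so I do not expect a serious obstacle; the one point needing care is the finiteness claim, where one must notice that although the transformation $\translfp{P}{}$ can enlarge rule bodies without bound, it never escapes the fixed finite Herbrand base $\hb{P}$, and one must fix the convention on rules modulo reordering and deduplication of body literals so that ``finitely many rules'' is meaningful. A minor secondary check is the base case, namely that $\translfp{P}{(\emptyset)}$ is exactly the set of negative ground rules of $\gr{P}$.
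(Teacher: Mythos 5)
Your proposal is correct. The paper's own proof is a one-liner: it observes that by construction \(\hb{\lfp{P}} = \hb{P}\) and concludes that \(\lfp{P}\) is a \pname, implicitly leaning on the fact already recalled in the preliminaries (credited to Dung and Kanchanasut) that \(\lfp{P}\) is finite and negative. Your first option---just invoking that cited fact---is therefore exactly the paper's route. Your second, self-contained argument goes further: the induction showing that every \(\lfp{P}_i\) consists of negative ground rules over \(\hb{P}\), followed by the count of at most \(|\hb{P}| \cdot 2^{|\hb{P}|}\) such rules modulo reordering and deduplication of body literals, correctly discharges the finiteness claim that the paper outsources to the citation. The one point you rightly flag---that the transformation can lengthen bodies but never escapes the fixed finite Herbrand base, so rules must be identified up to the order and multiplicity of body literals for the counting to work---is exactly where the substance lies, and the paper's proof simply does not engage with it. In short: same approach, but your version buys self-containment at the cost of length.
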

\begin{proof}
	Let \(\lfp{P}\) be the least fixpoint of \(P\).
	By the definition of least fixpoint, \(\hb{\lfp{P}} = \hb{P}\).
	Hence, \(\lfp{P}\) is also a \pname.
\end{proof}

We now recall and establish several important results that characterize the relationships between different semantics of interpretations under the assumption of \emph{tightness}—a key structural condition on the atom dependency graph of a \pname.
These results highlight the equivalence between stable and supported (partial) models in tight \pnames, and they play a central role in connecting logic programming semantics with the dynamics of associated \acbns. 
In particular, tightness guarantees that the distinction between the stable and supported semantics collapses, thereby simplifying semantic analysis. 
Moreover, the regular models of a tight program coincide with the \(\leq_s\)-minimal trap spaces of the corresponding \acbn, reinforcing the utility of tightness in both logical and graphical reasoning.

\begin{theorem}[Theorem 3.2 of~\cite{Fages1994}]\label{theo:tight-Datalog-StM-SuM}
	Consider a tight \pname \(P\).
	Then the set of stable models of \(P\) coincides with the set of supported models of \(P\).
\end{theorem}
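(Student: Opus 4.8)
The plan is to establish the two inclusions separately. For the inclusion that every stable model of $P$ is a supported model, I would simply invoke the general fact already recorded above (a stable partial model is a supported partial model, so in the two-valued case a stable model is a supported model); tightness plays no role here and nothing new is needed.

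For the converse I would take a supported model $I$ of $P$, i.e.\ a two-valued model of $\comp{P}$, and set $M = \Fop{I}$, the $\leq_t$-least two-valued model of the reduct $P^I$; the goal is to show $I = M$, which is exactly the definition of $I$ being a stable model. First I would check $M \subseteq I$: since $I$ satisfies $\comp{P}$ it satisfies every implication $\bodyf{r} \to \head{r}$ with $r \in \gr{P}$, hence $I$ is a model of $P$ and therefore of the positive program $P^I$ (removing rules whose negative body is satisfied by $I$ and deleting the negative literals falsified by $I$ cannot destroy satisfaction by $I$), so the least model $M$ of $P^I$ is contained in $I$. For the reverse inclusion I would argue by contradiction: pick $a_0 \in I \setminus M$; from $I(a_0) = \tval$ and $I(a_0) = I(\rhs{a_0})$ obtain a rule $r_0 \in \gr{P}$ with head $a_0$ whose body formula is true in $I$, so that $a_0 \gets \pbody{r_0}$ survives in $P^I$; since $M$ is closed under the rules of $P^I$ and $a_0 \notin M$, some $a_1 \in \pbody{r_0}$ must also lie in $I \setminus M$, and $a_1 \xrightarrow{\oplus} a_0$ is a positive arc of $\adg{P}$. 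Iterating this step produces an infinite positive descending chain $a_0 \xleftarrow{\oplus} a_1 \xleftarrow{\oplus} a_2 \xleftarrow{\oplus} \cdots$ lying entirely inside $I \setminus M$ (and, since $\hb{P}$ is finite, in fact a cycle of only positive arcs once a vertex repeats), contradicting tightness of $P$. Hence $I \setminus M = \emptyset$ and $I = M$, so $I$ is a stable model.

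The main obstacle — really the only point requiring care — is the bookkeeping around the reduct: I must verify precisely that the rule witnessing $I(\rhs{a_0}) = \tval$ reappears in $P^I$ with exactly its positive body (its negative literals being falsified by $I$, so the rule is kept and those literals deleted), and that the closure of $M$ under $P^I$ forces $a_0 \in M$ as soon as $\pbody{r_0} \subseteq M$. Once this is in place, the backward chase of positive justifications until tightness is contradicted is routine, and the finiteness of $\hb{P}$ for \pnames makes the termination of that chase immediate.
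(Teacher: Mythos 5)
The paper does not prove this statement---it is imported verbatim as Theorem 3.2 of Fages (1994)---so there is no in-paper proof to compare against; your argument is a correct reconstruction of Fages' standard one (supported models are ``well-supported'' when there is no infinite positive descending chain of justifications, and the backward chase from an atom in $I \setminus \Fop{I}$ through surviving reduct rules yields exactly such a chain, hence a positive cycle by finiteness of $\hb{P}$). One terminological nit: where you say the witnessing rule's ``negative literals are falsified by $I$'' you mean the \emph{atoms} of its negative body are false in $I$ (so the literals $\dng{b}$ are satisfied, the rule is kept, and those literals are deleted); the mathematics is unaffected.
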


\begin{theorem}[Lemma 16 of~\cite{DHW2014}]\label{theo:tight-Datalog-StPM-SuPM}
	Consider a tight \pname \(P\).
	Then the set of stable partial models of \(P\) coincides with the set of supported partial models of \(P\).
\end{theorem}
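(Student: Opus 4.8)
The plan is to prove the two inclusions separately. One direction is immediate and holds for arbitrary \pnames: every stable partial model is a supported partial model (recalled in the preliminaries). So the real task is the converse under tightness: assuming \(P\) is tight and \(I\) is a supported partial model of \(P\), show that \(I\) is a stable partial model, i.e.\ that \(I\) equals the \(\leq_t\)-least three-valued model \(J\) of the reduct \(P^I\). Since \(\leq_t\) is a partial order on three-valued interpretations, I would obtain \(I=J\) by establishing the two pointwise inequalities \(J \leq_t I\) and \(I \leq_t J\).

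For \(J \leq_t I\), I would check that \(I\) is itself a three-valued model of \(P^I\). As \(I\) is a three-valued model of \(\comp P\), for every \(r \in \gr P\) we have \(I(\bodyf r) \leq_t I(\rhs{\head r}) = I(\head r)\), because \(I(\rhs a)\) is the \(\leq_t\)-maximum of the values \(I(\bodyf{r'})\) over the rules \(r'\) with head \(a\). Inspecting the three steps that build \(P^I\): a rule deleted because some negative body atom \(b_i\) has \(I(b_i)=\tval\) already has \(I(\bodyf r)=\fval\), while dropping \(\dng{b_i}\) with \(I(b_i)=\fval\) and replacing \(\dng{b_i}\) by \(\mathbf u\) with \(I(b_i)=\uval\) leave the value under \(I\) of the transformed body equal to \(I(\bodyf r)\) (since \(I(\mathbf u)=\uval\)). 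Hence for every rule \(a \gets \beta\) in the positive program \(P^I\) we still have \(I(\beta) \leq_t I(a)\), so \(I\) is a three-valued model of \(P^I\), and \(J \leq_t I\) by leastness of \(J\).

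The heart of the proof—the only place where tightness enters—is \(I \leq_t J\), i.e.\ \(I(a) \leq_t J(a)\) for every \(a\) with \(I(a) \neq \fval\) (the case \(I(a)=\fval\) being trivial). I would prove this by well-founded induction over the relation \(b \to a\) defined by: there is a rule \(r \in \gr P\) with \(\head r = a\), \(I(\bodyf r)=I(a)\), and \(b \in \pbody r\). Each \(\to\)-arc is a positive arc of \(\adg P\), so a \(\to\)-cycle would be a cycle of only positive arcs in \(\adg P\); this is impossible because \(P\) is tight and \(\hb P\) is finite, so \(\to\) is acyclic on a finite set and hence well-founded (and every \(\to\)-predecessor \(b\) of such an \(a\) again satisfies \(I(b) \geq_t I(\bodyf r) = I(a) \neq \fval\), so the induction is well-posed). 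For the inductive step at \(a\), pick a rule \(r\) with \(\head r = a\) and \(I(\bodyf r)=I(a)\) (one exists, since \(I(a)=I(\rhs a)\) is the \(\leq_t\)-maximum of such body values). From \(I(\bodyf r) \leq_t I(\neg b_i)\) for every negative body atom \(b_i\) and \(I(\bodyf r)=I(a) \neq \fval\), no \(b_i\) has \(I(b_i)=\tval\), so \(r\) survives into \(P^I\) as a rule \(a \gets \beta\) with \(\beta = p_1 \wedge \dots \wedge p_m \wedge \mathbf u \wedge \dots \wedge \mathbf u\), one \(\mathbf u\) per \(b_i\) with \(I(b_i)=\uval\); moreover \(I(\beta)=I(\bodyf r)=I(a)\) (the dropped negative literals contributed \(\tval\), the replaced ones \(\uval=I(\mathbf u)\)). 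Each \(p_j\) is a \(\to\)-predecessor of \(a\), so \(J(p_j) \geq_t I(p_j)\) by the induction hypothesis, while \(J(\mathbf u)=\uval=I(\mathbf u)\); by monotonicity of conjunction with respect to \(\leq_t\), \(J(\beta) \geq_t I(\beta) = I(a)\), and since \(J\) is a model of \(P^I\), \(J(a) \geq_t J(\beta) \geq_t I(a)\). This closes the induction, and together with \(J \leq_t I\) it yields \(I = J\), so \(I\) is a stable partial model.

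I do not expect a conceptual obstacle: the whole argument is concentrated in the observation that tightness makes the positive-support relation among non-false atoms well-founded, and it is in spirit Fages' tightness argument lifted from two to three truth values. The only real care needed is the bookkeeping—tracking precisely how each rule is rewritten by the reduct and how the resulting three-valued body is evaluated under \(J\) using the induction hypothesis.
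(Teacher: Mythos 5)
Your proof is correct, but there is no proof in the paper to compare it against: the statement is imported verbatim as Lemma~16 of~\cite{DHW2014}, and the authors give no argument of their own. What you have written is a sound, self-contained derivation. The easy inclusion (stable partial \(\Rightarrow\) supported partial) is indeed just the fact recalled in the preliminaries, and for the converse your decomposition into \(J \leq_t I\) and \(I \leq_t J\) for the least model \(J\) of \(P^I\) works: the bookkeeping showing that the transformed body of a surviving rule evaluates under \(I\) to exactly \(I(\bodyf{r})\) is right (including the edge case where the body empties out to \(\tval\)), so \(I\) is a model of \(P^I\); and for \(I \leq_t J\), a rule attaining the \(\leq_t\)-maximum in \(\rhs{a}\) exists whenever \(I(a)\neq\fval\), survives the reduct, and monotonicity of three-valued conjunction pushes the induction hypothesis through the positive body. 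The only point I would make fully explicit in a write-up is the well-foundedness step: a cycle of your relation \(b \to a\) need not itself be simple, but it contains a simple cycle of only positive arcs in \(\adg{P}\) (possibly a positive self-loop, which the paper's definition of cycle covers), and that is what tightness excludes on the finite Herbrand base. In spirit this is Fages' tight-program argument lifted from two to three truth values, which is exactly the kind of proof the cited lemma encapsulates; nothing in your argument conflicts with how the result is used elsewhere in the paper.
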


\begin{corollary}\label{cor:neg-Datalog-StPM-SuPM}
	Consider a negative \pname \(P\).
	Then the set of stable partial models of \(P\) coincides with the set of supported partial models of \(P\).
\end{corollary}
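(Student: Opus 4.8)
The plan is to derive this as an immediate consequence of \Cref{theo:tight-Datalog-StPM-SuPM}, which already equates stable partial models with supported partial models for tight \pnames. Hence the only thing that needs to be established is that every negative \pname is tight.

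First I would unfold the definitions. If $P$ is negative, then $\pbody{r} = \emptyset$ for every rule $r \in P$, and therefore also for every ground instance $r \in \gr{P}$, since grounding a rule does not move any atom into its positive body. By \Cref{def:NLP-atom-dependency-graph}, a positive arc $(uv, \oplus)$ in $\adg{P}$ exists only if there is a ground rule $r$ with $v = \head{r}$ and $u \in \pbody{r}$; as all positive bodies are empty, $\adg{P}$ has no positive arc whatsoever. In particular, $\adg{P}$ contains no cycle consisting solely of positive arcs, which---using the characterization of tightness for \pnames recalled right after \Cref{def:NLP-atom-dependency-graph} (valid because $\adg{P}$ is finite for \pnames)---means that $P$ is tight.

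It then remains only to invoke \Cref{theo:tight-Datalog-StPM-SuPM}, which yields that the stable partial models of $P$ coincide with the supported partial models of $P$, as desired.

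I do not anticipate any genuine obstacle: the whole argument is a one-line reduction once the tightness observation is made. The only points worth spelling out explicitly are the stability of ``being negative'' under grounding and the appeal to the finiteness of $\adg{P}$ for \pnames, which is precisely what allows ``no positive cycle'' to replace the general ``no infinite positive descending chain'' condition in the definition of tightness.
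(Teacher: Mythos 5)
Your proposal is correct and follows exactly the paper's own argument: the paper likewise observes that a negative \pname has no positive arcs in its atom dependency graph, hence is tight, and then invokes \Cref{theo:tight-Datalog-StPM-SuPM}. Your version merely spells out the grounding and finiteness details more explicitly.
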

\begin{proof}
	Since \(P\) is negative, the atom dependency graph of \(P\) contains no positive arcs, thus \(P\) is tight.
	The corollary immediately follows from~\Cref{theo:tight-Datalog-StPM-SuPM}.
\end{proof}

\begin{lemma}\label{lem:Datalog-tight-BN-RegM-s-min-TS}
	Consider a \pname \(P\) and its encoded \acbn \(f\).
	If \(P\) is tight, then the regular models of \(P\) coincide with the \(\leq_s\)-minimal trap spaces of \(f\).
\end{lemma}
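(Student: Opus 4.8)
The plan is to reduce this lemma to results already established in the excerpt, chaining together the characterization of regular models as $\leq_s$-minimal stable partial models with the collapse of the stable and supported semantics under tightness and with the corollary relating minimal supported partial models to minimal trap spaces. First I would recall that, by definition, a regular model of $P$ is a $\leq_s$-minimal stable partial model of $P$. Since $P$ is tight, \Cref{theo:tight-Datalog-StPM-SuPM} gives that the set of stable partial models of $P$ equals the set of supported partial models of $P$. Consequently, the $\leq_s$-minimal elements of these two sets coincide as well, so the regular models of $P$ are exactly the $\leq_s$-minimal supported partial models of $P$.

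The second half of the chain is immediate from \Cref{cor:Datalog-min-SuPM-BN-min-TS}, which states that the $\leq_s$-minimal supported partial models of $P$ coincide with the $\leq_s$-minimal trap spaces of the encoded \acbn $f$. Combining the two identifications — regular models $=$ $\leq_s$-minimal supported partial models $=$ $\leq_s$-minimal trap spaces of $f$ — yields the claim. The only mildly delicate point worth spelling out is the passage from ``the two sets of partial models are equal'' to ``their $\leq_s$-minimal elements are equal'': this is purely order-theoretic and holds because taking $\leq_s$-minimal elements is a function of the underlying set only, so equal sets have equal sets of minimal elements; no argument about the structure of $\leq_s$ is needed.

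I do not expect any real obstacle here, since the substantive work (the tightness collapse in \Cref{theo:tight-Datalog-StPM-SuPM}, and the trap-space correspondence in \Cref{theo:Datalog-SuPM-BN-CoTS}–\Cref{theo:BN-min-CoTS-min-TS}–\Cref{cor:Datalog-min-SuPM-BN-min-TS}) has all been done upstream. If anything is non-routine, it is only bookkeeping: making sure that the notion of $\leq_s$-minimality used in the definition of regular models (among stable partial models of $P$) is the same ordering as the one used for trap spaces of $f$, which is guaranteed by the remark in \Cref{subsec:preliminaries-BN} that sub-spaces of $f$ are literally three-valued interpretations over $\hb{P}$ and $\leq_s$ is the subset order on the represented sets of states. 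So the proof is essentially a one-line composition of \Cref{theo:tight-Datalog-StPM-SuPM} and \Cref{cor:Datalog-min-SuPM-BN-min-TS}, together with the observation that the definition of regular model is ``$\leq_s$-minimal stable partial model''.
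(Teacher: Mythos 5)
Your proof is correct and follows exactly the same chain as the paper's: regular model $=$ $\leq_s$-minimal stable partial model by definition, $=$ $\leq_s$-minimal supported partial model by \Cref{theo:tight-Datalog-StPM-SuPM} under tightness, $=$ $\leq_s$-minimal trap space of $f$ by \Cref{cor:Datalog-min-SuPM-BN-min-TS}. The extra remarks about minimality being preserved under set equality and about the identification of sub-spaces with three-valued interpretations are correct but left implicit in the paper.
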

\begin{proof}
	A three-valued interpreration \(I\) is a regular model of \(P\) \\
	\ifftext \(I\) is a \(\leq_s\)-minimal stable partial model of \(P\) by definition \\
	\ifftext \(I\) is a \(\leq_s\)-minimal supported partial model of \(P\) by~\Cref{theo:tight-Datalog-StPM-SuPM} \\
	\ifftext \(I\) is a \(\leq_s\)-minimal trap space of \(f\) by~\Cref{cor:Datalog-min-SuPM-BN-min-TS}.
\end{proof}

The following result shows that the (stable) semantic behaviors of a \pname are preserved under its least fixpoint transformation.
However, the (supported) semantic behaviors of a \pname may be not preserved.

\begin{theorem}[Theorem 3.1 of~\cite{AD1995}]\label{theo:Datalog-lfp-model-equivalence}
	Consider a \pname \(P\).
	Let \(\lfp{P}\) be the least fixpoint of \(P\).
	Then \(P\) and \(\lfp{P}\) have the same set of stable partial models, the same set of regular models, and the same set of stable models.
\end{theorem}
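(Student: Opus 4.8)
The plan is to reduce the three claims to a single identity about reducts: for every three-valued interpretation $I$ --- recall $\hb{P} = \hb{\lfp{P}}$ by \Cref{prop:LFP-finiteness}, so $I$ is simultaneously an interpretation of $P$ and of $\lfp{P}$ --- the reducts $P^I$ and $(\lfp{P})^I$ have the same $\leq_t$-least three-valued model, and the two-valued reducts have the same $\leq_t$-least two-valued model. Granting this, let $\Psi_Q(I)$ denote the $\leq_t$-least three-valued model of $Q^I$, so that by definition the stable partial models of a program $Q$ are exactly the fixpoints of $\Psi_Q$; the identity gives $\Psi_P(I) = \Psi_{\lfp{P}}(I)$ for all $I$, hence $P$ and $\lfp{P}$ have the same stable partial models, and therefore --- taking $\leq_s$-minimal elements --- the same regular models. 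Running the same reasoning with the operator $F_P$ (the $\leq_t$-least two-valued model of the reduct) yields the equality of stable models. Thus the whole theorem follows from the reduct identity.

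The core ingredient is an unfolding-invariance lemma for positive programs (allowing the auxiliary atom $\mathbf{u}$): if $Q$ is such a program, $\rho$ is a rule $h \gets q, B$ whose body contains the positive atom $q$ (with $h \neq q$, the general case being deferred to the obstacle below), and $\rho_1, \dots, \rho_n$ are all the rules of $Q$ with head $q$, then replacing $\rho$ by the rules $h$ with body $\bodyf{\rho_i} \wedge B$ for $1 \le i \le n$ leaves the $\leq_t$-least three-valued model unchanged (and similarly for ordinary positive programs with two-valued least models). The proof is the distributivity of conjunction over disjunction in the truth lattice: in the least model $J$ one has $J(q) = \bigvee_{i} J(\bodyf{\rho_i})$ and the rules for $q$ are untouched, so $J(q) \wedge J(B) = \bigvee_i \big(J(\bodyf{\rho_i}) \wedge J(B)\big)$; hence $J$ satisfies $\rho$ iff it satisfies all the replacement rules, and minimality pins the least model down identically on both sides.

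Next I would unwind the definition of $\lfp{P}$: it equals $\Pi_P^N(\emptyset)$ for $N$ large enough, because $\gr{P}$ is finite and the sequence $\Pi_P^i(\emptyset)$ is increasing and stabilizes~\citep{DK1989}; and each step from $\Pi_P^i(\emptyset)$ to $\Pi_P^{i+1}(\emptyset)$ is a bundle of positive-atom unfolding steps of exactly the above kind, using as the $\rho_i$ the negative rules accumulated so far. The remaining work is to push the core lemma through the reduct: taking the reduct with respect to $I$ commutes with $\Pi_P$, in the sense that the reduct of an unfolded rule $\Pi_r(\{r_1, \dots, r_j\})$ of $P$ is obtained from the reduct of $r$ by substituting, for each positive body atom $q_i$, the body of the reduct of $r_i$; and a rule $r_i$ deleted altogether by the reduct (because some negative literal of $r_i$ is true in $I$) corresponds precisely to $q_i$ receiving value $\fval$ there, which is exactly why no matching unfolded rule occurs. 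Iterating over the finitely many $\Pi_P$ steps and passing to the limit delivers the reduct identity, and with it the theorem.

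The step I expect to be the main obstacle is precisely this commutation of the reduct with $\Pi_P$, together with the degenerate configurations it has to absorb: one must check that the family of rules available for unfolding a given positive body atom at stage $i$ matches, after reduction, the family of bodies actually substituted; that positive cycles and positive self-references --- which can never fire in a least model --- are correctly flattened by $\Pi_P$ without perturbing any reduct (this is what replaces the $h = q$ case of the core lemma); and that the stabilization stage $N$ can be chosen uniformly in $I$. A shortcut that avoids the bookkeeping is to invoke the classical invariance of the stable and well-founded semantics under unfolding / partial evaluation and specialize it to the complete-unfolding operator $\Pi_P$, but the self-contained argument above is the route I would write out in full.
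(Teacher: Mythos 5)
The paper does not prove this statement at all: it is imported verbatim as Theorem 3.1 of \cite{AD1995}, so there is no in-paper proof to compare yours against. On its own terms, your overall plan is the standard one from that line of work (Dung--Kanchanasut for the unfolding machinery, Aravindan--Dung for the partial-model version): reduce all three claims to the identity of the $\leq_t$-least models of $P^I$ and $(\lfp{P})^I$ for every $I$, and derive that identity from unfolding invariance of positive programs. The surrounding reductions are correct --- stable partial models are exactly the fixpoints of $I \mapsto$ least model of $P^I$, regular models are the $\leq_s$-minimal ones, and the two-valued specialization gives stable models --- and your single-step unfolding lemma (for $h \neq q$) is sound, via supportedness of the least model at $q$ and distributivity in the truth lattice.

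What you have, though, is a plan rather than a proof: the step you yourself flag as the main obstacle is precisely where the content of the cited theorem lives, and it is not carried out. Two points deserve emphasis. First, $\lfp{P}$ is not reachable from $\gr{P}$ by a finite chain of single-rule, semantics-preserving unfolding steps --- the intermediate programs $\lfp{P}_i = (\Pi_P(\emptyset))^i$ are not equivalent to $P$ (e.g.\ for $P = \{a \leftarrow b;\ b \leftarrow\}$, $\lfp{P}_1 = \{b \leftarrow\}$) --- so "each step is a bundle of unfolding steps of exactly the above kind" does not literally hold, and one must instead argue directly about the limit, including rules whose positive body atoms admit no finite positive derivation ($a \leftarrow a$, or $a \leftarrow b$ with no rule for $b$), which $\lfp{P}$ silently drops rather than unfolds. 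Second, the commutation of the reduct with $\Pi_P$ does hold (a ground instance of $\translfp{r}{(\{r_1,\dots,r_j\})}$ is deleted by the reduct iff $r$ or some $r_i$ is, and otherwise its reduct is the substitution of the reducts), but this needs to be stated and pushed through the iteration to the limit; it is the crux, not a routine check. One minor point in your favour: the uniformity of the stabilization stage $N$ in $I$ is a non-issue, since $\lfp{P}$ is defined independently of $I$ and stabilizes because only finitely many ground negative rules exist over $\hb{P}$. The honest shortcut, as you note, is to invoke the known invariance of the (partial) stable semantics under unfolding --- which is effectively what the paper does by citing \cite{AD1995}.
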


The next result establishes a characterization of regular models via trap spaces after applying the least fixpoint transformation to a \pname.

\begin{lemma}\label{lem:Datalog-lfp-BN-RegM-s-min-TS}
	Consider a \pname \(P\).
	Let \(\lfp{P}\) be the least fixpoint of \(P\) and \(f'\) be the encoded \acbn of \(\lfp{P}\).
	Then the regular models of \(P\) coincide with the \(\leq_s\)-minimal trap spaces of \(f'\).
\end{lemma}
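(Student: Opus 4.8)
The plan is to reduce this statement to the tight case already handled by~\Cref{lem:Datalog-tight-BN-RegM-s-min-TS}, using the least fixpoint as a bridge. First I would invoke~\Cref{theo:Datalog-lfp-model-equivalence} to replace $P$ by $\lfp{P}$ on the semantic side: the regular models of $P$ are exactly the regular models of $\lfp{P}$. This is the step that transfers the claim about $P$ to a claim purely about $\lfp{P}$ and its encoded \acbn $f'$.

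Next I would observe that $\lfp{P}$ is a well-behaved object: by~\Cref{prop:LFP-finiteness} it is again a \pname, and by the properties of the least fixpoint construction recalled in~\Cref{subsec:preliminaries-NLP} (following~\cite{DK1989}), $\lfp{P}$ is \emph{negative}, i.e.\ every rule of $\gr{\lfp{P}}$ has an empty positive body. Consequently, by~\Cref{def:NLP-atom-dependency-graph}, the graph $\adg{\lfp{P}}$ contains no positive arcs at all; in particular it has no cycle consisting only of positive arcs, so $\lfp{P}$ is tight (this is exactly the reasoning used in the proof of~\Cref{cor:neg-Datalog-StPM-SuPM}).

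Finally I would apply~\Cref{lem:Datalog-tight-BN-RegM-s-min-TS} to the tight \pname $\lfp{P}$ together with its encoded \acbn $f'$: the regular models of $\lfp{P}$ coincide with the $\leq_s$-minimal trap spaces of $f'$. Chaining this with the first step yields that the regular models of $P$ coincide with the $\leq_s$-minimal trap spaces of $f'$, which is the desired conclusion.

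There is no real obstacle here — the argument is a short composition of three previously established facts — so the only thing to be careful about is making the chain of equivalences explicit and making sure the hypotheses of~\Cref{lem:Datalog-tight-BN-RegM-s-min-TS} are met, namely that tightness of $\lfp{P}$ is genuinely available; the mild subtlety is that one must cite the negativity (hence tightness) of $\lfp{P}$, which is where the least fixpoint construction does the essential work of eliminating positive dependencies.
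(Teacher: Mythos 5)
Your proposal is correct and follows essentially the same route as the paper: both reduce to $\lfp{P}$ via \Cref{theo:Datalog-lfp-model-equivalence} and then exploit the negativity of $\lfp{P}$ to collapse stable and supported partial models before invoking \Cref{cor:Datalog-min-SuPM-BN-min-TS}. The only difference is packaging: the paper inlines the chain of equivalences using \Cref{cor:neg-Datalog-StPM-SuPM} directly, whereas you route through \Cref{lem:Datalog-tight-BN-RegM-s-min-TS} applied to the tight program $\lfp{P}$, which is an equally valid (and arguably cleaner) reuse of an already-established lemma.
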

\begin{proof}
	A three-valued interpreration \(I\) is a regular model of \(P\) \\
	\ifftext \(I\) is a regular model of \(\lfp{P}\) by~\Cref{theo:Datalog-lfp-model-equivalence} \\
	\ifftext \(I\) is a \(\leq_s\)-minimal stable partial model of \(\lfp{P}\) by definition \\
	\ifftext \(I\) is a \(\leq_s\)-minimal supported partial model of \(\lfp{P}\) by~\Cref{cor:neg-Datalog-StPM-SuPM} \\
	\ifftext \(I\) is a \(\leq_s\)-minimal trap space of \(f'\) by~\Cref{cor:Datalog-min-SuPM-BN-min-TS}.
\end{proof}

The following example illustrates the concepts and results introduced above by analyzing a concrete \pname, its least fixpoint, and the correspondence between regular models and trap spaces.

\begin{example}\label{exam:Datalog-preparations}
	Consider \pname \(P\) of~\Cref{exam:Datalog-2-BN-connection}.
	 \Cref{fig:exam-Datalog-2-BN-connection}~(b) shows the atom dependency graph of \(P\), which demonstrates that \(P\) is non-tight because of the cycle \(b \xrightarrow{\oplus} c \xrightarrow{\oplus} b\).
	Program \(P\) has two supported partial models: \(m_2 = \{a = \star, b = \star, c = \star\}\) and \(m_3 = \{a = 1, b = 0, c = 0\}\).
	However, only \(m_3\) is a stable partial model of \(P\).
	The least fixpoint of \(P\) is \(\lfp{P} = \{a \leftarrow \dng{b}\}\).
	The program \(\lfp{P}\) has a unique stable (supported) partial model \(m_3\).
	The encoded \acbn \(f'\) of \(\lfp{P}\) is \(\var{f'} = \{a, b, c\}, f'_a = \neg b, f'_b = 0, f'_c = 0\).
	The \acbn \(f'\) has a unique \(\leq_s\)-minimal trap space \(m_3\).
	This is consistent with~\Cref{lem:Datalog-lfp-BN-RegM-s-min-TS}.
\end{example}

\subsection{Model Existence and Odd Cycles}\label{subsec:model-existence-odd-cycle}

This subsection investigates the existence of models for \pnames in the absence of odd cycles in their atom dependency graphs.
We present both known and novel results that clarify the role of odd cycles in determining the presence or absence of various semantic models.
Our analysis distinguishes between general \pnames and the more restrictive class of uni-rule \pnames, for which stronger guarantees and tighter characterizations can be obtained. 
These results lay the foundation for understanding the structural limitations imposed by odd dependencies.

\paragraph{General \DLN Programs.}

We first show that the absence of odd cycles in a \pname is preserved under its least fixpoint transformation.

\begin{lemma}\label{lem:Datalog-lfp-no-odd-cycle}
	Let \(P\) be a \pname and \(\lfp{P}\) be its least fixpoint.
	If \(\adg{P}\) is has no odd cycle, then \(\adg{\lfp{P}}\) has no odd cycle.
\end{lemma}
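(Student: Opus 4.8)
The plan is to argue by contraposition: assuming \(\adg{\lfp{P}}\) contains an odd cycle, I will produce an odd cycle in \(\adg{P}\). By \Cref{prop:LFP-finiteness} and the definition of the least fixpoint, \(\lfp{P}\) is a (ground) negative \pname, so every arc of \(\adg{\lfp{P}}\) is negative; hence an odd cycle of \(\adg{\lfp{P}}\) is just a cycle \(v_0 \xrightarrow{\ominus} v_1 \xrightarrow{\ominus} \cdots \xrightarrow{\ominus} v_n = v_0\) with \(n\) odd, where each arc \((v_iv_{i+1},\ominus)\) comes from a rule of \(\lfp{P}\) with head \(v_{i+1}\) having \(v_i\) in its (negative) body. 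The core of the argument is to show that each such arc ``expands'' into a walk in \(\adg{P}\) from \(v_i\) to \(v_{i+1}\) carrying an odd number of negative arcs; here by a \emph{walk} I mean a finite sequence of arcs with each arc's end vertex equal to the next arc's start vertex, and \emph{closed} if the first start vertex equals the last end vertex (repeated vertices allowed).

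First I would prove the expansion claim: for every \(i \geq 1\), every rule \(\rho \in \lfp{P}_i\) with head \(v\), and every atom \(u\) in \(\body{\rho}\), there is a walk in \(\adg{P}\) from \(u\) to \(v\) consisting of exactly one negative arc followed by finitely many (possibly zero) positive arcs. This is an induction on \(i\). For \(i=1\), the rule \(\rho \in \gr{P}\) has empty positive body, so \(u \in \nbody{\rho}\) and \(u \xrightarrow{\ominus} v \in E(\adg{P})\). For the inductive step, \(\rho = \translfp{r}{(\{r_1,\dots,r_j\})}\) with \(r \in \gr{P}\) of head \(v\), positive body \(\{q_1,\dots,q_j\}\), negative body \(\{p_1,\dots,p_k\}\), and each \(r_m \in \lfp{P}_{i-1}\) of head \(q_m\); the body of \(\rho\) is \(\{p_1,\dots,p_k\} \cup \bigcup_{m} \body{r_m}\). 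If \(u = p_l\), then \(u \xrightarrow{\ominus} v \in E(\adg{P})\) and we are done; if \(u \in \body{r_m}\), the induction hypothesis gives a walk from \(u\) to \(q_m\) of the required shape, and appending the positive arc \(q_m \xrightarrow{\oplus} v\)—which lies in \(\adg{P}\) since \(q_m \in \pbody{r}\)—yields such a walk from \(u\) to \(v\).

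Next, replacing each of the \(n\) arcs of the odd cycle of \(\adg{\lfp{P}}\) by its expansion produces a closed walk \(W\) in \(\adg{P}\) whose number of negative arcs is exactly \(n\), hence odd. It then remains to establish a small graph-theoretic fact: any closed walk in a signed directed graph with an odd number of negative arcs contains an odd cycle. I would prove this by induction on the length of the walk: if the closed walk is already a simple cycle, it is an odd cycle and we are done; otherwise some vertex is repeated other than at the coinciding start/end, and this lets us split the walk into two strictly shorter closed walks whose numbers of negative arcs sum, modulo \(2\), to that of the original, so at least one is odd and the induction hypothesis applies. Applying this to \(W\) gives an odd cycle of \(\adg{P}\), contradicting the hypothesis; therefore \(\adg{\lfp{P}}\) has no odd cycle.

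The main obstacle I anticipate is the bookkeeping in the expansion claim—tracking precisely how the negative body of \(\translfp{r}{(\{r_1,\dots,r_j\})}\) is assembled from \(\nbody{r}\) and the bodies of the \(r_m\), and checking that the unique negative arc always originates in the rule where the atom occurs negatively. It is worth noting where the analogous statement fails for even cycles: an even \(n\) would only yield a closed walk with an \emph{even} number of negative arcs, which may decompose into two odd cycles rather than an even one—precisely the behaviour exhibited by the counter-example with \(P = \{a \gets c;\, b \gets c;\, c \gets \dng{a},\dng{b}\}\).
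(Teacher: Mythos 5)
Your proof is correct, but it is genuinely different in character from the paper's: the paper disposes of this lemma in one line by citing Lemma~5.3 of Fages (1994), which already establishes the relevant sign-preservation property of the positive-unfolding transformation, whereas you reconstruct the argument from scratch. Your two ingredients are exactly the right ones: (i) the expansion claim, proved by induction on the stage \(i\) of \(\lfp{P}_i\), that every (necessarily negative) body atom \(u\) of a rule of \(\lfp{P}\) with head \(v\) gives rise to a walk \(u \to v\) in \(\adg{P}\) containing exactly one negative arc, so that an odd cycle of the all-negative graph \(\adg{\lfp{P}}\) lifts to a closed walk in \(\adg{P}\) with an odd number of negative arcs; and (ii) the standard decomposition of a closed walk with odd negative-arc parity into a simple odd cycle (needed because the paper's cycles are simple, and correctly handled by splitting at a repeated vertex into two strictly shorter closed walks whose negative-arc counts sum to the original). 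Your closing remark about why the argument breaks for even cycles—an even closed walk may split into two odd cycles—is precisely the phenomenon behind the paper's counter-example with \(P = \{a \gets c;\ b \gets c;\ c \gets \dng{a}, \dng{b}\}\), and explains why the paper must treat the even-cycle case by entirely different means (the syntactic influence graph rather than the least fixpoint). What your approach buys is self-containedness and transparency about exactly which structural property of \(\Pi_P\) is used; what the paper's citation buys is brevity. The only caveat is the bookkeeping you already flag yourself, which is routine.
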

\begin{proof}
	This directly follows from Lemma 5.3 of~\cite{Fages1994}.
\end{proof}

Building on this observation, we now connect the absence of odd cycles in the atom dependency graph to dynamical properties of the associated \acbn, which ultimately allows us to characterize the nature of regular models in such cases.

\begin{theorem}[Theorem 1 of~\cite{R2010}]\label{theo:BN-no-odd-cycle-no-cyclic-asyn-att}
	Let \(f\) be a \acbn\@.
	If \(\ig{f}\) has no odd cycle, then \(\astg{f}\) has no cyclic attractor.
\end{theorem}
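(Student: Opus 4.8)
The plan is to prove the contrapositive: from a cyclic attractor \(A\) of \(\astg{f}\) I would extract an odd cycle of \(\ig{f}\). First I would cut down to a ``core''. Let \(W\) be the set of variables that take both values over the states of \(A\); since \(A\) is cyclic, \(W\neq\emptyset\). Hardwiring every variable outside \(W\) to its constant value on \(A\) produces a \acbn \(f'\) on variable set \(W\) whose influence graph is a sub-graph of \(\ig{f}\) (restriction only removes arcs), so \(\ig{f'}\) still has no odd cycle, and for which the projection \(A'\) of \(A\) is a cyclic attractor of \(\astg{f'}\) in which \emph{every} variable oscillates. It therefore suffices to find an odd cycle in \(\ig{f'}\). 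Because each \(f'_v\) is non-constant on \(A'\), it genuinely depends on some other oscillating variable, so \(\ig{f'}\) has minimum in-degree at least \(1\) and hence contains a cycle.

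Next I would isolate a monotone piece. Take a source strongly connected component \(C_0\) of \(\ig{f'}\), i.e.\ one with no incoming arcs from outside. Since \(C_0\) receives all its arcs internally and the whole graph has minimum in-degree \(\geq 1\), \(\ig{f'}[C_0]\) contains a cycle; if it is odd we are done, so assume every cycle inside \(C_0\) is even. Two oppositely-signed arcs between two vertices of the strongly connected \(C_0\) would, via a return path, yield an odd cycle; hence \(\ig{f'}[C_0]\) has at most one arc per ordered pair and all its cycles are even. A switching argument (the directed form of Harary's balance theorem, valid here because \(C_0\) is strongly connected) then gives a subset \(S\subseteq C_0\) such that flipping the variables in \(S\) (replacing \(x_v\) by \(1-x_v\) for \(v\in S\)) makes every arc inside \(C_0\) positive. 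This flip is an isomorphism of asynchronous state transition graphs and leaves the underlying digraph unchanged, so \(C_0\) stays a source component; consequently the flipped functions \(g=(g_v)_{v\in C_0}\) depend only on \(C_0\)-variables and are monotone (non-decreasing in each argument). Thus \(g\) is an autonomous monotone \acbn, the flip maps \(A'\) to a cyclic attractor of the flipped network, and the projection of that attractor onto \(C_0\) is a trap set of \(\astg{g}\) of size \(\geq 2\) in which every \(C_0\)-variable still oscillates.

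The final step is to contradict this using the fact that a monotone \acbn has only fixed points as asynchronous attractors; more precisely, every trap set of \(\astg{g}\) contains a fixed point of \(g\). I would prove this by a push-up argument: at a \(\leq\)-minimal state \(y^-\) of the trap set every unstable coordinate must flip upward (a downward flip would land below \(y^-\), still in the trap set, contradicting minimality), and by monotonicity these upward flips can be applied one at a time, each staying valid and upward, so the trap set contains a strictly larger state with the same property; iterating terminates at a fixed point \(y^\ast\) of \(g\) lying in the trap set. Since \(g_v\) depends only on \(C_0\)-variables, the \(C_0\)-coordinates of any state of the attractor projecting to \(y^\ast\) are then frozen at \(y^\ast\) along every trajectory, contradicting both the strong connectivity of that attractor and the fact that its \(C_0\)-variables oscillate. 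The main obstacle is assembling all of this correctly: the monotone push-up lemma, the legitimacy of the Harary switching in the directed setting, and the verification that restricting, flipping, and projecting each preserve the relevant attractor / trap-set structure. An alternative route, in the spirit of Thomas's rule that sustained oscillations require a negative feedback loop, would be to follow a closed walk of \(\astg{f}\) inside \(A\) and track, coordinate by coordinate, which flip \emph{causes} the next and with which sign, producing a cycle of \(\ig{f}\) whose arc signs multiply to \(\ominus\); there the difficulty is controlling these signs when a variable is driven by different neighbours at different moments.
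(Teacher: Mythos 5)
The paper does not actually prove this statement: it is imported verbatim as Theorem~1 of Richard (2010), so there is no internal proof to compare against. Your blind argument is, as far as I can check, a correct self-contained derivation, and it takes a genuinely different route from Richard's original (which constructs a negative circuit more directly via an induction on the number of variables). Your route is: restrict to the oscillating variables (where the key observation that \(f_v(x)=x_v\) on the attractor for every non-oscillating \(v\) is what makes the restriction legitimate), pass to a source strongly connected component \(C_0\) of the restricted influence graph, use sign-definiteness plus the Harary/AMT-type switching to turn \(C_0\) into an autonomous \emph{monotone} subnetwork, and then invoke the classical fact that every non-empty trap set of the asynchronous dynamics of a monotone network contains a fixed point (your push-up invariant ``no coordinate is unstable downward'' is preserved under upward flips by monotonicity, which is exactly the point that needs checking, and it does check out). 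The frozen \(C_0\)-coordinates then contradict oscillation via strong connectivity of the attractor. All the auxiliary ingredients are elementary and none is equivalent to the theorem itself, so there is no circularity; the switching step is the same structural fact the paper itself borrows (Theorem~1 of AMT12) in the proof of its Theorem on two complementary stable models, which makes your argument sit nicely alongside the rest of the paper. What your approach buys is conceptual transparency (the contradiction is localized entirely inside one source component, and the only dynamical input is the convergence of monotone asynchronous dynamics); what it costs is a fair amount of bookkeeping about restriction, projection and conjugation preserving attractors, all of which you correctly flag and all of which goes through. Your sketched alternative (tracking which flip causes which along a closed walk in the attractor) is indeed the problematic route, for exactly the reason you give: a variable may be driven by different in-neighbours at different moments, so the signs along the extracted closed walk are not controlled.
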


\begin{theorem}[\textbf{main result}]\label{theo:Datalog-no-odd-cycle-RegM-StM}
	Let \(P\) be a \pname.
	If \(\adg{P}\) has no odd cycle, then all the regular models of \(P\) are two-valued.
\end{theorem}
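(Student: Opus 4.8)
The plan is to reduce the statement to the least fixpoint of $P$ and its encoded \acbn, and then invoke the theorem on odd cycles and asynchronous dynamics. First I would let $Q = \lfp{P}$ and let $f'$ be the encoded \acbn of $Q$, exactly as in~\Cref{lem:Datalog-lfp-BN-RegM-s-min-TS}. Since $\adg{P}$ has no odd cycle,~\Cref{lem:Datalog-lfp-no-odd-cycle} gives that $\adg{Q}$ has no odd cycle, and by~\Cref{prop:Datalog-BN-graphs} the influence graph $\ig{f'}$ is a sub-graph of $\adg{Q}$, so $\ig{f'}$ has no odd cycle either. Applying~\Cref{theo:BN-no-odd-cycle-no-cyclic-asyn-att} then yields that $\astg{f'}$ has no cyclic attractor; equivalently, every asynchronous attractor of $f'$ is a fixed point, i.e.\ consists of a single state.

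Next I would use~\Cref{lem:Datalog-lfp-BN-RegM-s-min-TS}, which identifies the regular models of $P$ with the $\leq_s$-minimal trap spaces of $f'$. Take any such $\leq_s$-minimal trap space $m$. Since a minimal trap space always contains at least one attractor of the \acbn regardless of the update scheme, $m$ contains an asynchronous attractor, which by the previous paragraph is a fixed point $\{s\}$; hence $s \in \cset{m}$. The single state $s$, viewed as a two-valued sub-space, is itself a trap space of $f'$ (being a fixed point, its only out-arc in either STG is the self-loop), and $s \in \cset{m}$ means $\cset{s} \subseteq \cset{m}$, i.e.\ $s \leq_s m$. By the $\leq_s$-minimality of $m$ we conclude $m = s$, so $m$ is two-valued. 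Therefore every regular model of $P$ is two-valued.

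The argument is essentially a chaining of already-established lemmas, so I do not expect a serious obstacle; the only genuinely new step is the final observation that a $\leq_s$-minimal trap space containing a fixed point must coincide with that fixed point, which I view as the crux but which is immediate once one recalls that a fixed point is itself a (two-valued) trap space and that $\leq_s$ is precisely subset inclusion on the represented state sets. One point to be careful about is that the reduction to $\lfp{P}$ is essential: the clean correspondence ``regular models $=$ $\leq_s$-minimal trap spaces'' of~\Cref{lem:Datalog-tight-BN-RegM-s-min-TS} requires tightness, whereas~\Cref{lem:Datalog-lfp-BN-RegM-s-min-TS} removes that hypothesis at the cost of passing to the program $\lfp{P}$ (which is always negative, hence tight) and its \acbn $f'$, which is exactly what makes the whole chain applicable to an arbitrary \pname.
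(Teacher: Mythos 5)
Your proposal is correct and follows essentially the same route as the paper's own proof: reduce to \(\lfp{P}\) via \Cref{lem:Datalog-lfp-no-odd-cycle} and \Cref{lem:Datalog-lfp-BN-RegM-s-min-TS}, apply \Cref{theo:BN-no-odd-cycle-no-cyclic-asyn-att} to rule out cyclic asynchronous attractors, and conclude that every \(\leq_s\)-minimal trap space containing a fixed point must itself be that fixed point. You merely spell out the final minimality step in slightly more detail than the paper does, which is fine.
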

\begin{proof}
	Let \(\lfp{P}\) be the least fixpoint of \(P\).
	By~\Cref{prop:LFP-finiteness}, \(\lfp{P}\) is a \pname.
	By~\Cref{lem:Datalog-lfp-no-odd-cycle}, \(\adg{\lfp{P}}\) has no odd cycle.
	Let \(f'\) be the encoded \acbn of \(\lfp{P}\).
	By~\Cref{lem:Datalog-lfp-BN-RegM-s-min-TS}, the regular models of \(P\) coincide with the \(\leq_s\)-minimal trap spaces of \(f'\).
	
	Since \(\ig{f'}\) is a sub-graph of \(\adg{\lfp{P}}\), \(\ig{f'}\) has no odd cycle.
	By~\Cref{theo:BN-no-odd-cycle-no-cyclic-asyn-att}, \(\astg{f'}\) has no cyclic attractor.
	Each \(\leq_s\)-minimal trap space of \(f'\) contains at least one attractor of \(\astg{f'}\)~\citep{KBS2015}.
	In addition, if a \(\leq_s\)-minimal trap space contains a fixed point, then it is also a fixed point because of the minimality.
	Hence, all the \(\leq_s\)-minimal trap spaces of \(f'\) are fixed points.
	This implies that all the regular models of \(P\) are two-valued.
\end{proof}

An immediate consequence of the above result is the guaranteed existence of stable models for programs whose atom dependency graphs are free of odd cycles.

\begin{corollary}\label{cor:Datalog-no-odd-cycle-StM}
	Consider a \pname \(P\).
	If \(\adg{P}\) has no odd cycle, then \(P\) has at least one stable model.
\end{corollary}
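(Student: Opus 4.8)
The plan is to derive this as an almost immediate consequence of \Cref{theo:Datalog-no-odd-cycle-RegM-StM}, by first establishing that $P$ always has at least one regular model, and then observing that a two-valued regular model is necessarily a stable model.

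First I would argue existence of a regular model. Every \pname has at least one stable partial model: the well-founded model of $P$ is a stable partial model. Since $P$ is a \pname, $\hb{P}$ is finite, so the set of stable partial models of $P$ is finite and non-empty; hence it has a $\leq_s$-minimal element, which by definition is a regular model. Therefore $P$ has at least one regular model, say $I$.

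Next, since $\adg{P}$ has no odd cycle by hypothesis, \Cref{theo:Datalog-no-odd-cycle-RegM-StM} applies and tells us that $I$ is two-valued. It then remains to check that a two-valued regular model is a stable model. A regular model is in particular a stable partial model, so $I$ equals the $\leq_t$-least three-valued model of the reduct $P^I$. Because $I$ is two-valued, the third step in the construction of $P^I$ (replacing some $\dng{b_i}$ by the special atom $\mathbf{u}$) never applies, so $P^I$ is a definite (positive) program containing no occurrence of $\mathbf{u}$; for such a program its $\leq_t$-least three-valued model assigns no $\uval$ and coincides with its $\leq_t$-least two-valued model. Hence $I$ equals the $\leq_t$-least two-valued model of $P^I$, i.e.\ $I$ is a stable model of $P$, which concludes the proof.

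I do not expect a genuine obstacle here: the corollary is essentially a repackaging of \Cref{theo:Datalog-no-odd-cycle-RegM-StM} together with two standard facts—existence of a regular model (via finiteness of $\hb{P}$ and the well-founded model) and the equivalence, in the two-valued case, between regular/stable-partial models and stable models. The only point deserving a line of care is this last equivalence, which relies on the reduct $P^I$ being definite when $I$ is two-valued so that its $\leq_t$-least three-valued and two-valued models agree.
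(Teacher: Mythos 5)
Your proposal is correct and follows essentially the same route as the paper, which derives the corollary immediately from \Cref{theo:Datalog-no-odd-cycle-RegM-StM} together with the fact that every \pname has at least one regular model. You merely spell out the two supporting facts (existence of a regular model via finiteness of \(\hb{P}\), and the agreement of two-valued regular models with stable models via the reduct) that the paper leaves implicit.
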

\begin{proof}
	This immediately follows from~\Cref{theo:Datalog-no-odd-cycle-RegM-StM} and the fact that \(P\) always has at least one regular model.
\end{proof}

Inspired by~\Cref{cor:Datalog-no-odd-cycle-StM}, we explore an interesting result shown in~\Cref{theo:Datalog-one-SCC-two-stable-models}.
To prove this result, we first prove an auxiliary result that establishes a useful structural property of signed directed graphs that are strongly connected (see~\Cref{lem:sign-definite-graph-SCC}).
Then an existing result in \acbns is applied.

\begin{lemma}\label{lem:sign-definite-graph-SCC}
	If a signed directed graph \(G\) is strongly connected and has no odd cycle or has no even cycle, then \(G\) is sign-definite.
\end{lemma}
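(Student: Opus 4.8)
The plan is to argue by contradiction: suppose $G$ is strongly connected, has no odd cycle (the other case being symmetric), yet is not sign-definite, so there exist two distinct vertices $u, v \in V(G)$ together with arcs $(uv, \oplus)$ and $(uv, \ominus)$ in $E(G)$. I would then use strong connectivity to close these two arcs into cycles and derive a parity contradiction. Concretely, since $G$ is strongly connected, there is a path $\pi$ from $v$ back to $u$. Appending $\pi$ to the positive arc $u \xrightarrow{\oplus} v$ yields a cycle $C_+$; appending $\pi$ to the negative arc $u \xrightarrow{\ominus} v$ yields a cycle $C_-$. One subtlety: $\pi$ might revisit $u$ or $v$ internally, so $C_+$ and $C_-$ as written need not be simple cycles in the sense of~\Cref{def:path-cycle}; I would handle this by first shrinking $\pi$ to a simple path from $v$ to $u$ (which exists in any strongly connected graph, since from any walk one can excise loops), and noting that a simple path from $v$ to $u$ together with a single arc $u \to v$ forms a genuine cycle through $u$ and $v$.

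The key observation is then purely about parity of negative-arc counts. Let $k$ be the number of negative arcs in the chosen simple path $\pi$ from $v$ to $u$. Then $C_-$ has $k+1$ negative arcs and $C_+$ has $k$ negative arcs, so exactly one of $C_+$, $C_-$ has an odd number of negative arcs, i.e.\ exactly one of them is an odd cycle. This contradicts the hypothesis that $G$ has no odd cycle. For the dual case — $G$ strongly connected with no even cycle — the very same construction produces two cycles whose negative-arc counts differ by one, so one of them is even, contradicting the absence of even cycles. Hence in both cases no such pair of oppositely-signed arcs between the same pair of distinct vertices can exist, i.e.\ $G$ is sign-definite.

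The main obstacle I anticipate is the bookkeeping around the definition of cycle used in this paper:~\Cref{def:path-cycle} requires that all internal vertices of a cycle be distinct, so I must be careful that the closing path $\pi$ I attach to the arc $u \to v$ genuinely yields an admissible cycle, and that I am not accidentally creating a cycle of length two with parallel arcs of opposite sign (which~\Cref{def:path-cycle} would also need to accommodate — a "cycle" $u \to v \to u$). In fact the degenerate situation where $v$ already has an arc back to $u$ directly is the cleanest instance: the arcs $u \xrightarrow{\oplus} v \xrightarrow{s} u$ and $u \xrightarrow{\ominus} v \xrightarrow{s} u$ are two cycles of length two differing in parity by one, immediately giving the contradiction. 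So the proof reduces to: extract a simple $v$-to-$u$ path, count negative arcs mod $2$, and observe that the two closures have opposite parity. I do not expect any deeper difficulty; the lemma is essentially a one-line parity argument once the graph-theoretic setup is pinned down.
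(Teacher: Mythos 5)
Your proposal is correct and follows essentially the same route as the paper: close the two oppositely-signed $u\to v$ arcs with a common $v$-to-$u$ path (guaranteed by strong connectivity) and observe that the two resulting cycles differ in parity by exactly one negative arc, contradicting the absence of odd (resp.\ even) cycles. The paper phrases this as taking one cycle through $(uv,\oplus)$ and swapping that arc for $(uv,\ominus)$, which is the same parity argument; your extra care about extracting a simple closing path is already absorbed by the paper's definition of path, which requires distinct intermediate vertices.
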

\begin{proof}
	We first prove that each arc of \(G\) belongs to a cycle in \(G\) (*).
	Taken an arbitrary arc \((uv, \epsilon)\) in \(G\) where \(\epsilon \in \{\oplus, \ominus\}\).
	Since \(G\) is strongly connected, there is a directed path from \(v\) to \(u\).
	By adding \((uv, \epsilon)\) to this path, we obtain a cycle.
	
	Assume that \(G\) is not sign-definite.
	Then there are two arcs: \((uv, \oplus)\) and \((uv, \ominus)\).
	By (*), \((uv, \oplus)\) (resp. \((uv, \ominus)\)) belongs to a cycle in \(G\) (say \(C\)).
	\(C\) is an even (resp. odd) cycle because \(G\) has no odd (resp. even) cycle.
	Then \((C - (uv, \oplus)) + (uv, \ominus)\) (resp. \((C - (uv, \ominus)) + (uv, \oplus)\)) is an odd (resp. even) cycle in \(G\).
	This implies a contradiction.
	Hence, \(G\) is sign-definite.
\end{proof}

\begin{theorem}\label{theo:Datalog-one-SCC-two-stable-models}
	Consider a \pname \(P\).
	Suppose that \(\adg{P}\) is strongly connected, has at least one arc, and has no odd cycle.
	If \(P\) is tight, then \(P\) has two stable models \(A\) and \(B\) such that \(\forall v \in \hb{P}\), either \(v \in A\) or \(v \in B\).
	In addition, \(A\) and \(B\) can be computed in polynomial time \wrttext \(|\hb{P}|\).
\end{theorem}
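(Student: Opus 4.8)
The plan is to route the whole argument through the encoded \acbn $f$ and a two-colouring of the dependency graph. Since $P$ is tight,~\Cref{theo:tight-Datalog-StM-SuM} identifies the stable models of $P$ with its supported models, and by~\Cref{cor:Datalog-SuM-BN-fix} these are exactly the fixed points of $f$ (read as their set of true atoms). So it suffices to construct two fixed points $A,B$ of $f$ with $A\cup B=\hb{P}$, and to do so in polynomial time. First observe $|\hb{P}|\ge 2$: a single-vertex $\adg{P}$ carrying an arc would carry a self-loop, which is a positive cycle (contradicting tightness) or an odd cycle (contradicting the hypothesis). Hence, by strong connectivity, every vertex of $\adg{P}$ has in-degree at least one, so every atom has a defining rule in $\gr{P}$.

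Next I would extract the sign structure. By~\Cref{lem:sign-definite-graph-SCC}, $\adg{P}$ is sign-definite; by~\Cref{prop:Datalog-BN-graphs} its sub-graph $\ig{f}$ is then sign-definite too and, being a sub-graph, still has no odd cycle. Since $\adg{P}$ is connected (as it is strongly connected) and has no odd cycle, it admits a Harary $2$-colouring: a partition $\hb{P}=V_1\uplus V_2$ such that every positive arc of $\adg{P}$ stays within a part and every negative arc crosses between the parts. This colouring is computed by a single BFS over $\adg{P}$, propagating the colour along positive arcs and flipping it along negative arcs; consistency is exactly the no-odd-cycle condition, so the whole construction is linear in the size of $\adg{P}$, hence polynomial in $|\hb{P}|$. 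Both $V_1$ and $V_2$ are non-empty: if one were empty all arcs would lie inside the other part and thus be positive, yielding a positive cycle in the strongly connected $\adg{P}$ — contradicting tightness.

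Then I would take $A:=V_2$ and $B:=V_1$ and verify directly that each is a model of $\comp{P}$. Fix a vertex $v$ in the part declared true. For every rule $r\in\gr{P}$ with $\head{r}=v$, each $u\in\pbody{r}$ gives a positive arc $u\to v$, so $u$ lies in the same (true) part and the literal $u$ holds, and each $u\in\nbody{r}$ gives a negative arc $u\to v$, so $u$ lies in the false part and $\dng{u}$ holds; thus $\bodyf{r}$ holds and $f_v$ evaluates to true, matching $v$. Symmetrically, for $v$ in the false part the same adjacency analysis falsifies a literal of every $\bodyf{r}$, so $f_v$ evaluates to false, again matching $v$. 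Hence $A$ and $B$ are supported models, therefore stable models by~\Cref{theo:tight-Datalog-StM-SuM}, and since $\{V_1,V_2\}$ partitions $\hb{P}$ every atom lies in $A$ or in $B$ (indeed in exactly one). Equivalently, one may re-sign coordinates along the Harary partition, turning $f$ into a monotone \acbn $g$ with non-constant local functions whose extremal states $\bot$ and $\top$ are fixed points, and pull these back through the re-signing — this is the point at which an existing \acbn multistability result on two complementary fixed points of balanced regulatory networks can be invoked instead of the hand check.

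The step needing the most care is precisely this last verification that $A$ and $B$ satisfy $\comp{P}$: on the false side one must be sure that no defining rule of $v$ has a body that is vacuously true (an empty body, or a body made trivial by conflicting literals). Sign-definiteness rules out bodies of the shape $\cdots\wedge u\wedge\cdots$ coexisting with $\cdots\wedge\dng{u}\wedge\cdots$ for the same head, and strong connectivity rules out undefined atoms; isolating the exact non-degeneracy condition on the local functions $f_v$ (so that the colouring correctly separates them) is the one part of the proof that is not purely formal. Everything else — the reduction to fixed points of $f$, the $2$-colouring, and the polynomial-time claim — is routine once this is in place.
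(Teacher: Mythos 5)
Your route is the paper's route: tightness plus \Cref{theo:tight-Datalog-StM-SuM} and \Cref{cor:Datalog-SuM-BN-fix} reduce the claim to exhibiting two complementary fixed points of the encoded \acbn; sign-definiteness comes from \Cref{lem:sign-definite-graph-SCC}; your Harary $2$-colouring $V_1\uplus V_2$ is exactly the partition $S^+,S^-$ that the paper imports from \cite{AMT12}; and the two candidate fixed points are the two complementary states of that partition. There is no methodological divergence, and your verification on the ``true'' side is sound.

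The problem is the step you flag and leave open: showing that on the ``false'' side no rule body is vacuously true. This is a genuine gap, and it cannot be closed from the stated hypotheses. The paper closes it by asserting that every $f_j$ is non-constant because $j$ has in-degree at least one in $\ig{f}$, inferred from the in-degree in $\adg{P}$; but $\ig{f}$ is only a sub-graph of $\adg{P}$ (\Cref{prop:Datalog-BN-graphs}), and an atom loses all its incoming influences precisely when its local function collapses to a constant. Concretely, take $P=\{a \leftarrow;\ a \leftarrow \dng{b};\ b \leftarrow \dng{a}\}$: here $\adg{P}$ is the single even cycle $a \xrightarrow{\ominus} b \xrightarrow{\ominus} a$, so it is strongly connected, has an arc, has no odd cycle, and $P$ is tight; yet $f_a = \tval \lor \neg b \equiv \tval$, the unique supported (hence stable) model is $\{a\}$, and $b$ belongs to no stable model, so the theorem's conclusion fails. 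The non-degeneracy you were trying to isolate---no fact among the rules, equivalently no constant $f_v$, equivalently minimum in-degree at least one in $\ig{f}$ rather than in $\adg{P}$---is a genuinely missing hypothesis. Under that extra hypothesis your hand check (and the paper's) does go through: sign-definiteness already precludes a body whose positive and negative literals over a common input conflict, and tightness together with the no-odd-cycle assumption excludes self-loops, so the only remaining degeneracy is exactly the constant case.
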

\begin{proof}
	Let \(f\) be the encoded \acbn of \(P\).
	By~\Cref{theo:tight-Datalog-StM-SuM}, the stable models of \(P\) coincide with the supported models of \(P\), thus the fixed points of \(f\) by~\Cref{cor:Datalog-SuM-BN-fix}.
	We show that \(f\) has two fixed points that are complementary.
	
	Since \(\adg{P}\) is strongly connected and has no negative cycle, it is sign-definite by~\Cref{lem:sign-definite-graph-SCC}.
	Since \(E(\ig{f}) \subseteq E(\adg{P})\), \(\ig{f}\) is also sign-definite.
	The graph \(\adg{P}\) has the minimum in-degree of at least one because it is strongly connected and has at least one arc.
	This implies that for every variable \(j \in \var{f}\), the number of arcs ending at \(j\) in \(\ig{f}\) is at least one.
	Hence, \(f_j\) cannot be constant for every variable \(j \in \var{f}\).
	It is known that when \(\adg{P}\) is strongly connected and has no negative cycle, its set of vertices can be divided into two equivalence classes (say \(S^+\) and \(S^-\)) such that any two vertices in \(S^+\) (resp. \(S^-\)) are connected by either no arc or a positive arc, and there is either no arc or a negative arc between two vertices in \(S^+\) and \(S^-\) (Theorem 1 of~\cite{AMT12}).
	Since \(E(\ig{f}) \subseteq E(\adg{P})\) and \(V(\ig{f}) = V(\adg{P})\), \(S^+\) and \(S^-\) are still such two equivalence classes in \(\ig{f}\).
	
	Let \(x\) be a state defined as: \(x_i = 1\) if \(i \in S^+\) and \(x_i = 0\) if \(i \in S^-\).
	Consider a variable \(j\).
	If \(x_j = 0\), then \(j \in S^{-}\), and for all \(i \in \var{f}\) such that \(\ig{f}\) has a positive arc from \(i\) to \(j\), \(i \in S^{-}\), thus \(x_i = 0\), and for all \(i \in \var{f}\) such that \(\ig{f}\) has a negative arc from \(i\) to \(j\), \(i \in S^{+}\), thus \(x_i = 1\).
	Since \(f_j\) cannot be constant, \(f_j(x) = 0\).
	Analogously, if \(x_j = 1\), then \(f_j(x) = 1\).
	Since \(j\) is arbitrary, we can conclude that \(x\) is a fixed point of \(f\).
	By using the similar deduction, we can conclude that \(\overline{x}\) is also a fixed point of \(f\) where \(\overline{x}_i = 1 - x_i, \forall i \in \var{f}\).
	Let \(A\) and \(B\) are two two-valued interpretations of \(P\) corresponding to \(x\) and \(\overline{x}\).
	Then \(A\) and \(B\) are stable models of \(P\).
	We have that \(\forall v \in \hb{P}\), either \(v \in A\) or \(v \in B\).
	In addition, since \(S^+\) and \(S^-\) can be computed in polynomial time~\citep{AMT12} \wrttext \(|V(\ig{f})| = |\var{f}|\), \(A\) and \(B\) can be computed in polynomial time \wrttext \(|\hb{P}|\).
\end{proof}

The following example demonstrates an application of~\Cref{theo:Datalog-one-SCC-two-stable-models}, showcasing a \pname whose atom dependency graph satisfies the required conditions and admits exactly two complementary stable models.

\begin{example}\label{exam:Datalog-one-SCC-two-stable-models}
	Consider a \pname \(P = \{a \leftarrow \dng{b}; b \leftarrow \dng{a}; b \leftarrow \dng{c}; c \leftarrow \dng{b}\}\).
	\Cref{fig:Datalog-one-SCC-two-stable-models} shows the atom dependency graph of \(P\).
	It is to see that \(\adg{P}\) is strongly connected, has at least one arc, has no odd cycle, and \(P\) is tight.
	The program \(P\) has two supported models: \(A = \{a = \fval, b = \tval, c = \fval\}, B = \{a = \tval, b = \fval, c = \tval\}\).
	These models are also two stable models of \(P\).
	It is easy to see that \(A \cap B = \emptyset\) and \(A \cup B = \hb{P}\).
\end{example}

\begin{figure}[!ht]
	\centering
	\begin{tikzpicture}[node distance=1.5cm and 1.5cm]
		\node[circle, draw] (a) [] {$a$};
		\node[circle, draw] (b) [right=of a, xshift=0cm] {$b$};
		\node[circle, draw] (c) [right=of b, xshift=0cm] {$c$};
		
		\draw[->] (a) edge [bend left=25] node [midway, above, fill=white] {$\ominus$} (b);
		\draw[->] (b) edge [bend left=25] node [midway, above, fill=white] {$\ominus$} (a);
		
		\draw[->] (c) edge [bend left=25] node [midway, above, fill=white] {$\ominus$} (b);
		\draw[->] (b) edge [bend left=25] node [midway, above, fill=white] {$\ominus$} (c);
	\end{tikzpicture}
	\caption{Atom dependency graph of the \pname \(P\) given in~\Cref{exam:Datalog-one-SCC-two-stable-models}.}
	\label{fig:Datalog-one-SCC-two-stable-models}
\end{figure}

\paragraph{Uni-rule \DLN Programs.}

We now turn our attention to a syntactic fragment of \pnames, which we refer to as \emph{uni-rule} \pnames. 
These are programs in which each ground atom appears in the head of at most one ground rule.
Despite their restricted form, uni-rule \pnames arise naturally in various modeling scenarios and exhibit desirable computational properties (see more detailed discussions at~\Cref{re:uni-rule-Datalog-motivations}).
In this subsection, we study the model existence and semantic characteristics of such programs in relation to the structure of their atom dependency graphs.

\begin{remark}\label{re:uni-rule-Datalog-motivations}
	Why a \upname is important?
	First, the class of \upnames is easily identifiable by a syntatic characterization.
	Second, this class is as hard as the class of general \pnames w.r.t the stable model semantics, since identifying whether a \upname has a stable model or not is NP-complete~\citep{SS1997}.
	In addition, computing the (three-valued) well-founded model of a \upname is linear~\citep{SS1997}.
	Third, the corresponding \pname of an abstract argumentation framework is exactly a \upname~\citep{CG2009,CSAD2015}.
	Finally, it has been shown that in general no abstract argumentation semantics is able to coincide with the L-stable model semantics in \pnames, thus the crucial question is whether there exists a restricted class of \pnames for which the semi-stable semantics in abstract argumentation frameworks does coincide with the L-stable model semantics in \pnames~\citep{CSAD2015}.
	In~\cite{CSAD2015}, the L-stable model semantics of a \upname\footnote{In that paper, a \upname is called an \emph{AF-program}.} coincides with the semi-stable semantics of its corresponding abstract argumentation framework.
\end{remark}

By considering \upnames, we obtain several stronger results.
First, we show that the class of \acbns encoding \upnames is \anbns.

\begin{proposition}\label{prop:uni-rule-Datalog-2-AND-NOT-BN}
	Consider a \upname \(P\).
	Then the encoded \acbn \(f\) of \(P\) is an \anbn.
\end{proposition}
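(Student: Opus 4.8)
The plan is to argue directly from~\Cref{def:Datalog-2-BN-encoding} together with the uni-rule hypothesis, by a straightforward case analysis on each variable $v \in \var{f} = \hb{P}$. Recall that an \anbn requires, for every variable $v$, that $f_v$ be $\fval$, $\tval$, or a conjunction of literals not repeating a variable; so it suffices to show $f_v$ has this shape for every $v$.

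First I would fix an arbitrary $v \in \var{f}$ and split according to how many rules of $\gr{P}$ have head $v$. If there is no such rule, then $f_v = \fval$ by the convention in~\Cref{def:Datalog-2-BN-encoding}, and the constant $\fval$ is a legitimate \anbn update function. Otherwise, since $P$ is uni-rule, there is exactly one rule $r \in \gr{P}$ with $\head{r} = v$, so the disjunction defining $f_v$ collapses to the single disjunct $f_v = \bodyf{r} = \bigwedge_{u \in \pbody{r}} u \wedge \bigwedge_{u \in \nbody{r}} \neg u$.

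Next I would verify that this $f_v$ already has the required form. If $\body{r} = \emptyset$, i.e.\ $r$ is a fact, then $\bodyf{r} = \tval$, again a permitted constant. Otherwise $f_v$ is a conjunction of literals, each being an atom of $\hb{P}$ or its negation. Since $\pbody{r}$ and $\nbody{r}$ are sets, no atom is repeated inside $\pbody{r}$ or inside $\nbody{r}$; hence the only way two literals of the same variable could appear in $f_v$ is if some atom belongs to both $\pbody{r}$ and $\nbody{r}$, but then $\bodyf{r}$ is equivalent to $\fval$ and we may take $f_v = \fval$. In all cases $f_v$ is $\fval$, $\tval$, or a conjunction of literals over pairwise distinct variables, which is exactly what the definition of an \anbn demands. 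As $v$ was arbitrary, $f$ is an \anbn.

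This is essentially an unfolding of the definitions, so I do not anticipate a real obstacle; the only point needing care is the bookkeeping about repeated variables, which is resolved by the facts that $\pbody{r}$ and $\nbody{r}$ are sets and that a rule whose positive and negative bodies intersect has an unsatisfiable body formula and can therefore be encoded by the constant $\fval$.
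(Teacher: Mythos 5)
Your proof is correct and follows essentially the same route as the paper's: a case split on whether $v$ has zero rules (giving $f_v = \fval$) or, by the uni-rule hypothesis, exactly one rule (giving $f_v = \bodyf{r}$, a constant or a conjunction of literals). You are in fact slightly more careful than the paper, which omits both the fact case ($\bodyf{r} = \tval$, still a permitted constant) and the bookkeeping about an atom occurring in both $\pbody{r}$ and $\nbody{r}$; your resolution of the latter is sound.
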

\begin{proof}
	Consider a variable \(v \in \hb{P}\).
	There are two cases.
	\textbf{Case 1}: there is no rule whose head is \(v\).
	Then \(f_v = 0\) by construction.
	\textbf{Case 2}: There is exactly one rule \(r\) whose head is \(v\).
	Then \(f_v = \bigwedge_{w \in \pbody{r}}w \land \bigwedge_{w \in \nbody{r}}\neg w\) by construction.
	It follows that \(f_v\) is either 0 or a conjunction of literals.
	Hence, \(f\) is an \anbn.
\end{proof}

A key advantage of uni-rule \pnames lies in the simplicity of their syntactic structure, which enables a tighter correspondence between their logical and dynamical representations.
The following insight formalizes this observation by establishing that, for any \pname, the atom dependency graph coincides exactly with the influence graph of its associated \acbn.
This structural alignment is significant because it allows one to analyze model-theoretic and dynamical properties of such programs interchangeably through either graph, thereby facilitating the transfer of results and intuitions across the logic programming and \acbn domains.

\begin{proposition}\label{prop:uni-rule-Datalog-AND-NOT-BN-graphs}
	Consider a \upname \(P\).
	Let \(f\) be the encoded \acbn of \(P\).
	Then the influence graph of \(f\) and the atom dependency graph of \(P\) coincide.
\end{proposition}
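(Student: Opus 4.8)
The plan is to prove the two inclusions $E(\ig{f}) \subseteq E(\adg{P})$ and $E(\adg{P}) \subseteq E(\ig{f})$ separately, since the equality of vertex sets ($V(\ig{f}) = V(\adg{P}) = \hb{P}$) is immediate from~\Cref{def:Datalog-2-BN-encoding}. The forward inclusion $E(\ig{f}) \subseteq E(\adg{P})$ is already established in full generality by~\Cref{prop:Datalog-BN-graphs}, so that half requires no new argument; I would simply invoke it. The entire content of the proposition therefore lies in the reverse inclusion $E(\adg{P}) \subseteq E(\ig{f})$, which exploits the uni-rule hypothesis.

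For the reverse inclusion, I would take an arbitrary arc $(uv, s) \in E(\adg{P})$ and show it is an arc of $\ig{f}$. Since $v$ has an incoming arc in $\adg{P}$, there is at least one rule with head $v$, so by the uni-rule property there is \emph{exactly one} such ground rule $r$, and by~\Cref{prop:uni-rule-Datalog-2-AND-NOT-BN} we have $f_v = \bigwedge_{w \in \pbody{r}} w \wedge \bigwedge_{w \in \nbody{r}} \neg w$, a single conjunction of literals with no repeated variable. Now split on the sign: if $s = \oplus$, then $u \in \pbody{r}$, so the literal $u$ occurs positively in $f_v$ and $\neg u$ does not; if $s = \ominus$, then $u \in \nbody{r}$ (note $\pbody{r}$ and $\nbody{r}$ are disjoint for a ground rule since a literal and its negation cannot both sit in one AND-NOT conjunction), so $\neg u$ occurs and $u$ does not. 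In the $s = \oplus$ case, pick the state $x$ that sets every variable occurring in $f_v$ to the value making its literal true (and anything else arbitrarily); then $f_v(x[u \leftarrow 1]) = 1$ while $f_v(x[u \leftarrow 0]) = 0$, witnessing $(uv, \oplus) \in E(\ig{f})$. The $s = \ominus$ case is symmetric, choosing $x$ so that flipping $u$ from $0$ to $1$ kills the unique conjunction.

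The only subtlety—and the one place I would be careful—is the disjointness of $\pbody{r}$ and $\nbody{r}$ for the \emph{ground} rule $r$: in a general normal rule one could write $p \gets q, \dng{q}$, in which case $q$ would have both a positive and a negative arc to $p$ in $\adg{P}$, but $f_v$ would then be $q \wedge \neg q = 0$ (constant), so $v$ would have no incoming arc in $\ig{f}$, breaking the inclusion. However, $f_v$ being constant $0$ in this way contradicts $f$ being an \anbn under the stated assumption ``$f_v$ does not contain two literals of the same variable'' (and correspondingly the implicit convention that such degenerate rules are excluded); so this case does not arise, and I would either cite that assumption or note that such a rule contributes the identically-false conjunction and is treated as ``no rule.'' Given that, the witnessing-state construction goes through routinely and the two inclusions combine to give $\ig{f} = \adg{P}$.
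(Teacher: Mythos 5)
Your proposal matches the paper's proof essentially step for step: the same vertex-set observation, the same appeal to \Cref{prop:Datalog-BN-graphs} for the inclusion $E(\ig{f}) \subseteq E(\adg{P})$, and the same witnessing-state construction (positive-body atoms set to $1$, negative-body atoms to $0$, the rest arbitrary) for the reverse inclusion using the uniqueness of the rule with head $v$. Your extra remark on the disjointness of $\pbody{r}$ and $\nbody{r}$ addresses a degenerate case the paper leaves implicit (it is covered by the standing assumption that an \anbn function contains no two literals of the same variable), so your treatment is if anything slightly more careful.
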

\begin{proof}
	By construction, \(V(\ig{f}) = V(\adg{P}) = \hb{P}\).
	By~\Cref{prop:Datalog-BN-graphs}, \(E(\ig{f}) \subseteq E(\adg{P})\).
	Assume that \((uv, \oplus)\) be an arc in \(E(\adg{P})\).
	There exists a rule \(r \in \gr{P}\) such that \(\head{r} = v\) and \(u \in \pbody{r}\).
	Let \(x\) be a state of \(f\) such that \(x(w) = 0\) if \(w \in \nbody{r}\), \(x(w) = 1\) if \(w \in \pbody{r}\), and \(x(w) = 1\) otherwise.
	Since \(P\) is uni-rule, \(f_v = \bigwedge_{w \in \pbody{r}}w \land \bigwedge_{w \in \nbody{r}}\neg w\).
	We have that \(f_v(x[u = 0]) = 0 < f_v(x[u = 1]) = 1\).
	Hence, \((uv, \oplus)\) is also an arc in \(E(\ig{f})\).
	The case that \((uv, \ominus)\) is an arc in \(E(\adg{P})\) is similar.
	It follows that \(E(\adg{P}) \subseteq E(\ig{f})\), leading to \(E(\ig{f}) = E(\adg{P})\).
	Hence, \(\ig{f}\) coincides with \(\adg{P}\).
\end{proof}

\begin{remark}
	It is easy to see that the ground instantiation of a \upname is uniquely determined by its atom dependency graph.
	Similarly, an \anbn is uniquely determined by its influence graph.
\end{remark}

The notion of \emph{delocalizing triple} plays a central role in analyzing the structural properties of signed directed graphs that underlie both uni-rule \pnames and AND-NOT \acbns. 
As established earlier, these two formalisms are uniquely specified through their respective graphs: the atom dependency graph in the case of \pnames, and the influence graph for \acbns. 
Within this unified graphical perspective, delocalizing triples—introduced by~\citet{RR2013}—serve as critical structural motifs that can disrupt cyclic behaviors and affect the existence or uniqueness of fixed points. The following definition formalizes this concept, followed by a concrete example.

\begin{definition}[\cite{RR2013}]\label{def:signed-digraph-delocalizing-triple}
	Given a signed directed graph \(G\), a cycle \(C\) of \(G\), and vertices \(u\), \(v_1\), \(v_2\) of \(G\), \((u, v_1, v_2)\) is said to be a \emph{\delt} of \(C\) when 1) \(v_1\), \(v_2\) are distinct vertices of \(C\); 2) \((uv_1, \oplus)\) and  \((uv_2, \ominus)\) are arcs of \(G\) that are not in \(E(C)\).
	Such a \delt is called \emph{internal} if \(u \in V(C)\), and \emph{external} otherwise.
\end{definition}

\begin{example}\label{exam:AND-NOT-BN-del-triple}
	Consider the signed directed graph \(G\) taken from~\cite[Figure 4]{VBHKWL12}.
	It is also shown in~\Cref{fig:AND-NOT-BN-del-triple}.
	Regarding cycle \(C_1 = v_3 \xrightarrow{\ominus} v_4 \xrightarrow{\ominus} v_3\) of \(G\), \((v_1, v_3, v_4)\) is an external \delt of \(C_1\) because \((v_1v_3, \oplus)\) and \((v_1v_4, \ominus)\) are arcs of \(G\) but not \(C_1\) and \(v_1 \not \in V(C_1)\).
	Regarding cycle \(C_2 = v_1 \xrightarrow{\oplus} v_2 \xrightarrow{\ominus} v_4 \xrightarrow{\ominus} v_3 \xrightarrow{\oplus} v_5 \xrightarrow{\oplus} v_1\) of \(G\), \((v_1, v_5, v_4)\) is an internal \delt of \(C_2\) because \((v_1v_5, \oplus)\) and \((v_1v_4, \ominus)\) are arcs of \(G\) but not \(C_2\) and \(v_1 \in V(C_2)\).
	Every remaining cycle of \(G\) has no \delt.
\end{example}

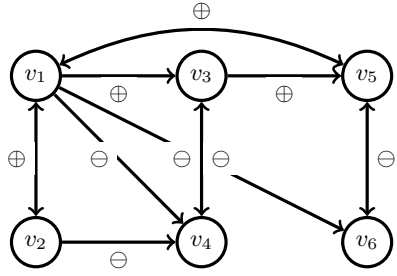
\begin{figure}[!ht]
	\centering
	\begin{tikzpicture}[node distance=1.5cm and 1.5cm]
		\node[circle, draw] (v1) [] {$v_1$};
		\node[circle, draw] (v2) [below=of v1, xshift=0cm] {$v_2$};
		\node[circle, draw] (v3) [right=of v1, xshift=0cm] {$v_3$};
		\node[circle, draw] (v4) [right=of v2, xshift=0cm] {$v_4$};
		\node[circle, draw] (v5) [right=of v3, xshift=0cm] {$v_5$};
		\node[circle, draw] (v6) [right=of v4, xshift=0cm] {$v_6$};
		
		\draw[<->] (v1) edge [] node [midway, left, fill=white] {$\oplus$} (v2);
		\draw[->] (v1) edge [] node [midway, left, fill=white] {$\ominus$} (v4);
		\draw[->] (v1) edge [] node [midway, below, fill=white] {$\oplus$} (v3);
		\draw[->] (v1) edge [] node [midway, left, fill=white] {$\ominus$} (v6);
		
		\draw[->] (v2) edge [] node [midway, below, fill=white] {$\ominus$} (v4);
		
		\draw[<->] (v3) edge [] node [midway, right, fill=white] {$\ominus$} (v4);
		\draw[->] (v3) edge [] node [midway, below, fill=white] {$\oplus$} (v5);
		
		\draw[<->] (v5) edge [bend right=25] node [midway, above, fill=white] {$\oplus$} (v1);
		\draw[<->] (v5) edge [] node [midway, right, fill=white] {$\ominus$} (v6);
	\end{tikzpicture}
	\caption{Signed directed graph \(G\) of~\Cref{exam:AND-NOT-BN-del-triple}.}
	\label{fig:AND-NOT-BN-del-triple}
\end{figure}

Cycles in the influence graph of an AND-NOT \acbn (or equivalently, the atom dependency graph of a uni-rule \pname) can critically impact the existence of fixed points or stable models.
\citet{RR2013} showed that the presence of internal delocalizing triples within odd cycles imposes the stability of fixed points.
The following theorem formalizes this result, guaranteeing the existence of a fixed point for any AND-NOT \acbn whose odd cycles are all ``internally delocalized.''
As a direct consequence, we obtain a sufficient condition for the existence of stable models in tight uni-rule \pnames.

\begin{theorem}[Theorem 3' of~\cite{RR2013}]\label{theo:AND-NOT-BN-fix-odd-cycle-internal-del-triple}
	Let \(f\) be an \anbn.
	If every odd cycle of \(\ig{f}\) has an internal \delt, then \(f\) has at least one fixed point.
\end{theorem}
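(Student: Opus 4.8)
Since this is Theorem~3' of~\citet{RR2013} --- under the dictionary that an \anbn is completely determined by its signed influence graph, as already noted --- the simplest route is to invoke that result directly. For a self-contained argument, the plan is to induct on $|V(\ig{f})|$, combining a structural reduction, an explicit construction in the negative-cycle-free case, and a delocalizing-triple argument for odd cycles.

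\emph{Reduction to strongly connected influence graphs.} If $\ig{f}$ is not strongly connected, I would pick a source strongly connected component $S$. All in-neighbours of a vertex of $S$ lie in $S$, so the restriction $f|_S$ is a well-defined \anbn whose influence graph is the sub-digraph of $\ig{f}$ induced on $S$; since an internal \delt of a cycle survives passage to any induced sub-digraph still containing that cycle, every odd cycle of $\ig{f|_S}$ carries an internal \delt, and the induction hypothesis yields a fixed point $y$ of $f|_S$. Substituting the values $y$ for the $S$-variables produces an \anbn $g$ on $V(\ig{f})\setminus S$: this substitution only deletes literals of $S$-variables from the conjunctions, so $\ig{g}$ is a sub-digraph of the sub-digraph of $\ig{f}$ induced on $V(\ig{f})\setminus S$, and any vertex whose function becomes constant has no in-arc in $\ig{g}$ and hence lies on no cycle. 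Consequently every odd cycle of $\ig{g}$ is an odd cycle of $\ig{f}$ avoiding $S$, and its internal \delt $(u,v_1,v_2)$ is preserved --- $u$ lies on the cycle so $u \notin S$, and the arcs $(uv_1,\oplus)$, $(uv_2,\ominus)$ cannot be deleted because $v_1,v_2$ still carry in-arcs in $\ig{g}$, which forces their functions to retain the literals $u$ and $\neg u$. The induction hypothesis then gives a fixed point of $g$, which together with $y$ is a fixed point of $f$.

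\emph{The negative-cycle-free case.} If $\ig{f}$ is strongly connected and has no odd cycle, the balance property splits $V(\ig{f})$ into $S^{+}\sqcup S^{-}$ with positive arcs internal to a class and negative arcs across classes --- precisely the partition of~\citet{AMT12} used in the proof of~\Cref{theo:Datalog-one-SCC-two-stable-models}. Strong connectivity forces every vertex to have in-degree at least one (when $|V(\ig{f})|\ge 2$; the one-vertex case is immediate, a negative self-loop being an odd cycle), so the state $x$ with $x_v = 1$ on $S^{+}$ and $x_v = 0$ on $S^{-}$ satisfies $f_v(x)=x_v$ for every $v$ and is a fixed point of $f$.

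\emph{The case with odd cycles} is the crux. I would fix an odd cycle $C$ of $\ig{f}$ together with an internal \delt $(u,v_1,v_2)$ of $C$, and observe that at any fixed point $x$: if $x_u = 1$ then $f_{v_2}(x)=0$ (the literal $\neg u$ occurs in $f_{v_2}$), while if $x_u = 0$ then $f_{v_1}(x)=0$ (the literal $u$ occurs in $f_{v_1}$) --- so the cycle $C$ is necessarily broken at $v_1$ or at $v_2$, and the delocalizing-triple hypothesis is exactly the structural guarantee that this breaking can be carried out consistently throughout the network. The hard part will be turning this observation into a valid induction: one has to propagate the forced zeros, verify that the resulting smaller \anbn still has the property that every odd cycle carries an internal \delt, and --- the delicate point --- ensure that the induced partial assignment extends to a genuine fixed point of $f$ rather than only a locally consistent configuration (setting $x_u$ by fiat need not respect $f_u(x)=x_u$). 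This is the technical core of~\citet{RR2013}, handled there by transferring the question to the theory of kernels in directed graphs (for a negative \anbn, the fixed points are exactly the kernels of the underlying digraph), and it is the main obstacle to a short self-contained proof --- which is why I would ultimately cite their Theorem~3' here.
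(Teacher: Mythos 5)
The paper states this result purely as an imported theorem (Theorem 3' of~\cite{RR2013}) and supplies no proof of its own, so your decision to ultimately cite that result directly is exactly what the paper does. Your accompanying sketch (the source-SCC reduction, the sign-balanced construction in the odd-cycle-free strongly connected case, and the observation that an internal delocalizing triple forces the cycle to be broken at \(v_1\) or \(v_2\) in any fixed point) is a sensible outline, and you correctly identify --- and honestly flag as unresolved --- that propagating this into an actual fixed-point construction is the genuine technical core, which the cited work handles via kernel theory.
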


\begin{corollary}\label{cor:uni-rule-Datalog-tight-odd-cycle-internal-del-triple-StM}
	Consider a \upname \(P\).
	If \(P\) is tight and every odd cycle of \(\adg{P}\) has an internal \delt, then \(P\) has at least one stable model.
\end{corollary}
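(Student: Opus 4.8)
The plan is to chain together the graphical and semantic correspondences already established for uni-rule \pnames and tight \pnames, routing everything through the encoded \acbn and the result of~\citet{RR2013} on internal delocalizing triples in \anbns. Let $f$ be the encoded \acbn of $P$. The first step is to observe that $f$ is an \anbn by~\Cref{prop:uni-rule-Datalog-2-AND-NOT-BN}, so that~\Cref{theo:AND-NOT-BN-fix-odd-cycle-internal-del-triple} is applicable to it.

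Next I would transfer the hypothesis on odd cycles from the syntactic side to the dynamical side. By~\Cref{prop:uni-rule-Datalog-AND-NOT-BN-graphs}, the influence graph $\ig{f}$ and the atom dependency graph $\adg{P}$ coincide (not merely one being a sub-graph of the other, which is the crucial strengthening available for uni-rule programs). Hence the assumption ``every odd cycle of $\adg{P}$ has an internal \delt'' is literally the assumption ``every odd cycle of $\ig{f}$ has an internal \delt''. Applying~\Cref{theo:AND-NOT-BN-fix-odd-cycle-internal-del-triple} then yields that $f$ has at least one fixed point.

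Finally I would translate this fixed point back into a model of $P$. By~\Cref{cor:Datalog-SuM-BN-fix}, the supported models of $P$ coincide with the fixed points of $f$, so $P$ has at least one supported model. Since $P$ is tight,~\Cref{theo:tight-Datalog-StM-SuM} tells us that the set of stable models of $P$ coincides with the set of supported models of $P$; therefore $P$ has at least one stable model, as desired.

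There is essentially no analytic obstacle here: the content of the corollary is entirely packaged inside the earlier lemmas and propositions, and the proof is a short composition of four references. The only point requiring minor care is citing~\Cref{prop:uni-rule-Datalog-AND-NOT-BN-graphs} rather than the weaker~\Cref{prop:Datalog-BN-graphs}, since we genuinely need $\ig{f} = \adg{P}$ (an inclusion in the wrong direction would not let us push the odd-cycle hypothesis onto $\ig{f}$); this equality is exactly what the uni-rule restriction buys us, and it is also where tightness is silently compatible, since tightness of $P$ is a condition on $\adg{P}$ and hence on $\ig{f}$ as well.
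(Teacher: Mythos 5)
Your proposal is correct and follows essentially the same route as the paper's proof: encode $P$ as a \acbn, use \Cref{prop:uni-rule-Datalog-AND-NOT-BN-graphs} to transfer the odd-cycle hypothesis to $\ig{f}$, apply \Cref{theo:AND-NOT-BN-fix-odd-cycle-internal-del-triple} to get a fixed point, and translate back via \Cref{cor:Datalog-SuM-BN-fix} and \Cref{theo:tight-Datalog-StM-SuM}. The only (minor, positive) difference is that you explicitly cite \Cref{prop:uni-rule-Datalog-2-AND-NOT-BN} to justify that the AND-NOT hypothesis of the fixed-point theorem is met, which the paper leaves implicit.
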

\begin{proof}
	Let \(f\) be the encoded \acbn of \(P\).
	By~\Cref{cor:Datalog-SuM-BN-fix}, the supported models of \(P\) coincide with the fixed points of \(f\).
	Since \(\ig{f} = \adg{P}\) by~\Cref{prop:uni-rule-Datalog-AND-NOT-BN-graphs}, every odd cycle of \(\ig{f}\) has an internal \delt.
	By~\Cref{theo:AND-NOT-BN-fix-odd-cycle-internal-del-triple}, \(f\) has at least one fixed point.
	Since \(P\) is tight, the supported models of \(P\) coincide with the stable models of \(P\) by~\Cref{theo:tight-Datalog-StM-SuM}.
	Hence, \(P\) has at least one stable model.
\end{proof}

We generalize the above result for stable models to regular models (see~\Cref{theo:uni-rule-Datalog-tight-odd-cycle-internal-del-triple-RegM}).

\begin{theorem}[\textbf{main result}]\label{theo:uni-rule-Datalog-tight-odd-cycle-internal-del-triple-RegM}
	Consider a \upname \(P\).
	If \(P\) is tight and every odd cycle of \(\adg{P}\) has an internal \delt, then every regular model of \(P\) is two-valued.
\end{theorem}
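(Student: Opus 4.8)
The plan is to argue entirely on the Boolean network side, in the spirit of the proof of \Cref{theo:Datalog-no-odd-cycle-RegM-StM}, but feeding in \Cref{theo:AND-NOT-BN-fix-odd-cycle-internal-del-triple} instead of the ``no odd cycle'' hypothesis, and without the least-fixpoint detour since tightness is available directly. Let $f$ be the encoded \acbn of $P$. By \Cref{prop:uni-rule-Datalog-2-AND-NOT-BN} $f$ is an \anbn, and by \Cref{prop:uni-rule-Datalog-AND-NOT-BN-graphs} $\ig{f} = \adg{P}$; hence every odd cycle of $\ig{f}$ has an internal \delt. Since $P$ is tight, \Cref{lem:Datalog-tight-BN-RegM-s-min-TS} reduces the statement to showing that every $\leq_s$-minimal trap space $m$ of $f$ is a fixed point (equivalently, two-valued).

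Suppose for contradiction that a $\leq_s$-minimal trap space $m$ has a non-empty set of free variables $F = \{v \in \var{f} : m(v) = \uval\}$. The key device is the reduced network $g$ on variable set $F$, where each $g_v$ is obtained from $f_v$ by substituting every variable $w \notin F$ with the constant $m(w)$; this is well defined because $m$ is a trap space, and $g$ is again an \anbn, since substituting constants into a conjunction of literals gives a conjunction of literals (or a constant). Two consequences of the $\leq_s$-minimality of $m$ drive the argument. First, $g$ has no fixed point: a fixed point of $g$, together with the fixed part of $m$, would be a state forming a trap space of $f$ strictly below $m$. Second, $g_v$ is non-constant for every $v \in F$: if $g_v \equiv c$ for some constant $c$, then $m[v \leftarrow c]$ would be a trap space of $f$ strictly below $m$.

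Now apply the contrapositive of \Cref{theo:AND-NOT-BN-fix-odd-cycle-internal-del-triple} to $g$: since $g$ is an \anbn with no fixed point, $\ig{g}$ has an odd cycle $C$ with no internal \delt (in $\ig{g}$). Fixing variables can only delete arcs, so $\ig{g}$ is a sub-graph of $\ig{f} = \adg{P}$, and $C$ is therefore an odd cycle of $\adg{P}$ with $V(C) \subseteq F$; by the theorem's hypothesis it has an internal \delt $(u, v_1, v_2)$ of $C$ in $\adg{P}$, with $u, v_1, v_2 \in V(C) \subseteq F$. It remains to show this \delt survives in $\ig{g}$. Since $(uv_1, \oplus)$ is an arc of $\ig{f} = \adg{P}$ and $f$ is an \anbn, $f_{v_1}$ is a conjunction of literals containing $u$ as a positive literal; because $v_1 \in F$, $g_{v_1}$ is non-constant, so substituting the fixed variables of $f_{v_1}$ does not reduce it to $\fval$, and $g_{v_1}$ still contains the literal $u$, giving $(uv_1, \oplus) \in E(\ig{g})$. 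Symmetrically $(uv_2, \ominus) \in E(\ig{g})$. Both arcs lie outside $E(C)$ (part of the \delt definition) and $u \in V(C)$, so $(u, v_1, v_2)$ is an internal \delt of $C$ in $\ig{g}$, contradicting the choice of $C$. Hence $F = \emptyset$, $m$ is a fixed point, and the theorem follows.

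I expect the delicate point to be the last step — stability of the \delt under the reduction $g = f|_m$ — because fixing variables can in general destroy arcs of $\adg{P}$; this is precisely where both the non-constancy observation and the conjunction-of-literals shape of \anbn update functions are essential. The preliminary bookkeeping (well-definedness of $f|_m$ from the trap-space property, closure in the \anbn class, and the inclusion $\ig{f|_m} \subseteq \adg{P}$) is routine but needs to be stated carefully, as it underlies every subsequent step.
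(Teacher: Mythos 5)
Your proof is correct, and at the top level it follows the same strategy as the paper's: pass to the encoded \anbn, use tightness and \Cref{lem:Datalog-tight-BN-RegM-s-min-TS} to reduce to showing every \(\leq_s\)-minimal trap space is a fixed point, build a reduced \anbn from a putative non-two-valued minimal trap space \(m\), invoke \Cref{theo:AND-NOT-BN-fix-odd-cycle-internal-del-triple}, and contradict minimality. Where you genuinely diverge is in the reduction and, consequently, in how the internal \delt is shown to survive. The paper's reduced network \(f'\) keeps all variables, sets \(f'_v = m(v)\) for the fixed ones and leaves \(f'_v = f_v\) completely untouched for the free ones; since the free coordinates' update functions are unchanged, the arcs \((uv_1,\oplus)\) and \((uv_2,\ominus)\) into free \(v_1, v_2\) are retained for free, and all that remains is a short case analysis showing that a triple with a fixed vertex forces the cycle itself to be broken. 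Your \(g\) instead restricts to the free variables and substitutes the constants into the bodies, which could in principle collapse a free coordinate to a constant and destroy exactly the arcs you need; you correctly plug this hole by using \(\leq_s\)-minimality a second time to rule out constant coordinates (via \(m[v \leftarrow c]\) being a smaller trap space) and then exploiting the conjunction-of-literals shape of \anbn functions to conclude the literal \(u\) survives in \(g_{v_1}\) and \(g_{v_2}\). Both arguments are sound: the paper's avoids the non-constancy lemma at the cost of carrying ``dead'' constant coordinates, while yours buys the cleaner structural fact that the reduced network on the free part has no constant coordinate and no fixed point; your use of the contrapositive of \Cref{theo:AND-NOT-BN-fix-odd-cycle-internal-del-triple} is just a logically equivalent framing of the paper's direct application.
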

\begin{proof}
	Let \(f\) be the encoded \acbn of \(P\).
	By~\Cref{prop:uni-rule-Datalog-2-AND-NOT-BN}, \(f\) is an \anbn.
	By~\Cref{prop:uni-rule-Datalog-AND-NOT-BN-graphs}, \(\ig{f} = \adg{P}\).
	It follows that every odd cycle of \(\ig{f}\) has an internal \delt.
	Since \(P\) is tight, by~\Cref{lem:Datalog-tight-BN-RegM-s-min-TS}, the regular models of \(P\) coincide with the \(\leq_s\)-minimal trap spaces of \(f\).
	
	Assume that \(m\) is a non-trivial \(\leq_s\)-minimal trap space of \(f\) (i.e., \(m\) is not two-valued).
	We build the new \acbn \(f'\) as follows: for every \(v \in \var{f}\) and \(m(v) \neq \star\), \(f'_v = m(v)\); and for every \(v \in \var{f}\) and \(m(v) = \star\), \(f'_v = f_v\).
	
	It is easy to derive that \(\ig{f'}\) is a sub-graph of \(\ig{f}\).
	Hence, an odd cycle in \(\ig{f'}\) is also an odd cycle in \(\ig{f}\).
	We show that every odd cycle of \(\ig{f'}\) has an internal \delt (*).
	Indeed, for every odd cycle \(C\) of \(\ig{f}\), we have two cases.
	\textbf{Case 1}: every internal \delt of \(C\) has the first vertex \(u\) such that \(m(u) \neq \star\).
	Then since \(u \in V(C)\), \(C\) is broken in \(\ig{f'}\) because all the input arcs of \(u\) in \(\ig{f}\) are removed in \(\ig{f'}\).
	\textbf{Case 2}: there is an internal \delt \((u, v_1, v_2)\) of \(C\) such that \(m(u) = \star\).
	If \(m(v_1) \neq \star\) or \(m(v_2) \neq \star\), then \(C\) is broken in \(\ig{f'}\) because \(v_1, v_2 \in V(C)\).
	Otherwise, the arcs \((uv_1, \oplus)\) and \((uv_2, \ominus)\) of \(\ig{f}\) are retained in \(\ig{f'}\).
	In this case, if \(C\) still appears in \(\ig{f'}\), \((u, v_1, v_2)\) is still an internal \delt of \(C\) in \(\ig{f'}\).
	In all cases, (*) still preserves.
	
	It follows that \(f'\) has at least one fixed point (say \(m_{fix}\)) by~\Cref{theo:AND-NOT-BN-fix-odd-cycle-internal-del-triple}.
	Since a fixed point is a two-valued complete trap space, \(m_{fix}(v) = m(v)\) for every \(v \in \var{f}, m(v) \neq \star\).
	Obviously, \(m_{fix}\) is a complete trap space of \(f\).
	We have \(m_{fix} <_s m\) which is a contradiction.
	Hence, all the \(\leq_s\)-minimal trap spaces of \(f\) are two-valued.
	This implies that all the regular models of \(P\) are two-valued.
\end{proof}

\Cref{cor:uni-rule-Datalog-tight-odd-cycle-internal-del-triple-StM} and~\Cref{theo:uni-rule-Datalog-tight-odd-cycle-internal-del-triple-RegM} are true for only tight \upnames.
We make two respective conjectures for (general) \upnames.

\begin{conjecture}\label{conj:uni-rule-Datalog-odd-cycle-internal-del-triple-StM}
	Let \(P\) be a \upname.
	If every odd cycle of \(\adg{P}\) has an internal \delt, then \(P\) has at least one stable model.
\end{conjecture}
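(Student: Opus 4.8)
The plan is to reduce the statement, through the least-fixpoint transformation, to a purely graph-theoretic claim about odd cycles, and then attack that claim by a minimal-counterexample argument that leans on the rigidity of uni-rule programs. First I would pass to $\lfp{P}$: by~\Cref{prop:LFP-finiteness} it is again a \pname, and a short induction shows that it is still uni-rule (for a uni-rule program the rule $\translfp{r}{(\cdot)}$ is uniquely determined, so each atom that ever receives a rule receives exactly one, and that rule then stabilises, so the union defining $\lfp{P}$ does not mix versions); moreover $\lfp{P}$ is negative by construction, hence tight. By~\Cref{theo:Datalog-lfp-model-equivalence}, $P$ and $\lfp{P}$ have the same stable models, so it suffices to exhibit a stable model of $\lfp{P}$, and by~\Cref{cor:Datalog-no-odd-cycle-StM} applied to $\lfp{P}$ it is enough to prove that $\adg{\lfp{P}}$ has \emph{no odd cycle}. (On the \acbn side this is the same target: the encoded \acbn $f'$ of $\lfp{P}$ is AND-NOT by~\Cref{prop:uni-rule-Datalog-2-AND-NOT-BN} with $\ig{f'}=\adg{\lfp{P}}$ by~\Cref{prop:uni-rule-Datalog-AND-NOT-BN-graphs}; all arcs of $\ig{f'}$ being negative, ``no odd cycle'' is precisely the hypothesis of~\Cref{theo:AND-NOT-BN-fix-odd-cycle-internal-del-triple}, which yields a fixed point of $f'$, i.e.\ a supported model of $\lfp{P}$, stable by tightness and~\Cref{theo:tight-Datalog-StM-SuM}.)

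So everything reduces to the following uni-rule refinement of~\Cref{lem:Datalog-lfp-no-odd-cycle}: \emph{if $P$ is uni-rule and every odd cycle of $\adg{P}$ carries an internal \delt, then $\adg{\lfp{P}}$ has no odd cycle}. To prove it I would first record the standard description of $\adg{\lfp{P}}$: on the atoms that actually receive a rule in $\lfp{P}$, there is an arc $b \xrightarrow{\ominus} p$ in $\adg{\lfp{P}}$ exactly when $\adg{P}$ contains a walk $b \xrightarrow{\ominus} c_1 \xrightarrow{\oplus} c_2 \xrightarrow{\oplus} \cdots \xrightarrow{\oplus} p$ made of one negative arc followed by zero or more positive arcs, and $\adg{\lfp{P}}$ has only negative arcs. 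An odd cycle of $\adg{\lfp{P}}$ therefore lifts to a closed walk $W$ in $\adg{P}$ with an odd number of negative arcs, structured into blocks of the form ``one negative arc, then positive arcs''. Taking $W$ with the fewest arcs, I would show $W$ must be a simple cycle: a repeated vertex produces a closed sub-walk; if it carries an even number of negative arcs it can be excised, contradicting minimality; if odd, it contains a simple odd cycle of $\adg{P}$, which by hypothesis has an internal \delt, and the positive chord of that \delt short-cuts a sub-path of $W$ and can be spliced back to yield a strictly shorter blocked closed walk still carrying an odd number of negative arcs --- again a contradiction. Hence $W$ is itself a simple odd cycle of $\adg{P}$; applying the hypothesis to $W$ and splicing its internal \delt in the same way contradicts minimality once more, so no such $W$ exists.

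\textbf{The hard part} is making this splicing step rigorous. The counter-example in~\Cref{subsec:graphical-analysis-revisit} --- where $\adg{P}$ has no even cycle but $\adg{\lfp{P}}$ does --- shows that cycle-parity properties are \emph{not} preserved by the least-fixpoint transformation in general, so the proof must genuinely use both the uni-rule hypothesis (each atom having a unique defining rule is what lets one locate in $\adg{P}$ the rule responsible for a lifted arc and thereby inherit the \delt structure in a controlled way) and the asymmetry between odd and even cycles (an even closed sub-walk can be discarded, an odd one cannot). One must also handle complications absent for self-loops of $\adg{\lfp{P}}$: longer odd cycles, whose lifts concatenate several blocks; positive cycles in $\adg{P}$ coming from non-tightness; and delocalizing triples that become external rather than internal after short-cutting. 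I would not expect a short argument --- this is exactly where the general-\nlp analogue claimed by~\citet{YY1994} fails. If ``no odd cycle'' proves too strong a target, the fallback is the weaker sufficient condition that $f'$ has a fixed point --- essentially that the all-negative digraph $\adg{\lfp{P}}$ has a kernel --- which one would try to build directly by induction on the internal-\delt structure, since such a fixed point is exactly a stable model of $P$.
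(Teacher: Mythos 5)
First, note that the statement you are proving is left as an \emph{open conjecture} in the paper (the authors prove it only for tight \upnames, via \Cref{cor:uni-rule-Datalog-tight-odd-cycle-internal-del-triple-StM}), so there is no proof of record to compare against; your attempt has to stand on its own. It does not: the lemma to which you reduce everything --- \emph{if \(P\) is uni-rule and every odd cycle of \(\adg{P}\) has an internal \delt, then \(\adg{\lfp{P}}\) has no odd cycle} --- is false. Take \(P = \{w_0 \leftarrow \dng{w_2};\ w_1 \leftarrow w_0, \dng{w_0};\ w_2 \leftarrow \dng{w_0}, \dng{w_1}\}\). This is uni-rule, and \(\adg{P}\) has exactly three simple cycles: \(w_0 \xrightarrow{\ominus} w_2 \xrightarrow{\ominus} w_0\) (even), \(w_0 \xrightarrow{\oplus} w_1 \xrightarrow{\ominus} w_2 \xrightarrow{\ominus} w_0\) (even), and the single odd cycle \(C = w_0 \xrightarrow{\ominus} w_1 \xrightarrow{\ominus} w_2 \xrightarrow{\ominus} w_0\), which has the internal \delt \((w_0, w_1, w_2)\) since \((w_0w_1,\oplus)\) and \((w_0w_2,\ominus)\) are arcs of \(\adg{P}\) not in \(E(C)\) and \(w_0 \in V(C)\). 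So the hypothesis of the conjecture holds. Yet \(\lfp{P} = \{w_0 \leftarrow \dng{w_2};\ w_1 \leftarrow \dng{w_0}, \dng{w_2};\ w_2 \leftarrow \dng{w_0}, \dng{w_1}\}\), and \(\adg{\lfp{P}}\) still contains the odd cycle \(w_0 \xrightarrow{\ominus} w_1 \xrightarrow{\ominus} w_2 \xrightarrow{\ominus} w_0\). (This \(P\) happens to be tight and has the stable model \(\{w_0\}\), so it refutes your route, not the conjecture.)

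The place where your argument breaks is exactly the splicing step you flagged as "the hard part": replacing a sub-path of the lifted walk \(W\) by the positive chord \(u \xrightarrow{\oplus} v_1\) of a \delt changes the number of negative arcs by the number of negative arcs on the replaced sub-path, which is generally odd --- in the example above, swapping \(w_0 \xrightarrow{\ominus} w_1\) for \(w_0 \xrightarrow{\oplus} w_1\) turns the odd cycle into an even one, so minimality yields no contradiction. Moreover the spliced walk need not retain the "one negative arc then positive arcs" block structure needed to project back to \(\adg{\lfp{P}}\). There is also a structural reason to expect the whole \(\lfp{}\)-based reduction to be a dead end: \(\adg{\lfp{P}}\) has only negative arcs, so \emph{no} cycle of it can carry a \delt (which by \Cref{def:signed-digraph-delocalizing-triple} requires a positive arc); hence any attempt to reuse the paper's odd-cycle machinery (\Cref{cor:Datalog-no-odd-cycle-StM} or \Cref{theo:AND-NOT-BN-fix-odd-cycle-internal-del-triple} applied to the encoding of \(\lfp{P}\)) is forced to prove the outright absence of odd cycles in \(\adg{\lfp{P}}\), which the counterexample rules out. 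Your fallback (building a kernel of the all-negative digraph \(\adg{\lfp{P}}\) "by induction on the internal-\delt structure") is the right kind of target but is not an argument; the conjecture remains open after this attempt.
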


\begin{conjecture}\label{conj:uni-rule-Datalog-odd-cycle-internal-del-triple-RegM}
	Let \(P\) be a \upname.
	If every odd cycle of \(\adg{P}\) has an internal \delt, then all the regular models of \(P\) are stable models.
\end{conjecture}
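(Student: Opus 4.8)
The plan is to reduce the statement to a single graph-theoretic claim about the least fixpoint transformation and then to pinpoint the obstruction. First I would note that the conclusion is equivalent to ``every regular model of $P$ is two-valued'': a stable model is a two-valued regular model, and conversely, if a regular model $I$ is two-valued then the reduct $P^I$ contains no occurrence of the special undefined atom, so it is a positive program whose $\leq_t$-least three-valued model coincides with its $\leq_t$-least two-valued model, making $I$ a stable model. By~\Cref{theo:Datalog-lfp-model-equivalence} the regular models of $P$ and of $\lfp P$ coincide, $\lfp P$ is again a \pname by~\Cref{prop:LFP-finiteness}, and $\lfp P$ is moreover negative. Hence, applying~\Cref{theo:Datalog-no-odd-cycle-RegM-StM} to $\lfp P$, it suffices to prove the following: \emph{if every odd cycle of $\adg P$ has an internal \delt, then $\adg{\lfp P}$ has no odd cycle.} (I would also observe that $\lfp P$ is in fact uni-rule --- the substitution defining the transformation $\Pi$ cannot produce two distinct rules with the same head when $\gr P$ is uni-rule --- so by~\Cref{prop:uni-rule-Datalog-2-AND-NOT-BN,prop:uni-rule-Datalog-AND-NOT-BN-graphs} its encoded \acbn is a purely-negative \anbn whose influence graph is $\adg{\lfp P}$; since a purely-negative signed graph contains no \delt at all, \Cref{theo:AND-NOT-BN-fix-odd-cycle-internal-del-triple} is of no use on $\lfp P$ unless $\adg{\lfp P}$ already has no odd cycle, which is precisely why the claim above is unavoidable.)

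For that claim, the natural attack is a lifting argument along the lines of Lemma~5.3 of~\cite{Fages1994} that underlies~\Cref{lem:Datalog-lfp-no-odd-cycle}. Since $\lfp P$ is negative, $\adg{\lfp P}$ has only negative arcs, and each arc $a \xrightarrow{\ominus} p$ of $\adg{\lfp P}$ is produced by the iterated substitution of positive body atoms, hence witnesses a path $a \xrightarrow{\ominus} w_0 \xrightarrow{\oplus} w_1 \xrightarrow{\oplus} \cdots \xrightarrow{\oplus} p$ in $\adg P$ (one negative arc followed by a possibly empty chain of positive arcs). Concatenating such paths around a hypothetical odd cycle of $\adg{\lfp P}$ yields a closed walk $W$ in $\adg P$ whose number of negative arcs is odd, so $W$ contains an odd cycle $C$ of $\adg P$; by hypothesis $C$ carries an internal \delt $(u,v_1,v_2)$ with $u \in V(C)$ and off-cycle arcs $(uv_1,\oplus)$ and $(uv_2,\ominus)$. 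One would then try to use these two arcs --- together with $u, v_1, v_2$ lying on $W$ and the positive segments of $W$ being uni-rule bodies --- to re-route $W$ into a closed walk with an even number of negative arcs, or to show directly that the arc of $\adg{\lfp P}$ closing the odd cycle cannot exist.

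The main obstacle is exactly this last step. Unlike the ``no odd cycle'' hypothesis of~\Cref{lem:Datalog-lfp-no-odd-cycle}, the \delt hypothesis is not visibly preserved by $\lfp$: the transformation $\Pi$ can manufacture new cycles of either parity --- the counter-example in~\Cref{subsec:graphical-analysis-revisit}, stated there for even cycles, has an evident parity-swapped analogue for odd cycles --- and the \delt's positive arc $(uv_1,\oplus)$ is exactly the kind of arc that $\Pi$ dissolves, so it need neither survive in $\adg{\lfp P}$ nor interfere with the closed walk $W$ in $\adg P$. Bridging this gap seems to require a finer bookkeeping of how \delt arcs interact with the positive-path segments generated by $\Pi$, exploiting the uni-rule restriction throughout; carrying this out is the heart of the problem, which is why the statement is left as a conjecture. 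A proof would, as a by-product, also settle~\Cref{conj:uni-rule-Datalog-odd-cycle-internal-del-triple-StM}, since $P$ always has at least one regular model.
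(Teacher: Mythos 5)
This statement is left as a \emph{conjecture} in the paper: there is no proof to compare against, only the remark following \Cref{theo:uni-rule-Datalog-tight-odd-cycle-internal-del-triple-RegM} that \Cref{conj:uni-rule-Datalog-odd-cycle-internal-del-triple-StM} would imply \Cref{conj:uni-rule-Datalog-odd-cycle-internal-del-triple-RegM} by re-running the percolation/restriction argument of that theorem's proof. You correctly treat the statement as open, and several of your observations are sound: the equivalence of the conclusion with ``every regular model is two-valued,'' the legitimacy of passing to \(\lfp{P}\) via \Cref{theo:Datalog-lfp-model-equivalence}, the fact that \(\lfp{P}\) is negative and hence \delt-free, and the closing remark that a proof would also settle \Cref{conj:uni-rule-Datalog-odd-cycle-internal-del-triple-StM} (which, combined with the paper's remark, makes the two conjectures equivalent).

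However, your proposed reduction is a dead end, not merely a hard step. The lemma you declare it ``suffices to prove'' --- if every odd cycle of \(\adg{P}\) has an internal \delt, then \(\adg{\lfp{P}}\) has no odd cycle --- is false. Take the \upname \(P = \{b \leftarrow a, \dng{a};\ c \leftarrow \dng{a}, \dng{b};\ a \leftarrow \dng{c}\}\). Its atom dependency graph has exactly one odd cycle, \(a \xrightarrow{\ominus} b \xrightarrow{\ominus} c \xrightarrow{\ominus} a\), and \((a,b,c)\) is an internal \delt of it (the off-cycle arcs \(a \xrightarrow{\oplus} b\) and \(a \xrightarrow{\ominus} c\) exist). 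Yet \(\lfp{P} = \{a \leftarrow \dng{c};\ b \leftarrow \dng{a}, \dng{c};\ c \leftarrow \dng{a}, \dng{b}\}\), whose dependency graph still contains the odd cycle \(a \xrightarrow{\ominus} b \xrightarrow{\ominus} c \xrightarrow{\ominus} a\). (The conjecture itself survives here: the regular models of \(P\) are \(\{a{=}1,b{=}0,c{=}0\}\) and \(\{a{=}0,b{=}0,c{=}1\}\), both stable.) Your claim that this lemma is ``unavoidable'' is also incorrect: the paper's intended route does not touch \(\adg{\lfp{P}}\) at all, but instead (conditionally on the existence conjecture) restricts the encoded \acbn of \(P\) itself to the undefined part of a putative non-two-valued regular model and checks that internal \delt{}s survive that restriction, exactly as in the proof of \Cref{theo:uni-rule-Datalog-tight-odd-cycle-internal-del-triple-RegM}. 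Indeed, one should expect any successful argument to use the \delt hypothesis dynamically (via \Cref{theo:AND-NOT-BN-fix-odd-cycle-internal-del-triple}) rather than to reduce it to the plain odd-cycle-free case of \Cref{theo:Datalog-no-odd-cycle-RegM-StM}: already in the tight case, where the result is a theorem, odd cycles genuinely persist.
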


\begin{remark}
	If~\Cref{conj:uni-rule-Datalog-odd-cycle-internal-del-triple-StM} is true, then we can imply that~\Cref{conj:uni-rule-Datalog-odd-cycle-internal-del-triple-RegM} is true by applying the arguments of the proof of~\Cref{theo:uni-rule-Datalog-tight-odd-cycle-internal-del-triple-RegM}.
\end{remark}

\subsection{Model \unitextu and Even Cycles}\label{subsec:model-unicity-even-cycle}

In this subsection, we explore how the absence of even cycles in the atom dependency graph of a uni-rule \pname influences the uniqueness and structure of its two-valued or three-valued stable models.

\paragraph{General \DLN Programs.}

While uni-rule \pnames admit a clean correspondence between their atom dependency graphs and influence graphs of their associated \acbns, this connection does not carry over to general \pname.
Indeed, the characterization of complete trap spaces in \acbns fundamentally relies on the three-valued logic, which goes beyond what influence graphs can represent. 
As a result, we must abandon influence graphs and instead develop a new type of graphical representation—one that captures the syntactic dependencies and interactions of Boolean functions without relying on their dynamical interpretation.

\begin{definition}\label{def:BN-syntactic-IG}
	Given a \acbn \(f\), we define its \emph{syntactic influence graph} (denoted by \(\syng{f}\)) as follows:
	\begin{itemize}
		\item \((v_jv_i, \oplus)\) is an arc of \(\syng{f}\) \ifftext \(v_j\) appears in \(f_{v_i}\)
		\item \((v_jv_i, \ominus)\) is an arc of \(\syng{f}\) \ifftext \(\neg v_j\) appears in \(f_{v_i}\)
	\end{itemize}
\end{definition}

It is natural to obtain the coincidence between the atom dependency graph of a \pname and the syntactic influence graph of its encoded \acbn.

\begin{corollary}\label{cor:Datalog-adg-BN-sig}
	Given a \pname \(P\), let \(f\) be its encoded \acbn.
	Then \(\adg{P} = \syng{f}\).
\end{corollary}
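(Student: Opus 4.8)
The plan is to prove the two graphs are equal by unfolding the three relevant definitions (\Cref{def:NLP-atom-dependency-graph} for \(\adg{P}\), \Cref{def:Datalog-2-BN-encoding} for \(f\), and \Cref{def:BN-syntactic-IG} for \(\syng{f}\)) and comparing vertex sets and edge sets directly. The vertex sets are immediately equal: \(V(\adg{P}) = \hb{P} = \var{f} = V(\syng{f})\). So everything reduces to showing \(E(\adg{P}) = E(\syng{f})\), i.e.\ that for all \(u, v \in \hb{P}\), the positive (resp.\ negative) arc from \(u\) to \(v\) belongs to \(\adg{P}\) if and only if it belongs to \(\syng{f}\).

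First I would treat the positive arcs. Fix \(u, v \in \hb{P}\). By construction \(f_v = \bigvee_{r \in \gr{P},\ \head{r} = v} \bodyf{r}\) (and \(f_v = \fval\) if no rule has head \(v\)), where each disjunct \(\bodyf{r} = \bigwedge_{w \in \pbody{r}} w \wedge \bigwedge_{w \in \nbody{r}} \neg w\) is precisely the conjunction of the literals in \(\{\, w : w \in \pbody{r}\,\} \cup \{\, \neg w : w \in \nbody{r}\,\}\) (and equals \(\tval\) for a fact, in which case this set is empty). Since \(f_v\) is written in this disjunctive normal form, the literal \(u\) occurs (syntactically) in \(f_v\) iff \(u\) occurs in some disjunct \(\bodyf{r}\) with \(\head{r} = v\), iff \(u \in \pbody{r}\) for some \(r \in \gr{P}\) with \(\head{r} = v\). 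By \Cref{def:BN-syntactic-IG} the former is exactly the condition for \((uv, \oplus) \in E(\syng{f})\), and by \Cref{def:NLP-atom-dependency-graph} the latter is exactly the condition for \((uv, \oplus) \in E(\adg{P})\); hence these two memberships are equivalent. The degenerate cases are consistent: if no rule has head \(v\) then \(f_v = \fval\) contains no literal and \(v\) has no incoming arc in \(\adg{P}\), and a fact contributes neither a literal to \(f_v\) nor an arc to \(\adg{P}\).

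The negative arcs are handled symmetrically, replacing \(\pbody{r}\) by \(\nbody{r}\) and the literal \(u\) by \(\neg u\) throughout, using the \(\ominus\) clauses of \Cref{def:NLP-atom-dependency-graph} and \Cref{def:BN-syntactic-IG}. Combining the two cases yields \(E(\adg{P}) = E(\syng{f})\), hence \(\adg{P} = \syng{f}\). I do not expect any genuine obstacle: this is a definitional bookkeeping statement. The only point needing mild care is that both \(\syng{f}\) and the encoding of \Cref{def:Datalog-2-BN-encoding} speak about \emph{syntactic} occurrences of literals, so one must not simplify the DNF of \(f_v\) (e.g.\ collapse \(b \lor \neg b\)); keeping \(f_v\) in the exact form produced by \Cref{def:Datalog-2-BN-encoding} is precisely what makes the correspondence exact, in line with \Cref{exam:Datalog-2-BN-connection}.
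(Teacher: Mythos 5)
Your proof is correct and takes the same route as the paper: the paper's own proof is a one-line appeal to the constructions in \Cref{def:Datalog-2-BN-encoding} and \Cref{def:BN-syntactic-IG}, and your argument is simply that appeal carried out in full, matching syntactic occurrences of literals in the DNF of \(f_v\) with rule bodies. The point you flag about not simplifying \(f_v\) (e.g.\ keeping \(b \lor \neg b\)) is exactly the right care to take and is consistent with \Cref{exam:Datalog-2-BN-connection}.
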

\begin{proof}
	This immediately follows from the construction of the encoded \acbn and the definition of the syntactic influence graph.
\end{proof}

The following definition introduces several notations to capture input relationships in a signed directed graph, which we will use in subsequent results, including a key insight that establishes conditions under which a \acbn has a unique complete trap space.

\begin{definition}\label{def:graph-pos-neg-inputs}
	Consider a signed directed graph \(G\).
	Let \(\INpos{G}{v}\) denote the set of input vertices that have positive arcs to \(v\) in \(G\).
	Let \(\INneg{G}{v}\) denote the set of input vertices that have negative arcs to \(v\) in \(G\).
	We then define \(\INall{G}{v} = \INpos{G}{v} \cup \INneg{G}{v}\).
\end{definition}

\begin{lemma}\label{lem:BN-no-constant-no-even-cycle-unique-CoTS}
	Consider a \acbn \(f\).
	Assume that \(\syng{f}\) has the minimum in-degree of at least one.
	If \(\syng{f}\) has no even cycle, then \(f\) has a unique complete trap space.
\end{lemma}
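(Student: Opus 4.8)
The plan is to prove that $f$ has \emph{exactly one} complete trap space, namely the all-$\uval$ sub-space $m_\star$ given by $m_\star(v) = \uval$ for every $v \in \var{f}$. First I would observe that $m_\star$ is indeed a complete trap space: the minimum-in-degree hypothesis says precisely that no $f_v$ is constant, i.e.\ each $f_v$ is a non-empty disjunction of non-empty conjunctions of literals, so its three-valued Kleene valuation under $m_\star$ is $\uval = m_\star(v)$. (Alternatively one could invoke~\Cref{lem:BN-TS-inclusion-CoTS}, since $m_\star$ is trivially a trap space, once uniqueness is established.) It then suffices to show that every complete trap space of $f$ coincides with $m_\star$.

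So suppose, towards a contradiction, that $m$ is a complete trap space with $D := \{v \in \var{f} : m(v) \neq \uval\}$ non-empty. The heart of the argument is to attach to each $v \in D$ a designated signed in-arc $\sigma(v) \xrightarrow{s_v} v$ of $\syng{f}$ with $\sigma(v) \in D$, chosen so that $m(\sigma(v)) = m(v)$ when $s_v = \oplus$ and $m(\sigma(v))$ is the Boolean complement of $m(v)$ when $s_v = \ominus$. This is read off from the defining equation $m(v) = m(f_v)$: if $m(v) = \tval$, then some conjunct of $f_v$ is mapped to $\tval$ by $m$, so every variable occurring in that conjunct lies in $D$, and I take $\sigma(v)$ to be one of them with $s_v$ the sign of its literal; if $m(v) = \fval$, then every conjunct of $f_v$ is mapped to $\fval$ by $m$, so picking one conjunct and a literal in it that $m$ maps to $\fval$, the underlying variable $u$ has $m(u) \in \{\fval, \tval\}$, hence $u \in D$, and I set $\sigma(v) = u$ with $s_v$ the sign of that literal. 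In either case the required relation between $m(v)$ and $m(\sigma(v))$ is immediate from the truth tables of $\neg$ and $\wedge$ in three-valued logic.

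Now $\sigma$ is a function from the finite set $D$ into itself, so iterating it yields a cycle of distinct vertices $w_0, w_1, \dots, w_{k-1}$ with $w_{i-1} = \sigma(w_i)$ (indices modulo $k$); the attached arcs then trace a cycle $w_0 \to w_1 \to \dots \to w_{k-1} \to w_0$ in $\syng{f}$. Reading the signs around this cycle, the number of negative arcs equals, modulo $2$, the number of indices $i$ at which $m(w_{i-1})$ and $m(w_i)$ differ; since the Boolean values $m(w_0), \dots, m(w_{k-1})$ close up cyclically (returning to $m(w_0)$), this count is even. Hence $\syng{f}$ contains an even cycle, contradicting the hypothesis. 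Therefore $D = \emptyset$, so $m = m_\star$, and $f$ has a unique complete trap space.

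I expect the main difficulty to lie in the extraction step of the second paragraph: ensuring that a designated in-neighbour inside $D$ always exists and that its sign is forced in the right direction. This is exactly where non-constancy of the $f_v$ (equivalently, the minimum-in-degree assumption) is indispensable---it guarantees that $f_v$ has at least one conjunct and that every conjunct is non-empty, so a purely ``source-like'' variable, which would have no in-neighbour available to pick, cannot occur. A minor additional care is that a variable may appear in $f_v$ with both signs (in different conjuncts), so the cycle must be assembled from the concrete signed arcs dictated by the chosen literals rather than merely from the arc set of $\syng{f}$.
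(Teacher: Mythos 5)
Your proof is correct and follows essentially the same strategy as the paper's: show that the all-\(\uval\) sub-space is a complete trap space, then, for any complete trap space with a defined atom, trace back through defined in-neighbours whose values are forced by the sign of a witnessing literal, and extract by finiteness a cycle whose number of negative arcs must be even, contradicting the hypothesis. Your formulation via the map \(\sigma\) with a committed literal (hence a committed sign) at each step even streamlines the paper's separate treatment of in-neighbours occurring with both signs (its \(\Pi_0\) case split), but the underlying argument is the same.
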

\begin{proof}
	Since \(\syng{f}\) has the minimum in-degree of at least one, \(f\) has no variable \(v\) such that either \(f_v = \fval\) or \(f_v = \tval\).
	Then the sub-space \(\varepsilon\) where all variables are free is simply a complete trap space of \(f\).
	Assume that \(f\) has a complete trap space \(m \neq \varepsilon\).
	It follows that there is a variable \(v_0 \in \var{f}\) such that \(m(v_0) 
	\neq \uval\).
	We show that \(\syng{f}\) has an even cycle that can be constructed from \(v_0\) (*).
	
	Let \(\Pi_0\) be the subset of \(\INall{\syng{f}}{v_0}\) such that for every \(v \in \Pi_0\), both \(v \in \INpos{\syng{f}}{v_0}\) and \(v \in \INneg{\syng{f}}{v_0}\) hold.
	If there is a vertex \(v \in \Pi_0\) such that \(m(v) \neq \uval\), then we choose \(v_1\) as \(v\).
	Otherwise, we consider the set \(V_0 = \INall{\syng{f}}{v_0} \setminus \Pi_0\).
	The set \(V_0\) cannot be empty because if so, \(m(v) = \uval\) for every \(v \in \INall{\syng{f}}{v_0}\), leading to \(m(f_{v_0}) = \uval \neq m(v_0)\) and \(m\) cannot be a complete trap space of \(f\).
	Since \(m(v_0) \neq \uval\), we have two cases as follows.
	\textbf{Case 1}: \(m(v_0) = \fval\).
	Then \(m(f_{v_0}) = m(v_0) = \fval\) because \(m\) is a complete trap space of \(f\).
	If \(m(v) \neq \fval\) for every \(v \in \INpos{\syng{f}}{v_0} \cap V_0\) and \(m(v) \neq \tval\) for every \(v \in \INneg{\syng{f}}{v_0} \cap V_0\), then \(m(f_{v_0})\) is either \(\tval\) or \(\uval\) due to the three-valued logic semantics.
	This means that \(m(f_{v_0}) \neq \fval\) always holds in this case, which is a contradiction.
	Hence, there is a variable \(v \in V_0\) such that if \(v \in \INpos{\syng{f}}{v_0}\) then \(m(v) = \fval\) and if \(v \in \INneg{\syng{f}}{v_0}\) then \(m(v) = \tval\).
	We choose \(v_1\) as \(v\).
	\textbf{Case 2}: \(m(v_0) = \tval\).
	Then \(m(f_{v_0}) = m(v_0) = \tval\) because \(m\) is a complete trap space of \(f\).
	If \(m(v) \neq \tval\) for every \(v \in \INpos{\syng{f}}{v_0} \cap V_0\) and \(m(v) \neq \fval\) for every \(v \in \INneg{\syng{f}}{v_0} \cap V_0\), then \(m(f_{v_0})\) is either \(\fval\) or \(\uval\) due to the three-valued logic semantics.
	This means that \(m(f_{v_0}) \neq \tval\) always holds in this case, which is a contradiction.
	Hence, there is a variable \(v \in V_0\) such that if \(v \in \INpos{\syng{f}}{v_0}\) then \(m(v) = \tval\) and if \(v \in \INneg{\syng{f}}{v_0}\) then \(m(v) = \fval\).
	We choose \(v_1\) as \(v\).
	Following the two above cases, \(m(v_1)\) takes the value of \(m(v_0)\) if \(v_1 \in \INpos{\syng{f}}{v_0} \cap V_0\) and \(m(v_1)\) takes the value of \(\neg m(v_0)\) if \(v_1 \in \INneg{\syng{f}}{v_0} \cap V_0\) (**).
	Note that \(m(v_1) \neq \uval\) always holds.
	
	Repeating the above construction, we obtain an infinite descending chain \(v_0 \xleftarrow{s_0} v_1 \xleftarrow{s_1} v_2 \xleftarrow{s_2} \dots\) where for every \(i \leq 0\), \(v_i \in \var{f}\), \(m(v_i) \neq \uval\), and \(s_i\) is both \(\oplus\) and \(\ominus\) or \(s_i\) is either \(\oplus\) or \(\ominus\).
	Since \(\var{f}\) is finite, there are two integer numbers \(j\) and \(k\) (\(j, k \leq 0\)) such that \(v_{j} = v_{j + k}\).
	Let \(V = \{v_{j}, v_{j + 1}, \dots, v_{j + k}\}\).
	Since \(v_{j} = v_{j + k}\), \(\syng{f}[V]\) contains at least one cycle, and every its cycle is constituted by all vertices in \(V\).
	If there is \(i \in \{j, j + 1, \dots, j + k - 1\}\) such that \(s_i\) is both \(\oplus\) and \(\ominus\), then \(\syng{f}[V]\) contains both even and odd cycles.
	If \(s_i\) is either \(\oplus\) or \(\ominus\) for every \(i \in \{j, j + 1, \dots, j + k - 1\}\), then \(v_j \xleftarrow{s_j} v_{j + 1} \xleftarrow{s_{j + 1}} v_{j + 2} \dots \xleftarrow{s_{j + k - 1}} v_{j + k}\) is a cycle of \(\syng{f}\).
	By (**), the number of \(\ominus\) signs in this cycle must be even, thus this cycle is even.
	
	We have that (*) contradicts to the non-existence of even cycles in \(\syng{f}\).
	Hence, \(\varepsilon\) is the unique complete trap space of \(f\).
\end{proof}

\Cref{lem:BN-no-constant-no-even-cycle-unique-CoTS} requires the condition on the minimum in-degree.
To relax this condition, we introduce a process that iteratively eliminates syntatic constants from the \acbn. 
This begins with the notion of one-step syntatic percolation, defined as follows.

\begin{definition}\label{def:BN-one-step-syntatic-percolation}
	Consider a \acbn \(f\).
	A variable \(v \in \var{f}\) is called \emph{syntatic constant} if either \(f_v = 0\) or \(f_v = 1\).
	Let \(\sync{f}\) denote the set of syntatic constant variables of \(f\).
	We define the \emph{one-step syntatic percolation} of \(f\) (denoted as \(\onesynper{f}\)) as follows: \(\var{\onesynper{f}} = \var{f} \setminus \sync{f}\), and for every \(v \in \var{\onesynper{f}}\), \(\onesynper{f}_v = f'_v\), where \(f'_v\) is the Boolean function obtained by substituting syntatic constant values of \(f\)  in \(f_v\) with their Boolean functions \wrttext the three-valued logic.
\end{definition}

\begin{definition}\label{def:BN-full-syntatic-percolation}
	Consider a \acbn \(f\).
	The \emph{syntatic percolation} of \(f\) (denoted by \(\synper{f}\)) is obtained by applying the one-step syntatic percolation operator starting from \(f\) until it reaches a \acbn \(f'\) such that \(f' = \emptyset\) or \(\onesynper{f'} = f'\); this is always possible because the number of variables is finite.
\end{definition}

\begin{proposition}\label{prop:BN-syng-syntatic-percolation}
	Consider a \acbn \(f\).
	One-step syntatic percolation may reduce the set of variables and may reduce the set of arcs, i.e., \(V(\syng{\onesynper{f}}) \subseteq V(\syng{f})\) and \(E(\syng{\onesynper{f}}) \subseteq E(\syng{f})\).
	Consequently, \(V(\syng{\synper{f}}) \subseteq V(\syng{f})\) and \(E(\syng{\synper{f}}) \subseteq E(\syng{f})\).
\end{proposition}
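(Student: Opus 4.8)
The plan is to prove the one-step inclusions first and then lift them to the full percolation by an easy induction on the number of iterations.

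The vertex inclusion is immediate: by~\Cref{def:BN-one-step-syntatic-percolation} we have $\var{\onesynper{f}} = \var{f} \setminus \sync{f} \subseteq \var{f}$, and since $V(\syng{g}) = \var{g}$ for every \acbn $g$ by construction, $V(\syng{\onesynper{f}}) \subseteq V(\syng{f})$ follows.

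For the arc inclusion I would take an arbitrary arc of $\syng{\onesynper{f}}$ --- say $(v_jv_i,\oplus)$, the negative case being symmetric --- and show it lies in $\syng{f}$. By~\Cref{def:BN-syntactic-IG} the literal $v_j$ occurs in $\onesynper{f}_{v_i}$; in particular both $v_i$ and $v_j$ belong to $\var{\onesynper{f}} = \var{f}\setminus\sync{f}$, so neither is a syntactic constant of $f$. Now $\onesynper{f}_{v_i}$ is obtained from $f_{v_i}$ by replacing every occurrence of each syntactic-constant variable $w$ (and of $\neg w$) by its constant value $f_w \in \twod{}$ and propagating these constants through the three-valued connectives. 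The crucial structural fact I would invoke is that such substitution-and-simplification is \emph{literal-non-increasing}: replacing a variable by a constant deletes that variable's occurrences, and the three-valued simplification laws ($0 \land e \equiv 0$, $1 \lor e \equiv 1$, $1 \land e \equiv e$, $0 \lor e \equiv e$, $\neg 0 \equiv 1$, $\neg 1 \equiv 0$, and the dual and commuted forms) only erase subformulas or collapse constants, never create a fresh literal. Hence every literal occurring in $\onesynper{f}_{v_i}$ already occurs in $f_{v_i}$; since $v_j \notin \sync{f}$, the literal $v_j$ was not produced by the substitution, so $v_j$ occurs in $f_{v_i}$ and $(v_jv_i,\oplus) \in E(\syng{f})$. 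This gives $E(\syng{\onesynper{f}}) \subseteq E(\syng{f})$.

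For the final claim I would write $\synper{f} = g_n$ where $g_0 = f$ and $g_{k+1} = \onesynper{g_k}$, with $n$ finite because $\var{f}$ is finite (as noted after~\Cref{def:BN-full-syntatic-percolation}), and then prove $V(\syng{g_k}) \subseteq V(\syng{f})$ and $E(\syng{g_k}) \subseteq E(\syng{f})$ by induction on $k$: the base case $k=0$ is trivial, and the step combines the one-step inclusions (applied to $g_k$ in place of $f$) with transitivity of $\subseteq$; the case $g_n = \emptyset$ makes the inclusions vacuous. Taking $k = n$ yields $V(\syng{\synper{f}}) \subseteq V(\syng{f})$ and $E(\syng{\synper{f}}) \subseteq E(\syng{f})$. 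The only genuinely delicate point is the literal-non-increasing property of the substitution operator; I expect this to be the main --- though still routine --- obstacle, since it presupposes pinning down the admissible syntactic shapes of the Boolean functions used in the paper and checking each simplification rule by a short induction on the structure of $f_{v_i}$. Everything else is bookkeeping.
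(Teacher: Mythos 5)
Your proposal is correct and follows essentially the same route as the paper's own proof: the vertex inclusion from \(\var{\onesynper{f}} = \var{f} \setminus \sync{f}\), the arc inclusion from the observation that constant substitution and three-valued simplification never introduce a literal not already present in \(f_v\), and iteration of the one-step inclusions to reach \(\synper{f}\). The only difference is that you spell out the ``literal-non-increasing'' property of the simplification in detail, whereas the paper simply asserts that a literal appearing in \(\onesynper{f}_v\) must appear in \(f_v\) because the former is obtained by syntactically simplifying the latter.
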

\begin{proof}
	Note that \(\var{\onesynper{f}} = \var{f} \setminus \sync{f} \subseteq \var{f}\) by definition.
	Hence, \(V(\syng{\onesynper{f}}) \subseteq V(\syng{f})\) (*).
	Consider a variable \(v \in \var{\onesynper{f}}\).
	Let \(u\) be a literal that appears in \(\onesynper{f}_v\).
	Then \(u\) must appears in \(f_v\) because \(\onesynper{f}_v\) is obtained by syntatically simplifying \(f_v\).
	This implies that if \((uv, \oplus)\) is an arc in \(\syng{\onesynper{f}}\), it is an arc in \(\syng{f}\).
	Similarly, if \((uv, \ominus)\) is an arc in \(\syng{\onesynper{f}}\), it is an arc in \(\syng{f}\).
	Now, we can conclude that \(E(\syng{\onesynper{f}}) \subseteq E(\syng{f})\) (**).
	By applying (*) and (**) sequentially, we obtain that \(V(\syng{\synper{f}}) \subseteq V(\syng{f})\) and \(E(\syng{\synper{f}}) \subseteq E(\syng{f})\).
\end{proof}

\begin{proposition}\label{prop:BN-syntatic-percolation-CoTS}
	Given a \acbn \(f\), the set of complete trap spaces of \(\onesynper{f}\) one-to-one corresponds to the set of complete trap space of \(f\).
	Consequently, the set of complete trap spaces of \(\synper{f}\) one-to-one corresponds to the set of complete trap space of \(f\).
\end{proposition}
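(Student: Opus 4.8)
The plan is to construct an explicit bijection $\Phi$ between the complete trap spaces of $f$ and those of $\onesynper{f}$, and then iterate it to obtain the statement for $\synper{f}$. The starting observation is that any complete trap space $m$ of $f$ is forced to agree with the syntatic constants: for $v \in \sync{f}$ we have $f_v \in \{\fval,\tval\}$, hence $m(v) = m(f_v) = f_v$, since the valuation of a constant formula is the constant itself. So I would define $\Phi(m)$ as the restriction $m|_{\var{\onesynper{f}}}$, where $\var{\onesynper{f}} = \var{f} \setminus \sync{f}$, and, in the other direction, define $\Psi(m')$ on $\var{f}$ by $\Psi(m')(v) = f_v$ for $v \in \sync{f}$ and $\Psi(m')(v) = m'(v)$ otherwise.

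The one technical ingredient is a valuation-preservation lemma: for every $v \in \var{\onesynper{f}}$ and every sub-space $m$ of $f$ with $m(u) = f_u$ for all $u \in \sync{f}$, one has $m(f_v) = m'(\onesynper{f}_v)$, where $m' = m|_{\var{\onesynper{f}}}$. I would prove this by structural induction on the formula $f_v$ using the compositional definition of the three-valued valuation given in \Cref{subsec:preliminaries-NLP}: substituting, for each $u \in \sync{f}$, the literal $u$ (resp.\ $\neg u$) by the constant $f_u$ (resp.\ $\neg f_u$) does not change the value under $m$ because $m$ assigns $u$ exactly the value $f_u$, and the subsequent three-valued simplification is valuation-preserving by definition of three-valued logic. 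This is the part most likely to be fiddly to write down carefully, though it is conceptually routine; there is no deep obstacle here.

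With that in hand the rest is bookkeeping. If $m$ is a complete trap space of $f$, then for every $v \in \var{\onesynper{f}}$ we get $\Phi(m)(v) = m(v) = m(f_v) = \Phi(m)(\onesynper{f}_v)$, so $\Phi(m)$ is a complete trap space of $\onesynper{f}$. Conversely, if $m'$ is a complete trap space of $\onesynper{f}$, then for $v \in \sync{f}$ we have $\Psi(m')(v) = f_v = \Psi(m')(f_v)$ since $f_v$ is constant, and for $v \in \var{\onesynper{f}}$ we have $\Psi(m')(v) = m'(v) = m'(\onesynper{f}_v) = \Psi(m')(f_v)$ by the lemma; hence $\Psi(m')$ is a complete trap space of $f$. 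Finally $\Psi \circ \Phi = \mathrm{id}$ follows from the forcing observation on $\sync{f}$, and $\Phi \circ \Psi = \mathrm{id}$ is immediate from the definitions, so $\Phi$ is the desired one-to-one correspondence.

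For the ``consequently'' clause, recall from \Cref{def:BN-full-syntatic-percolation} that $\synper{f}$ is reached from $f$ by finitely many applications of the one-step operator. Composing the bijections obtained above for each step, by a straightforward induction on the number of iterations, yields a one-to-one correspondence between the complete trap spaces of $\synper{f}$ and those of $f$.
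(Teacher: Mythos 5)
Your proposal is correct and follows essentially the same route as the paper's proof: both directions use the restriction map and its inverse that fixes the syntactic constants to their forced values, with the substitution step justified by the three-valued valuation being preserved. You merely make explicit two points the paper leaves implicit (the structural induction behind the valuation-preservation claim and the verification that the two maps are mutually inverse), which is fine.
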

\begin{proof}
	Assume that \(m\) is a complete trap space of \(f\).
	For every \(v \in \var{f}\) such that either \(f_v = 0\) or \(f_v = 1\), \(m(v) = m(f_v) = f_v\).
	For every \(v \in \var{f}\) such that neither \(f_v = 0\) nor \(f_v = 1\), \(m(\onesynper{f}_v) = m(f_v) = m(v)\) because \(\onesynper{f}_v\) is obtained from \(f_v\) by substituting syntatic constant values of \(f\)  in \(f_v\) with their Boolean functions \wrttext the three-valued logic.
	Hence, the projection of \(m\) to \(\var{\onesynper{f}}\) is a complete trap space of \(\onesynper{f}\).
	
	Assume that \(m'\) is a complete trap space of \(\onesynper{f}\).
	Let \(m\) be a sub-space of \(f\) such that \(m(v) = f_v\) for every \(v \in \sync{f}\) and \(m(v) = m'(v)\) for every \(v \in \var{\onesynper{f}}\).
	For every \(v \in \sync{f}\), \(m(v) = f_v = m(f_v)\) because \(f_v \in \threed{}\).
	For every \(v \in \var{\onesynper{f}}\), \(m(v) = m'(v) = m(\onesynper{f}_v) = m(f_v)\) by the definition of \(\onesynper{f}_v\).
	This implies that \(m\) is a complete trap space of \(f\).
	
	Now we can conclude that the set of complete trap spaces of \(\onesynper{f}\) one-to-one corresponds to the set of complete trap space of \(f\).
	By applying this property sequentially, we obtain that the set of complete trap spaces of \(\synper{f}\) one-to-one corresponds to the set of complete trap space of \(f\).
\end{proof}

Now, we have enough ingredients to prove the following important result.

\begin{theorem}\label{theo:BN-no-even-cycle-unique-CoTS}
	Consider a \acbn \(f\).
	If \(\syng{f}\) has no even cycle, then \(f\) has a unique complete trap space.
\end{theorem}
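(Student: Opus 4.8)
The idea is to bootstrap from the already-proved special case \Cref{lem:BN-no-constant-no-even-cycle-unique-CoTS}, which gives uniqueness of the complete trap space under the extra hypothesis that $\syng{f}$ has minimum in-degree at least one. To discharge that hypothesis, I would apply syntactic percolation to $f$, reason about the percolated network $f' = \synper{f}$, and then transport the conclusion back to $f$ using \Cref{prop:BN-syntatic-percolation-CoTS}.

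Concretely, the steps are: (1) Set $f' = \synper{f}$. By \Cref{prop:BN-syng-syntatic-percolation}, $\syng{f'}$ is a sub-graph of $\syng{f}$, so it inherits the absence of even cycles. (2) By \Cref{def:BN-full-syntatic-percolation}, either $f' = \emptyset$ or $\onesynper{f'} = f'$. In the first case, $f'$ has exactly one complete trap space—the empty sub-space on $\var{f'} = \emptyset$—and \Cref{prop:BN-syntatic-percolation-CoTS} immediately yields that $f$ has a unique complete trap space. (3) In the second case, $\onesynper{f'} = f'$ forces $\sync{f'} = \emptyset$, i.e.\ $f'_v \notin \{0,1\}$ for every $v \in \var{f'}$; hence $f'_v$ contains at least one literal, so every vertex of $\syng{f'}$ has an incoming arc, giving minimum in-degree at least one. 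Then \Cref{lem:BN-no-constant-no-even-cycle-unique-CoTS} applies to $f'$, and \Cref{prop:BN-syntatic-percolation-CoTS} again transfers uniqueness back to $f$.

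I do not expect a serious obstacle here: all the hard combinatorial work—extracting an even cycle from a non-trivial complete trap space when the minimum in-degree is at least one—was already done in \Cref{lem:BN-no-constant-no-even-cycle-unique-CoTS}, and the percolation invariants (sub-graph monotonicity of $\syng{\cdot}$ and the one-to-one correspondence of complete trap spaces) were established in \Cref{prop:BN-syng-syntatic-percolation,prop:BN-syntatic-percolation-CoTS}. The only point requiring a moment's care is the small argument in step (3) that ``no syntactic constant'' implies ``at least one incoming arc,'' which follows because a Boolean function that is not syntactically $0$ or $1$ must mention some literal; and the trivial-looking but necessary observation in step (2) that the empty \acbn has precisely one complete trap space.
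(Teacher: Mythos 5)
Your proposal is correct and follows essentially the same route as the paper's proof: percolate to $\synper{f}$, split on whether it is empty or has minimum in-degree at least one, apply \Cref{lem:BN-no-constant-no-even-cycle-unique-CoTS} in the latter case, and transfer uniqueness back via \Cref{prop:BN-syntatic-percolation-CoTS}. Your step (3), spelling out why $\onesynper{f'} = f'$ forces minimum in-degree at least one, is a detail the paper leaves implicit in its case split, but the argument is the same.
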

\begin{proof}
	Let \(\synper{f}\) be the syntatic percolation of \(f\).
	By construction, there are two cases.
	\textbf{Case 1}: \(\synper{f} = \emptyset\).
	In this case, it is easy to see that \(f\) has a unique complete trap space specified by syntatic constant values of variables through the construction of \(\synper{f}\).
	\textbf{Case 2}: \(\synper{f} \neq \emptyset\) and \(\syng{\synper{f}}\) has the minimum in-degree of at least one.
	In this case, by~\Cref{prop:BN-syng-syntatic-percolation}, \(V(\syng{\synper{f}}) \subseteq V(\syng{f})\) and \(E(\syng{\synper{f}}) \subseteq E(\syng{f})\).
	Since \(\syng{f}\) has no even cycle, \(\syng{\synper{f}}\) has no even cycle.
	By~\Cref{lem:BN-no-constant-no-even-cycle-unique-CoTS}, \(\synper{f}\) has a unique complete trap space.
	By~\Cref{prop:BN-syntatic-percolation-CoTS}, the set of complete trap spaces of \(f\) one-to-one corresponds to the set of complete trap spaces of \(\synper{f}\).
	Hence, \(f\) has a unique complete trap space.
\end{proof}

We now show that the even-cycle freeness in the atom dependency graph of a \pname ensures not only the existence but also the \unitext of several fundamental semantic objects. Specifically, if \(\adg{P}\) has no even cycle, then \(P\) has a unique supported partial model, a unique stable partial model, and a unique regular model. 
Furthermore, this implies that \(P\) has at most one stable model. 
These results follow from the correspondence between atom dependency graphs and syntactic influence graphs.

\begin{theorem}[\textbf{main result}]\label{theo:Datalog-no-even-cycle-unique-SuPM}
	Let \(P\) be a \pname.
	If \(\adg{P}\) has no even cycle, then \(P\) has a unique supported partial model.
\end{theorem}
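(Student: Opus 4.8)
The plan is to push the statement entirely through the Boolean-network side of the correspondence built in Sections~\ref{sec:connection} and~\ref{sec:graphical-analysis}, so that no new combinatorial work is needed. Let $f$ be the encoded \acbn of $P$. By~\Cref{theo:Datalog-SuPM-BN-CoTS} the supported partial models of $P$ are exactly the complete trap spaces of $f$, so it suffices to show that $f$ has a unique complete trap space. The hypothesis is phrased about $\adg{P}$, but by~\Cref{cor:Datalog-adg-BN-sig} we have $\adg{P} = \syng{f}$; hence $\syng{f}$ also has no even cycle. This is exactly the premise of~\Cref{theo:BN-no-even-cycle-unique-CoTS}.

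Concretely, the steps in order are: (1) take $f$ to be the encoded \acbn of $P$; (2) apply~\Cref{cor:Datalog-adg-BN-sig} to transport ``$\adg{P}$ has no even cycle'' to ``$\syng{f}$ has no even cycle''; (3) apply~\Cref{theo:BN-no-even-cycle-unique-CoTS} to conclude that $f$ has a unique complete trap space; (4) apply~\Cref{theo:Datalog-SuPM-BN-CoTS} to translate this back into ``$P$ has a unique supported partial model''. Each step is a direct invocation of an already-established result, so the proof is a short chain. One should read the word ``unique'' in~\Cref{theo:BN-no-even-cycle-unique-CoTS} as ``exactly one'' (existence plus uniqueness): its proof does supply existence, since when $\synper{f}=\emptyset$ the constant assignment is a complete trap space and otherwise the all-free sub-space is one after percolation; hence the argument simultaneously yields existence and uniqueness of the supported partial model of $P$, which is what we want.

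The substantive difficulty is not in this theorem but was already discharged in~\Cref{theo:BN-no-even-cycle-unique-CoTS}. If one wanted to prove the present statement without that lemma, one would have to reconstruct the syntactic-percolation reduction (iteratively eliminating syntactic constants, \Cref{def:BN-full-syntatic-percolation}, while preserving the complete trap spaces, \Cref{prop:BN-syntatic-percolation-CoTS}) to reach a syntactic influence graph of minimum in-degree at least one, and then carry out the path-chasing argument of~\Cref{lem:BN-no-constant-no-even-cycle-unique-CoTS}: from a hypothetical non-trivial complete trap space one follows input vertices along sign-consistent arcs, uses finiteness of $\var{f}$ to force a cycle, and checks that the accumulated signs make that cycle even, contradicting the hypothesis. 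Since all of this machinery is in place, the only thing left to do here is to cite the three results in the right order.
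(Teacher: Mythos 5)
Your proposal is correct and follows exactly the same chain as the paper's own proof: encode $P$ as $f$, use \Cref{cor:Datalog-adg-BN-sig} to transfer the even-cycle-freeness to $\syng{f}$, invoke \Cref{theo:BN-no-even-cycle-unique-CoTS} for a unique complete trap space, and translate back via \Cref{theo:Datalog-SuPM-BN-CoTS}. Your additional remark that ``unique'' should be read as existence plus uniqueness is a fair clarification but does not change the argument.
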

\begin{proof}
	Let \(f\) be the encoded \acbn of \(P\).
	By~\Cref{cor:Datalog-adg-BN-sig}, \(\syng{f} = \adg{P}\).
	Therefore, \(\syng{f}\) has no even cycle.
	By~\Cref{theo:BN-no-even-cycle-unique-CoTS}, \(f\) has a unique complete trap space.
	By~\Cref{theo:Datalog-SuPM-BN-CoTS}, \(P\) has a unique supported partial model.
\end{proof}

\begin{corollary}
  \label{cor:Datalog-no-even-cycle-unique-StPM}
	Let \(P\) be a \pname.
	If \(\adg{P}\) has no even cycle, then \(P\) has a unique stable partial model.
\end{corollary}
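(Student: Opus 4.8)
The plan is to derive this directly from~\Cref{theo:Datalog-no-even-cycle-unique-SuPM} together with the inclusion of stable partial models into supported partial models and the guaranteed existence of at least one stable partial model. First I would invoke~\Cref{theo:Datalog-no-even-cycle-unique-SuPM}: since $\adg{P}$ has no even cycle, $P$ has a \emph{unique} supported partial model, say $I^\ast$. Next I would recall the fact already noted in the preliminaries that every stable partial model of an \nlp is a supported partial model (and that the converse may fail). Combining these two observations, any stable partial model of $P$ must equal $I^\ast$, so $P$ has \emph{at most one} stable partial model.

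It then remains to secure existence of at least one stable partial model. The clean way is to use the fact, already used in the proof of~\Cref{cor:Datalog-no-odd-cycle-StM}, that $P$ always has at least one regular model; since a regular model is by definition a $\leq_s$-minimal stable partial model, it is in particular a stable partial model. Hence $P$ has exactly one stable partial model, which moreover coincides with $I^\ast$.

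I do not expect any real obstacle here: the corollary is essentially a one-line consequence of the preceding theorem plus the standard inclusion $\text{(stable partial models)} \subseteq \text{(supported partial models)}$ and the non-emptiness of the regular model set. The only point to state carefully is that ``unique'' is read as ``exactly one,'' so the existence half (via regular models) should be spelled out rather than left implicit. An alternative, equally short, route would be to appeal to the well-founded model as the canonical witness of a stable partial model, but using the regular-model existence already cited in the paper keeps the argument self-contained.
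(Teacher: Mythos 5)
Your proposal is correct and follows essentially the same route as the paper: invoke \Cref{theo:Datalog-no-even-cycle-unique-SuPM}, use the inclusion of stable partial models into supported partial models to get uniqueness, and cite the guaranteed existence of a stable partial model (which the paper states directly rather than deriving via regular models, a negligible difference). No gaps.
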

\begin{proof}
	By~\Cref{theo:Datalog-no-even-cycle-unique-SuPM}, \(P\) has a unique supported partial model.
	A stable partial model is also a supported partial model.
	Since \(P\) always has at least one stable partial model, it has a unique stable partial model.
\end{proof}

\begin{corollary}
  \label{cor:Datalog-no-even-cycle-unique-RegM}
	Let \(P\) be a \pname.
	If \(\adg{P}\) has no even cycle, then \(P\) has a unique regular model.
\end{corollary}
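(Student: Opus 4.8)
The plan is to derive this directly from \Cref{cor:Datalog-no-even-cycle-unique-StPM}, which already gives uniqueness at the level of stable partial models, together with the definitional relationship between regular models and stable partial models. Recall that a regular model of $P$ is precisely a stable partial model of $P$ that is $\leq_s$-minimal, and that $P$ always admits at least one stable partial model (indeed at least one regular model).

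First I would invoke \Cref{cor:Datalog-no-even-cycle-unique-StPM} to conclude that, under the hypothesis that $\adg{P}$ has no even cycle, there is exactly one stable partial model $I$ of $P$. Next I would observe that a single stable partial model is vacuously $\leq_s$-minimal among the stable partial models of $P$ (there is no distinct stable partial model that could be $\leq_s$-smaller), so $I$ is a regular model of $P$. Conversely, any regular model of $P$ is in particular a stable partial model of $P$, hence must coincide with $I$. Therefore $I$ is the unique regular model of $P$.

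I do not anticipate any real obstacle here, since all the work has already been done in establishing \Cref{theo:Datalog-no-even-cycle-unique-SuPM} and \Cref{cor:Datalog-no-even-cycle-unique-StPM}; the only point requiring a line of care is the (trivial) remark that uniqueness of stable partial models forces the unique one to be $\leq_s$-minimal, and that every regular model is a stable partial model, so no second regular model can appear.
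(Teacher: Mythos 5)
Your proposal is correct and follows essentially the same route as the paper: the paper's proof likewise derives the result immediately from \Cref{cor:Datalog-no-even-cycle-unique-StPM}, the existence of regular models, and the fact that every regular model is a stable partial model. Your additional remark that the unique stable partial model is vacuously $\leq_s$-minimal just makes explicit the existence step the paper cites.
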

\begin{proof}
	This immediately follows from~\Cref{cor:Datalog-no-even-cycle-unique-StPM}, the existence of regular models, the fact that a regular model is a stable partial model.
\end{proof}

\begin{corollary}
  \label{cor:Datalog-no-even-cycle-at-most-one-StM}
	Let \(P\) be a \pname.
	If \(\adg{P}\) has no even cycle, then \(P\) has at most one stable model.
\end{corollary}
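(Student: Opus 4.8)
The plan is to obtain this as an immediate consequence of the uniqueness result already established for stable partial models, namely \Cref{cor:Datalog-no-even-cycle-unique-StPM}, together with the standard containment of stable models inside stable partial models recorded in the preliminaries.

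Concretely, I would proceed as follows. First, invoke \Cref{cor:Datalog-no-even-cycle-unique-StPM}: since \(\adg{P}\) has no even cycle, \(P\) has exactly one stable partial model, call it \(I\). Second, recall the fact (noted in the preliminaries, citing~\cite{AD1995}) that every stable model of \(P\) is a two-valued stable partial model of \(P\). Third, combine the two: the set of stable models of \(P\) is a subset of the (singleton) set of stable partial models of \(P\), hence has cardinality at most one. Equivalently, one could route the argument through \Cref{cor:Datalog-no-even-cycle-unique-RegM} and the fact that a stable model is a two-valued regular model; the reasoning is identical.

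I expect no real obstacle here—the statement is a one-line corollary. The only subtlety worth flagging is why the conclusion is phrased as \emph{at most one} rather than \emph{exactly one}: uniqueness of the stable partial model \(I\) does not force \(I\) to be two-valued, and if \(I\) is properly three-valued then \(P\) has \emph{no} stable model; in either case ``at most one'' holds. So I would present the short deduction and leave the existence question deliberately untouched.

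\begin{proof}
By~\Cref{cor:Datalog-no-even-cycle-unique-StPM}, \(P\) has a unique stable partial model. Every stable model of \(P\) is a two-valued stable partial model of \(P\)~\citep{AD1995}. Hence, the stable models of \(P\) form a subset of its (unique) stable partial model, so \(P\) has at most one stable model.
\end{proof}
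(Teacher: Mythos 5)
Your proposal is correct and follows essentially the same route as the paper: both deduce the bound immediately from \Cref{cor:Datalog-no-even-cycle-unique-StPM} together with the fact that a stable model is a two-valued stable partial model. Your closing remark about why the bound is only ``at most one'' matches the paper's own comment, which illustrates the point with the program \(\{p \leftarrow \dng{p}\}\) having no stable model despite an even-cycle-free atom dependency graph.
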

\begin{proof}
	This immediately follows from~\Cref{cor:Datalog-no-even-cycle-unique-StPM} and the fact that a stable model is a two-valued stable partial model.
	There exists some \pname such that its atom dependency graph has no even cycle and it has no stable model.
	For example, consider \pname \(P = \{p \leftarrow \dng{p}\}\).
	The atom dependency graph of \(P\) has no even cycle and \(P\) has no stable model.
\end{proof}

\paragraph{Uni-rule \DLN Programs.}

Similar to the case of odd cycles, by considering \upnames, we obtain two stronger results as follows.

\begin{theorem}[Theorem 2' of~\cite{RR2013}]\label{theo:AND-NOT-BN-no-strong-even-cycle-fix}
	Consider an \anbn \(f\).
	If every even cycle of \(\ig{f}\) has a \delt, then \(f\) has at most one fixed point.
\end{theorem}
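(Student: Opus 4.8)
The plan is to argue by contradiction: suppose the \anbn $f$ has two distinct fixed points $x \neq y$, and exhibit an even cycle of $\ig{f}$ that has no \delt, contradicting the hypothesis. Recall that for an \anbn, $\ig{f}$ has the arc $(uv,\oplus)$ (resp.\ $(uv,\ominus)$) precisely when $u$ (resp.\ $\neg u$) occurs in the conjunction $f_v$, so $\ig{f}$ faithfully records the literals of each $f_v$. Put $D = \{v \in \var{f} : x_v \neq y_v\}$, which is non-empty.

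The first and main structural step is to extract a \emph{discrepancy cycle} inside $\ig{f}[D]$. Fix $v \in D$. Since $f_v(x) = x_v \neq y_v = f_v(y)$, the function $f_v$ is not constant, hence a conjunction of literals, and exactly one of $x_v, y_v$ equals $1$; assume $x_v = 1$ (the other case is symmetric with $x$ and $y$ swapped). Then $f_v(x) = 1$ forces every literal of $f_v$ to hold at $x$, whereas $f_v(y) = 0$ forces some literal of $f_v$ to fail at $y$; that failing literal is over a variable $\pi(v) \in D$, and it gives an arc $\pi(v) \to v$ of $\ig{f}$ which is positive iff $x_{\pi(v)} = x_v$ and negative iff $x_{\pi(v)} \neq x_v$. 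Iterating $\pi$ (each $\pi(v)$ is a $D$-predecessor of $v$) and using the finiteness of $D$ produces a cycle $C$ of $\ig{f}$ with $V(C) \subseteq D$. Labelling $\sigma_w = +1$ if $x_w = 1$ and $\sigma_w = -1$ otherwise, every arc $w' \to w$ of $C$ carries sign $\sigma_{w'}\sigma_w$; the product of the signs around $C$ telescopes to $+1$, so $C$ has an even number of negative arcs, i.e.\ $C$ is an even cycle. (If $C$ is a single positive self-loop it already admits no \delt, and we are done; so assume $|V(C)| \geq 2$.)

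By hypothesis $C$ has a \delt $(u, v_1, v_2)$: $v_1 \neq v_2$ are vertices of $C$ and $(uv_1,\oplus), (uv_2,\ominus) \in E(\ig{f}) \setminus E(C)$, so $u$ occurs positively in $f_{v_1}$ and negatively in $f_{v_2}$. The construction of the cycle guarantees that, for every $w \in V(C)$, \emph{all} literals of $f_w$ are satisfied by the witness $z^w \in \{x,y\}$ given by $z^w = x$ if $\sigma_w = +1$ and $z^w = y$ if $\sigma_w = -1$. Applying this at $v_1$ to the positive literal over $u$ yields $z^{v_1}_u = 1$, and at $v_2$ to the negative literal over $u$ yields $z^{v_2}_u = 0$; hence $z^{v_1} \neq z^{v_2}$, forcing $\sigma_{v_1} \neq \sigma_{v_2}$, $u \in D$, and fixing the value of $\sigma_u$.

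The remaining step --- the technical core of~\citet{RR2013} --- is to turn this extra information into a contradiction. The plan is to have chosen $(x,y)$ and $C$ at the outset so as to minimise $|V(C)|$, and then to splice the chords $u \to v_1$ and $u \to v_2$ with one of the two $C$-paths between $v_1$ and $v_2$ into a strictly shorter cycle inside $\ig{f}[D]$; a sign count using the $\sigma$-labelling and the now-known $\sigma_u$ shows this shorter cycle is again even, and one checks that it still has no \delt, contradicting minimality. An alternative is to freeze every variable of $\var{f} \setminus D$ at its common value, obtaining a smaller \anbn that still has two fixed points and whose influence graph still forbids the relevant even cycles, and to recurse (or to reduce to the odd-cycle counterpart, \Cref{theo:AND-NOT-BN-fix-odd-cycle-internal-del-triple}, on an auxiliary network). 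I expect this closing step to be the main obstacle: the discrepancy-cycle extraction is essentially routine, whereas producing the shorter even cycle and proving that it inherits delocalizing-triple-freeness requires a careful case analysis on whether $u$ lies on $C$, on the parities of the two $C$-paths between $v_1$ and $v_2$, and on the signs of the chords --- precisely the intricate part of the cited result.
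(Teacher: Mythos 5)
The paper does not prove this statement at all: it is imported verbatim as Theorem~2$'$ of \cite{RR2013} and used as a black box, so there is no in-paper proof to compare your route against. Judged on its own terms, your proposal is a correct setup followed by an explicitly unfinished final step, and that final step is exactly where the content of the theorem lives. Your first two stages are sound: extracting a discrepancy cycle \(C\) inside the subgraph induced by the disagreement set \(D\), the telescoping sign count showing \(C\) is even, and the observation that any \delt \((u,v_1,v_2)\) of \(C\) must have \(u\in D\) with \(\sigma_{v_1}\neq\sigma_{v_2}\) are all correct. (In fact, since every literal of \(f_w\) is satisfied at \(z^w\) for \emph{every} \(w\in D\), every arc of \(\ig{f}\) between vertices of \(D\) carries sign \(\sigma_{w'}\sigma_w\), so \emph{every} cycle inside \(D\) is even --- a stronger fact worth noting.) But the information you extract is internally consistent --- \(\sigma_u=\sigma_{v_1}\neq\sigma_{v_2}\) is perfectly realizable --- so no contradiction follows yet, and you do not derive one.

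Both closing strategies you sketch have concrete obstructions. The splicing plan (a chord plus a \(C\)-path yields a strictly shorter even cycle) works only when \(u\in V(C)\); if the triple is external, both chords \(u\to v_1\) and \(u\to v_2\) point \emph{into} \(C\) and there is no return path to \(u\), so no new cycle is formed. The statement deliberately allows external triples (contrast \Cref{theo:AND-NOT-BN-fix-odd-cycle-internal-del-triple}, which demands internal ones), so this case cannot be dismissed. The freeze-and-recurse plan makes no progress: after restricting to \(D\) the two fixed points disagree on every remaining variable, and by your own apex-in-\(D\) argument the restricted network still satisfies the hypothesis, so the recursion is stationary at the first step; a reduction to the odd-cycle existence theorem is likewise unjustified. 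In short, the proposal stops precisely at the combinatorial core of \cite{RR2013}: what you have is the standard Aracena-style preamble plus one useful lemma about the apex of the triple, not a proof of the theorem.
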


\begin{theorem}\label{theo:uni-rule-Datalog-no-even-cycle-del-triple-at-most-one-StM}
	Let \(P\) be a \upname.
	If every even cycle of \(\adg{P}\) has a \delt, then \(P\) has at most one stable model.
\end{theorem}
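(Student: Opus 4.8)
The plan is to mirror the proof of \Cref{cor:uni-rule-Datalog-tight-odd-cycle-internal-del-triple-StM}, replacing the existence statement \Cref{theo:AND-NOT-BN-fix-odd-cycle-internal-del-triple} by the ``at most one fixed point'' statement \Cref{theo:AND-NOT-BN-no-strong-even-cycle-fix}, and --- crucially --- observing that no tightness hypothesis is needed for this bounding direction. Let \(f\) be the encoded \acbn of \(P\). Since \(P\) is uni-rule, \Cref{prop:uni-rule-Datalog-2-AND-NOT-BN} gives that \(f\) is an \anbn, and \Cref{prop:uni-rule-Datalog-AND-NOT-BN-graphs} gives \(\ig{f} = \adg{P}\); hence, by hypothesis, every even cycle of \(\ig{f}\) has a \delt. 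Applying \Cref{theo:AND-NOT-BN-no-strong-even-cycle-fix} then yields that \(f\) has at most one fixed point.

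It remains to transfer this bound back to the logic-programming side. By \Cref{cor:Datalog-SuM-BN-fix}, the supported models of \(P\) coincide with the fixed points of \(f\), so \(P\) has at most one supported model. Finally, every stable model of \(P\) is in particular a supported model of \(P\) (see \Cref{subsec:preliminaries-NLP}), so \(P\) has at most one stable model, which is exactly the claim.

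I would emphasize two points in the write-up, both about what the argument does \emph{not} require. First, unlike \Cref{cor:uni-rule-Datalog-tight-odd-cycle-internal-del-triple-StM}, this statement does not assume \(P\) is tight: tightness was invoked there only to upgrade ``\(f\) has a fixed point'' to ``\(P\) has a stable model'' via \Cref{theo:tight-Datalog-StM-SuM}, whereas here we merely use the always-valid inclusion of the stable models of \(P\) in its supported models. Second, one should resist the temptation to route the proof through the least fixpoint \(\lfp{P}\) (which is negative, hence tight): the property ``every even cycle has a \delt'' is \emph{not} preserved by the least-fixpoint transformation --- indeed the counter-example in \Cref{subsec:graphical-analysis-revisit} already shows that even the stronger property ``no even cycle'' can be destroyed by \(\lfp{\cdot}\) --- so that detour would not close. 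Consequently there is no genuine combinatorial obstacle; the only care needed is in tracking which direction of the stable/supported relationship is available, and in checking the hypotheses (AND-NOT shape via \Cref{prop:uni-rule-Datalog-2-AND-NOT-BN}, graph coincidence via \Cref{prop:uni-rule-Datalog-AND-NOT-BN-graphs}) that license the use of \Cref{theo:AND-NOT-BN-no-strong-even-cycle-fix}.
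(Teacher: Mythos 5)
Your proof is correct and follows essentially the same route as the paper's: encode \(P\) as an \anbn, use \Cref{prop:uni-rule-Datalog-AND-NOT-BN-graphs} to identify \(\ig{f}\) with \(\adg{P}\), apply \Cref{theo:AND-NOT-BN-no-strong-even-cycle-fix} to bound the fixed points by one, and transfer back via \Cref{cor:Datalog-SuM-BN-fix} together with the inclusion of stable models in supported models. Your explicit remark that tightness is not needed (and why the least-fixpoint detour would fail) is a correct and useful clarification of a step the paper leaves implicit.
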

\begin{proof}
	Let \(f\) be the encoded \acbn of \(P\).
	By~\Cref{prop:uni-rule-Datalog-2-AND-NOT-BN}, \(f\) is an \anbn.
	By~\Cref{prop:uni-rule-Datalog-AND-NOT-BN-graphs}, \(\ig{f} = \adg{P}\).
	It follows that every even cycle of \(\ig{f}\) has a \delt.
	By~\Cref{theo:AND-NOT-BN-no-strong-even-cycle-fix}, \(f\) has at most one fixed point.
	By~\Cref{cor:Datalog-SuM-BN-fix}, \(P\) has at most one stable model.
\end{proof}

\begin{theorem}[Lemma 1 of~\cite{TPRPA2025}]\label{theo:AND-NOT-BN-unique-async-att-no-strong-even-cycle}
	Consider an \anbn \(f\).
	If every even cycle of \(\ig{f}\) has a \delt, then \(\astg{f}\) has a unique attractor.
\end{theorem}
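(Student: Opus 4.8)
The plan is to treat existence and uniqueness separately. Existence is immediate: $\astg{f}$ is a finite directed graph, so its condensation is a finite DAG and has a sink, i.e.\ $\astg{f}$ has at least one terminal strongly connected component, which is an attractor. Hence the whole content of the statement is \emph{uniqueness}, and I would prove it by contradiction, generalizing the argument of~\cite{RR2013} for bounding the number of fixed points (\Cref{theo:AND-NOT-BN-no-strong-even-cycle-fix}) from single fixed states to entire asynchronous attractors. So suppose $f$ has two distinct attractors $A_1\neq A_2$; being distinct strongly connected components they are disjoint, and since $A_i$ is terminal, updating one variable $v$ of a state $s\in A_i$ with $f_v(s)\neq s(v)$ yields again a state of $A_i$. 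Among all pairs $(s_1,s_2)\in A_1\times A_2$ pick one of minimum Hamming distance and set $D=\{v\in\var{f} : s_1(v)\neq s_2(v)\}$; minimality forces $D\neq\emptyset$, and, more usefully, it forces $f_v(s_1)=s_1(v)$ and $f_v(s_2)=s_2(v)$ for every $v\in D$: otherwise updating such a $v$ in $s_1$ (or $s_2$) would give a state of $A_1$ (or $A_2$) strictly closer to the other one.

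The next step extracts an even cycle. For $v\in D$ we now have $f_v(s_1)=s_1(v)\neq s_2(v)=f_v(s_2)$, so the conjunction $f_v$ evaluates differently at $s_1$ and $s_2$, which agree off $D$. Let $t(v)$ be the (unique) one of $s_1,s_2$ with $t(v)(v)=1$; then every literal of $f_v$ is true at $t(v)$ and some literal of $f_v$ is false at the other state, and the variable $u$ of that falsified literal lies in $D$, is an in-neighbour of $v$ in $\ig{f}$, and satisfies $t(v)(u)=1$ exactly when the arc $u\to v$ is positive. Choosing one such witness $u(v)$ for each $v\in D$ gives a map $D\to D$; iterating it in the finite set $D$ produces a directed cycle $C$ of $\ig{f}[D]$ all of whose arcs are witness arcs. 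A short sign computation along $C$ shows that $t(\cdot)$ returns to its starting value precisely when $C$ has an even number of negative arcs, so $C$ is necessarily an \emph{even} cycle.

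Finally, invoke the hypothesis: the even cycle $C$ has a \delt $(z,v_1,v_2)$ with $v_1,v_2\in V(C)\subseteq D$ and $(zv_1,\oplus),(zv_2,\ominus)\in E(\ig{f})\setminus E(C)$. Since the literal $z$ occurs in $f_{v_1}$ and all literals of $f_{v_1}$ are true at $t(v_1)$, we get $t(v_1)(z)=1$; symmetrically $t(v_2)(z)=0$. If $z\notin D$ this is already a contradiction, because then $s_1(z)=s_2(z)$ and hence $t(v_1)(z)=t(v_2)(z)$. The remaining case $z\in D$ is, I expect, the main obstacle: one still has to exploit that the two arcs of the triple lie \emph{outside} $C$ (so $v_1$ and $v_2$ each carry, besides $z$, their $C$-predecessor as an in-neighbour inside $D$) in order to force a contradiction. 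I would approach it either by adding a secondary minimization — also minimizing $|V(C)|$ or $|D|$, so that the even cycle through $z$ which $z\in D$ again produces contradicts the choice — or by first preprocessing: percolate out every syntactically constant variable of $f$. This strictly shrinks $\ig{f}$ (in the spirit of~\Cref{prop:BN-syng-syntatic-percolation}), preserves the asynchronous attractors up to bijection (trajectories eventually fix those variables), and — by literal chasing exactly as in the proof of~\Cref{theo:uni-rule-Datalog-tight-odd-cycle-internal-del-triple-RegM} — keeps a \delt on every surviving even cycle; in the reduced network every variable has in-degree at least one, which makes $\ig{f}[D]$ dense enough to put $z$ itself on an even cycle and iterate the argument. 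A cleaner route worth attempting, if it can be pushed through, is to bypass the fixed-point lemma altogether and instead show directly that under the hypothesis $f$ has a unique $\leq_s$-minimal trap space and that no attractor of such an $f$ lies outside it.
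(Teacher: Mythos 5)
First, a point of reference: the paper does not prove this statement at all — it is imported verbatim as Lemma~1 of~\cite{TPRPA2025}, just as its fixed-point analogue (\Cref{theo:AND-NOT-BN-no-strong-even-cycle-fix}) is imported from~\cite{RR2013} — so there is no in-paper proof to compare yours against. Your reduction of the attractor statement to a fixed-point-like configuration is correct and is the standard one: two distinct attractors are disjoint terminal strongly connected components, a Hamming-minimal pair \((s_1,s_2)\in A_1\times A_2\) forces \(f_v(s_i)=s_i(v)\) for every \(v\) in the disagreement set \(D\), hence every literal of \(f_v\) is true at the state \(t(v)\) in which \(v\) equals \(1\) and some literal over a variable of \(D\) is false at the other state; the resulting colouring \(t\colon D\to\{s_1,s_2\}\) makes an arc of \(\ig{f}[D]\) positive exactly when its endpoints are equally coloured, so every cycle of \(\ig{f}[D]\) is even, and the case \(z\notin D\) of a \delt \((z,v_1,v_2)\) is correctly dispatched via \(t(v_1)(z)=1\neq 0=t(v_2)(z)\).

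However, the case \(z\in D\) is not a residual technicality: it is the entire combinatorial content of the Richard--Ruet argument, and neither of your proposed repairs closes it as described. Minimising \(|V(C)|\) handles only \emph{internal} triples: when \(z\in V(C)\) you can splice \((zv_1,\oplus)\) onto the sub-path of \(C\) from \(v_1\) to \(z\) and obtain a strictly shorter even cycle of \(\ig{f}[D]\) (strictly shorter because \(f_{v_1}\) contains at most one literal of \(z\), so \((zv_1,\oplus)\notin E(C)\) forces \(v_1\) not to be the \(C\)-successor of \(z\)). But a \delt may be \emph{external} with \(z\in D\setminus V(C)\), and then \(t(v_1)=t(z)\neq t(v_2)\) is perfectly consistent with the colouring; the in-degree condition on \(\ig{f}[D]\) (which already holds without any percolation) only supplies walks \emph{into} \(z\), not a path from \(V(C)\) to \(z\), so there is no shorter even cycle to extract and no contradiction. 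Percolating syntactic constants does not create that forward connectivity, and even if \(z\) happened to lie on some other even cycle, that cycle's own \delt introduces a fresh apex with no decreasing measure, so the iteration need not terminate. The third route — proving directly that the hypothesis forces a unique \(\leq_s\)-minimal trap space containing every attractor — is itself an unproven claim of comparable strength (a unique minimal trap space does not in general confine all asynchronous attractors). So the proposal is incomplete at exactly the point you flag, and what is missing is precisely the core of Theorem~2\('\) of~\cite{RR2013} rather than a routine case analysis.
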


\begin{theorem}\label{theo:tight-uni-rule-Datalog-no-even-cycle-del-triple-unique-RegM}
	Let \(P\) be a tight \upname.
	If every even cycle of \(\adg{P}\) has a \delt, then \(P\) has a unique regular model.
\end{theorem}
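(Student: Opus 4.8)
The plan is to push everything through the encoded \acbn of $P$ and then combine the unique-attractor result of~\Cref{theo:AND-NOT-BN-unique-async-att-no-strong-even-cycle} with the trap-space characterisation of regular models from~\Cref{lem:Datalog-tight-BN-RegM-s-min-TS}. First I would let $f$ be the encoded \acbn of $P$: by~\Cref{prop:uni-rule-Datalog-2-AND-NOT-BN} it is an \anbn, and by~\Cref{prop:uni-rule-Datalog-AND-NOT-BN-graphs} we have $\ig{f} = \adg{P}$, so the hypothesis says that every even cycle of $\ig{f}$ has a \delt. Then~\Cref{theo:AND-NOT-BN-unique-async-att-no-strong-even-cycle} yields that $\astg{f}$ has a unique attractor $A$. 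Since $P$ is tight,~\Cref{lem:Datalog-tight-BN-RegM-s-min-TS} reduces the claim to showing that $f$ has a unique $\leq_s$-minimal trap space.

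Next I would use the fact (from~\citep{KBS2015}, already invoked in the proof of~\Cref{theo:Datalog-no-odd-cycle-RegM-StM}) that every $\leq_s$-minimal trap space of $f$ contains at least one attractor of $\astg{f}$; as $A$ is the only attractor, every $\leq_s$-minimal trap space contains $A$. The key ingredient is then an elementary lemma on trap spaces: if $m_1$ and $m_2$ are trap spaces of $f$ with $\cset{m_1}\cap\cset{m_2}\neq\emptyset$, then the sub-space $m$ determined by $\cset{m}=\cset{m_1}\cap\cset{m_2}$ is again a trap space of $f$. The nonemptiness assumption guarantees that $m_1$ and $m_2$ are compatible on their fixed coordinates, so $m$ is a genuine sub-space (with $m(v)=m_1(v)$ whenever $m_1(v)\neq\uval$, and $m(v)=m_2(v)$ otherwise). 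For every $v$ with $m(v)\neq\uval$, one of $m_1(v),m_2(v)$ equals $m(v)$, say $m_1(v)=m(v)$; since $\cset{m_1}$ is a trap set of $\sstg{f}$ we get $f_v(x)=m_1(v)=m(v)$ for all $x\in\cset{m_1}\supseteq\cset{m}$, so $f(x)\in\cset{m}$ for all $x\in\cset{m}$, i.e. $\cset{m}$ is a trap set and $m$ is a trap space.

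With this lemma in hand I would conclude as follows. Let $m_1,m_2$ be any two $\leq_s$-minimal trap spaces of $f$; both contain $A$, so the sub-space $m$ with $\cset{m}=\cset{m_1}\cap\cset{m_2}$ is a trap space, and $\cset{m}\subseteq\cset{m_1}$, $\cset{m}\subseteq\cset{m_2}$ give $m\leq_s m_1$ and $m\leq_s m_2$. By $\leq_s$-minimality this forces $m=m_1$ and $m=m_2$, hence $m_1=m_2$, so $f$ has at most one $\leq_s$-minimal trap space. Since the set of trap spaces of $f$ is nonempty (the sub-space mapping every variable to $\uval$ is always a trap space) and finite, it has at least one $\leq_s$-minimal element; thus $f$ has exactly one $\leq_s$-minimal trap space, which by~\Cref{lem:Datalog-tight-BN-RegM-s-min-TS} means $P$ has a unique regular model. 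I expect the only genuine subtlety to be the trap-space intersection lemma---specifically, checking nonemptiness so that the intersection is a legitimate sub-space, and matching it precisely against the ``trap set of the STG'' definition---since everything else is a direct chaining of results already established in the excerpt.
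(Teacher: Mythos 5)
Your proposal is correct and follows essentially the same route as the paper's proof: encode \(P\) as an \anbn, use \Cref{prop:uni-rule-Datalog-AND-NOT-BN-graphs} to transfer the even-cycle hypothesis to \(\ig{f}\), apply \Cref{theo:AND-NOT-BN-unique-async-att-no-strong-even-cycle} to get a unique asynchronous attractor, and conclude via \Cref{lem:Datalog-tight-BN-RegM-s-min-TS}. The only difference is that you explicitly justify the step ``unique attractor implies unique \(\leq_s\)-minimal trap space'' (via the intersection-of-trap-spaces lemma and the fact that every minimal trap space contains an attractor), which the paper leaves implicit; your justification is sound.
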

\begin{proof}
	Let \(f\) be the encoded \acbn of \(P\).
	By~\Cref{prop:uni-rule-Datalog-2-AND-NOT-BN}, \(f\) is an \anbn.
	By~\Cref{prop:uni-rule-Datalog-AND-NOT-BN-graphs}, \(\ig{f} = \adg{P}\).
	It follows that every even cycle of \(\ig{f}\) has a \delt.
	Since \(P\) is tight, by~\Cref{lem:Datalog-tight-BN-RegM-s-min-TS}, the regular models of \(P\) coincide with the \(\leq_s\)-minimal trap spaces of \(f\).
	By~\Cref{theo:AND-NOT-BN-unique-async-att-no-strong-even-cycle}, \(\astg{f}\) has a unique attractor.
	Hence, \(f\) has a unique \(\leq_s\)-minimal trap space, leading to \(P\) has a unique regular model.
\end{proof}

\subsection{Number of Models and Feedback Vertex Sets}\label{subsec:num-model-fvs}

Understanding the number of semantic models a \pname can admit is crucial for analyzing its behavior, especially in applications involving reasoning, verification, or program synthesis~\citep{DT1996,CT1999,Linke2001,LZ2004,Costantini2006,FH2021}.
In this subsection, we investigate structural upper bounds on the number of supported partial models, stable partial models, stable models, and regular models of a \pname based on the size of its feedback vertex set.
These bounds provide a direct link between the combinatorial structure of the atom dependency graph and the semantic complexity of the program. 
In particular, they offer practical guidance for estimating or constraining the space of possible models, which is valuable for both theoretical studies and tool-supported analysis of \pnames.

\paragraph{General \DLN Programs.}

To the best of our knowledge, there is no existing work connecting between stable, stable partial, and regular models of a \pname and feedback vertex sets of its atom dependency graph.
We first relate the number of stable partial models and even feedback vertex sets (see~\Cref{theo:Datalog-num-SuPM-even-cycle}).
We state and prove the upper bound of \(3^{|U|}\) for the number of complete trap spaces in a \acbn (\Cref{theo:BN-num-CoTS-even-cycle}) where \(U\) is an even feedback vertex set of the syntatic dependency graph of this \acbn, then apply this bound to the number of stable partial models in a \pname.
The underlying intuition for the base of three is that in a stable partial model, the value of an atom can be \(\tval\), \(\fval\), or \(\uval\).
The underlying intuition for the exponent of \(|U|\) is that \(U\) interesects every even cycle and the \acbn has a unique complete trap space in the absence of even cycles (see~\Cref{theo:BN-no-even-cycle-unique-CoTS}).

\begin{theorem}\label{theo:BN-num-CoTS-even-cycle}
	Consider a \acbn \(f\).
	Let \(U\) be a subset of \(\var{f}\) that intersects every even cycle of \(\syng{f}\).
	Then the number of complete trap spaces of \(f\) is at most \(3^{|U|}\).
\end{theorem}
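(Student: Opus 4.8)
The plan is to fix a subset $U \subseteq \var{f}$ intersecting every even cycle of $\syng{f}$ and to show that every complete trap space $m$ of $f$ is uniquely determined by its restriction $m|_U$ to $U$; then the count follows because $m|_U$ ranges over at most $3^{|U|}$ possibilities. To make this precise I would pass to a modified network. Given a sub-space $\sigma \colon U \to \threed{}$, define a new \acbn $f^{\sigma}$ on variable set $\var{f}$ by $f^{\sigma}_v = \sigma(v)$ (a constant function, when $\sigma(v) \neq \uval$) for $v \in U$ with $\sigma(v)$ two-valued, $f^{\sigma}_v = f_v$ for $v \in U$ with $\sigma(v) = \uval$ and for $v \notin U$. (One must be slightly careful about what ``$f^\sigma_v = \uval$'' means; the cleanest route is to work with complete trap spaces directly rather than introducing a constant $\uval$ function — see below.) Actually, the most robust formulation avoids modifying the network and instead argues: if $m$ and $m'$ are complete trap spaces of $f$ with $m|_U = m'|_U$, then $m = m'$.

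So the core step I would carry out is this uniqueness-given-$U$ claim. Suppose $m, m'$ are complete trap spaces agreeing on $U$ but $m(v_0) \neq m'(v_0)$ for some $v_0 \notin U$. The idea is to mimic the descending-chain construction in the proof of~\Cref{lem:BN-no-constant-no-even-cycle-unique-CoTS}, but now tracking a \emph{discrepancy} between $m$ and $m'$ rather than a single nonconstant value. Concretely, I would show: whenever $v$ is a vertex with $m(v) \neq m'(v)$, there exists an in-neighbour $u$ of $v$ in $\syng{f}$ with $m(u) \neq m'(u)$, and moreover the sign of the arc $u \to v$ is ``consistent'' with the way the discrepancy at $v$ propagates from the discrepancy at $u$ (positive arc preserves the direction of disagreement, negative arc flips it) — this uses that $m(v) = m(f_v)$, $m'(v) = m'(f_v)$, and a case analysis on the three-valued evaluation of $f_v$, exactly parallel to Cases 1 and 2 in the proof of~\Cref{lem:BN-no-constant-no-even-cycle-unique-CoTS}. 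Since $v_0 \notin U$, and since we can keep walking backwards through vertices of disagreement, and $\var{f}$ is finite, the chain must revisit a vertex, producing a cycle $C$ in $\syng{f}$ all of whose vertices are points of disagreement. A parity/sign bookkeeping argument (the net sign around $C$ must be ``$\oplus$'' for the disagreement to be self-consistent, hence $C$ has an even number of negative arcs, i.e.\ $C$ is even) then shows $C$ is an even cycle of $\syng{f}$. But $U$ meets every even cycle, so some vertex of $C$ lies in $U$ — contradicting that every vertex of $C$ is a point of disagreement, since $m|_U = m'|_U$. (One subtlety, as in~\Cref{lem:BN-no-constant-no-even-cycle-unique-CoTS}: if at some step the in-neighbour $u$ can be chosen so the relevant arc carries \emph{both} signs, then $\syng{f}$ contains both an even and an odd cycle through the disagreement set, and we again land on an even cycle; this case must be handled but is routine.)

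Once the uniqueness-given-$U$ claim is established, the theorem is immediate: the map $m \mapsto m|_U$ from complete trap spaces of $f$ to $\threed{}^{U}$ is injective, and $|\threed{}^U| = 3^{|U|}$, so there are at most $3^{|U|}$ complete trap spaces. I expect the main obstacle to be the sign/parity bookkeeping in the backward-walk argument — specifically, formalizing ``the discrepancy at $v$ is forced by a discrepancy at some in-neighbour with a sign-consistent arc'' requires a careful three-valued case analysis (there are a few subcases depending on whether $m(v),m'(v)$ are $\tval/\fval$ or one of them is $\uval$, and on whether $f_v$ is a disjunction, conjunction, literal, etc.), and ensuring the accumulated signs around the closed walk really do force evenness. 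This is essentially an adaptation of the argument already spelled out for~\Cref{lem:BN-no-constant-no-even-cycle-unique-CoTS}, so I would lean on that proof's structure heavily; alternatively, one could try to derive~\Cref{theo:BN-num-CoTS-even-cycle} more cheaply by combining~\Cref{theo:BN-no-even-cycle-unique-CoTS} with a syntactic-percolation/fixing argument, but the direct discrepancy-chain proof seems cleanest and most self-contained.
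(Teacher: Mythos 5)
Your proposal is correct, but it takes a genuinely different route from the paper. The paper does not prove the injectivity of \(m \mapsto m|_U\) by a direct discrepancy argument; instead, for each assignment \(I \colon U \to \threed{}\) it builds a clamped network \(f^{I}\) (setting \(f^{I}_v = I(v)\) when \(I(v) \neq \uval\) and \(f^{I}_v = \neg v\) when \(I(v) = \uval\), the negative self-loop being the trick that forces the value \(\uval\) without leaving the Boolean setting), observes that clamping deletes all in-arcs of \(U\)-vertices and only introduces odd self-loops so that \(\syng{f^{I}}\) has no even cycle, and then invokes \Cref{theo:BN-no-even-cycle-unique-CoTS} as a black box to conclude that \(f^{I}\) has a unique complete trap space; since every complete trap space of \(f\) agreeing with \(I\) on \(U\) is a complete trap space of \(f^{I}\), the injection into \(\threed{}^{U}\) follows. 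Your direct ``uniqueness-given-\(U\)'' argument reaches the same injection without modifying the network: the key propagation step you sketch does hold --- if \(m(f_v) >_t m'(f_v)\) then, by structural induction on \(f_v\) using the \(\leq_t\)-monotonicity of Kleene's \(\min\)/\(\max\)/\(\neg\), there is an input \(u\) occurring with a polarity \(s\) such that the disagreement direction at \(u\) composed with \(s\) gives the disagreement direction at \(v\) --- and the closed discrepancy chain then has sign product \(\oplus\), i.e.\ is an even cycle avoiding \(U\), which is the desired contradiction (note that if \(f_v\) is constant there can be no disagreement at \(v\), so the walk never gets stuck). What each approach buys: the paper's reduction keeps the theorem's proof short and reuses \Cref{theo:BN-no-even-cycle-unique-CoTS}, at the cost of the clamping construction and the \(\neg v\) device; your version is more self-contained, bypasses the minimum-in-degree hypothesis and the syntactic-percolation machinery entirely, but requires you to actually write out the two-interpretation three-valued case analysis that you currently only sketch and correctly flag as the main remaining work.
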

\begin{proof}
	Let \(I\) be an assignment \(U \mapsto \threed{}\).
	We build the new \acbn \(f^{I}\) as follows.
	For every \(v \in U\), \(f^{I}_v = I(v)\) if \(I(v) \neq \uval\), \(f^{I}_v = \neg v\) if \(I(v) = \uval\).
	For every \(v \in \var{f} \setminus U\), \(f^{I}_v = f_v\).
	Since \(U\) intersects every even cycle of \(\syng{f}\) and only negative self arcs can be introduced, \(\syng{f^{I}}\) has no even cycle.
	By~\Cref{theo:BN-no-even-cycle-unique-CoTS}, \(f^{I}\) has a unique complete trap space.
	By the construction, the setting \(f^{I}_v = I(v)\) (resp.\ \(f^{I}_v = \neg v\)) ensures that for any complete trap space of \(f^{I}\), the value of \(v\) is always \(I(v)\) (resp.\ \(\uval\)).
	Hence, this unique complete trap space agrees with the assignment \(I\).
	
	It is easy to see that a complete trap space of \(f\) that agrees with the assignment \(I\) is a complete trap space of \(f^{I}\).
	Since \(f^{I}\) has a unique complete trap space, we have an injection from the set of complete trap spaces of \(f\) to the set of possible assignments \(I\).
	There are in total \(3^{|U|}\) possible assignments \(I\).
	Hence, we can conclude that \(f\) has at most \(3^{|U|}\) complete trap spaces.
\end{proof}

\begin{theorem}\label{theo:Datalog-num-SuPM-even-cycle}
	Consider a \pname \(P\).
	Let \(U\) be a subset of \(\hb{P}\) that intersects every even cycle of \(\adg{P}\).
	Then the number of supported partial models of \(P\) is at most \(3^{|U|}\).
\end{theorem}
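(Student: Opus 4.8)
The plan is to transfer the question to the Boolean network side, where the corresponding bound has just been established in \Cref{theo:BN-num-CoTS-even-cycle}. First I would let $f$ be the \acbn encoded from $P$ via \Cref{def:Datalog-2-BN-encoding}, so that $\var{f} = \hb{P}$. By \Cref{cor:Datalog-adg-BN-sig} the atom dependency graph of $P$ is literally the syntactic influence graph of $f$, that is $\adg{P} = \syng{f}$. Hence the hypothesis that $U \subseteq \hb{P}$ meets every even cycle of $\adg{P}$ is exactly the hypothesis that $U \subseteq \var{f}$ meets every even cycle of $\syng{f}$, which is what \Cref{theo:BN-num-CoTS-even-cycle} requires.

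Applying \Cref{theo:BN-num-CoTS-even-cycle} then yields that $f$ has at most $3^{|U|}$ complete trap spaces. Finally I would invoke \Cref{theo:Datalog-SuPM-BN-CoTS}, which gives a one-to-one correspondence between the supported partial models of $P$ and the complete trap spaces of $f$; composing this correspondence with the previous bound gives the desired conclusion that $P$ has at most $3^{|U|}$ supported partial models.

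I do not expect any genuine obstacle here: the statement is a direct corollary obtained by chaining the graphical identification $\adg{P} = \syng{f}$, the counting bound of \Cref{theo:BN-num-CoTS-even-cycle}, and the model-to-trap-space correspondence of \Cref{theo:Datalog-SuPM-BN-CoTS}. The only points requiring a line of care are that $\adg{P}$ and $\syng{f}$ are the same signed directed graph (so ``even cycle'' has the same meaning on both sides) and that $\hb{P} = \var{f}$ (so a subset of $\hb{P}$ is a subset of $\var{f}$), both immediate from \Cref{def:Datalog-2-BN-encoding}. It is worth recording the heuristic behind the bound: the base $3$ reflects the three possible values $\tval$, $\fval$, $\uval$ of an atom, and the exponent $|U|$ reflects that once $U$ is assigned the remaining graph is even-cycle-free and hence the model is uniquely determined (\Cref{theo:BN-no-even-cycle-unique-CoTS}), giving an injection from supported partial models into the $3^{|U|}$ possible assignments of $U$.
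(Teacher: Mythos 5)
Your proposal is correct and matches the paper's proof essentially step for step: both identify $\adg{P}$ with $\syng{f}$ via \Cref{cor:Datalog-adg-BN-sig}, apply the counting bound of \Cref{theo:BN-num-CoTS-even-cycle} to the complete trap spaces of $f$, and transfer the bound back through the correspondence of \Cref{theo:Datalog-SuPM-BN-CoTS}. No gaps.
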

\begin{proof}
	Let \(f\) be the encoded \acbn of \(P\).
	By~\Cref{theo:Datalog-SuPM-BN-CoTS}, the supported partial models of \(P\) coincide with the complete trap spaces of \(f\).
	By~\Cref{cor:Datalog-adg-BN-sig}, \(\syng{f} = \adg{P}\), thus \(U\) is also a subset of \(\var{f}\) that intersects every even cycle of \(\syng{f}\).
	By~\Cref{theo:BN-num-CoTS-even-cycle}, \(f\) has at most \(3^{|U|}\) complete trap spaces.
	This implies that \(P\) has at most \(3^{|U|}\) supported partial models.
\end{proof}

The upper bound established in~\Cref{theo:Datalog-num-SuPM-even-cycle} not only constrains the number of supported partial models, but also extends naturally to more restrictive semantic notions, as shown in the following corollary.

\begin{corollary}
  \label{cor:Datalog-num-StPM-RegM-StM-even-cycle}
	Consider a \pname \(P\).
	Let \(U\) be a subset of \(\hb{P}\) that intersects every even cycle of \(\adg{P}\).
	Then the number of stable partial models of \(P\) is at most \(3^{|U|}\).
	In addition, this upper bound also holds the number of regular models or stable models.
\end{corollary}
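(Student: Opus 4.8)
\textbf{Proof proposal for \Cref{cor:Datalog-num-StPM-RegM-StM-even-cycle}.}
The plan is to reduce all three claims to the already-established bound for supported partial models, namely \Cref{theo:Datalog-num-SuPM-even-cycle}, by exhibiting each of the relevant model classes as a subset of the set of supported partial models of $P$. First I would invoke the basic containments recalled in \Cref{subsec:preliminaries-NLP}: every stable partial model of $P$ is a supported partial model of $P$ (shown by~\cite{DHW2014}); every regular model is, by definition, a $\leq_s$-minimal stable partial model and hence in particular a stable partial model; and every stable model is a two-valued stable partial model. Chaining these, the set of stable partial models, the set of regular models, and the set of stable models of $P$ are all contained in the set of supported partial models of $P$.

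Next I would simply apply \Cref{theo:Datalog-num-SuPM-even-cycle} with the same set $U$: since $U$ intersects every even cycle of $\adg{P}$, the number of supported partial models of $P$ is at most $3^{|U|}$. Because each of the three classes above is a subset of the supported partial models, each has cardinality at most $3^{|U|}$ as well. Concretely, this gives the stated bound on the number of stable partial models, and the same argument yields the bound on the number of regular models and on the number of stable models.

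There is essentially no hard step here: the only thing requiring care is citing the correct inclusion facts from the preliminaries in the right order (stable partial $\subseteq$ supported partial; regular $\subseteq$ stable partial; stable $\subseteq$ stable partial), so that the single bound from \Cref{theo:Datalog-num-SuPM-even-cycle} propagates to all three weaker semantics. One might optionally add a remark that these bounds are in general not tight for the more restrictive semantics—e.g.\ $P$ may have no stable model at all even when $3^{|U|}$ is large—mirroring the example $P=\{p\gets\dng{p}\}$ used in \Cref{cor:Datalog-no-even-cycle-at-most-one-StM}, but this is not needed for the proof itself.
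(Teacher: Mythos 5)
Your proposal is correct and follows exactly the same route as the paper: it derives all three bounds from \Cref{theo:Datalog-num-SuPM-even-cycle} via the inclusions stable partial $\subseteq$ supported partial, regular $\subseteq$ stable partial, and stable $\subseteq$ stable partial. The paper's own proof is a one-line statement of precisely this reduction.
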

\begin{proof}
	This immediately follows from~\Cref{theo:Datalog-num-SuPM-even-cycle} and the fact that a stable partial model is also a supported partial model, and a regular or stable model is also a stable partial model.
\end{proof}

\begin{remark}
	Let us consider~\Cref{exam:Datalog-all} again.
	The graph \(\adg{P}\) is given in~\Cref{fig:exam-Datalog-adg-tgst-tgsp}~(a).
	It is easy to verify that \(U = \{p\}\) or \(U = \{q\}\) intersects every even cycle of \(\adg{P}\).
	Hence,~\Cref{cor:Datalog-num-StPM-RegM-StM-even-cycle} gives the upper bound \(3^1 = 3\).
	Indeed, the \pname \(P\) has three stable (supported) partial models.
	Hence, the upper bound given by~\Cref{cor:Datalog-num-StPM-RegM-StM-even-cycle} can be reached.
\end{remark}

We then provide a simple upper bound for the number of regular models based on the connection to the dynamical behavior of a \acbn, in which the base decreases to 2 but the exponent increases to \(|\hb{P}|\).

\begin{proposition}\label{prop:Datalog-num-RegM-non-tight-trivial}
	Let \(P\) be a \pname.
	Then the number of regular models of \(P\) is at most \(2^{|\hb{P}|}\).
\end{proposition}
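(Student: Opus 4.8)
The plan is to reduce the statement to counting the \(\leq_s\)-minimal trap spaces of a \acbn, where a clean disjointness argument is available. First I would apply \Cref{prop:LFP-finiteness} to observe that \(\lfp{P}\) is again a \pname with \(\hb{\lfp{P}} = \hb{P}\), take its encoded \acbn \(f'\), and invoke \Cref{lem:Datalog-lfp-BN-RegM-s-min-TS}, which states that the regular models of \(P\) are exactly the \(\leq_s\)-minimal trap spaces of \(f'\). It therefore suffices to show that a \acbn on \(|\hb{P}|\) variables has at most \(2^{|\hb{P}|}\) such trap spaces. I would deliberately not try to argue directly about the stable partial models of \(P\): the \(\leq_s\)-meet of two stable partial models need not again be a stable partial model, whereas the \(\leq_s\)-meet of two trap spaces does behave well, which is exactly what makes the detour through \(\lfp{P}\) worthwhile.

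The key step is the claim that two distinct \(\leq_s\)-minimal trap spaces \(m_1 \neq m_2\) of \(f'\) satisfy \(\cset{m_1} \cap \cset{m_2} = \emptyset\). To prove it I would suppose the intersection is non-empty. Then for every variable \(v\) with \(m_1(v) \neq \uval\) and \(m_2(v) \neq \uval\) the two values must coincide, so the coordinate-wise meet \(m\) — set \(m(v) = m_1(v)\) whenever \(m_1(v) \neq \uval\), and \(m(v) = m_2(v)\) otherwise — is a well-defined sub-space with \(\cset{m} = \cset{m_1} \cap \cset{m_2}\) and \(m \leq_s m_1\), \(m \leq_s m_2\). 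Since no arc of the state transition graph leaves \(\cset{m_1}\) or leaves \(\cset{m_2}\), none can leave their non-empty intersection, so \(\cset{m}\) is a trap set and hence \(m\) is a trap space. Minimality of \(m_1\) forces \(m = m_1\) and minimality of \(m_2\) forces \(m = m_2\), whence \(m_1 = m_2\), a contradiction.

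Finally, since the state space of \(f'\) is \(\twod{}^{|\hb{P}|}\), of cardinality \(2^{|\hb{P}|}\), the sets \(\cset{m}\) ranging over the \(\leq_s\)-minimal trap spaces \(m\) of \(f'\) are pairwise disjoint non-empty subsets of a set of size \(2^{|\hb{P}|}\); hence there are at most \(2^{|\hb{P}|}\) of them, and therefore at most \(2^{|\hb{P}|}\) regular models of \(P\). I do not anticipate a genuine obstacle here, as this is intended only as a coarse bound; the one point needing a line of care is the closure of trap spaces under \(\leq_s\)-meet, which reduces to the two routine observations that the \(\leq_s\)-meet of sub-spaces corresponds to intersection of the represented state sets (recall \(\leq_s\) is the subset order on \(\cset{\cdot}\)) and that a non-empty intersection of trap sets is a trap set.
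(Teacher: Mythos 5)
Your proposal is correct. The reduction is identical to the paper's: both pass to \(\lfp{P}\) via \Cref{prop:LFP-finiteness}, encode it as the \acbn \(f'\), and invoke \Cref{lem:Datalog-lfp-BN-RegM-s-min-TS} so that it remains to bound the number of \(\leq_s\)-minimal trap spaces of \(f'\) by the number of states. Where you diverge is the final counting step. The paper bounds the number of \(\leq_s\)-minimal trap spaces by the number of attractors of \(\astg{f'}\), citing the external fact that every minimal trap space contains at least one asynchronous attractor, and then observes that attractors are disjoint non-empty sets of states. You instead prove directly that two distinct \(\leq_s\)-minimal trap spaces \(m_1, m_2\) satisfy \(\cset{m_1} \cap \cset{m_2} = \emptyset\), via the observation that a non-empty intersection of trap sets is a trap set and corresponds to the \(\leq_s\)-meet of the two sub-spaces, so minimality forces \(m_1 = m_2\). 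This is sound, self-contained, and avoids the citation to the attractor-containment result; it is in fact the same consistency-and-overlap argument the paper itself deploys later for stable and supported trap spaces of \pnames (\Cref{prop:Datalog-overlap-two-consistent-TS} and \Cref{prop:Datalog-two-min-s-TS-inconsistent}), here applied at the level of the \acbn. Both routes yield the same coarse bound \(2^{|\hb{P}|}\); the paper's version has the minor advantage of reusing machinery (attractors) that it needs anyway for the sharper bounds in \Cref{theo:Datalog-num-RegM-tight-even-cycle}, while yours is the more elementary and arguably cleaner argument for this particular statement.
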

\begin{proof}
	Let \(\lfp{P}\) be the least fixpoint of \(P\) and \(f'\) be the encoded \acbn of \(\lfp{P}\).
	By~\Cref{lem:Datalog-lfp-BN-RegM-s-min-TS}, the regular models of \(P\) coincide with the \(\leq_s\)-minimal trap spaces of \(f'\).
	We have \(\hb{P} = \hb{\lfp{P}} = \var{f'}\) by definition.
	Since each attractor of \(\astg{f'}\) contains at least one state and \(f'\) has \(2^{|\var{f'}|}\) states, \(\astg{f'}\) has at most \(2^{|\var{f'}|}\) attractors.
	The number of \(\leq_s\)-minimal trap spaces of \(f'\) is a lower bound for the number of attractors of \(\astg{f'}\)~\citep{KBS2015}.
	Hence, \(P\) has at most \(2^{|\hb{P}|}\) regular models.
\end{proof}

We get a better upper bound for the number of regular models (see~\Cref{theo:Datalog-num-RegM-tight-even-cycle}), but restricted to tight \pnames.

\begin{theorem}[Corollary 2 of~\cite{R2009}]\label{theo:BN-num-asyn-att-even-cycle}
	Given a \acbn \(f\), let \(U\) be a subset of \(\var{f}\) that intersects every even cycle of \(\ig{f}\).
	Then the number of attractors of \(\astg{f}\) is at most \(2^{|U|}\).
\end{theorem}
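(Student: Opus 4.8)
The plan is to reduce the statement to the special case $U=\emptyset$ — namely to the fact that a \acbn whose influence graph has no even (i.e.\ positive) cycle has a \emph{unique} asynchronous attractor, the ``positive'' counterpart of~\Cref{theo:BN-no-odd-cycle-no-cyclic-asyn-att} — and then to lift it to an arbitrary $U$ by a freezing-and-counting argument over the $2^{|U|}$ Boolean assignments of $U$. For each $\sigma\in\twod{}^{U}$, let $f^{\sigma}$ be the \acbn with $\var{f^{\sigma}}=\var{f}$, $f^{\sigma}_v=\sigma(v)$ for $v\in U$ and $f^{\sigma}_v=f_v$ otherwise. In $\ig{f^{\sigma}}$ no arc ends at a vertex of $U$ (its update function is constant), so no cycle of $\ig{f^{\sigma}}$ visits $U$; hence every cycle of $\ig{f^{\sigma}}$ is a cycle of $\ig{f}$ avoiding $U$, and since $U$ meets every even cycle of $\ig{f}$, the graph $\ig{f^{\sigma}}$ has no even cycle. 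By the no-even-cycle uniqueness result, $\astg{f^{\sigma}}$ has a unique attractor $A_{\sigma}$; moreover $A_{\sigma}\subseteq\{x : x|_{U}=\sigma\}$, because from any state one may first flip the $U$-variables that disagree with $\sigma$ and then never flip them again, so this sub-space is reached and never left.

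Next I would show that the attractors of $\astg{f}$ have pairwise disjoint $U$-projections. Let $A$ be an attractor of $\astg{f}$ and $a\in A$ with $\sigma:=a|_{U}$. Then $T:=A\cap\{x : x|_{U}=\sigma\}$ is a nonempty trap set of $\astg{f^{\sigma}}$: any $\astg{f^{\sigma}}$-arc out of a state of $T$ must flip some $v\notin U$ (flipping $v\in U$ would require $f^{\sigma}_v(x)\neq\sigma(v)=x_v$, impossible), so it equals an $\astg{f}$-arc, which keeps us inside $A$, and it keeps the $U$-part equal to $\sigma$. A nonempty trap set of a finite digraph contains a terminal strongly connected component, i.e.\ an attractor; by uniqueness that attractor is $A_{\sigma}$, whence $A_{\sigma}\subseteq T\subseteq A$. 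Consequently, if $A\neq B$ are distinct attractors of $\astg{f}$ and some $\sigma$ is a $U$-value of a state of both, then $A_{\sigma}\subseteq A$ and $A_{\sigma}\subseteq B$, contradicting $A\cap B=\emptyset$ and $A_{\sigma}\neq\emptyset$. Hence the nonempty sets $\{x|_{U}: x\in A\}$, indexed by the attractors $A$ of $\astg{f}$, are pairwise disjoint subsets of $\twod{}^{U}$, so there are at most $|\twod{}^{U}|=2^{|U|}$ of them. (Equivalently: mapping each attractor $A$ to some attractor of some $\astg{f^{\sigma}}$ with $\sigma$ a $U$-value of $A$ that is contained in $A$ is an injection into a set of size $2^{|U|}$.)

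The routine parts are the freezing step and the trap-set bookkeeping above. The real obstacle — and the part I would expect to simply cite rather than reprove — is the $U=\emptyset$ statement that the absence of even cycles in the influence graph forces a unique asynchronous attractor: this is the genuine structural input (a positive-feedback analogue of Thomas's rules, dual to the ``no negative cycle'' theorem~\Cref{theo:BN-no-odd-cycle-no-cyclic-asyn-att}), and establishing it from scratch would be the bulk of the work.
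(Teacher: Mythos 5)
The paper does not prove this statement at all: it is imported verbatim as Corollary~2 of the cited reference, so there is no in-paper proof to compare against. That said, your reconstruction is sound and is essentially the argument used in the source: freeze each assignment $\sigma\in\twod{}^{U}$ to get $f^{\sigma}$, observe that $\ig{f^{\sigma}}$ loses all arcs into $U$ and hence all even cycles, invoke uniqueness of the asynchronous attractor in the even-cycle-free case to get a single attractor $A_{\sigma}$ inside the sub-space $\{x : x|_{U}=\sigma\}$, and then show via the trap-set bookkeeping that every attractor of $\astg{f}$ contains some $A_{\sigma}$, so disjointness of attractors caps their number at $2^{|U|}$. All the intermediate steps check out: $T=A\cap\{x:x|_{U}=\sigma\}$ is indeed a nonempty trap set of $\astg{f^{\sigma}}$ because no $\astg{f^{\sigma}}$-transition can touch a $U$-variable once it agrees with $\sigma$, and a nonempty trap set of a finite digraph always contains a terminal strongly connected component. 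The one genuine dependency is the base case you flag: that a \acbn whose influence graph has no even (i.e.\ positive) cycle has a \emph{unique} asynchronous attractor. Be aware that nothing in this paper supplies that ingredient --- \Cref{theo:BN-no-odd-cycle-no-cyclic-asyn-att} is the odd-cycle analogue, and \Cref{theo:BN-no-even-cycle-unique-CoTS} concerns complete trap spaces of the syntactic influence graph, which is a different (static) statement --- so you would have to cite it externally (it is again due to Richard), exactly as you propose. With that citation in hand, your proof is complete and correct.
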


\begin{theorem}
  \label{theo:Datalog-num-RegM-tight-even-cycle}
	Let \(P\) be a \pname.
	Let \(U\) be a subset of \(\hb{P}\) that intersects every even cycle of \(\adg{P}\).
	If \(P\) is tight, then the number of regular models of \(P\) is at most \(2^{|U|}\).
\end{theorem}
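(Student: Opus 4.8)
The plan is to work directly with the \acbn $f$ encoded from $P$, which is legitimate precisely because $P$ is tight. First I would invoke~\Cref{lem:Datalog-tight-BN-RegM-s-min-TS}: since $P$ is tight, the regular models of $P$ coincide with the $\leq_s$-minimal trap spaces of $f$, so it suffices to show that $f$ has at most $2^{|U|}$ such trap spaces.

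Next I would transfer the feedback-vertex-set hypothesis from $\adg{P}$ to $\ig{f}$. By~\Cref{prop:Datalog-BN-graphs}, $\ig{f}$ is a sub-graph of $\adg{P}$ with the same vertex set; in particular any cycle of $\ig{f}$, together with its arc signs, is a cycle of $\adg{P}$ with the same number of negative arcs, hence an even cycle of $\ig{f}$ is also an even cycle of $\adg{P}$. Therefore $U$, which intersects every even cycle of $\adg{P}$, also intersects every even cycle of $\ig{f}$, i.e.\ $U$ is an even feedback vertex set of $\ig{f}$.

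Then I would apply~\Cref{theo:BN-num-asyn-att-even-cycle} to conclude that $\astg{f}$ has at most $2^{|U|}$ attractors. Finally, as already recalled in the proof of~\Cref{prop:Datalog-num-RegM-non-tight-trivial}, every $\leq_s$-minimal trap space of $f$ contains at least one attractor of $\astg{f}$~\citep{KBS2015}, and distinct $\leq_s$-minimal trap spaces are disjoint as sets of states (their meet would otherwise be a strictly smaller trap space), so the map sending a minimal trap space to an attractor it contains is injective; hence the number of $\leq_s$-minimal trap spaces of $f$ is at most the number of attractors of $\astg{f}$, and thus at most $2^{|U|}$. Combined with the first step, this yields that $P$ has at most $2^{|U|}$ regular models.

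There is no deep obstacle here once tightness is exploited: the argument is essentially a composition of results already established. The one point that deserves care is the reduction from $\adg{P}$ to $\ig{f}$—one must notice that $\ig{f}$ may be a \emph{proper} sub-graph of $\adg{P}$, so the even-cycle hypothesis only becomes easier to satisfy upon passing to $\ig{f}$—and, relatedly, the role of tightness: it is exactly what lets us use $f$ rather than the \acbn $f'$ encoded from $\lfp{P}$, for which (as the counter-example in~\Cref{subsec:graphical-analysis-revisit} shows) even-cycle-freeness of $\adg{P}$ need not be preserved, so that an even feedback vertex set of $\adg{P}$ need not be one of $\adg{\lfp{P}}$.
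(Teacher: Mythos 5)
Your proposal is correct and follows essentially the same route as the paper's proof: transfer the even feedback vertex set from $\adg{P}$ to $\ig{f}$ via~\Cref{prop:Datalog-BN-graphs}, apply~\Cref{theo:BN-num-asyn-att-even-cycle} to bound the number of attractors of $\astg{f}$, use the fact that the number of $\leq_s$-minimal trap spaces is a lower bound for the number of attractors, and conclude via~\Cref{lem:Datalog-tight-BN-RegM-s-min-TS}. The only difference is that you spell out the injectivity argument behind the trap-space/attractor bound, which the paper simply cites from~\cite{KBS2015}.
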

\begin{proof}
	Let \(f\) be the encoded \acbn of \(P\).
	By~\Cref{prop:Datalog-BN-graphs}, \(\ig{f}\) is a sub-graph of \(\adg{P}\), thus \(U\) is a subset of \(\var{f}\) that intersects every even cycle of \(\ig{f}\).
	By~\Cref{theo:BN-num-asyn-att-even-cycle}, \(\astg{f}\) has at most \(2^{|U|}\) attractors.
	The number of \(\leq_s\)-minimal trap spaces of \(f\) is a lower bound for the number of attractors of \(\astg{f}\)~\citep{KBS2015}.
	Hence, \(f\) has at most \(2^{|U|}\) \(\leq_s\)-minimal trap spaces.
	Since \(P\) is tight, the regular models of \(P\) coincide with the \(\leq_s\)-minimal trap spaces of \(f\) by~\Cref{lem:Datalog-tight-BN-RegM-s-min-TS}.
	This implies that \(P\) has at most \(2^{|U|}\) regular models.
\end{proof}

\begin{remark}
	We here analyze the three above upper bounds for the number of regular models in a \pname.
	For tight \pnames, the bound \(2^{|U|}\) is the best.
	However, for non-tight \pnames, it is not applicable.
	The set \(U\) is always smaller (even much smaller in most cases) than or equal to the set \(\hb{P}\), yet \(3^{|U|}\) is not always smaller than \(2^{|\hb{P}|}\).
	Hence the bound \(2^{|\hb{P}|}\) still has merit for non-tight \pnames.
	Let us consider~\Cref{exam:Datalog-all} again.
	We have \(\hb{P} = \{p, q, r\}\).
	The graph \(\adg{P}\) is given in~\Cref{fig:exam-Datalog-adg-tgst-tgsp}~(a).
	It is easy to verify that \(P\) is tight and \(U = \{p\}\) or \(U = \{q\}\) intersects every even cycle of \(\adg{P}\).
	Hence,~\Cref{cor:Datalog-num-StPM-RegM-StM-even-cycle} gives the upper bound \(3^1 = 3\),~\Cref{prop:Datalog-num-RegM-non-tight-trivial} gives the upper bound \(2^3 = 8\), whereas~\Cref{theo:Datalog-num-RegM-tight-even-cycle} gives the upper bound of \(2^1 = 2\).
	Indeed, the \pname \(P\) has two regular models.
	Hence, the upper bound given by~\Cref{theo:Datalog-num-RegM-tight-even-cycle} can be reached \wrttext tight \pnames.
\end{remark}

Since a stable model is a regular model, we can derive from~\Cref{theo:Datalog-num-RegM-tight-even-cycle} that \(2^{|U|}\) is also an upper bound for the number of stable models of a tight \pname.
However, we prove that this bound also holds for non-tight \pnames (see~\Cref{theo:Datalog-num-StM-general-even-cycle}).

\begin{theorem}[Corollary 10 of~\cite{Aracena2008}]\label{theo:BN-num-fix-even-cycle}
	Given a \acbn \(f\), let \(U\) be a subset of \(\var{f}\) that intersects every even cycle of \(\ig{f}\).
	Then the number of fixed points of \(f\) is at most \(2^{|U|}\).
\end{theorem}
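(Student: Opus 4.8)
The plan is to combine a \emph{variable-freezing} reduction with the classical fact that a \acbn whose influence graph has no even (i.e.\ positive) cycle has at most one fixed point. Concretely, for each assignment \(a \colon U \to \twod{}\) I would define a \acbn \(f^a\) by setting \(f^a_v = a_v\) (a constant) for \(v \in U\) and \(f^a_v = f_v\) for \(v \notin U\). Freezing the variables of \(U\) deletes every arc entering \(U\) from \(\ig{f^a}\), so no vertex of \(U\) lies on any cycle of \(\ig{f^a}\); moreover every arc of \(\ig{f^a}\) (necessarily between non-\(U\) vertices) is, with its sign, an arc of \(\ig{f}\). Hence each cycle of \(\ig{f^a}\) is a cycle of \(\ig{f}\) that avoids \(U\), and since \(U\) meets every even cycle of \(\ig{f}\), such a cycle must be odd; thus \(\ig{f^a}\) has no even cycle. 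Granting the core lemma below, \(f^a\) then has at most one fixed point. Finally, the map sending a fixed point \(x\) of \(f\) to its restriction \(x|_U \in \twod{}^U\) is injective: any fixed point \(x\) of \(f\) is also a fixed point of \(f^{x|_U}\), so two fixed points with the same restriction would both be the unique fixed point of the same \(f^a\) and hence equal. As there are \(2^{|U|}\) assignments, \(f\) has at most \(2^{|U|}\) fixed points.

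It remains to prove the core lemma: if \(\ig{f}\) has no even cycle then \(f\) has at most one fixed point. Suppose \(x \neq y\) are fixed points and let \(D = \{v \in \var{f} : x_v \neq y_v\} \neq \emptyset\). I would first restrict to the subcube on which the coordinates outside \(D\) are frozen to their common value, obtaining a reduced \acbn \(\tilde f\) on the variables \(D\) whose two fixed points \(x|_D\) and \(y|_D\) differ in \emph{every} coordinate, and whose signed influence graph \(\ig{\tilde f}\) is a subgraph of \(\ig{f}\) induced on \(D\). By the coordinate-complementation substitution \(z_v \mapsto z_v \oplus x_v\)---which flips the sign of every arc at each complemented endpoint and therefore preserves the parity of every cycle---I may assume the two fixed points are \(\mathbf{0}\) and \(\mathbf{1}\). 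Now for each \(i \in D\), walking along a monotone path \(\mathbf{0} = z^0 < z^1 < \dots < z^{|D|} = \mathbf{1}\) that flips the coordinates from \(0\) to \(1\) one at a time, the value \(\tilde f_i(z^0) = 0\) rises to \(\tilde f_i(z^{|D|}) = 1\), so at some step flipping a single coordinate \(u\) from \(0\) to \(1\) strictly increases \(\tilde f_i\); this witnesses a positive arc \((ui, \oplus)\) of \(\ig{\tilde f}\). Choosing one such positive in-arc for every vertex defines a function on \(D\) whose functional graph, \(D\) being finite, contains a cycle made entirely of positive arcs---an even cycle of \(\ig{\tilde f}\), hence of \(\ig{f}\), contradicting the hypothesis.

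The step I expect to be delicate is the sign bookkeeping in the core lemma. A naive single path from \(x\) to \(y\) produces, for each \(i \in D\), an arc into \(i\) whose sign is pinned by an intermediate flip and need not be compatible around a cycle, so the parities do not obviously combine. The reduction to \emph{complementary} fixed points, followed by the parity-preserving complementation to \(\mathbf{0}\) and \(\mathbf{1}\), is exactly what removes this difficulty: once both fixed points are the all-\(0\) and all-\(1\) states, every coordinate of \(\tilde f\) is forced to increase along the monotone path, so every extracted arc is \emph{positive}, and any cycle through these arcs is automatically even. I would also take care to verify that freezing and the subcube restriction only ever \emph{remove} arcs while preserving their signs, so that no spurious even cycle is introduced at either reduction step; this is precisely what legitimises transporting the absence of even cycles from \(\ig{f}\) down to \(\ig{f^a}\) and \(\ig{\tilde f}\).
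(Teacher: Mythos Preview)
The paper does not prove this statement; it is quoted from Aracena (2008) and used as a black box. Your proposal supplies a correct self-contained proof.

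Your variable-freezing reduction is the standard one (and the paper uses the identical construction in its proof of \Cref{theo:BN-num-CoTS-even-cycle} for complete trap spaces, with \Cref{theo:BN-no-even-cycle-unique-CoTS} playing the role of your core lemma). One minor slip in wording: arcs of \(\ig{f^a}\) are not ``necessarily between non-\(U\) vertices''---an arc from \(u \in U\) to \(v \notin U\) survives, since \(f^a_v = f_v\) may still depend on \(u\). What is true, and what you actually need, is that no arc of \(\ig{f^a}\) \emph{enters} a vertex of \(U\); hence no cycle of \(\ig{f^a}\) passes through \(U\), and every such cycle is a cycle of \(\ig{f}\) avoiding \(U\), thus odd. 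Similarly, \(\ig{\tilde f}\) is a subgraph of (not necessarily equal to) the subgraph of \(\ig{f}\) induced on \(D\); this only helps.

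Your proof of the core lemma---restrict to the disagreement set \(D\), apply the parity-preserving coordinate complementation to reduce to fixed points \(\mathbf{0}\) and \(\mathbf{1}\), then extract a positive in-arc at every vertex via a monotone path---is essentially Aracena's original argument and is sound. The sign bookkeeping you flag as delicate is handled correctly: complementing variable \(v\) multiplies the sign of every arc incident to \(v\) by \(-1\), and since each vertex on a cycle is incident to exactly two of its arcs, the cycle sign is unchanged.
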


\begin{theorem}
  \label{theo:Datalog-num-StM-general-even-cycle}
	Consider a \pname \(P\).
	Let \(U\) be a subset of \(\hb{P}\) that intersects every even cycle of \(\adg{P}\).
	Then the number of stable models of \(P\) is at most \(2^{|U|}\).
\end{theorem}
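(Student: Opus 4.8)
The plan is to route the bound through the supported model semantics and the fixed-point bound of Aracena, rather than through trap spaces and the least fixpoint (which is what the tight and regular-model variants required). First I would let \(f\) be the encoded \acbn of \(P\), so that \(\var{f} = \hb{P}\) and \(U\) is a subset of \(\var{f}\). By~\Cref{prop:Datalog-BN-graphs}, \(\ig{f}\) is a sub-graph of \(\adg{P}\); hence every cycle of \(\ig{f}\) is a cycle of \(\adg{P}\) with the same sequence of arc signs, so every even cycle of \(\ig{f}\) is an even cycle of \(\adg{P}\) and is therefore intersected by \(U\). Consequently \(U\) is a subset of \(\var{f}\) that intersects every even cycle of \(\ig{f}\).

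Next I would apply~\Cref{theo:BN-num-fix-even-cycle} to conclude that \(f\) has at most \(2^{|U|}\) fixed points. By~\Cref{cor:Datalog-SuM-BN-fix}, the supported models of \(P\) coincide with the fixed points of \(f\), so \(P\) has at most \(2^{|U|}\) supported models. Finally, since every stable model of \(P\) is a supported model of \(P\), the number of stable models of \(P\) is bounded above by the number of supported models, which is at most \(2^{|U|}\). This completes the argument.

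There is no real obstacle here: in contrast to the regular-model case, the argument does not need the least fixpoint transformation, because both the stable-to-supported containment and the exact correspondence between supported models and fixed points hold for arbitrary (possibly non-tight) \pnames. The only point that deserves to be spelled out is the passage from ``even cycle of \(\adg{P}\)'' to ``even cycle of \(\ig{f}\)'', which is immediate from the sub-graph relation of~\Cref{prop:Datalog-BN-graphs} together with the fact that evenness is defined by the parity of the number of negative arcs on a cycle—a quantity that is unchanged when the cycle is regarded inside a sub-graph.
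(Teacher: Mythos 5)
Your proposal is correct and follows essentially the same route as the paper's own proof: encode \(P\) as a \acbn, use \Cref{prop:Datalog-BN-graphs} to transfer the even-feedback-vertex-set property from \(\adg{P}\) to \(\ig{f}\), invoke \Cref{theo:BN-num-fix-even-cycle} to bound the fixed points, and conclude via \Cref{cor:Datalog-SuM-BN-fix} together with the fact that stable models are supported models. The only difference is that you spell out the sub-graph/parity step in slightly more detail, which the paper leaves implicit.
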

\begin{proof}
	Let \(f\) be the encoded \acbn of \(P\).
	By~\Cref{prop:Datalog-BN-graphs}, \(\ig{f}\) is a sub-graph of \(\adg{P}\), thus \(U\) is a subset of \(\var{f}\) that intersects every even cycle of \(\ig{f}\).
	By~\Cref{theo:BN-num-fix-even-cycle}, \(f\) has at most \(2^{|U|}\) fixed points.
	By~\Cref{cor:Datalog-SuM-BN-fix} and the fact that a stable model is a supported model, \(P\) has at most \(2^{|U|}\) stable models.
\end{proof}

\begin{remark}\label{remark:upper-bound-StM}
	We here recall some existing upper bounds for the number of stable models in \pnames.
	Given a \pname \(P\),~\citet{CT1999} proved the upper bound of \(3^{n/3}\) where \(n\) is the number of rules in \(\gr{P}\).
	\citet{LZ2004} later proved the upper bound of \(2^k\) where \(k\) is the number of even cycles in \(\adg{P}\).
	Our new result (i.e.,~\Cref{theo:Datalog-num-StM-general-even-cycle}) provides the upper bound of \(2^{|U|}\) where \(U\) is an even feedback vertex set of \(\adg{P}\).
	It is easy to see that we always find an even feedback vertex set \(U\) such that \(|U| \leq k\), showing that our result is more general than that of~\cite{LZ2004}.
	It is however hard to directly compare between \(2^{|U|}\) and \(3^{n/3}\).
	Consider the (ground) \pname \(\{a \leftarrow \dng{b}; a \leftarrow a; b \leftarrow \dng{a}\}\).
	Its atom dependency graph is given in~\Cref{fig:exam-upper-bound-StM}.
	It is easy to see that this graph has two even cycles: \(a \xrightarrow{\oplus} a\) and \(a \xrightarrow{\ominus} b \xrightarrow{\ominus} a\).
	Hence, the \pname is non-tight and \(U = \{a\}\) intersects every even cycle of the graph.
	The result of~\citet{CT1999} gives the upper bound \(3^{3/3} = 3\) for the number of stable models.
	The result of~\cite{LZ2004} gives the upper bound \(2^2 = 4\).
	Our new result gives the upper bound \(2^1 = 2\).
	Furthermore, we can see that the upper bound \(2^{|U|}\) can be reached.
\end{remark}

\begin{figure}[!ht]
	\centering
	\begin{tikzpicture}[node distance=2cm and 2cm, every node/.style={scale=1.0}]
		\node[circle, draw] (a) [] {$a$};
		\node[circle, draw] (b) [right=of a,xshift=1cm] {$b$};
		
		\draw[->] (a) edge [bend right=30] node [midway, above, fill=white] {$\ominus$} (b);
		\draw[->] (b) edge [bend right=30] node [midway, above, fill=white] {$\ominus$} (a);
		
		\draw[->] (a) edge [loop left] node [midway, left, fill=white] {$\oplus$} (a);
	\end{tikzpicture}
	\caption{Atom dependency graph of the \pname of~\Cref{remark:upper-bound-StM}.}\label{fig:exam-upper-bound-StM}
\end{figure}

Inspired by~\Cref{theo:Datalog-num-StM-general-even-cycle}, we make the following conjecture on the number of regular models in a (tight or non-tight) \pname.

\begin{conjecture}\label{conj:Datalog-num-RegM-general-even-cycle}
	Consider a \pname \(P\).
	Let \(U\) be a subset of \(\hb{P}\) that intersects every even cycle of \(\adg{P}\).
	Then the number of regular models of \(P\) is at most \(2^{|U|}\).
\end{conjecture}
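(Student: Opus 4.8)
The plan is to mimic the proof of \Cref{theo:Datalog-num-StM-general-even-cycle}, but with asynchronous attractors in place of fixed points, so that the tight case already covered by \Cref{theo:Datalog-num-RegM-tight-even-cycle} gets extended to arbitrary \pnames. First I would pass to the least fixpoint $\lfp{P}$, which is a negative---hence tight---\pname by \Cref{prop:LFP-finiteness}; by \Cref{lem:Datalog-lfp-BN-RegM-s-min-TS} the regular models of $P$ are exactly the $\leq_s$-minimal trap spaces of the encoded \acbn $f'$ of $\lfp{P}$, so it suffices to bound the number of those by $2^{|U|}$.

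Second, I would dispatch the negative program cleanly. Since $\lfp{P}$ is negative, every arc of $\syng{f'} = \adg{\lfp{P}}$ is negative, so an even cycle there is exactly a cycle of even length, and $\ig{f'}$ is a sub-graph of $\adg{\lfp{P}}$ by \Cref{prop:Datalog-BN-graphs}, hence also has only negative arcs. Consequently any set of atoms meeting every even cycle of $\adg{\lfp{P}}$ also meets every even cycle of $\ig{f'}$, and combining the feedback-vertex-set bound \Cref{theo:BN-num-asyn-att-even-cycle} with the fact that each $\leq_s$-minimal trap space contains at least one asynchronous attractor~\citep{KBS2015} would yield the bound $2^{|U'|}$ for every even feedback vertex set $U'$ of $\adg{\lfp{P}}$. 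Everything then reduces to one purely structural task: given an even feedback vertex set $U$ of $\adg{P}$, produce an even feedback vertex set $U'$ of $\adg{\lfp{P}}$ with $|U'| \leq |U|$. For this I would use the ``positive-path contraction'' view of the least fixpoint, in which an arc $w \xrightarrow{\ominus} p$ of $\adg{\lfp{P}}$ arises from a path $w \xrightarrow{\ominus} \cdot \xrightarrow{\oplus} \cdots \xrightarrow{\oplus} p$ of $\adg{P}$, so that every cycle of $\adg{\lfp{P}}$ lifts to a closed walk of $\adg{P}$ with the same number of negative arcs.

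The hard part---and the reason this is left as a conjecture---is exactly this last task, which provably cannot be solved by a naive graph argument: the counter-example in \Cref{subsec:graphical-analysis-revisit} gives a $P$ with no even cycle in $\adg{P}$ (so $U = \emptyset$) whose $\adg{\lfp{P}}$ nonetheless contains an even cycle, because the lifted closed walk decomposes into two odd cycles rather than an even one. The only route I see is semantic: one would show that such a ``spurious'' even cycle of $\adg{\lfp{P}}$---one missed by the image of an even feedback vertex set of $\adg{P}$---is dynamically inert, in the sense that every complete trap space of $f'$ assigns $\uval$ to all of its atoms (as indeed happens in the counter-example), then remove those atoms with the syntactic percolation operator of \Cref{def:BN-one-step-syntatic-percolation}; in the percolated network those even cycles are gone, $U$ becomes a genuine even feedback vertex set of its syntactic influence graph, and \Cref{prop:BN-syntatic-percolation-CoTS} carries the count of complete trap spaces (hence of the $\leq_s$-minimal ones) back to $f'$. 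Proving the inertness claim---essentially, controlling how odd cycles of $\adg{P}$ can fuse into even cycles of $\adg{\lfp{P}}$ under unfolding, and why this always comes together with a pinning of the atoms involved---is the core obstacle, and to the best of our knowledge it remains open.
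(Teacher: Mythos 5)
This statement is stated in the paper as \Cref{conj:Datalog-num-RegM-general-even-cycle}, i.e.\ as an open conjecture with no proof, so there is no paper argument to compare against. Your proposal is likewise not a proof: you say so yourself, and the gap you identify is genuine. To be precise about what does and does not work in your sketch: the reduction of the problem to bounding the \(\leq_s\)-minimal trap spaces of the encoded \acbn \(f'\) of \(\lfp{P}\) is sound (\Cref{prop:LFP-finiteness}, \Cref{lem:Datalog-lfp-BN-RegM-s-min-TS}), and the application of \Cref{theo:BN-num-asyn-att-even-cycle} to a negative program is sound, so the whole conjecture indeed reduces to producing an even feedback vertex set \(U'\) of \(\adg{\lfp{P}}\) with \(|U'| \leq |U|\). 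You correctly observe that this transfer step fails as a purely graph-theoretic claim: the paper's own counter-example in \Cref{subsec:graphical-analysis-revisit} (\(P = \{a \leftarrow c;\ b \leftarrow c;\ c \leftarrow \dng{a}, \dng{b}\}\)) has an even-cycle-free \(\adg{P}\) while \(\adg{\lfp{P}}\) acquires an even cycle, because under unfolding two odd cycles sharing positive structure can fuse into an even one. Your proposed escape route --- showing that such newly created even cycles are always ``dynamically inert'' and can be eliminated by the syntactic percolation of \Cref{def:BN-one-step-syntatic-percolation} before applying the feedback-vertex-set bound --- is a plausible direction, but the inertness claim is exactly the missing lemma, and nothing in the paper (or in your sketch) establishes it. Note also that percolation as defined only removes syntactic constants, so even if inertness held in the sense that every complete trap space assigns \(\uval\) to the atoms on a spurious even cycle, it would not follow that those atoms become syntactic constants and get percolated away; you would need a further argument there.

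In short: your analysis is a reasonable account of \emph{why} this is left as a conjecture, and the partial steps you do carry out are correct, but no proof exists here, neither yours nor the paper's.
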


\paragraph{Uni-rule \DLN Programs.} By considering \upnames, we obtain two tighter upper bounds for the numbers of stable and regular models.

\begin{theorem}[Theorem 3.5 of~\cite{VBHKWL12}]\label{theo:AND-NOT-BN-num-fix-cycle-delo-triple}
	Given an \anbn \(f\),
	let \(U\) be a subset of \(\var{f}\) that intersects every \deltfree even cycle of \(\ig{f}\).
	Then the number of fixed points of \(f\) is at most \(2^{|U|}\).
\end{theorem}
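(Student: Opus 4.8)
The plan is to use the classical ``guess the feedback vertex set, then reduce'' strategy, with \Cref{theo:AND-NOT-BN-no-strong-even-cycle-fix} (Theorem 2' of~\cite{RR2013}) serving as the base case. For every Boolean assignment $I \colon U \to \twod{}$, I would build the reduced \anbn $f^{I}$ defined by $f^{I}_v = I(v)$ for $v \in U$ and $f^{I}_v = f_v$ for every $v \in \var{f}\setminus U$; since constant functions are permitted, $f^{I}$ is indeed an \anbn. Any fixed point $x$ of $f$ is automatically a fixed point of $f^{I}$ for the particular choice $I = x|_{U}$ — trivially on $U$, and on $\var{f}\setminus U$ because there $f^{I}_v = f_v$ — so the map $x \mapsto x|_{U}$ injects the fixed points of $f$ into the set of $2^{|U|}$ assignments, as soon as we know that each $f^{I}$ has at most one fixed point.

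To obtain ``at most one fixed point of $f^{I}$'' from \Cref{theo:AND-NOT-BN-no-strong-even-cycle-fix}, I need every even cycle of $\ig{f^{I}}$ to carry a \delt. The first step is to note that $\ig{f^{I}}$ is a sub-graph of $\ig{f}$: replacing $f_v$ by a constant for $v \in U$ only deletes the arcs entering $v$, while all other arcs are untouched. Consequently, every vertex of $U$ has in-degree $0$ in $\ig{f^{I}}$ and therefore cannot lie on any cycle; in particular, an even cycle $C$ of $\ig{f^{I}}$ is an even cycle of $\ig{f}$ that avoids $U$. Since $U$ meets every \deltfree even cycle of $\ig{f}$, the cycle $C$ must possess a \delt $(u, v_1, v_2)$ in $\ig{f}$.

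Finally I would verify that this \delt persists in $\ig{f^{I}}$. Its two arcs $(uv_1, \oplus)$ and $(uv_2, \ominus)$ point to $v_1, v_2 \in V(C)$, and $V(C) \cap U = \emptyset$, so $f^{I}_{v_1} = f_{v_1}$ and $f^{I}_{v_2} = f_{v_2}$; hence $u$ still influences $v_1$ positively and $v_2$ negatively in $f^{I}$, the two arcs are still outside $E(C)$, and $(u, v_1, v_2)$ remains a \delt of $C$. Thus \Cref{theo:AND-NOT-BN-no-strong-even-cycle-fix} yields at most one fixed point for each $f^{I}$, and combining with the injection above bounds the number of fixed points of $f$ by $2^{|U|}$. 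The step I expect to require the most care is precisely this persistence argument: one must make sure that freezing the variables in $U$ neither creates spurious new even cycles through $U$-vertices nor destroys the arcs witnessing the delocalizing triples of the surviving ones, and the saving observation is that everything relevant to an even cycle of $\ig{f^{I}}$ lives entirely within $\var{f}\setminus U$, where the update functions are unchanged.
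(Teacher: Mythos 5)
Your proposal is correct. Note first that the paper does not prove this statement at all: it is imported verbatim as Theorem~3.5 of~\cite{VBHKWL12}, so there is no in-paper proof to compare against. Your derivation is a sound, self-contained reduction to \Cref{theo:AND-NOT-BN-no-strong-even-cycle-fix}, and it is exactly the ``freeze the feedback set, then apply the uniqueness theorem'' template that the paper itself uses to prove the analogous bound for complete trap spaces in \Cref{theo:BN-num-CoTS-even-cycle}. All the delicate points are handled: $f^I$ remains an \anbn because constants are allowed; $\ig{f^{I}}$ is a sub-graph of $\ig{f}$ with the $U$-vertices having in-degree $0$, so every even cycle of $\ig{f^{I}}$ is an even cycle of $\ig{f}$ disjoint from $U$ and hence carries a \delt $(u,v_1,v_2)$ in $\ig{f}$; and the persistence of that triple in $\ig{f^{I}}$ correctly rests only on $v_1,v_2\in V(C)\subseteq \var{f}\setminus U$ (so $f^{I}_{v_1}=f_{v_1}$ and $f^{I}_{v_2}=f_{v_2}$), which is the right observation since $u$ itself may well lie in $U$ without harming the outgoing arcs. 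The injection $x\mapsto x|_{U}$ then gives the $2^{|U|}$ bound. One could quibble that your closing phrase ``everything relevant \ldots lives entirely within $\var{f}\setminus U$'' overstates matters slightly (the apex $u$ of the triple need not avoid $U$), but your actual argument does not rely on that and is correct as written.
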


\begin{theorem}\label{theo:uni-rule-Datalog-num-StM-even-cycle-del-triple}
	Let \(P\) be a \upname.
	Assume that \(U\) is a subset of \(\hb{P}\) that intersects every even cycle without a \delt of \(\adg{P}\).
	Then \(P\) has at most \(2^{|U|}\) stable models.
\end{theorem}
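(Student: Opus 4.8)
The plan is to mimic the proof of~\Cref{theo:Datalog-num-StM-general-even-cycle} but exploit the stronger graphical correspondence available for uni-rule programs together with the AND-NOT-specific bound of~\citet{VBHKWL12}. First I would let $f$ be the encoded \acbn of $P$. By~\Cref{prop:uni-rule-Datalog-2-AND-NOT-BN}, $f$ is an \anbn, so the results specialized to AND-NOT \acbns apply. By~\Cref{prop:uni-rule-Datalog-AND-NOT-BN-graphs}, $\ig{f} = \adg{P}$ (not merely a sub-graph, as in the general case), so the notions of cycle and of \delt coincide verbatim between the two graphs.

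The key step is to transfer the hypothesis: since $\ig{f} = \adg{P}$, the set $U \subseteq \hb{P} = \var{f}$ intersects every even cycle of $\ig{f}$ that has no \delt, i.e.\ $U$ is an even feedback vertex set for the \deltfree even cycles of $\ig{f}$ in the sense required by~\Cref{theo:AND-NOT-BN-num-fix-cycle-delo-triple}. Applying that theorem gives that $f$ has at most $2^{|U|}$ fixed points.

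Finally I would close the loop through the semantics side: by~\Cref{cor:Datalog-SuM-BN-fix} the supported models of $P$ coincide with the fixed points of $f$, and every stable model is a supported model, so $P$ has at most $2^{|U|}$ stable models.

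There is essentially no serious obstacle here—the result is a direct corollary of the machinery already assembled in the excerpt, and the only point requiring a moment's care is noting that the exact equality $\ig{f} = \adg{P}$ (valid precisely because $P$ is uni-rule) is what lets the \delt-based bound, which is stated for influence graphs of AND-NOT networks, be read off directly on the atom dependency graph; in the general (non-uni-rule) case one only has $E(\ig{f}) \subseteq E(\adg{P})$, which would not suffice to preserve the "without a \delt" condition.
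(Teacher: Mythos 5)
Your proposal is correct and follows exactly the same route as the paper's own proof: encode $P$ as an AND-NOT \acbn via \Cref{prop:uni-rule-Datalog-2-AND-NOT-BN}, use the equality $\ig{f} = \adg{P}$ from \Cref{prop:uni-rule-Datalog-AND-NOT-BN-graphs} to transfer the feedback-vertex-set hypothesis, apply \Cref{theo:AND-NOT-BN-num-fix-cycle-delo-triple}, and conclude via \Cref{cor:Datalog-SuM-BN-fix} and the fact that stable models are supported models. Your closing remark about why the exact graph equality (rather than mere sub-graph inclusion) is needed to preserve the \delt condition is a correct and worthwhile observation, though the paper does not spell it out.
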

\begin{proof}
	Let \(f\) be the encoded \acbn of \(P\).
	By~\Cref{prop:uni-rule-Datalog-2-AND-NOT-BN}, \(f\) is a \anbn.
	By~\Cref{prop:uni-rule-Datalog-AND-NOT-BN-graphs}, \(\ig{f} = \adg{P}\), thus \(U\) is a subset of \(\var{f}\) that intersects every \deltfree even cycle of \(\ig{f}\).
	By~\Cref{theo:AND-NOT-BN-num-fix-cycle-delo-triple}, \(f\) has at most \(2^{|U|}\) fixed points.
	By~\Cref{cor:Datalog-SuM-BN-fix} and the fact that a stable model is a supported model, \(P\) has at most \(2^{|U|}\) stable models.
\end{proof}

\begin{theorem}[Theorem 3 of~\cite{TPRPA2025}]\label{theo:AND-NOT-BN-num-async-att-strong-even-cycle}
	Given an \anbn \(f\), let \(U\) be a subset of \(\var{f}\) that intersects every \deltfree even cycle of \(\ig{f}\).
	Then the number of attractors of \(\astg{f}\) is at most \(2^{|U|}\).
\end{theorem}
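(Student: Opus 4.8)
The plan is to reduce the statement to the unique-attractor base case \Cref{theo:AND-NOT-BN-unique-async-att-no-strong-even-cycle} by \emph{freezing} the variables of $U$. For each Boolean assignment $\alpha \in \twod{}^{U}$ I would build the \anbn $f^{\alpha}$ on the same variable set by setting $f^{\alpha}_{u} = \alpha(u)$ (a constant) for every $u \in U$ and $f^{\alpha}_{v} = f_{v}$ for every $v \in \var{f} \setminus U$. Since constant functions are permitted in AND-NOT networks, each $f^{\alpha}$ is again an \anbn, and there are exactly $2^{|U|}$ of them.

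The first key step is an influence-graph computation showing that every $f^{\alpha}$ has a \emph{unique} attractor. Freezing $U$ deletes precisely the arcs of $\ig{f}$ that end in $U$ (a constant function has no in-arcs) while keeping every arc ending outside $U$, so $\ig{f^{\alpha}}$ is the sub-graph of $\ig{f}$ obtained by removing all arcs into $U$; consequently every cycle of $\ig{f^{\alpha}}$ avoids $U$. Take any even cycle $C$ of $\ig{f^{\alpha}}$: it is an even cycle of $\ig{f}$ that $U$ does not meet, so by hypothesis $C$ is not \deltfree in $\ig{f}$, i.e.\ it carries a \delt $(u', v_{1}, v_{2})$. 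As the arcs $(u'v_{1}, \oplus)$ and $(u'v_{2}, \ominus)$ end at the cycle vertices $v_{1}, v_{2} \notin U$, they survive in $\ig{f^{\alpha}}$, so $(u', v_{1}, v_{2})$ is still a \delt of $C$ there. Thus every even cycle of $\ig{f^{\alpha}}$ has a \delt, and \Cref{theo:AND-NOT-BN-unique-async-att-no-strong-even-cycle} yields a unique asynchronous attractor $A_{\alpha}$ of $f^{\alpha}$.

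The second key step links these $A_{\alpha}$ back to the attractors of $f$ through the projection $\mathrm{proj}_{U}(A) = \{\, x|_{U} : x \in A \,\} \subseteq \twod{}^{U}$ of an attractor $A$ of $\astg{f}$. I would prove that $\alpha \in \mathrm{proj}_{U}(A)$ forces $A_{\alpha} \subseteq A$. Indeed, pick $x \in A$ with $x|_{U} = \alpha$; in $\astg{f^{\alpha}}$ the frozen coordinates never fire (each $f^{\alpha}_{u}$ already equals $\alpha(u) = x_{u}$), so every transition along a path from $x$ to $A_{\alpha}$ updates a variable outside $U$ via the unchanged rule $f_{v}$ and is therefore a genuine transition of $\astg{f}$; since $A$ is a trap set this path stays in $A$, producing a state of $A_{\alpha} \cap A$. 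Because $A_{\alpha}$ is strongly connected by transitions of this same kind, starting from that common state and never leaving the trap set $A$ shows $A_{\alpha} \subseteq A$.

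It then follows that distinct attractors have disjoint projections: if $\alpha$ lay in $\mathrm{proj}_{U}(A) \cap \mathrm{proj}_{U}(A')$, then $A_{\alpha} \subseteq A \cap A'$ with $A_{\alpha}$ non-empty, contradicting the fact that distinct attractors—terminal strongly connected components of $\astg{f}$—are disjoint. Since each $\mathrm{proj}_{U}(A)$ is a non-empty subset of $\twod{}^{U}$ and these subsets are pairwise disjoint, the number of attractors is at most $|\twod{}^{U}| = 2^{|U|}$. I expect the main obstacle to be the containment $A_{\alpha} \subseteq A$ of the third step: one must argue carefully that freezing $U$ at a value actually attained inside $A$ yields transitions that the trap set $A$ cannot escape, and that the strong connectivity of $A_{\alpha}$ then pulls all of $A_{\alpha}$ into $A$—in particular handling cyclic attractors, on which the coordinates of $U$ need not be constant.
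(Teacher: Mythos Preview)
The paper does not prove this theorem: it is quoted verbatim as ``Theorem~3 of~\cite{TPRPA2025}'' and used as a black box to derive \Cref{theo:uni-rule-Datalog-num-RegM-tight-even-cycle-del-triple}. There is therefore no proof in the paper to compare your proposal against.

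That said, your argument is sound. The two steps that deserve a word of care both go through. First, when you pass a \delt $(u',v_1,v_2)$ of $C$ from $\ig{f}$ to $\ig{f^{\alpha}}$, you use only that the arcs $(u'v_1,\oplus)$ and $(u'v_2,\ominus)$ end outside $U$; the vertex $u'$ itself may lie in $U$ and this causes no problem, since $u'$ remains a vertex of $\ig{f^{\alpha}}$ and its \emph{outgoing} arcs are untouched. Second, for the containment $A_{\alpha}\subseteq A$, the clean way to phrase what you wrote is that the slice $\{\,y : y|_U=\alpha\,\}$ is a trap set of $\astg{f^{\alpha}}$ (any enabled update at such a $y$ must hit some $v\notin U$, because $f^{\alpha}_u(y)=\alpha(u)=y_u$ for $u\in U$), so every $\astg{f^{\alpha}}$-path issued from $x$ stays in this slice and is, edge by edge, an $\astg{f}$-path; in particular it stays inside the trap set $A$. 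Reaching one state of $A_{\alpha}$ inside $A$ and using strong connectivity then gives $A_{\alpha}\subseteq A$, exactly as you say. The disjoint-projection conclusion is immediate.

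Your freezing strategy is the natural one and mirrors the pattern the paper itself uses elsewhere (e.g.\ in the proof of \Cref{theo:BN-num-CoTS-even-cycle}); it is very likely close in spirit to the proof in~\cite{TPRPA2025}, but that cannot be confirmed from the present paper.
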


\begin{theorem}
  \label{theo:uni-rule-Datalog-num-RegM-tight-even-cycle-del-triple}
	Let \(P\) be a \upname.
	Assume that \(U\) is a subset of \(\hb{P}\) that intersects every \deltfree even cycle of \(\adg{P}\).
	If \(P\) is tight, then \(P\) has at most \(2^{|U|}\) regular models.
\end{theorem}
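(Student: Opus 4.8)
The plan is to mirror the proof of the tight general-program bound in~\Cref{theo:Datalog-num-RegM-tight-even-cycle} and of the uni-rule stable-model bound in~\Cref{theo:uni-rule-Datalog-num-StM-even-cycle-del-triple}, simply swapping in the sharper attractor-counting theorem available for \anbns. First I would move to the encoded \acbn $f$ of $P$. By~\Cref{prop:uni-rule-Datalog-2-AND-NOT-BN} the program being uni-rule makes $f$ an \anbn, and by~\Cref{prop:uni-rule-Datalog-AND-NOT-BN-graphs} its influence graph $\ig{f}$ coincides with $\adg{P}$. Hence the hypothesis transfers unchanged: $U$ is a subset of $\var{f}$ intersecting every \deltfree even cycle of $\ig{f}$.

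Next I would use the tightness hypothesis exactly once, to invoke~\Cref{lem:Datalog-tight-BN-RegM-s-min-TS}, which identifies the regular models of $P$ with the $\leq_s$-minimal trap spaces of $f$. It therefore suffices to show that $f$ has at most $2^{|U|}$ $\leq_s$-minimal trap spaces.

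Then I would apply~\Cref{theo:AND-NOT-BN-num-async-att-strong-even-cycle} (Theorem~3 of~\cite{TPRPA2025}), which states that under precisely this condition on $U$ the asynchronous state transition graph $\astg{f}$ has at most $2^{|U|}$ attractors. Since every minimal trap space contains at least one attractor of $\astg{f}$~\citep{KBS2015}, and since two distinct $\leq_s$-minimal trap spaces are disjoint as subsets of the state space (so no attractor can lie in both), the number of $\leq_s$-minimal trap spaces is at most the number of attractors of $\astg{f}$, hence at most $2^{|U|}$. Combining with~\Cref{lem:Datalog-tight-BN-RegM-s-min-TS} gives the claim.

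I do not expect a genuine obstacle here, as the argument is an assembly of earlier results; the one step deserving a moment of care is the inequality ``number of $\leq_s$-minimal trap spaces $\le$ number of asynchronous attractors'', which rests on each minimal trap space containing some attractor and on distinct minimal trap spaces being state-disjoint. This is the same reasoning already used in~\Cref{prop:Datalog-num-RegM-non-tight-trivial} and~\Cref{theo:Datalog-num-RegM-tight-even-cycle}, so it can be cited identically. It is also worth noting why tightness cannot be dropped: for a non-tight $P$ one would route through $\lfp{P}$ and its encoded \acbn $f'$, but then $f'$ need no longer be an \anbn, so the delocalizing-triple refinement of the attractor bound is unavailable---which is precisely the kind of gap left open by~\Cref{conj:Datalog-num-RegM-general-even-cycle}.
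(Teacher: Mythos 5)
Your proposal is correct and follows essentially the same route as the paper's proof: encode $P$ as an \anbn $f$ via \Cref{prop:uni-rule-Datalog-2-AND-NOT-BN} and \Cref{prop:uni-rule-Datalog-AND-NOT-BN-graphs}, bound the asynchronous attractors by \Cref{theo:AND-NOT-BN-num-async-att-strong-even-cycle}, use the fact that the number of $\leq_s$-minimal trap spaces is a lower bound for the number of attractors, and conclude via \Cref{lem:Datalog-tight-BN-RegM-s-min-TS} using tightness. Your added justification of the trap-space/attractor inequality (each minimal trap space contains an attractor and distinct minimal trap spaces are state-disjoint) is a correct elaboration of what the paper simply cites.
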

\begin{proof}
	Let \(f\) be the encoded \acbn of \(P\).
	By~\Cref{prop:uni-rule-Datalog-2-AND-NOT-BN}, \(f\) is a \anbn.
	By~\Cref{prop:uni-rule-Datalog-AND-NOT-BN-graphs}, \(\ig{f} = \adg{P}\), thus \(U\) is a subset of \(\var{f}\) that intersects every \deltfree even cycle of \(\ig{f}\).
	By~\Cref{theo:AND-NOT-BN-num-async-att-strong-even-cycle}, \(\astg{f}\) has at most \(2^{|U|}\) attractors.
	The number of \(\leq_s\)-minimal trap spaces of \(f\) is a lower bound for the number of attractors of \(\astg{f}\)~\citep{KBS2015}.
	Hence, \(f\) has at most \(2^{|U|}\) \(\leq_s\)-minimal trap spaces.
	Since \(P\) is tight, the regular models of \(P\) coincide with the \(\leq_s\)-minimal trap spaces of \(f\) by~\Cref{lem:Datalog-tight-BN-RegM-s-min-TS}.
	This implies that \(P\) has at most \(2^{|U|}\) regular models.
\end{proof}

\begin{example}\label{exam:uni-rule-Datalog-num-RegM-StM}
	Consider the \pname \(P = \{v_1 \leftarrow \dng{v_2}; v_2 \leftarrow v_1; v_3 \leftarrow v_1, \dng{v_4}; v_4 \leftarrow \dng{v_1}, \dng{v_3}\}\).
	We use ``;'' to separate program rules.
	The atom dependency graph \(\adg{P}\) is shown in~\Cref{fig:uni-rule-Datalog-num-RegM-StM}.
	It is easy to verify that \(P\) is uni-rule and tight.
	The graph \(\adg{P}\) has only one even cycle \(C = v_3 \xrightarrow{\ominus} v_4 \xrightarrow{\ominus} v_3\).
	Then~\Cref{theo:Datalog-num-StM-general-even-cycle} (resp.\ \Cref{theo:Datalog-num-RegM-tight-even-cycle}) gives an upper bound \(2^1 = 2\) for the number of stable models (resp.\ regular models) of \(P\).
	However, \((v_1, v_3, v_4)\) is a \delt of \(C\).
	Then~\Cref{theo:uni-rule-Datalog-num-StM-even-cycle-del-triple} (resp.\ \Cref{theo:uni-rule-Datalog-num-RegM-tight-even-cycle-del-triple}) gives the upper bound \(2^0 = 1\) for the number of stable models (or regular models) of \(P\).
	Indeed, \(P\) has only one regular model \(\{v_1 = \uval, v_2 = \uval, v_3 = \uval, v_4 = \uval\}\) and no stable model.
\end{example}

\begin{figure}[!ht]
	\centering
	\begin{tikzpicture}[node distance=2cm and 2cm]
		\node[circle, draw] (v1) [] {$v_1$};
		\node[circle, draw] (v2) [below=of v1, xshift=0cm] {$v_2$};
		\node[circle, draw] (v3) [right=of v1, xshift=0cm] {$v_3$};
		\node[circle, draw] (v4) [right=of v2, xshift=0cm] {$v_4$};
		
		\draw[->] (v1) edge [bend right=20] node [midway, left, fill=white] {$\oplus$} (v2);
		
		\draw[->] (v1) edge [] node [midway, left, fill=white] {$\ominus$} (v4);
		\draw[->] (v1) edge [] node [midway, below, fill=white] {$\oplus$} (v3);
		
		\draw[->] (v2) edge [bend right=20] node [midway, below, fill=white] {$\ominus$} (v1);
		
		\draw[->] (v3) edge [bend left=20] node [midway, right, fill=white] {$\ominus$} (v4);
		
		\draw[->] (v4) edge [bend left=20] node [midway, left, fill=white] {$\ominus$} (v3);
	\end{tikzpicture}
	\caption{Atom dependency graph of the uni-rule \pname \(P\) of~\Cref{exam:uni-rule-Datalog-num-RegM-StM}.}
	\label{fig:uni-rule-Datalog-num-RegM-StM}
\end{figure}

Inspired by~\Cref{theo:uni-rule-Datalog-num-RegM-tight-even-cycle-del-triple}, we make the following conjecture on the number of regular models in (tight or non-tight) uni-rule \pnames.

\begin{conjecture}\label{conj:uni-rule-Datalog-num-RegM-general-even-cycle}
	Let \(P\) be a \upname.
	Assume that \(U\) is a subset of \(\hb{P}\) that intersects every \deltfree even cycle of \(\adg{P}\).
	Then \(P\) has at most \(2^{|U|}\) regular models.
\end{conjecture}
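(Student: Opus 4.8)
The plan is to reduce the claim, exactly as in \Cref{theo:uni-rule-Datalog-num-RegM-tight-even-cycle-del-triple}, to a statement about the encoded \anbn of the least fixpoint, and then to confront the one genuinely new difficulty. First I would record a routine auxiliary fact: if \(P\) is a \upname then so is \(\lfp{P}\) (each head has at most one rule in \(\gr{P}\), hence the substitution of a positive body atom by the body of its defining rule is deterministic), and \(\lfp{P}\) is negative, hence tight. By \Cref{theo:Datalog-lfp-model-equivalence} the regular models of \(P\) and of \(\lfp{P}\) coincide; by \Cref{prop:uni-rule-Datalog-2-AND-NOT-BN} and \Cref{prop:uni-rule-Datalog-AND-NOT-BN-graphs} the encoded \acbn \(f'\) of \(\lfp{P}\) is an \anbn with \(\ig{f'} = \adg{\lfp{P}}\); and by \Cref{lem:Datalog-lfp-BN-RegM-s-min-TS} the regular models of \(P\) are exactly the \(\leq_s\)-minimal trap spaces of \(f'\). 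Since the number of \(\leq_s\)-minimal trap spaces of \(f'\) is a lower bound for the number of attractors of \(\astg{f'}\)~\citep{KBS2015}, it would suffice to bound the number of attractors of \(\astg{f'}\) by \(2^{|U|}\).

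At this point the tempting move is to invoke \Cref{theo:AND-NOT-BN-num-async-att-strong-even-cycle}, which yields the bound \(2^{|U|}\) as soon as \(U\) intersects every \deltfree even cycle of \(\ig{f'}=\adg{\lfp{P}}\). The obstruction is that the hypothesis only gives us control over the \deltfree even cycles of \(\adg{P}\), and the two graphs genuinely differ: the counter-example \(\{a \leftarrow c;\ b \leftarrow c;\ c \leftarrow \dng{a}, \dng{b}\}\) of \Cref{subsec:graphical-analysis-revisit} has no even cycle in \(\adg{P}\), yet \(\adg{\lfp{P}}\) (which, \(\lfp{P}\) being negative, has only negative arcs) contains the even cycle \(a \xrightarrow{\ominus} b \xrightarrow{\ominus} a\), and this cycle admits no \delt since \(\adg{\lfp{P}}\) has no positive arc at all. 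So a \(U\) witnessing the hypothesis for \(\adg{P}\) need not witness it for \(\adg{\lfp{P}}\), and the naive ``reduce to \(\lfp{P}\), then quote the \anbn bound'' argument collapses for precisely the reason the analogous argument of~\cite{YY1994} collapses.

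To repair it I would develop an \emph{unfolding lemma}: a negative arc \(w \xrightarrow{\ominus} p\) of \(\adg{\lfp{P}}\) always arises from a walk \(w \xrightarrow{\ominus} c \xrightarrow{\oplus} \cdots \xrightarrow{\oplus} p\) of \(\adg{P}\) (one negative step, then positive steps along the positive-dependency chain that the least fixpoint contracts), so an even cycle \(C'\) of \(\adg{\lfp{P}}\) unfolds to a closed walk \(W\) of \(\adg{P}\) with the same (even) number of negative arcs; decomposing \(W\) into simple cycles of \(\adg{P}\), the number of odd ones is even. One then argues by cases: if \(W\) is a simple even cycle of \(\adg{P}\), or splits into simple cycles one of which is even, the goal is to show that \deltfree-ness of \(C'\) forces that even cycle of \(\adg{P}\) to be \deltfree as well (by checking that a \delt of the \(\adg{P}\)-cycle would lift to one of \(C'\)), so it is met by \(U\); the remaining, genuinely hard case is when \(C'\) unfolds into an even number \(\ge 2\) of odd cycles of \(\adg{P}\) and no even one---the ``new'' even cycles, invisible to \(\adg{P}\), exactly as in the counter-example. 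I expect this case to be the main obstacle: one would have to argue that stacking odd cycles of \(\adg{P}\) inside \(\lfp{P}\) cannot raise the attractor count of \(\astg{f'}\) above \(2^{|U|}\)---plausibly by extracting extra structure of the \anbn \(f'\) (a product/projection decomposition of its asynchronous dynamics along the shared vertices) rather than reasoning from \(\ig{f'}\) alone, exploiting that odd cycles tend to stabilise rather than multiply fixed points and attractors. A variant following the freezing pattern of \Cref{theo:BN-num-CoTS-even-cycle}---freeze the \(U\)-coordinates of \(P\) and show every frozen program has a unique regular model---runs into the same wall, since one must still control the least fixpoint of the frozen program. Given that even the weaker \Cref{conj:Datalog-num-RegM-general-even-cycle} for general \pnames is open, I would not expect this to be routine.
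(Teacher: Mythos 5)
First, note that the statement you were asked to prove is left \emph{open} in the paper: it is stated as \Cref{conj:uni-rule-Datalog-num-RegM-general-even-cycle}, with no proof given, precisely because the reduction to the least fixpoint does not preserve the hypothesis. So there is no proof of the paper's to compare yours against, and your proposal---quite candidly---does not close the statement either. What you have written is a correct diagnosis of why the obvious argument fails, not a proof.

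Your reduction steps are sound as far as they go: \(\lfp{P}\) of a \upname is again uni-rule and negative (hence tight), the regular models of \(P\) are the \(\leq_s\)-minimal trap spaces of the encoded \anbn \(f'\) of \(\lfp{P}\) by \Cref{lem:Datalog-lfp-BN-RegM-s-min-TS}, and \Cref{theo:AND-NOT-BN-num-async-att-strong-even-cycle} would finish the job if \(U\) hit every \deltfree even cycle of \(\ig{f'}=\adg{\lfp{P}}\). You correctly observe, using the paper's own counter-example from \Cref{subsec:graphical-analysis-revisit}, that this last hypothesis is exactly what fails: \(\adg{\lfp{P}}\) can contain \deltfree even cycles that \(U\) has no reason to meet, because they arise from composing \emph{odd} cycles of \(\adg{P}\). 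The genuine gap is that your proposed repair leaves both halves of the case analysis unproved. In the ``easy'' case you assert, without argument, that a \delt of the unfolded \(\adg{P}\)-cycle lifts to a \delt of \(C'\) in \(\adg{\lfp{P}}\); this is not obvious, since the unfolding retargets and contracts arcs (a positive arc into a vertex of the \(\adg{P}\)-cycle may have no counterpart landing on \(V(C')\) after the positive chains are collapsed), so the contrapositive you need is not established. In the hard case---an even closed walk assembled from odd cycles of \(\adg{P}\)---you offer only the hope that ``odd cycles tend to stabilise,'' with no lemma to invoke; indeed no result quoted in the paper (neither \Cref{theo:AND-NOT-BN-num-async-att-strong-even-cycle} nor \Cref{theo:AND-NOT-BN-num-fix-cycle-delo-triple}) bounds attractors in terms of cycles of the \emph{original} graph rather than of \(\ig{f'}\). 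Your assessment that this would not be routine is accurate, and matches the paper's decision to record the claim only as a conjecture.
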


\section{Trap Spaces for \DLN Programs}\label{sec:Datalog-trap-spaces}

In this section, we introduce the notions of \emph{stable trap space} and \emph{supported trap space} for \pnames, borrowed from the notion of trap spaces in \acbns. 
These constructs offer a new perspective for analyzing the model-theoretic and dynamical behavior of \pnames. 
We develop their basic properties, and establish formal relationships with classical semantics such as stable (supported) partial models, regular models, stable (supported) models, and stable (supported) classes. 
This unified view lays the foundation for leveraging trap space techniques in analysis and reasoning tasks involving \pnames.

\subsection{Definitions}\label{subsec:Datalog-trap-space-defs}

We begin by formally defining the central notions of stable and supported trap sets, which characterize non-empty sets of two-valued interpretations that are closed under the program's update operators.

\begin{definition}\label{def:Datalog-trap-set}
	A non-empty set \(S\) of two-valued interpretations of a \pname \(P\) is called a \emph{stable trap set} (resp.\ \emph{supported trap set}) of \(P\) if \(\{F_P(I) | I \in S\} \subseteq S\) (resp.\ \(\{T_P(I) | I \in S\} \subseteq S\)).
\end{definition}

Note that a stable (resp.\ supported) class is a stable (resp.\ supported) trap set, but the reverse may not be true.
Given the \pname \(P\) of~\Cref{exam:Datalog-all}, \(\{\{p, r\}, \{p\}\}\) is a stable (resp.\ supported) trap set of \(P\), but it is not a stable (resp.\ supported) class of \(P\).

\begin{definition}\label{def:Datalog-trap-space}
	A three-valued interpretation \(I\) of a \pname \(P\) is called a \emph{stable trap space} (resp.\ \emph{supported trap space}) of \(P\) if \(\cset{I}\) is a stable (resp.\ supported) trap set of \(P\).
\end{definition}

It is easy to adapt the concept of stable or supported trap set for a directed graph in general (see~\Cref{def:DiG-trap-set}).

\begin{definition}\label{def:DiG-trap-set}
	Consider a directed graph \(G\).
	A subset \(S\) of \(V(G)\) is called a \emph{trap set} of \(G\) if there are no two vertices \(A\) and \(B\) such that \(A \in S\), \(B \not \in S\), and \((A, B) \in E(G)\).
\end{definition}

It is easy to see that \(S\) is a stable (resp.\ supported) trap set of \(P\) iff \(S\) is a trap set of \(\tgst{P}\) (resp.\ \(\tgsp{P}\)).
Hence, we can deduce from~\Cref{def:Datalog-trap-space} that a three-valued interpretation \(I\) is a stable (resp.\ supported) trap space of \(P\) if \(\cset{I}\) is a trap set of \(\tgst{P}\) (resp.\ \(\tgsp{P}\)).
Since stable and supported transition graphs represent the dynamical aspect of a \pname~\citep{BS1992,IS2012}, this indicates that trap spaces represent the dynamical aspect of a \pname.

We now illustrate the notions of stable and supported trap spaces through a concrete example, demonstrating how they can be identified respectively from the stable and supported transition graphs of a given program.

\begin{example}\label{exam:Datalog-trap-spaces}
	Consider the \pname \(P\) of~\Cref{exam:Datalog-all}.
	\Cref{fig:exam-Datalog-adg-tgst-tgsp}~(b) and \Cref{fig:exam-Datalog-adg-tgst-tgsp}~(c) show the stable and supported transition graphs of \(P\), respectively.
	Then \(I_1 = \{p = \tval, q = \fval, r = \uval\}\) is a stable (resp.\ supported) trap space of \(P\) because \(\cset{I_1} = \{\{p\}, \{p, r\}\}\) is a trap set of \(\tgst{P}\) (resp.\ \(\tgsp{P}\)).
	By checking the remaining three-valued interpretations, we get the four other stable trap spaces that are also supported trap spaces of \(P\):
	\begin{align*}
		I_2 &= \{p = \fval, q = \tval, r = \uval\} \quad (\cset{I_2} = \{\{q\}, \{q, r\}\}),\\
		I_3 &= \{p = \uval, q = \uval, r = \uval\} \quad (\cset{I_3} = \{\{p\}, \{p, r\}, \{q\}, \{q, r\}, \{p, q\}, \{r\}, \{p, q, r\}, \emptyset\}),\\
		I_4 &= \{p = \tval, q = \fval, r = \fval\} \quad (\cset{I_4} = \{\{p\}\}),\\
		I_5 &= \{p = \fval, q = \tval, r = \tval\} \quad (\cset{I_5} = \{\{q, r\}\}).
	\end{align*}
\end{example}

\subsection{Properties}\label{subsec:Datalog-trap-space-properties}

In this subsection, we present basic properties of stable and supported trap spaces in \pnames. 
We begin by establishing their guaranteed existence, which ensures that the trap space framework is broadly applicable to the analysis of program dynamics. 
Further properties shall clarify their intrinsic characteristics.

\begin{proposition}\label{prop:Datalog-exist-StTS-SuTS}
	A \pname \(P\) always has a stable or supported trap space.
\end{proposition}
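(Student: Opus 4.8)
The plan is to exhibit, for every \pname $P$, a single three-valued interpretation whose associated set of two-valued interpretations is closed under both $F_P$ and $T_P$, thereby obtaining in one stroke a trap space that is simultaneously stable and supported (which in particular settles the weaker disjunctive claim). The natural candidate is the ``top'' interpretation $I_{\uval}$ defined by $I_{\uval}(a) = \uval$ for every $a \in \hb{P}$.

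First I would check that $\cset{I_{\uval}} = 2^{\hb{P}}$, i.e.\ it is the set of \emph{all} two-valued interpretations of $P$: since $I_{\uval}$ assigns $\uval$ to every atom, the defining condition ``$J(a) = I_{\uval}(a)$ for all $a$ with $I_{\uval}(a) \neq \uval$'' is vacuous, so every $J \in 2^{\hb{P}}$ lies in $\cset{I_{\uval}}$. Because $\hb{P}$ is finite for a \pname, $\cset{I_{\uval}}$ is a non-empty (finite) set, as required of a trap set. Next I would invoke that $F_P$ and $T_P$ are \emph{total operators on two-valued interpretations}: for any $J \in 2^{\hb{P}}$, the reduct $P^J$ is positive with a $\leq_t$-least two-valued model, so $\Fop{J} \in 2^{\hb{P}}$, and $\rhs{a}$ evaluates two-valuedly under $J$, so $\Top{J} \in 2^{\hb{P}}$. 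Hence $\{\Fop{J} \mid J \in \cset{I_{\uval}}\} \subseteq 2^{\hb{P}} = \cset{I_{\uval}}$ and likewise $\{\Top{J} \mid J \in \cset{I_{\uval}}\} \subseteq \cset{I_{\uval}}$. By~\Cref{def:Datalog-trap-set}, $\cset{I_{\uval}}$ is both a stable trap set and a supported trap set of $P$, so by~\Cref{def:Datalog-trap-space}, $I_{\uval}$ is both a stable trap space and a supported trap space of $P$; a fortiori $P$ has a stable or supported trap space.

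There is essentially no obstacle here: the only point needing care is that a trap set must be non-empty and that the covered set of the all-$\uval$ interpretation is literally the whole state space, after which closure under the program operators is automatic because these operators never leave $2^{\hb{P}}$. A more informative but more delicate route would start instead from a strict stable class (which exists by~\citealp{BS1992}) or a strict supported class (which exists by~\citealp{IS2012}) and pass to the smallest sub-space containing it; however, widening a stable/supported trap set to the enclosing sub-space need not preserve closure under $F_P$ (resp.\ $T_P$) — as one can already see from the strict stable class $\{\emptyset,\{p,q,r\}\}$ in~\Cref{exam:Datalog-all}, whose enclosing sub-space is the all-$\uval$ one — so for the bare existence statement I would present the direct argument above.
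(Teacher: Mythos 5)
Your proposal is correct and follows exactly the paper's own argument: take the all-$\uval$ interpretation, observe that it covers the entire set of two-valued interpretations, and note that closure under $F_P$ and $T_P$ is then automatic, yielding a trap space that is simultaneously stable and supported. The extra details you supply (totality of the operators, non-emptiness) and the closing remark about strict classes are sound but not needed beyond what the paper states.
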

\begin{proof}
	Let \(I\) be a three-valued interpretation that corresponds to all the two-valued interpretations, i.e., \(\forall a \in \hb{P}, I(a) = \uval\).
	By setting \(S = \cset{I}\), the condition \(\{F_P(J) | J \in S\} \subseteq S\) (resp.\ \(\{T_P(J) | J \in S\} \subseteq S\)) always holds.
	Hence, \(\cset{I}\) is a stable (resp.\ supported) trap set of \(P\).
	By definition, \(I\) is a stable (resp.\ supported) trap space of \(P\).
\end{proof}

We then introduce an important concept, namely \emph{consistent}, on trap spaces of \pnames.
Two three-valued interpretations \(I_1\) and \(I_2\) are called \emph{consistent} if for all \(a \in \hb{P}\), \(I_1(a) \leq_s I_2(a)\) or \(I_2(a) \leq_s I_1(a)\).
Equivalently, \(I_1\) and \(I_2\) are called consistent if \(\cset{I_1} \cap \cset{I_2} \neq \emptyset\).
Note that using \(I_1 \leq_s I_2\) or \(I_2 \leq_s I_1\) is insufficient here, since there exist two three-valued interpretations that are not comparable \wrttext \(\leq_s\) but consistent.
When \(I_1\) and \(I_2\) are consistent, their overlap (denoted by \(I_1 \sqcap I_2\)) is a three-valued interpretation \(I\) such that for all \(a \in \hb{P}\), \(I(a) = \mins(I_1(a), I_2(a))\).
It also follows that \(\cset{I} = \cset{I_1} \cap \cset{I_2}\).

This notion of consistency enables us to study how trap spaces interact and combine, particularly through their common overlap.

\begin{proposition}\label{prop:Datalog-overlap-two-consistent-TS}
	Let \(P\) be a \pname.
	The overlap of two consistent stable (resp.\ supported) trap spaces of \(P\) is a stable (resp.\ supported) trap space of \(P\).
\end{proposition}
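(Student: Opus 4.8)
The plan is to reduce the statement to the elementary fact that the intersection of two trap sets is again a trap set, provided it is non-empty. First I would set $I = I_1 \sqcap I_2$ for two consistent stable (resp.\ supported) trap spaces $I_1, I_2$ of $P$; by the discussion preceding the proposition, $I$ is well-defined precisely because $I_1$ and $I_2$ are consistent, and moreover $\cset{I} = \cset{I_1} \cap \cset{I_2}$.

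Next I would verify that $\cset{I}$ is a stable (resp.\ supported) trap set of $P$. Non-emptiness is immediate from consistency, since $\cset{I_1} \cap \cset{I_2} \neq \emptyset$. For closure, take any $J \in \cset{I}$. Then $J \in \cset{I_1}$ and $J \in \cset{I_2}$; since $\cset{I_1}$ and $\cset{I_2}$ are stable trap sets, $\Fop{J} \in \cset{I_1}$ and $\Fop{J} \in \cset{I_2}$, hence $\Fop{J} \in \cset{I_1} \cap \cset{I_2} = \cset{I}$. Thus $\{\Fop{J} \mid J \in \cset{I}\} \subseteq \cset{I}$, so $\cset{I}$ is a stable trap set, and by~\Cref{def:Datalog-trap-space}, $I$ is a stable trap space of $P$. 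The supported case is verbatim the same with $T_P$ in place of $F_P$.

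There is essentially no obstacle here; the only point requiring care is that the overlap operation is defined only for consistent interpretations — which is exactly the hypothesis — and that its characteristic set coincides with the intersection of the two characteristic sets, a fact already recorded in the paragraph introducing consistency. One could alternatively phrase the whole argument purely in graph-theoretic terms via~\Cref{def:DiG-trap-set}: the intersection of two trap sets of $\tgst{P}$ (resp.\ $\tgsp{P}$) is a trap set, which then transfers back to $P$ through the correspondence noted after~\Cref{def:DiG-trap-set}.
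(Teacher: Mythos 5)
Your proposal is correct and follows essentially the same route as the paper's proof: form the overlap, use \(\cset{I_1 \sqcap I_2} = \cset{I_1} \cap \cset{I_2}\), and check closure of the intersection under \(F_P\) (resp.\ \(T_P\)) pointwise. Your explicit remark on non-emptiness is a small but welcome addition that the paper leaves implicit.
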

\begin{proof}
	Hereafter, we prove the case of stable trap spaces. 
	The proof for the case of supported trap spaces is symmetrical.
	Let \(I_1\) and \(I_2\) be two consistent stable trap spaces of \(P\).
	Then \(I_1 \sqcap I_2\) is also a three-valued interpretation by construction.
	Let \(s\) be an arbitrary two-valued interpretation in \(\cset{I_1 \sqcap I_2}\) and \(s'\) be its successor in \(\tgst{P}\).
	Since \(\cset{I_1 \sqcap I_2} = \cset{I_1} \cap \cset{I_2}\), \(s \in \cset{I_1}\) and \(s \in \cset{I_2}\).
	Since \(I_1\) (resp.\ \(I_2\)) is a stable trap space, \(s' \in \cset{I_1}\) (resp.\ \(s' \in \cset{I_2}\)).
	It follows that \(s' \in \cset{I_1 \sqcap I_2}\).
	Hence, \(\cset{I_1 \sqcap I_2}\) is a stable trap set, leading to \(I_1 \sqcap I_2\) is a stable trap space of \(P\).
\end{proof}

The above property of stable and supported trap spaces is analogous to a property of stable and supported classes in \pnames~\citep{IS2012}.
However, while the union of two stable (resp.\ supported) classes is a stable (resp.\ supported) class, that of two stable (resp.\ supported) trap spaces may not be a stable (resp.\ supported) trap space.
In~\Cref{exam:Datalog-trap-spaces}, \(\cset{I_4} \cup \cset{I_5}\) is a stable class of \(P\), whereas \(\cset{I_4} \cup \cset{I_5}\) does not correspond to any three-valued interpretation.

The closure of stable or supported trap spaces under consistent overlap leads to a useful structural consequence concerning the minimal trap space that covers a given set of two-valued interpretations.

\begin{corollary}\label{cor:Datalog-unique-minimal-covered-TS}
	Let \(P\) be a \pname.
	Let \(S\) be a non-empty set of two-valued interpretations of \(P\).
	Then there is a unique \(\leq_s\)-minimal stable (resp.\ supported) trap space, denoted by \(\uniCovStTS{S}\) (resp.\ \(\uniCovSuTS{S}\)), such that \(S \subseteq \cset{\uniCovStTS{S}}\) (resp.\ \(S \subseteq \cset{\uniCovSuTS{S}}\)).
\end{corollary}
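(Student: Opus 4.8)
The plan is to prove the stable case; the supported case follows by the symmetric argument, replacing $\tgst{P}$ by $\tgsp{P}$ and ``stable trap space'' by ``supported trap space'' throughout, and invoking the supported versions of~\Cref{prop:Datalog-exist-StTS-SuTS} and~\Cref{prop:Datalog-overlap-two-consistent-TS}. First I would settle existence. Let $\mathcal{C}$ be the collection of all stable trap spaces $I$ of $P$ with $S \subseteq \cset{I}$. This collection is non-empty: taking the three-valued interpretation $I_\top$ with $I_\top(a) = \uval$ for every $a \in \hb{P}$, the proof of~\Cref{prop:Datalog-exist-StTS-SuTS} shows $I_\top$ is a stable trap space, and $\cset{I_\top}$ is the set of \emph{all} two-valued interpretations, so $S \subseteq \cset{I_\top}$ and $I_\top \in \mathcal{C}$. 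Since $\hb{P}$ is finite (as $P$ is a \pname), there are only finitely many three-valued interpretations of $P$; hence $\mathcal{C}$ is a finite non-empty set and admits at least one $\leq_s$-minimal element.

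Next I would prove uniqueness of this minimal element. Suppose $I_1, I_2 \in \mathcal{C}$ are both $\leq_s$-minimal in $\mathcal{C}$. Because $S \subseteq \cset{I_1}$, $S \subseteq \cset{I_2}$ and $S \neq \emptyset$, we get $\cset{I_1} \cap \cset{I_2} \neq \emptyset$, so $I_1$ and $I_2$ are consistent in the sense of~\Cref{subsec:Datalog-trap-space-properties}. By~\Cref{prop:Datalog-overlap-two-consistent-TS} the overlap $I_1 \sqcap I_2$ is again a stable trap space of $P$, and by construction $\cset{I_1 \sqcap I_2} = \cset{I_1} \cap \cset{I_2} \supseteq S$, so $I_1 \sqcap I_2 \in \mathcal{C}$. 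Moreover $\cset{I_1 \sqcap I_2} \subseteq \cset{I_1}$ and $\cset{I_1 \sqcap I_2} \subseteq \cset{I_2}$; since $\leq_s$ coincides with the subset order on characterized sets (recalled in~\Cref{subsec:preliminaries-NLP}), this reads $I_1 \sqcap I_2 \leq_s I_1$ and $I_1 \sqcap I_2 \leq_s I_2$. The $\leq_s$-minimality of $I_1$ and of $I_2$ in $\mathcal{C}$ then forces $I_1 \sqcap I_2 = I_1$ and $I_1 \sqcap I_2 = I_2$, whence $I_1 = I_2$. So the $\leq_s$-minimal element of $\mathcal{C}$ is unique; this is the interpretation denoted $\uniCovStTS{S}$, and $\uniCovSuTS{S}$ is obtained identically in the supported case.

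I do not expect a genuine obstacle here: the two delicate points are merely (i) that finiteness of $\hb{P}$ lets us extract a minimal element directly, with no appeal to Zorn's lemma, and (ii) that $\leq_s$ is literally the subset order on the sets $\cset{\cdot}$, so that the overlap $I_1 \sqcap I_2$ is a true $\leq_s$-lower bound of $I_1$ and $I_2$; both are immediate in the \DLN setting. The substantive work has already been done in~\Cref{prop:Datalog-overlap-two-consistent-TS} (closure under consistent overlap) and~\Cref{prop:Datalog-exist-StTS-SuTS} (the all-$\uval$ trap space), so this corollary is essentially a packaging of those two facts.
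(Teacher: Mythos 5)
Your proof is correct and follows essentially the same route as the paper: non-emptiness of the family of covering stable trap spaces via the all-$\uval$ interpretation, and closure under consistent overlap (\Cref{prop:Datalog-overlap-two-consistent-TS}) to pin down the minimal element. The only cosmetic difference is that the paper forms the overlap of the \emph{entire} family at once (implicitly extending the binary overlap result to arbitrary families), whereas you extract a minimal element by finiteness and then use only the binary overlap for uniqueness — which, if anything, invokes the proposition exactly as stated.
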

\begin{proof}
	Hereafter, we prove the case of stable trap spaces. 
	The proof for the case of supported trap spaces is symmetrical.
	Let \(T\) be the set of all stable trap spaces \(I\) such that \(\cset{I}\) contains \(S\).
	We have \(T\) is non-empty as at least it contains \(\epsilon\) that is a stable trap space in which all atoms are assigned to \(\uval\).
	The elements in \(T\) are mutually consistent, thus we can take the overlap of all these elements (denoted by \(\uniCovStTS{S}\)).
	By applying the similar reasoning as in the proof of~\Cref{prop:Datalog-overlap-two-consistent-TS}, \(\uniCovStTS{S}\) is a stable trap space of \(P\).
	By construction, \(\uniCovStTS{S}\) is unique and \(\leq_s\)-minimal, and \(S \subseteq \cset{\uniCovStTS{S}}\).
\end{proof}

An another consequence of the closure of stable or supported trap spaces under consistent overlap is that two distinct \(\leq_s\)-minimal trap spaces cannot be consistent with each other, as formalized below.

\begin{proposition}\label{prop:Datalog-two-min-s-TS-inconsistent}
	Let \(P\) be a \pname.
	Let \(I_1\) and \(I_2\) be two distinct \(\leq_s\)-minimal stable (resp.\ supported) trap spaces of \(P\).
	Then \(I_1\) and \(I_2\) are not consistent.
\end{proposition}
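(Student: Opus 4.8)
The plan is to argue by contradiction, leaning entirely on the closure property already established in~\Cref{prop:Datalog-overlap-two-consistent-TS} together with the characterization of consistency via overlaps. Suppose $I_1$ and $I_2$ are consistent. Then their overlap $I_1 \sqcap I_2$ is well-defined, and by~\Cref{prop:Datalog-overlap-two-consistent-TS} it is again a stable (resp.\ supported) trap space of $P$.

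The key observation is that $\cset{I_1 \sqcap I_2} = \cset{I_1} \cap \cset{I_2} \subseteq \cset{I_1}$, and since $\leq_s$ coincides with the subset order on the induced sets of two-valued interpretations, this means $I_1 \sqcap I_2 \leq_s I_1$. By the $\leq_s$-minimality of $I_1$ among stable (resp.\ supported) trap spaces, we must have $I_1 \sqcap I_2 = I_1$, hence $\cset{I_1} \subseteq \cset{I_2}$, i.e.\ $I_1 \leq_s I_2$. Running the symmetric argument with $I_2$ in place of $I_1$ gives $I_2 \leq_s I_1$. Antisymmetry of $\leq_s$ then forces $I_1 = I_2$, contradicting the assumption that $I_1$ and $I_2$ are distinct. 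Therefore $I_1$ and $I_2$ cannot be consistent.

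I do not anticipate any real obstacle here: the argument is a direct consequence of the overlap closure and the equivalence between $\leq_s$ and the subset order on $\cset{\cdot}$ (both recalled earlier in the excerpt), so the proof is essentially a two-line contradiction. The only point to state carefully is that the overlap lies below \emph{both} $I_1$ and $I_2$ in the $\leq_s$ order, which is immediate from $\cset{I_1 \sqcap I_2} = \cset{I_1} \cap \cset{I_2}$.
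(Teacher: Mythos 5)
Your proof is correct and follows essentially the same route as the paper: assume consistency, form the overlap $I_1 \sqcap I_2$, invoke \Cref{prop:Datalog-overlap-two-consistent-TS} to see it is again a trap space, and contradict $\leq_s$-minimality. The paper concludes slightly more directly (distinctness forces $I_1 \sqcap I_2 <_s I_1$ or $I_1 \sqcap I_2 <_s I_2$, immediately contradicting minimality), whereas you derive equality in both directions and apply antisymmetry, but this is only a cosmetic difference.
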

\begin{proof}
	Hereafter, we prove the case of stable trap spaces. 
	The proof for the case of supported trap spaces is symmetrical.
	Assume that \(I_1\) and \(I_2\) are consistent.
	Then \(I = I_1 \sqcap I_2\) exists.
	By~\Cref{prop:Datalog-overlap-two-consistent-TS}, \(I\) is a stable trap space of \(P\).
	Since \(I_1\) and \(I_2\) are distinct, \(I <_s I_1\) or \(I <_s I_2\) must hold.
	This is a contradiction because \(I_1\) and \(I_2\) are \(\leq_s\)-minimal stable trap spaces of \(P\).
	Hence, \(I_1\) and \(I_2\) are not consistent.
\end{proof}

We now recall several important theoretical results from~\cite{IS2012} that lead us to similar results for trap spaces in \pnames.
Specifically, for negative \pnames, the stable and supported semantics coincide, resulting in identical transition graphs and thus identical sets of trap spaces; and the stable transition graph of an arbitrary \pname remains invariant under the least fixpoint transformation, which in turn implies that the set of stable trap spaces is also preserved under this transformation.

\begin{theorem}[Proposition 5.2 of~\cite{IS2012}]\label{theo:neg-Datalog-StTG-SuTG}
	Let \(P\) be a negative \pname.
	Then \(\tgst{P} = \tgsp{P}\), i.e., the stable and supported transition graphs of \(P\) are the same.
\end{theorem}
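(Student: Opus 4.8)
The plan is to show that the two transition graphs have literally the same set of arcs, which reduces to proving that $\Fop{I} = \Top{I}$ for every two-valued interpretation $I$ of $P$. Indeed, both $\tgst{P}$ and $\tgsp{P}$ have vertex set $2^{\hb{P}}$, and by definition $(I, J)$ is an arc of $\tgst{P}$ (resp.\ $\tgsp{P}$) iff $J = \Fop{I}$ (resp.\ $J = \Top{I}$); so equality of the two operators immediately yields equality of the two graphs.

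First I would unfold the reduct $P^I$. Since $P$ is negative, every rule of $\gr{P}$ has the form $p \gets \dng{p_1}, \dots, \dng{p_k}$ with body formula $\bigwedge_{i=1}^{k} \neg p_i$ (and $\tval$ when $k = 0$). When building $P^I$ for the two-valued $I$: a rule is deleted iff $I(p_i) = \tval$ for some $i$; in a surviving rule every remaining negative literal $\dng{p_i}$ satisfies $I(p_i) = \fval$ and is therefore deleted, and the replacement step introducing $\mathbf{u}$ never applies because $I$ is two-valued. Hence $P^I$ is a set of facts, its $\leq_t$-least two-valued model is exactly the set of heads of those facts, and so $\Fop{I}(p) = \tval$ iff there is a rule $r \in \gr{P}$ with $\head{r} = p$ and $\nbody{r} \cap I = \emptyset$, i.e.\ iff $I(\bodyf{r}) = \tval$ for some such $r$.

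Next I would compare this with $\Top{I}$. By the definition of Clark's completion, $\rhs{p} = \bigvee_{r \in \gr{P},\, \head{r} = p} \bodyf{r}$, with $\rhs{p} = \fval$ if $p$ is not the head of any rule. Since $I$ is two-valued, $\Top{I}(p) = I(\rhs{p}) = \tval$ iff $I(\bodyf{r}) = \tval$ for some rule $r$ with $\head{r} = p$ --- which is exactly the characterization of $\Fop{I}(p)$ obtained above. Therefore $\Fop{I} = \Top{I}$ for every two-valued interpretation $I$, and consequently $E(\tgst{P}) = E(\tgsp{P})$, so the two transition graphs coincide.

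I do not expect a genuine obstacle here; the only point requiring care is the reduct computation, namely the observation that negativeness of $P$ forces $P^I$ to be a set of facts, which collapses the least-model computation underlying $F_P$ down to the same disjunction-of-bodies evaluation that defines $T_P$. This is precisely the operator-level counterpart of the familiar fact that for negative programs the stable and supported semantics agree.
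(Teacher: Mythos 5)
Your proof is correct. Note that the paper itself gives no proof of this statement: it is imported verbatim as Proposition~5.2 of \cite{IS2012}, so there is no in-paper argument to compare against. Your direct verification --- observing that for a negative program the reduct $P^I$ of a two-valued $I$ collapses to a set of facts whose heads are exactly the atoms $p$ with $I(\bodyf{r})=\tval$ for some rule $r$ with $\head{r}=p$, which is precisely the evaluation $I(\rhs{p})$ defining $T_P$ --- is sound, handles the edge cases ($k=0$ and atoms with no defining rule) correctly, and is exactly the operator-level identity $F_P=T_P$ that the cited result rests on.
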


\begin{corollary}\label{cor:neg-Datalog-StTS-SuTS}
	Let \(P\) be a negative \pname.
	Then the set of stable trap spaces of \(P\) coincides with the set of supported trap spaces of \(P\).
\end{corollary}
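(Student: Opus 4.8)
The plan is to derive this directly from~\Cref{theo:neg-Datalog-StTG-SuTG}, which already gives $\tgst{P} = \tgsp{P}$ for a negative \pname $P$. Recall from the discussion following~\Cref{def:DiG-trap-set} that a three-valued interpretation $I$ is a stable (resp.\ supported) trap space of $P$ \ifftext $\cset{I}$ is a trap set of $\tgst{P}$ (resp.\ $\tgsp{P}$). First I would invoke $\tgst{P} = \tgsp{P}$ to observe that ``$\cset{I}$ is a trap set of $\tgst{P}$'' and ``$\cset{I}$ is a trap set of $\tgsp{P}$'' are the same condition, since being a trap set is a purely graph-theoretic property (\Cref{def:DiG-trap-set}) that depends only on the underlying directed graph. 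Chaining the two equivalences then yields that $I$ is a stable trap space of $P$ \ifftext $I$ is a supported trap space of $P$, which is exactly the claimed coincidence of the two sets.

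Since this is a one-line consequence of an already-established equality of transition graphs together with the graph-level reformulation of trap spaces, there is essentially no obstacle here; the only care needed is to cite the right preceding facts (\Cref{theo:neg-Datalog-StTG-SuTG} and the reformulation via~\Cref{def:DiG-trap-set}) rather than re-deriving anything. If one wished to avoid appealing to the reformulation, an alternative is to unfold~\Cref{def:Datalog-trap-set}: for negative $P$ one has $F_P = T_P$ as operators on two-valued interpretations (which is the content underlying~\Cref{theo:neg-Datalog-StTG-SuTG}), so the closure conditions $\{F_P(I)\mid I\in S\}\subseteq S$ and $\{T_P(I)\mid I\in S\}\subseteq S$ coincide for every $S=\cset{I}$; but invoking~\Cref{theo:neg-Datalog-StTG-SuTG} directly is cleaner and is the route I would take.
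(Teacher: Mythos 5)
Your proposal is correct and follows exactly the paper's own argument: the paper likewise derives the corollary immediately from \Cref{theo:neg-Datalog-StTG-SuTG} together with the dynamical (transition-graph) characterization of stable and supported trap spaces. Nothing is missing.
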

\begin{proof}
	This immediately follows from~\Cref{theo:neg-Datalog-StTG-SuTG} and the dynamical characterizations of stable and supported trap spaces of a \pname.
\end{proof}

\begin{theorem}[Theorem 5.5 of~\cite{IS2012}]\label{theo:Datalog-lfp-StTG}
	Let \(P\) be a \pname and \(\lfp{P}\) denote the least fixpoint of \(P\).
	Then \(\tgst{P} = \tgst{\lfp{P}}\), i.e., \(P\) and \(\lfp{P}\) have the same stable transition graph.
\end{theorem}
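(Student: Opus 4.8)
The plan is to show that $\tgst{P}$ and $\tgst{\lfp{P}}$ have the same vertex set and the same arc set. The vertices are unproblematic: by~\Cref{prop:LFP-finiteness} and the construction of the least fixpoint, $\hb{\lfp{P}} = \hb{P}$, so both transition graphs are defined on the common set $2^{\hb{P}}$ of two-valued interpretations. Since an arc of a stable transition graph is entirely determined by the operator $F_P$ (resp.\ $F_{\lfp{P}}$), the whole statement reduces to the operator-level identity $\Fop{I} = F_{\lfp{P}}(I)$ for every two-valued interpretation $I$ of $P$.

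I would first simplify the right-hand side using the fact (recalled in~\Cref{subsec:preliminaries-NLP}) that $\lfp{P}$ is a negative \pname: for a negative program the reduct $(\lfp{P})^I$ is just the set of facts $\head{r}$ for the rules $r \in \lfp{P}$ with $\nbody{r} \cap I = \emptyset$, so $F_{\lfp{P}}(I) = \{\head{r} \mid r \in \lfp{P},\ I(\bodyf{r}) = \tval\}$. On the other hand $\Fop{I}$ is the $\leq_t$-least two-valued model of the positive program $P^I$, obtained as $\bigcup_{n \ge 0} T_{P^I}^n(\emptyset)$ with $T_{P^I}$ the immediate-consequence operator. The core is thus the set identity $\bigcup_{n \ge 0} T_{P^I}^n(\emptyset) = \{\head{r} \mid r \in \lfp{P},\ I(\bodyf{r}) = \tval\}$, which I would prove by two inductions. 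For ``$\subseteq$'', induct on $n$: an atom $a$ entering $T_{P^I}^{n+1}(\emptyset)$ comes from a ground rule $a \gets q_1,\dots,q_j,\dng{p_1},\dots,\dng{p_k} \in \gr{P}$ with every $I(p_i)=\fval$ and every $q_l \in T_{P^I}^n(\emptyset)$; the induction hypothesis supplies, for each $q_l$, a rule $r_l \in \lfp{P}$ with head $q_l$ and $I(\bodyf{r_l})=\tval$, and feeding these $r_l$ into $r$ via the unfolding operator $\translfp{P}{}$ produces a rule of $\lfp{P}$ with head $a$ whose negative body is still satisfied by $I$. For ``$\supseteq$'', induct on the least level $i$ with $r \in \lfp{P}_i$: unwinding one application of $\translfp{P}{}$ writes $r$ as $\translfp{r_0}{(\{r_1,\dots,r_j\})}$ with $r_0 \in \gr{P}$ and $r_l \in \lfp{P}_{i-1}$; the hypothesis $I(\bodyf{r})=\tval$ forces $I(\bodyf{r_l})=\tval$ for all $l$ and $I(p)=\fval$ for every $\dng{p}$ in $r_0$, so by the induction hypothesis each $\head{r_l}$ lies in $\Fop{I}$, and since $\Fop{I}$ is a model of $P^I$ the rule $r_0$ forces $\head{r}=\head{r_0} \in \Fop{I}$.

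The step I expect to require the most care is the matching between a single iteration of $T_{P^I}$ and a single application of $\translfp{P}{}$ in the ``$\subseteq$'' direction: the rules $r_l$ returned by the induction hypothesis may sit at different levels $\lfp{P}_{i_l}$, so one first needs that the chain $\lfp{P}_1 \subseteq \lfp{P}_2 \subseteq \cdots$ is increasing and, because $\hb{P}$ is finite and every rule of $\lfp{P}$ is a ground negative rule over $\hb{P}$, that it stabilises, so that all the $r_l$ may be assumed to lie in a common $\lfp{P}_N$ before $\translfp{P}{}$ is applied. This finiteness is exactly where the \pname hypothesis enters, which is also why the statement is restricted to \pnames rather than phrased for arbitrary normal logic programs, whose least fixpoint need not be finite.
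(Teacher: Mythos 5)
This statement is imported verbatim from the literature: the paper labels it ``Theorem 5.5 of [IS2012]'' and gives no proof of its own, so there is nothing in the paper to compare your argument against line by line. Your blind proof is a correct, self-contained reconstruction. The reduction to the operator identity \(\Fop{I} = F_{\lfp{P}}(I)\) is the right decomposition, the computation of \(F_{\lfp{P}}(I)\) for the ground negative program \(\lfp{P}\) as the set of heads of rules whose (purely negative) bodies are satisfied by \(I\) is accurate, and the two inductions correctly match one application of the immediate-consequence operator \(T_{P^I}\) against one application of the unfolding operator \(\translfp{P}{}\), with the inclusion \(\nbody{r_l} \subseteq \nbody{\translfp{r_0}{(\{r_1,\dots,r_j\})}}\) doing the work in the ``\(\supseteq\)'' direction. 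One small simplification: in the ``\(\subseteq\)'' direction you do not actually need the chain \(\lfp{P}_1 \subseteq \lfp{P}_2 \subseteq \cdots\) to stabilise; monotonicity of \(\translfp{P}{}\) already places all the \(r_l\) in \(\lfp{P}_M\) for \(M\) the maximum of their levels, and the unfolded rule then lies in \(\lfp{P}_{M+1} \subseteq \lfp{P} = \bigcup_i \lfp{P}_i\). Finiteness of \(\hb{P}\) is what guarantees that \(\lfp{P}\) is a finite program (\Cref{prop:LFP-finiteness}) and that \(\omega\) iterations of \(T_{P^I}\) suffice, which is the honest place where the \pname hypothesis enters, consistent with your closing remark.
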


\begin{corollary}\label{cor:Datalog-lfp-StTS}
	Let \(P\) be a \pname and \(\lfp{P}\) denote the least fixpoint of \(P\).
	Then the set of stable trap spaces of \(P\) coincides with the set of stable trap spaces of \(\lfp{P}\).
\end{corollary}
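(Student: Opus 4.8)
The plan is to reduce the claim to the dynamical characterization of stable trap spaces together with \Cref{theo:Datalog-lfp-StTG}. First I would observe that, by the definition of the least fixpoint, $\hb{\lfp{P}} = \hb{P}$ (this equality is already used in the proof of \Cref{prop:LFP-finiteness}). Consequently $P$ and $\lfp{P}$ have exactly the same two-valued interpretations and the same three-valued interpretations, and for every three-valued interpretation $I$ the set $\cset{I}$ is the same object whether $I$ is viewed as an interpretation of $P$ or of $\lfp{P}$. In particular, $\tgst{P}$ and $\tgst{\lfp{P}}$ are directed graphs over the same vertex set.

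Next I would recall the dynamical characterization stated right after \Cref{def:DiG-trap-set}: a three-valued interpretation $I$ of a \pname $Q$ is a stable trap space of $Q$ if and only if $\cset{I}$ is a trap set of $\tgst{Q}$. Instantiating this with $Q = P$ and with $Q = \lfp{P}$, the corollary is equivalent to the statement that, for every three-valued interpretation $I$ over $\hb{P}$, the set $\cset{I}$ is a trap set of $\tgst{P}$ if and only if it is a trap set of $\tgst{\lfp{P}}$.

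Finally I would invoke \Cref{theo:Datalog-lfp-StTG}, which gives $\tgst{P} = \tgst{\lfp{P}}$ as directed graphs. Since the two graphs are literally equal, a subset of their common vertex set is a trap set of one if and only if it is a trap set of the other. Chaining the equivalences yields: $I$ is a stable trap space of $P$ iff $\cset{I}$ is a trap set of $\tgst{P}$ iff $\cset{I}$ is a trap set of $\tgst{\lfp{P}}$ iff $I$ is a stable trap space of $\lfp{P}$, which is exactly the claim.

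There is no substantial obstacle here: the entire content is packaged inside \Cref{theo:Datalog-lfp-StTG}, and the only point requiring (minimal) care is to note that $\tgst{P}$ and $\tgst{\lfp{P}}$ are built over the same set of two-valued interpretations, which is immediate from $\hb{\lfp{P}} = \hb{P}$. Everything else is a straightforward unfolding of definitions, exactly parallel to how \Cref{cor:neg-Datalog-StTS-SuTS} was derived from \Cref{theo:neg-Datalog-StTG-SuTG}.
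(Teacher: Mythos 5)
Your proposal is correct and follows exactly the paper's own argument: the paper derives the corollary immediately from \Cref{theo:Datalog-lfp-StTG} together with the dynamical characterization of stable trap spaces, which is precisely the chain of equivalences you spell out. The extra care you take about $\hb{\lfp{P}} = \hb{P}$ and the shared vertex set is a harmless (and reasonable) elaboration of the same proof.
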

\begin{proof}
	This immediately follows from~\Cref{theo:Datalog-lfp-StTG} and the dynamical characterization of stable trap spaces of a \pname.
\end{proof}

\subsection{Relationships with Other Semantics}\label{subsec:trap-space-relationships}

Naturally, if a stable (resp.\ supported) trap space of a \pname is two-valued, then it is also a stable (resp.\ supported) model of this program.
Hereafter, we show more relationships between stable and supported trap spaces and other types of models in \pnames.
They highlight the role of trap spaces as generalizations of models, capturing both dynamically and semantically meaningful behavior of a program.
Understanding these relationships not only deepens our theoretical insight into the semantic landscape of \pnames but also may open the door to new algorithmic strategies.

\subsubsection{Stable and Supported Class Semantics}

We now investigate the connection between trap spaces and class-based semantics of \pnames, where models are understood in terms of cyclic or recurrent behavior in the transition graphs.

\begin{proposition}\label{prop:Datalog-min-trap-set-cycle-TS}
	Consider a \pname \(P\).
	A non-empty set of two-valued interpretations \(S\) is a \(\subseteq\)-minimal stable (resp.\ supported) trap set of \(P\) iff \(S\) forms a simple cycle of \(\tgst{P}\) (resp.\ \(\tgsp{P}\)).
\end{proposition}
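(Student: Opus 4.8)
The plan is to exploit that both $\tgst{P}$ and $\tgsp{P}$ are \emph{functional} directed graphs: by the definitions of the operators $F_P$ and $T_P$, every two-valued interpretation $I$ has exactly one out-going arc in $\tgst{P}$ (namely to $\Fop{I}$) and exactly one in $\tgsp{P}$ (namely to $\Top{I}$). I would prove the two directions for the stable case only, the supported case being word-for-word identical with $\tgsp{P}$ and $T_P$ in place of $\tgst{P}$ and $F_P$. Throughout I rely on the observation already recorded after \Cref{def:DiG-trap-set}, that $S$ is a stable trap set of $P$ iff $S$ is a trap set of $\tgst{P}$; hence it suffices to characterise the $\subseteq$-minimal trap sets of a functional graph on the \emph{finite} vertex set $2^{\hb{P}}$ (finite since $P$ is a \pname).

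For the forward direction, suppose $S$ is a $\subseteq$-minimal trap set of $\tgst{P}$. Pick any $I_0 \in S$ and follow the unique orbit $I_0, I_1 = \Fop{I_0}, I_2 = \Fop{I_1}, \dots$; since $S$ is a trap set and $I_0 \in S$, a straightforward induction gives $I_k \in S$ for all $k \geq 0$. As $2^{\hb{P}}$ is finite, the orbit is eventually periodic, so there exist indices $j < k$ with $I_j = I_k$, and $C = \{I_j, I_{j+1}, \dots, I_{k-1}\}$ is the vertex set of a simple cycle of $\tgst{P}$. Since each vertex of $C$ has its unique successor again in $C$, the set $C$ is itself a trap set, and $C \subseteq S$; by $\subseteq$-minimality of $S$ we get $C = S$, so $S$ forms a simple cycle.

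For the converse, suppose $S$ is the vertex set of a simple cycle of $\tgst{P}$. Each vertex of $S$ has its unique successor (the next vertex along the cycle) inside $S$, so $S$ is a trap set. If $\emptyset \neq S' \subseteq S$ were a trap set, pick $I \in S'$; its successor lies in $S'$ (trap set) and is the next vertex of the cycle, and iterating around the cycle forces every vertex of $S$ into $S'$, whence $S' = S$. Thus $S$ is a $\subseteq$-minimal trap set, hence a $\subseteq$-minimal stable trap set of $P$.

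The argument is essentially routine; the only point that genuinely needs care is the appeal to finiteness of $2^{\hb{P}}$ to guarantee that forward orbits fall into a cycle — this is precisely where the restriction to \pnames (rather than general \nlps with a possibly infinite Herbrand base) is used — together with the bookkeeping that a \emph{simple cycle} is permitted to be a self-loop, which corresponds exactly to a stable (resp.\ supported) model of $P$.
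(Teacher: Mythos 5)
Your proof is correct and follows essentially the same route as the paper: both arguments rest on the observation that $\tgst{P}$ and $\tgsp{P}$ are functional graphs on the finite vertex set $2^{\hb{P}}$, reduce stable/supported trap sets of $P$ to trap sets of the corresponding transition graph, and then identify $\subseteq$-minimal trap sets of such a graph with its simple cycles. The only difference is that the paper delegates this last identification to a cited result (likening the transition graphs to synchronous state transition graphs of Boolean networks), whereas you prove it directly via the eventual periodicity of forward orbits — a self-contained filling-in of the same fact, with the minor caveat that you should take the \emph{first} repetition along the orbit to ensure the resulting cycle is simple.
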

\begin{proof}
	We have that each vertex in \(\tgst{P}\) (resp.\ \(\tgsp{P}\)) has exactly one on-going arc.
	They are similar to the synchronous state transition graph of a \acbn.
	
	Hence, \(S\) forms a simple cycle of \(\tgst{P}\) (resp.\ \(\tgsp{P}\)) \\
	iff \(S\) is a \(\subseteq\)-minimal stable (resp.\ supported) trap set of \(\tgst{P}\) (resp.\ \(\tgsp{P}\)) (see~\cite{EM2011}) \\
	iff \(S\) is a \(\subseteq\)-minimal stable (resp.\ supported) trap set of \(P\).
\end{proof}

We recall the two established results that precisely characterize strict stable and supported classes of a \pname in terms of simple cycles in the respective transition graphs.

\begin{proposition}[Theorem 3 of~\cite{BS1992}]\label{prop:Datalog-strict-StC-cycle-StTS}
	Consider a \pname \(P\).
	A non-empty set of two-valued interpretations \(S\) is a strict stable class of \(P\) iff \(S\) forms a simple cycle of \(\tgst{P}\).
\end{proposition}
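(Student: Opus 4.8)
The plan is to reduce the statement to \Cref{prop:Datalog-min-trap-set-cycle-TS} by showing that the strict stable classes of \(P\) are exactly the \(\subseteq\)-minimal stable trap sets of \(P\). The key structural fact I would exploit is that \(\tgst{P}\) is a \emph{functional} graph: every two-valued interpretation \(I\) has exactly one out-going arc, namely to \(\Fop{I}\). Since \(\hb{P}\) is finite, the vertex set of \(\tgst{P}\) is finite, so any non-empty set closed under \(F_P\) eventually ``loops'' into a cycle, and a finite self-map that is surjective is a permutation.

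First I would handle the easy inclusion. If \(S\) is a stable class then, by \Cref{def:NLP-StC}, \(\{\Fop{I} \mid I \in S\} = S\); in particular \(\{\Fop{I} \mid I \in S\} \subseteq S\), so \(S\) is a stable trap set. Assuming moreover that \(S\) is strict, suppose some non-empty \(S' \subsetneq S\) were a stable trap set; then \(F_P\) maps \(S'\) into \(S'\), so iterating \(F_P\) from any \(I \in S'\) stays in \(S'\) and, by finiteness, enters a cycle \(C \subseteq S'\). On that cycle \(F_P\) acts as a cyclic permutation, so \(\{\Fop{J} \mid J \in C\} = C\), i.e.\ \(C\) is a stable class properly contained in \(S\), contradicting strictness. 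Hence \(S\) is a \(\subseteq\)-minimal stable trap set, and \Cref{prop:Datalog-min-trap-set-cycle-TS} gives that \(S\) forms a simple cycle of \(\tgst{P}\).

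Conversely, I would start from a set \(S\) that forms a simple cycle of \(\tgst{P}\); by \Cref{prop:Datalog-min-trap-set-cycle-TS} this is the same as \(S\) being a \(\subseteq\)-minimal stable trap set. Along a simple cycle, \(F_P|_S\) is a cyclic permutation of \(S\), so \(\{\Fop{I} \mid I \in S\} = S\) and \(S\) is a stable class by \Cref{def:NLP-StC}. It remains to check strictness: if \(\emptyset \neq S' \subsetneq S\) were a stable class, then \(F_P|_S\) would restrict to a bijection of \(S'\), i.e.\ \(S'\) would be invariant under the cyclic permutation; but under a shift-by-one the whole cycle is a single orbit, so its only non-empty invariant subset is the cycle itself, a contradiction. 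Hence no proper non-empty subset of \(S\) is a stable class, and \(S\) is a strict stable class.

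The main subtlety to get right is the distinction between the \emph{equality} condition defining a stable class and the \emph{containment} condition defining a stable trap set: the argument needs the finiteness of the interpretation space together with the out-degree-one property of \(\tgst{P}\) so that every stable trap set contains a genuine cycle, hence a genuine stable class, and it needs the elementary fact that a surjective finite self-map is a permutation whose single-cycle restrictions are precisely the simple cycles of the graph. I do not expect a deeper obstacle; everything else is routine bookkeeping, and the result could alternatively be proved directly by decomposing the permutation \(F_P|_S\) of a finite stable class \(S\) into disjoint cycles and using strictness to force exactly one cycle.
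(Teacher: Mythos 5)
Your proof is correct, but note that the paper does not prove this statement at all: it is recalled verbatim as Theorem~3 of the cited 1992 paper of Baral and Subrahmanian, so there is no in-paper argument to compare against. Your derivation is a legitimate self-contained replacement. The route you take --- reduce ``strict stable class'' to ``$\subseteq$-minimal stable trap set'' and then invoke \Cref{prop:Datalog-min-trap-set-cycle-TS} --- is sound and non-circular, since that proposition appears earlier in the paper and is proved independently of the present statement (via the correspondence with synchronous state transition graphs). The two pivotal observations are exactly the right ones: every non-empty forward-invariant subset of a finite functional graph contains a periodic orbit, which is itself a stable class (this rules out proper trap subsets of a strict class), and a single-orbit cyclic permutation has no proper non-empty forward-invariant subset (this gives strictness in the converse). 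A side effect worth pointing out is that your argument essentially establishes \Cref{prop:Datalog-min-trap-set-strict-C} (strict stable classes coincide with $\subseteq$-minimal stable trap sets) as an intermediate step, whereas the paper obtains that equivalence only afterwards by chaining the two cited propositions; so your proof would let the paper derive \Cref{prop:Datalog-min-trap-set-cycle-TS}, \Cref{prop:Datalog-strict-StC-cycle-StTS} and \Cref{prop:Datalog-min-trap-set-strict-C} from first principles rather than from the literature. Your alternative suggestion at the end (decompose the permutation $F_P|_S$ into disjoint cycles and use strictness to force a single cycle) would work equally well and would avoid the detour through trap sets entirely.
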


\begin{proposition}[Theorem 3.2 of~\cite{IS2012}]\label{prop:Datalog-strict-SuC-cycle-SuTS}
	Consider a \pname \(P\).
	A non-empty set of two-valued interpretations \(S\) is a strict supported class of \(P\) iff \(S\) forms a simple cycle of \(\tgsp{P}\).
\end{proposition}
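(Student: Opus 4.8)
The plan is to reduce the statement to \Cref{prop:Datalog-min-trap-set-cycle-TS} by showing that, for a \pname \(P\), a non-empty set \(S\) of two-valued interpretations is a strict supported class of \(P\) if and only if it is a \(\subseteq\)-minimal supported trap set of \(P\). I would first record the easy facts used throughout: \(\tgsp{P}\) has out-degree exactly one at every vertex (since \(\Top{\cdot}\) is a function), every supported class is by definition a supported trap set, and \(S\) is a supported trap set of \(P\) precisely when \(S\) is a trap set of the digraph \(\tgsp{P}\).

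The key auxiliary fact I would establish is that every supported trap set \(S\) of \(P\) contains a subset that forms a simple cycle of \(\tgsp{P}\), and that subset is itself a supported class. Indeed, pick any \(I_0 \in S\) and follow the orbit \(I_0,\ I_1 = \Top{I_0},\ I_2 = \Top{I_1},\dots\); since \(S\) is a trap set each \(I_i\) lies in \(S\), and since there are only finitely many two-valued interpretations the orbit is eventually periodic, so its periodic part \(C \subseteq S\) forms a simple cycle of \(\tgsp{P}\). On a simple cycle of length \(n\) the map \(\Top{\cdot}\) acts as the cyclic shift, hence \(\{\Top{I} \mid I \in C\} = C\), i.e., \(C\) is a supported class. (Equivalently, one may invoke \Cref{prop:Datalog-min-trap-set-cycle-TS} to get that a \(\subseteq\)-minimal trap subset of \(S\) is a simple cycle, and then add the cyclic-shift observation to see it is a supported class.)

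For the forward direction, let \(S\) be a strict supported class. Then \(S\) is a supported trap set, so by the auxiliary fact it contains a simple cycle \(C\) which is a supported class; strictness—i.e., \(\subseteq\)-minimality among supported classes—forces \(C = S\), so \(S\) forms a simple cycle of \(\tgsp{P}\). For the backward direction, suppose \(S\) forms a simple cycle of \(\tgsp{P}\). The cyclic-shift observation yields \(S = \{\Top{I} \mid I \in S\}\), so \(S\) is a supported class; moreover \(S\) is a \(\subseteq\)-minimal trap set of \(\tgsp{P}\) by \Cref{prop:Datalog-min-trap-set-cycle-TS}, hence no proper non-empty subset of \(S\) is a trap set, a fortiori none is a supported class, so \(S\) is strict. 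Combining both directions with \Cref{prop:Datalog-min-trap-set-cycle-TS} gives the claim.

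The main obstacle, modest as it is, is the gap between "\(\subseteq\)-minimal supported trap set" and "\(\subseteq\)-minimal supported class": supported classes form a proper subfamily of supported trap sets, so minimality within the larger family does not transfer automatically. The auxiliary fact—that every supported trap set contains a simple cycle, which is automatically a supported class—is exactly what bridges this gap. One must also take care that the defining condition of a supported class is an equality \(S = \{\Top{I} \mid I \in S\}\) rather than an inclusion, which is why the cyclic-shift remark is needed and not merely the trap-set property.
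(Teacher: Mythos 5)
Your proof is correct. Note first that the paper itself offers no proof of this statement: it is recalled verbatim as Theorem~3.2 of the cited work of Inoue and Sakama and used as a black box (indeed, the paper later chains it with \Cref{prop:Datalog-min-trap-set-cycle-TS} to derive \Cref{prop:Datalog-min-trap-set-strict-C}, which is exactly the equivalence ``strict supported class \(\Leftrightarrow\) \(\subseteq\)-minimal supported trap set'' that you identify as the crux). Your argument supplies a sound, self-contained justification: the orbit-following step correctly uses the trap-set property plus finiteness of \(2^{\hb{P}}\) to land in a periodic part \(C\), the cyclic-shift observation correctly upgrades \(\{\Top{I} \mid I \in C\} \subseteq C\) to equality so that \(C\) is a genuine supported class (you are right that this extra step is needed, since the class condition is an equality while the trap-set condition is only an inclusion), and the backward direction correctly passes minimality from the larger family of trap sets down to the subfamily of supported classes. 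The only dependency you import, \Cref{prop:Datalog-min-trap-set-cycle-TS}, is proved in the paper independently of the statement at hand (via the out-degree-one structure of \(\tgsp{P}\)), so there is no circularity; in fact your orbit argument re-derives the relevant half of that proposition anyway, making the proof essentially elementary.
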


We then connect minimal trap sets and class semantics by showing that \(\subseteq\)-minimal stable and supported trap sets coincide with strict stable and supported classes, respectively.

\begin{proposition}\label{prop:Datalog-min-trap-set-strict-C}
	Consider a \pname \(P\).
	A non-empty set of two-valued interpretations \(S\) is a \(\subseteq\)-minimal stable (resp.\ supported) trap set of \(P\) iff \(S\) is a strict stable (resp.\ supported) class of \(P\).
\end{proposition}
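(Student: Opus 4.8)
The plan is to chain together the three characterizations already in hand. \Cref{prop:Datalog-min-trap-set-cycle-TS} equates the $\subseteq$-minimal stable (resp.\ supported) trap sets of $P$ with the simple cycles of $\tgst{P}$ (resp.\ $\tgsp{P}$). On the other side, \Cref{prop:Datalog-strict-StC-cycle-StTS} equates the strict stable classes of $P$ with the simple cycles of $\tgst{P}$, and \Cref{prop:Datalog-strict-SuC-cycle-SuTS} equates the strict supported classes of $P$ with the simple cycles of $\tgsp{P}$. Composing these equivalences through the common middle term ``simple cycle of the appropriate transition graph'' yields the statement at once.

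Concretely, I would write the stable case first: a non-empty set $S$ of two-valued interpretations is a $\subseteq$-minimal stable trap set of $P$ iff $S$ forms a simple cycle of $\tgst{P}$ (by \Cref{prop:Datalog-min-trap-set-cycle-TS}) iff $S$ is a strict stable class of $P$ (by \Cref{prop:Datalog-strict-StC-cycle-StTS}). Then I would observe that the supported case is obtained verbatim by replacing $\tgst{P}$ with $\tgsp{P}$ throughout and invoking \Cref{prop:Datalog-strict-SuC-cycle-SuTS} in place of \Cref{prop:Datalog-strict-StC-cycle-StTS}.

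There is essentially no obstacle: all the substantive content—the fact that each vertex of $\tgst{P}$ and $\tgsp{P}$ has exactly one out-going arc, so that $\subseteq$-minimal trap sets are precisely simple cycles, and the classical characterizations of strict stable/supported classes from~\cite{BS1992,IS2012}—is already packaged in the cited results. The only thing to be careful about is keeping the ``(resp.\ \dots)'' parallelism consistent across the two transition graphs and the two notions of class, but no new argument is needed.
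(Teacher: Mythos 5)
Your proof is correct and follows exactly the same route as the paper's: both chain \Cref{prop:Datalog-min-trap-set-cycle-TS} (minimal trap sets are simple cycles of the transition graph) with \Cref{prop:Datalog-strict-StC-cycle-StTS} and \Cref{prop:Datalog-strict-SuC-cycle-SuTS} (strict classes are simple cycles), treating the supported case symmetrically. No differences worth noting.
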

\begin{proof}
	We show the proof for the case of stable trap sets; the proof for the case of supported trap sets is symmetrical.
	
	The set \(S\) is a \(\subseteq\)-minimal stable trap set of \(P\) \\
	iff \(S\) is a \(\subseteq\)-minimal trap set of \(\tgst{P}\) \\
	iff \(S\) is a simple cycle of \(\tgst{P}\) by~\Cref{prop:Datalog-min-trap-set-cycle-TS} \\
	iff \(S\) is a strict stable class of \(P\) by~\Cref{prop:Datalog-strict-StC-cycle-StTS}.
\end{proof}

Building on the previous characterizations, we can now establish that every stable or supported trap space necessarily covers at least one strict class of the corresponding type.

\begin{corollary}
  \label{cor:Datalog-TS-cover-strict-C}
	Let \(P\) be a \pname.
	Then every stable (resp.\ supported) trap space of \(P\) contains at least one strict stable (resp.\ supported) class of \(P\).
\end{corollary}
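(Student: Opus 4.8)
The plan is to prove the statement for stable trap spaces; the argument for supported trap spaces is obtained by replacing $F_P$, $\tgst{P}$, ``stable trap set/space/class'' throughout by $T_P$, $\tgsp{P}$, ``supported trap set/space/class'', and invoking the symmetric companions of the cited results (\Cref{prop:Datalog-min-trap-set-cycle-TS}, \Cref{prop:Datalog-strict-SuC-cycle-SuTS}, \Cref{prop:Datalog-min-trap-set-strict-C}). So from now on fix a stable trap space $I$ of $P$, so that $\cset{I}$ is by definition a stable trap set of $P$, equivalently a trap set of the directed graph $\tgst{P}$.

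First I would observe that $\cset{I}$ is \emph{finite}: since $P$ is a \pname, $\hb{P}$ is finite, hence $2^{\hb{P}}$ is finite, so $\cset{I} \subseteq 2^{\hb{P}}$ is finite. Next, consider the (non-empty, since $\cset{I}$ belongs to it) family of all subsets of $\cset{I}$ that are themselves stable trap sets of $P$; being a finite poset under $\subseteq$, it has a $\subseteq$-minimal element $S$. I then claim $S$ is in fact a $\subseteq$-minimal stable trap set of $P$ (globally, not just within $\cset{I}$): if $S' \subsetneq S$ were a stable trap set of $P$, then $S' \subseteq S \subseteq \cset{I}$, so $S'$ would belong to the family above and strictly refine $S$, contradicting minimality of $S$ in that family. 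Hence $S$ is a $\subseteq$-minimal stable trap set of $P$.

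Finally I would apply \Cref{prop:Datalog-min-trap-set-strict-C}, which states precisely that a non-empty set of two-valued interpretations is a $\subseteq$-minimal stable trap set of $P$ iff it is a strict stable class of $P$. Thus $S$ is a strict stable class of $P$, and $S \subseteq \cset{I}$, i.e., the stable trap space $I$ contains the strict stable class $S$, as required. (An alternative, equally short route avoids the minimality bookkeeping: pick any $J_0 \in \cset{I}$ and iterate the one-step operator, $J_{k+1} = \Fop{J_k}$; since $\cset{I}$ is a trap set of $\tgst{P}$ all $J_k$ stay in $\cset{I}$, and by finiteness the sequence is eventually periodic, producing a simple cycle of $\tgst{P}$ inside $\cset{I}$, which is a strict stable class by \Cref{prop:Datalog-strict-StC-cycle-StTS}.)

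There is essentially no hard step here: the statement is a direct corollary of \Cref{prop:Datalog-min-trap-set-strict-C} together with finiteness of $\hb{P}$. The only point deserving a line of justification is that $\subseteq$-minimality of a stable trap set \emph{relative to} $\cset{I}$ coincides with $\subseteq$-minimality of a stable trap set of $P$, which follows immediately because any stable trap set strictly contained in a subset of $\cset{I}$ is again a subset of $\cset{I}$.
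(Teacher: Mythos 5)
Your proof is correct and follows essentially the same route as the paper's: extract a $\subseteq$-minimal stable (resp.\ supported) trap set $S \subseteq \cset{I}$ and apply \Cref{prop:Datalog-min-trap-set-strict-C}. The only difference is that you spell out the existence of such a globally $\subseteq$-minimal $S$ via finiteness of $\hb{P}$, which the paper simply asserts.
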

\begin{proof}
	Let \(I\) be a stable (resp.\ supported) trap space of \(P\).
	By definition, \(\cset{I}\) is a stable (resp.\ supported) trap set of \(P\).
	There is a \(\subseteq\)-minimal stable (resp.\ supported) trap set \(S\) of \(P\) such that \(S \subseteq \cset{I}\).
	By~\Cref{prop:Datalog-min-trap-set-strict-C}, \(S\) is a strict stable (resp.\ supported) class of \(P\).
	Now, we can conclude the proof.
\end{proof}

\Cref{cor:Datalog-TS-cover-strict-C} shows that a stable (resp.\ supported) trap space always covers at least one strict stable (resp.\ supported) class.
\Cref{prop:Datalog-two-min-s-TS-inconsistent} shows that two \(\leq_s\)-minimal stable (resp.\ supported) trap spaces are not consistent.
This implies that the number of \(\leq_s\)-minimal stable (resp.\ supported) trap spaces is a lower bound for the number of strict stable (resp.\ supported) classes in a \pname.
This insight is similar to the insight in \acbns that the number of minimal trap spaces of a \acbn is a lower bound for the number of attractors of this \acbn regardless of the employed update scheme~\citep{KBS2015}.

\subsubsection{Stable and Supported Partial Model Semantics}

Stable and supported trap spaces are also closely related to partial model semantics, particularly in the setting of \acbns. 
The following proposition, originally formulated in the context of \acbns, characterizes trap spaces via a natural order-theoretic condition on three-valued interpretations.

\begin{proposition}[Proposition 2 of~\cite{TBR2025}]\label{prop:BN-char-TS}
	Let \(f\) be a \acbn.
	A sub-space \(m\) is a trap space of \(f\) iff \(m(f_v) \leq_s m(v)\) for every \(v \in \var{f}\).
\end{proposition}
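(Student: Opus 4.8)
The plan is to peel off the definitions until ``$m$ is a trap space'' becomes a purely pointwise condition on the update functions, and then to match that condition against the three-valued reading of $m(f_v)$. Since trap spaces are independent of the update scheme, $m$ is a trap space of $f$ iff $\cset{m}$ is a trap set of $\sstg{f}$, i.e.\ iff $f(x)\in\cset{m}$ for every $x\in\cset{m}$; and since a state of $\cset{m}$ and the state $f(x)$ can only be forced to agree on variables $v$ with $m(v)\neq\uval$, this is equivalent to the \emph{closure condition}: $f_v(x)=m(v)$ for every $x\in\cset{m}$ and every $v\in\var{f}$ with $m(v)\neq\uval$. The whole proof is then the equivalence between this closure condition and ``$m(f_v)\leq_s m(v)$ for every $v$''.

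For the implication ``$m(f_v)\leq_s m(v)\ \forall v$ $\Rightarrow$ closure condition'', the tool is the monotonicity of the three-valued evaluation with respect to the information order in which $\uval$ is least (equivalently, the reverse of $\leq_s$); this is an immediate structural induction over $\neg$, $\land$, $\lor$, using that $\neg$, $\min_{\leq_t}$ and $\max_{\leq_t}$ are monotone for that order, with base cases for constants and for variables (a state $x\in\cset{m}$ refines $m$). Applying it to $f_v$ gives $f_v(x)=x(f_v)\leq_s m(f_v)$ for every $x\in\cset{m}$; chaining with the hypothesis yields $f_v(x)\leq_s m(v)$, and since $f_v(x)\in\twod{}$ this forces $f_v(x)=m(v)$ whenever $m(v)\neq\uval$. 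Hence the closure condition holds and $m$ is a trap space.

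For the converse, assume $m$ is a trap space, so the closure condition holds; I must show $m(f_v)\leq_s m(v)$ for each $v$. If $m(v)=\uval$ the inequality is automatic, so fix $v$ with $m(v)\neq\uval$, so that $f_v$ is constantly $m(v)$ on $\cset{m}$; I need $m(f_v)=m(v)$. The \emph{soundness} direction of the three-valued evaluation (if $m(f_v)\in\twod{}$ then $f_v$ is constantly $m(f_v)$ on $\cset{m}$, again a structural induction) immediately excludes $m(f_v)=\neg m(v)$, so the only remaining possibility to rule out is $m(f_v)=\uval$. This last step is the one real obstacle: $m(e)=\uval$ does \emph{not} in general imply that $e$ is non-constant on $\cset{m}$ (e.g.\ $e=v\land\neg v$), so the argument must use that the quantity $m(f_v)$ is to be read as the three-valued image of the update function $f_v$ over $\cset{m}$ (equivalently, exploit the normal form of $f_v$ produced by~\Cref{def:Datalog-2-BN-encoding}) rather than the Kleene value of an arbitrary formula; concretely I would establish, by induction on the DNF structure of $f_v$, that $m(f_v)=\uval$ does force $f_v$ to take both values on $\cset{m}$. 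Once this faithfulness lemma is in place, the two implications compose to give the stated equivalence, and the remaining manipulations (handling $m(v)=\uval$, translating between $\cset{m}$-membership and coordinatewise agreement) are routine.
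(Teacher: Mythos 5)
The paper does not actually prove this proposition itself---it is imported verbatim from the cited reference---so your argument can only be judged on its own terms. Your reduction of ``\(m\) is a trap space'' to the closure condition (\(f_v(x)=m(v)\) for all \(x\in\cset{m}\) and all \(v\) with \(m(v)\neq\uval\)) is correct, and your forward implication is sound: the Kleene evaluation is monotone for the refinement order, so for \(x\in\cset{m}\) one gets \(f_v(x)=x(f_v)\leq_s m(f_v)\leq_s m(v)\), which forces \(f_v(x)=m(v)\) whenever \(m(v)\neq\uval\).

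The converse, however, contains a genuine gap, and you have put your finger on exactly the spot where it breaks. The ``faithfulness lemma'' you propose---that for the DNFs produced by \Cref{def:Datalog-2-BN-encoding}, \(m(f_v)=\uval\) forces \(f_v\) to take both values on \(\cset{m}\)---is false. The paper's own \Cref{exam:Datalog-2-BN-connection} is a counterexample: there \(f_a=b\lor\neg b\) is such a DNF (coming from the two rules \(a\leftarrow b\) and \(a\leftarrow\dng{b}\)), the sub-space \(m_4=\{a=\tval,\ b=\uval,\ c=\uval\}\) is listed as a trap space of \(f\), yet its Kleene value is \(m_4(f_a)=\uval \not\leq_s \tval=m_4(a)\). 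So under the Kleene reading of \(m(f_v)\) (the only three-valued evaluation this paper defines), the ``only if'' direction is not merely hard to prove---it is false, and no induction on the DNF structure can rescue it. The statement is correct only when \(m(f_v)\) is read semantically, as the image of the update function over \(\cset{m}\) (i.e., \(m(f_v)=b\in\twod{}\) if \(f_v\) is constantly \(b\) on \(\cset{m}\), and \(m(f_v)=\uval\) otherwise); with that reading the right-hand side is literally your closure condition and both directions are immediate. You correctly sensed this ambiguity but resolved it in the direction that cannot work: the repair is to adopt the image evaluation, not to prove faithfulness of the Kleene evaluation on DNFs.
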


Building on this characterization, we now relate the supported trap spaces of a \pname to the trap spaces of its corresponding \acbn.

\begin{theorem}\label{theo:Datalog-SuTS-BN-TS}
	Let \(P\) be a \pname and \(f\) be its encoded \acbn\@.
	Then the supported trap spaces of \(P\) coincide with the trap spaces of \(f\).
\end{theorem}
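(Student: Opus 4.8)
The plan is to reduce the claim to an identity between two directed graphs and then unwind definitions. The key observation is that the supported transition graph $\tgsp{P}$ of $P$ is literally the synchronous state transition graph $\sstg{f}$ of $f$. To see this, I would first note that $\var{f} = \hb{P}$ by \Cref{def:Datalog-2-BN-encoding}, so a three-valued interpretation of $P$ is the same object as a sub-space of $f$, a two-valued interpretation is the same as a state, and the operator $\cset{\cdot}$ agrees on both sides. Then I would check that $\rhs{a} = f_a$ for every $a \in \hb{P}$: both are $\bigvee_{r \in \gr{P},\, \head{r} = a}\bodyf{r}$ by the definitions of $\comp{P}$ and of the encoded \acbn, and both degenerate to $\fval$ when no rule has head $a$. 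Consequently, for any two-valued interpretation $J$, $\Top{J}$ is exactly the state $y$ with $y(a) = J(\rhs{a}) = f_a(J)$ for all $a$, which by \Cref{def:BN-SSTG} is precisely the synchronous successor of $J$; hence $\tgsp{P}$ and $\sstg{f}$ have the same vertex set and the same arcs.

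With this identity, the proof is a chain of equivalences. A three-valued interpretation $m$ of $P$ is a supported trap space of $P$ iff $\cset{m}$ is a supported trap set of $P$ (\Cref{def:Datalog-trap-space}), iff $\cset{m}$ is a trap set of $\tgsp{P}$ (\Cref{def:DiG-trap-set} and the observation noted just after it), iff $\cset{m}$ is a trap set of $\sstg{f}$ (by the graph identity), iff $m$, viewed as a sub-space of $f$, is a trap space of $f$. For the last step I would invoke that trap spaces of a \acbn are independent of the employed update scheme~\citep{KBS2015}, so that being closed under the synchronous dynamics is the same as being a trap space; alternatively one simply takes the synchronous reading of ``a sub-space that is also a trap set''.

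I do not expect a genuine obstacle; the only care needed is the bookkeeping that identifies three-valued interpretations with sub-spaces while preserving $\cset{\cdot}$, and the remark that ``trap set of $\sstg{f}$'' is the correct synchronous rendering of ``trap space of $f$''. An alternative, more computational route would avoid the transition-graph identity and instead characterize both sides by the pointwise condition $m(f_v) \leq_s m(v)$ for all $v$: on the \acbn side this is \Cref{prop:BN-char-TS}, and on the \pname side one checks directly that $\cset{m}$ being closed under $\Top{\cdot}$ amounts to $m(\rhs{v}) \leq_s m(v)$ for every $v$. I would nevertheless prefer the graph-identity argument for its brevity and because it makes the correspondence $\tgsp{P} = \sstg{f}$ explicit, which is of independent interest.
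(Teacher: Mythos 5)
Your proposal is correct and follows essentially the same route as the paper: the paper's proof likewise reduces the claim to the identity \(\tgsp{P} = \sstg{f}\), established by noting \(\var{f} = \hb{P}\) and that \(J = \Top{I}\) iff \(J(v) = I(\rhs{v}) = I(f_v)\) for all \(v\), and then concludes by unwinding the definitions of supported trap space and trap space. Your extra remarks on update-scheme independence and the alternative pointwise characterization are fine but not needed; the core argument matches.
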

\begin{proof}
	It is sufficient to show that the supported transition graph of \(P\) is identical to the synchronous state transition graph of \(f\).
	
	By definition, \(\var{f} = \hb{P}\), thus \(V(\tgsp{P}) = V(\sstg{f})\).
	Let \(I\) and \(J\) be two two-valued interpretations of \(P\).
	They are states of \(f\) as well.
	We have \((I, J) \in E(\tgsp{P})\) \\
	iff \(J = T_{P}(I)\) \\
	iff \(J(v) = I(\rhs{v})\) for every \(v \in \hb{P}\) \\
	iff \(J(v) = I(f_v)\) for every \(v \in \var{f}\) \\
	iff \((I, J) \in E(\sstg{f})\).
	This implies that \(E(\tgsp{P}) = E(\sstg{f})\).
\end{proof}

These above results immediately lead us to the model-theoretic characterization of supported trap spaces in \pnames.

\begin{corollary}\label{cor:Datalog-char-SuTS}
	Let \(P\) be a \pname.
	Then a three-valued interpretation \(I\) is a supported trap space of \(P\) iff \(m(\rhs{v}) \leq_s m(v)\) for every \(v \in \hb{P}\).
\end{corollary}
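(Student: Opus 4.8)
The plan is to chain together the two correspondences that have just been established, so that the statement follows by transitivity of ``iff''. First, note that since $\var{f} = \hb{P}$ by \Cref{def:Datalog-2-BN-encoding}, a three-valued interpretation $I$ of $P$ is literally the same object as a sub-space of the encoded \acbn $f$, so it makes sense to speak of $I$ being a trap space of $f$. By \Cref{theo:Datalog-SuTS-BN-TS}, $I$ is a supported trap space of $P$ if and only if $I$ is a trap space of $f$.

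Next, I would invoke \Cref{prop:BN-char-TS}, which characterizes trap spaces order-theoretically: $I$ is a trap space of $f$ if and only if $I(f_v) \leq_s I(v)$ for every $v \in \var{f}$. Combining this with the previous step already gives an ``iff'' characterization of supported trap spaces of $P$ in terms of the Boolean update functions $f_v$.

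The only remaining point is to rewrite $f_v$ as $\rhs{v}$. By \Cref{def:Datalog-2-BN-encoding}, $f_v = \bigvee_{r \in \gr{P},\ v = \head{r}} \bodyf{r}$ (and $f_v = \fval$ when no rule has head $v$), which is precisely the right-hand side of the equivalence for $v$ in $\comp{P}$, i.e.\ $\rhs{v}$; hence $I(f_v) = I(\rhs{v})$ for every $v$, and since $\var{f} = \hb{P}$ the quantifier ``for every $v \in \var{f}$'' becomes ``for every $v \in \hb{P}$''. Substituting, $I$ is a supported trap space of $P$ iff $I(\rhs{v}) \leq_s I(v)$ for every $v \in \hb{P}$, which is the claim. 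I do not expect any genuine obstacle here: the proof is a short transitive chain, and the only care needed is to make the identification of sub-spaces of $f$ with three-valued interpretations of $P$ explicit and to match the two index sets $\var{f}$ and $\hb{P}$.
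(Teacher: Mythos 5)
Your proposal is correct and follows exactly the paper's own route: the paper proves this corollary by citing \Cref{theo:Datalog-SuTS-BN-TS} together with \Cref{prop:BN-char-TS}, and your chain (supported trap space of $P$ iff trap space of $f$, iff $I(f_v) \leq_s I(v)$, with $f_v = \rhs{v}$ by construction) is just that argument written out in full. The extra care you take in identifying $\var{f}$ with $\hb{P}$ and $f_v$ with $\rhs{v}$ is harmless and, if anything, makes the step the paper calls ``immediate'' fully explicit.
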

\begin{proof}
	This immediately follows from~\Cref{theo:Datalog-SuTS-BN-TS} and~\Cref{prop:BN-char-TS}.
\end{proof}

\Cref{cor:Datalog-char-SuTS} shows that supported trap spaces can be characterized in another way that is model-theoretic.
This also turns out that a supported trap space may not be a three-valued model of \(P\) as it considers the order \(\leq_s\), whereas the latter considers the order \(\leq_t\).
In~\Cref{exam:Datalog-trap-spaces}, \(I_2 = \{p = \fval, q = \tval, r = \uval\}\) is a supported trap space, but it is not a three-valued model of \(P\).

This observation highlights a subtle distinction between supported trap spaces and three-valued models, motivating the need to understand their relationships with other semantics.
To that end, we now show that every supported partial model of a \pname is indeed a supported trap space, thereby establishing a one-way implication between these two notions.

\begin{corollary}\label{cor:Datalog-SuPM-is-SuTS}
	Let \(P\) be a \pname.
	If \(I\) is a supported partial model of \(P\), then it is also a supported trap space of \(P\).
\end{corollary}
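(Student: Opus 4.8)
The plan is to route the claim entirely through the Boolean network correspondence already established, so that the statement becomes an immediate consequence of the characterizations of complete trap spaces and trap spaces in the encoded \acbn. First I would let $f$ be the encoded \acbn of $P$ and invoke~\Cref{theo:Datalog-SuPM-BN-CoTS} to rephrase the hypothesis: $I$ being a supported partial model of $P$ means exactly that $I$ is a complete trap space of $f$, i.e.\ $I(f_v) = I(v)$ for every $v \in \var{f}$.

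Next I would observe the one-line order-theoretic fact that drives the whole argument: equality implies $\leq_s$ by reflexivity, so $I(f_v) = I(v)$ gives $I(f_v) \leq_s I(v)$ for every $v \in \var{f}$. By~\Cref{prop:BN-char-TS}, this is precisely the condition for $I$ to be a trap space of $f$. Finally, applying~\Cref{theo:Datalog-SuTS-BN-TS}, which identifies the supported trap spaces of $P$ with the trap spaces of $f$, yields that $I$ is a supported trap space of $P$, completing the proof. (Alternatively, one could stay purely on the logic-programming side and chain~\Cref{cor:Datalog-char-SuTS} with the defining equality $I(a) = I(\rhs{a})$ of a supported partial model; I would mention this as a remark but present the \acbn route as the main argument since it reuses the machinery of the section.)

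There is no genuine obstacle here—the result is essentially a corollary of the fact that a complete trap space is in particular a trap space, transported across the two coincidence theorems. The only thing worth being slightly careful about is bookkeeping: making sure the quantifier ``for every $v \in \var{f}$'' in the \acbn statements lines up with ``for every $a \in \hb{P}$'' in the \pname statements, which is immediate since $\var{f} = \hb{P}$ by~\Cref{def:Datalog-2-BN-encoding}. I would therefore keep the proof to a short three-step \emph{iff}/implication chain rather than expanding any definitions by hand.
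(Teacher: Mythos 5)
Your proposal is correct and follows exactly the paper's own argument: translate via Theorem~\ref{theo:Datalog-SuPM-BN-CoTS} to a complete trap space of the encoded \acbn, note that the equality $I(f_v)=I(v)$ implies the trap-space condition $I(f_v)\leq_s I(v)$ of Proposition~\ref{prop:BN-char-TS}, and translate back via Theorem~\ref{theo:Datalog-SuTS-BN-TS}. Your explicit remark that equality implies $\leq_s$ by reflexivity is a slightly more careful rendering of the step the paper states without comment.
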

\begin{proof}
	Assume that \(I\) is a supported partial model of \(P\).
	Let \(f\) be the encoded \acbn of \(P\).
	By~\Cref{theo:Datalog-SuPM-BN-CoTS}, \(I\) is a complete trap space of \(f\).
	Then \(I\) is a trap space of \(f\) by~\Cref{prop:BN-char-TS}.
	By~\Cref{theo:Datalog-SuTS-BN-TS}, \(I\) is a supported trap space of \(P\).
\end{proof}

We next turn our attention to the stable case and show an analogous result: every stable partial model of a \pname is also a stable trap space.

\begin{proposition}\label{prop:Datalog-StPM-is-StTS}
	Let \(P\) be a \pname.
	If \(I\) is a stable partial model of \(P\), then it is also a stable trap space of \(P\).
\end{proposition}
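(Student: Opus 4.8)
The plan is to reduce the statement to the negative case via the least fixpoint transformation and then to reuse the already-established implication for supported objects, namely \Cref{cor:Datalog-SuPM-is-SuTS}. The point is that both notions in play are well-behaved under $\lfp{\cdot}$: the set of stable partial models is preserved by \Cref{theo:Datalog-lfp-model-equivalence}, and the set of stable trap spaces is preserved by \Cref{cor:Datalog-lfp-StTS}. Moreover, on a negative \pname the stable and supported worlds coincide on both sides of the statement: stable partial models equal supported partial models by \Cref{cor:neg-Datalog-StPM-SuPM}, and stable trap spaces equal supported trap spaces by \Cref{cor:neg-Datalog-StTS-SuTS}. So once we pass to the negative program $\lfp{P}$, the chain ``stable partial model $\rightsquigarrow$ supported partial model $\rightsquigarrow$ supported trap space $\rightsquigarrow$ stable trap space'' closes, and a final appeal to \Cref{cor:Datalog-lfp-StTS} transports the conclusion back to $P$.

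\textbf{Steps.} Concretely, I would argue as follows. Let $I$ be a stable partial model of $P$, and set $Q = \lfp{P}$. By \Cref{prop:LFP-finiteness} together with the fact recalled in the preliminaries that the least fixpoint of a \pname is finite and negative, $Q$ is a negative \pname with $\hb{Q} = \hb{P}$ (so that states, sub-spaces, and $\cset{\cdot}$ coincide for $P$ and $Q$). Then: (i) by \Cref{theo:Datalog-lfp-model-equivalence}, $I$ is a stable partial model of $Q$; (ii) since $Q$ is negative, by \Cref{cor:neg-Datalog-StPM-SuPM} $I$ is a supported partial model of $Q$; (iii) by \Cref{cor:Datalog-SuPM-is-SuTS} applied to $Q$, $I$ is a supported trap space of $Q$; (iv) since $Q$ is negative, by \Cref{cor:neg-Datalog-StTS-SuTS} $I$ is a stable trap space of $Q$; (v) by \Cref{cor:Datalog-lfp-StTS}, $I$ is a stable trap space of $P$, as required.

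\textbf{Main obstacle.} The proof is essentially a composition of already-proven equivalences, so the real work is bookkeeping: making sure each invoked result is applied to $Q = \lfp{P}$ rather than to $P$ (in particular that $Q$ is genuinely negative, which is precisely what unlocks \Cref{cor:neg-Datalog-StPM-SuPM} and \Cref{cor:neg-Datalog-StTS-SuTS}), and that the direction of each coincidence is the one needed. The genuinely hard route, which this detour lets us avoid, would be a direct argument: one would have to show that for every two-valued $J \in \cset{I}$ one has $\Fop{J} \in \cset{I}$, i.e., to compare the $\leq_t$-least two-valued model of the (two-valued) reduct $P^J$ with the $\leq_t$-least three-valued model of the (three-valued) reduct $P^I$, tracking how the clauses and the special atom $\textbf{u}$ of $P^I$ get ``refined'' when the three-valued $I$ is replaced by a consistent two-valued $J$. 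That monotonicity-style comparison is exactly the computation the least fixpoint machinery spares us.
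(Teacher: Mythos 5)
Your proof is correct and follows exactly the same route as the paper's own argument: pass to the negative program \(\lfp{P}\) via \Cref{theo:Datalog-lfp-model-equivalence}, chain through \Cref{cor:neg-Datalog-StPM-SuPM}, \Cref{cor:Datalog-SuPM-is-SuTS}, and \Cref{cor:neg-Datalog-StTS-SuTS}, then transport back with \Cref{cor:Datalog-lfp-StTS}. No gaps; the bookkeeping you flag (applying each lemma to \(\lfp{P}\) rather than \(P\)) is handled correctly.
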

\begin{proof}
	Assume that \(I\) is a stable partial model of \(P\).
	Let \(\lfp{P}\) be the least fixpoint of \(P\). \\
	By~\Cref{theo:Datalog-lfp-model-equivalence}, \(I\) is a stable partial model of \(\lfp{P}\). \\
	By~\Cref{cor:neg-Datalog-StPM-SuPM}, \(I\) is a supported partial model of \(\lfp{P}\) since \(\lfp{P}\) is negative. \\
	By~\Cref{cor:Datalog-SuPM-is-SuTS}, \(I\) is a supported trap space of \(\lfp{P}\). \\
	By~\Cref{cor:neg-Datalog-StTS-SuTS}, \(I\) is a stable trap space of \(\lfp{P}\) since \(\lfp{P}\) is negative. \\
	By~\Cref{cor:Datalog-lfp-StTS}, \(I\) is a stable trap space of \(P\).
\end{proof}

Having established that every supported partial model of a \pname is also a supported trap space, we now examine the converse direction.
Specifically, we show that every supported trap space contains (\wrttext \(\leq_s\)) some supported partial model, thereby revealing a form of approximation from below.
This leads to a further refinement: the notion of \(\leq_s\)-minimality coincide for supported partial models and supported trap spaces.
Together, these results establish a tight correspondence between the two notions in the minimal case.

\begin{corollary}\label{cor:Datalog-SuPM-leq-s-SuTS}
	Let \(P\) be a \pname.
	Then for every supported trap space \(I\) of \(P\), there is a supported partial model \(I'\) of \(P\) such that \(I' \leq_s I\).
\end{corollary}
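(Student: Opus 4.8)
The plan is to obtain this statement as an immediate corollary of results already in hand, by transporting everything to the encoded \acbn. First I would let $f$ be the encoded \acbn of $P$ and take an arbitrary supported trap space $I$ of $P$. By~\Cref{theo:Datalog-SuTS-BN-TS}, $I$ is a trap space of $f$. Then I would invoke~\Cref{lem:BN-TS-inclusion-CoTS} (Proposition 3 of~\cite{TBR2025}) to get a complete trap space $\widehat{I}$ of $f$ with $\widehat{I} \leq_s I$. Finally, by~\Cref{theo:Datalog-SuPM-BN-CoTS}, $\widehat{I}$ is a supported partial model of $P$, so setting $I' = \widehat{I}$ finishes the argument.

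The three links in this chain are exactly the Datalog$^\neg$/\acbn correspondences established earlier: supported trap spaces $\leftrightarrow$ trap spaces, supported partial models $\leftrightarrow$ complete trap spaces, together with the \acbn fact that every trap space lies above a complete trap space in the $\leq_s$ order. I do not anticipate any obstacle here—the statement is a routine composition of~\Cref{theo:Datalog-SuTS-BN-TS},~\Cref{lem:BN-TS-inclusion-CoTS}, and~\Cref{theo:Datalog-SuPM-BN-CoTS}; the only thing to be careful about is that the ordering is preserved verbatim (the $\leq_s$ on three-valued interpretations of $P$ is literally the $\leq_s$ on sub-spaces of $f$, since $\var{f} = \hb{P}$), which is already noted in the preliminaries. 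One could alternatively give a direct proof by iterating the percolation/closure construction inside $P$, but going through the \acbn side is shorter and reuses machinery already proved, so that is the approach I would take.
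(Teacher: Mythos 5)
Your proposal is correct and follows exactly the paper's own proof: transport $I$ to a trap space of the encoded \acbn via~\Cref{theo:Datalog-SuTS-BN-TS}, apply~\Cref{lem:BN-TS-inclusion-CoTS} to obtain a complete trap space below it, and pull back to a supported partial model via~\Cref{theo:Datalog-SuPM-BN-CoTS}. No differences worth noting.
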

\begin{proof}
	Let \(f\) be the encoded \acbn of \(P\). \\
	By~\Cref{theo:Datalog-SuTS-BN-TS}, \(I\) is a trap space of \(f\). \\
	By~\Cref{lem:BN-TS-inclusion-CoTS}, there is a complete trap space \(I'\) of \(f\) such that \(I' \leq_s I\). \\
	By~\Cref{theo:Datalog-SuPM-BN-CoTS}, \(I'\) is a supported partial model of \(P\).
\end{proof}

\begin{corollary}\label{cor:Datalog-min-s-SuTS-min-s-SuPM}
	Consider a \pname \(P\).
	Then a three-valued interpretation \(I\) is a \(\leq_s\)-minimal supported partial model of \(P\) iff \(I\) is a \(\leq_s\)-minimal supported trap space of \(P\).
\end{corollary}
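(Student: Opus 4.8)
The plan is to deduce this corollary purely from the two preceding results, \Cref{cor:Datalog-SuPM-is-SuTS} (every supported partial model is a supported trap space) and \Cref{cor:Datalog-SuPM-leq-s-SuTS} (every supported trap space dominates, \wrttext \(\leq_s\), some supported partial model). Together these say that the supported partial models form a ``cofinal from below'' subset of the supported trap spaces, which is exactly the situation in which the two families share the same \(\leq_s\)-minimal elements. So the proof is a short order-theoretic argument with no new constructions.

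For the forward direction I would take a \(\leq_s\)-minimal supported partial model \(I\). By \Cref{cor:Datalog-SuPM-is-SuTS} it is a supported trap space, so I only need minimality among supported trap spaces. If some supported trap space \(J\) satisfied \(J <_s I\), then \Cref{cor:Datalog-SuPM-leq-s-SuTS} would give a supported partial model \(J'\) with \(J' \leq_s J <_s I\), contradicting the \(\leq_s\)-minimality of \(I\) among supported partial models. Hence \(I\) is a \(\leq_s\)-minimal supported trap space.

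For the backward direction I would take a \(\leq_s\)-minimal supported trap space \(I\). By \Cref{cor:Datalog-SuPM-leq-s-SuTS} there is a supported partial model \(I' \leq_s I\); by \Cref{cor:Datalog-SuPM-is-SuTS} this \(I'\) is again a supported trap space, so minimality of \(I\) forces \(I' = I\), i.e.\ \(I\) itself is a supported partial model. Minimality of \(I\) among supported partial models is then immediate: any supported partial model strictly below \(I\) would, by \Cref{cor:Datalog-SuPM-is-SuTS}, be a supported trap space strictly below \(I\), contradicting the hypothesis.

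I do not anticipate a genuine obstacle here; the only thing to be careful about is invoking the right auxiliary lemma in each direction (using \Cref{cor:Datalog-SuPM-leq-s-SuTS} to push an arbitrary supported trap space down to a supported partial model, and \Cref{cor:Datalog-SuPM-is-SuTS} to lift back up), and ensuring the strictness \(<_s\) is preserved when chaining the inequalities \(J' \leq_s J <_s I\). One could equivalently phrase the whole argument by first proving the cleaner intermediate statement ``\(I\) is a \(\leq_s\)-minimal supported trap space iff \(I\) is a supported partial model that is \(\leq_s\)-minimal among supported partial models,'' but the two-direction presentation above is the most direct route.
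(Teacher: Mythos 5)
Your proof is correct, but it takes a different route from the paper's. The paper proves this corollary as a two-step chain through the encoded Boolean network: \(I\) is a \(\leq_s\)-minimal supported partial model of \(P\) iff \(I\) is a \(\leq_s\)-minimal trap space of \(f\) (by \Cref{cor:Datalog-min-SuPM-BN-min-TS}) iff \(I\) is a \(\leq_s\)-minimal supported trap space of \(P\) (by \Cref{theo:Datalog-SuTS-BN-TS}). You instead argue directly at the level of \DLN, using only \Cref{cor:Datalog-SuPM-is-SuTS} and \Cref{cor:Datalog-SuPM-leq-s-SuTS} and the order-theoretic observation that a subfamily which is both contained in and cofinal from below in a larger family shares its \(\leq_s\)-minimal elements. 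Your argument is essentially the \pname-level analogue of the paper's proof of \Cref{theo:BN-min-CoTS-min-TS}, which carries out the same cofinality reasoning for complete trap spaces versus trap spaces of a \acbn; the paper then only needs to transport that result through the encoding, whereas you redo the sandwich argument after first transporting the two ingredient lemmas. Both routes are valid and of comparable length; the paper's buys a cleaner reuse of the already-established BN-level minimality result, while yours is more self-contained in the sense that it never mentions the Boolean network in the proof of the corollary itself. Your handling of strictness (chaining \(J' \leq_s J <_s I\)) and the use of antisymmetry of \(\leq_s\) to get \(I' = I\) in the backward direction are both sound.
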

\begin{proof}
	Let \(f\) be the encoded \acbn of \(P\). \\
	We have \(I\) is a \(\leq_s\)-minimal supported partial model of \(P\) \\
	iff \(I\) a \(\leq_s\)-minimal trap space of \(f\) by~\Cref{cor:Datalog-min-SuPM-BN-min-TS} \\
	iff \(I\) a \(\leq_s\)-minimal supported trap space of \(P\) by~\Cref{theo:Datalog-SuTS-BN-TS}.
\end{proof}

We now revisit the running example to illustrate the interplay between supported (stable) trap spaces and supported (stable) partial models.
This concrete instance not only highlights the relationships previously established but also provides a direct validation of several key results.
In particular, we demonstrate how supported partial models approximate supported trap spaces from below, and how minimality is preserved across the two notions.

\begin{example}\label{exam:Datalog-TS-PM}
	Let us continue with Example~\ref{exam:Datalog-trap-spaces}.
	The \pname \(P\) has five stable (resp.\ supported) trap spaces: \(I_1\), \(I_2\), \(I_3\), \(I_4\),  and \(I_5\).
	It has three stable (also supported) partial models: \(I_3\), \(I_4\),  and \(I_5\).
	This confirms the correctness of~\Cref{prop:Datalog-StPM-is-StTS} (resp.\ \Cref{cor:Datalog-SuPM-is-SuTS}).
	We have that \(I_4 \leq_s I_1\) and \(I_5 \leq_s I_2\), which confirms the correctness of~\Cref{cor:Datalog-SuPM-leq-s-SuTS}.
	The program \(P\) has two \(\leq_s\)-minimal supported partial models, namely \(I_4\) and \(I_5\), which are also two \(\leq_s\)-minimal supported trap spaces of \(P\).
	This is consistent with~\Cref{cor:Datalog-min-s-SuTS-min-s-SuPM}.
\end{example}

\subsubsection{Regular Model Semantics}

The regular model semantics not only inherits the advantages of the stable partial model semantics but also imposes two notable principles in non-monotonic reasoning: \emph{minimal undefinedness} and \emph{justifiability} (which is closely related to the concept of labeling-based justification in Doyle's truth maintenance system~\citep{D1979}), making it become one of the well-known semantics in logic programming~\citep{YY1994,JNSSY2006}.
Furthermore, regular models in ground \pnames were proven to correspond to preferred extensions in Dung's frameworks~\citep{WCG2009} and assumption-based argumentation~\citep{CS2017}, which are two central focuses in abstract argumentation~\citep{BCG2011}.

In~\Cref{exam:Datalog-TS-PM}, \(I_1\) and \(I_2\) are stable trap spaces, but they are not stable partial models of \(P\).
However, we observed that \(I_4\) and \(I_5\) are \(\leq_s\)-minimal stable trap spaces of \(P\).
They are also regular models of \(P\), i.e., the set of regular models of \(P\) coincides with the set of \(\leq_s\)-minimal stable trap spaces of \(P\), which is really interesting.
This observation can be generalized as follows:

\begin{theorem}[\textbf{main result}]\label{theo:Datalog-RegM-min-s-StTS}
	Let \(P\) be a \pname.
	Then a three-valued interpretation \(I\) is a regular model of \(P\) iff \(I\) is a \(\leq_s\)-minimal stable trap space of \(P\).
\end{theorem}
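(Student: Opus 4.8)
The plan is to reduce the statement to the equality between the $\leq_s$-minimal stable partial models of $P$ and the $\leq_s$-minimal stable trap spaces of $P$, since by definition a regular model is precisely a $\leq_s$-minimal stable partial model. The linchpin will be an ``approximation from below'' fact, the stable counterpart of \Cref{cor:Datalog-SuPM-leq-s-SuTS}: for every stable trap space $I$ of $P$ there is a stable partial model $I'$ of $P$ with $I' \leq_s I$. I would prove this by passing to the least fixpoint $\lfp{P}$, which is a \pname by \Cref{prop:LFP-finiteness} and is negative. By \Cref{cor:Datalog-lfp-StTS}, $I$ is a stable trap space of $\lfp{P}$; by \Cref{cor:neg-Datalog-StTS-SuTS}, $I$ is a supported trap space of $\lfp{P}$; by \Cref{cor:Datalog-SuPM-leq-s-SuTS} applied to $\lfp{P}$, there is a supported partial model $I'$ of $\lfp{P}$ with $I' \leq_s I$; by \Cref{cor:neg-Datalog-StPM-SuPM}, $I'$ is a stable partial model of $\lfp{P}$; and by \Cref{theo:Datalog-lfp-model-equivalence}, $I'$ is a stable partial model of $P$.

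With this auxiliary fact in hand, both directions are short. For the forward direction, let $I$ be a regular model, hence a stable partial model, hence by \Cref{prop:Datalog-StPM-is-StTS} a stable trap space. If $I$ were not $\leq_s$-minimal among stable trap spaces, there would be a stable trap space $I'' <_s I$; applying the auxiliary fact to $I''$ yields a stable partial model $I' \leq_s I'' <_s I$, contradicting the $\leq_s$-minimality of $I$ as a stable partial model. For the backward direction, let $I$ be a $\leq_s$-minimal stable trap space; by the auxiliary fact there is a stable partial model $I' \leq_s I$, and since stable partial models are stable trap spaces (\Cref{prop:Datalog-StPM-is-StTS}), the minimality of $I$ forces $I' = I$, so $I$ is itself a stable partial model. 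It is moreover $\leq_s$-minimal among stable partial models, because any strictly smaller stable partial model would again be a strictly smaller stable trap space, contradicting the minimality of $I$. Hence $I$ is a regular model.

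I expect the main obstacle to be establishing the auxiliary approximation-from-below fact: stable partial models of $P$ are \emph{not} directly complete trap spaces of the encoded \acbn of $P$ (only of the one encoding $\lfp{P}$), so the argument must route through the least fixpoint transformation and crucially exploit that $\lfp{P}$ is negative, which is what collapses the stable and supported worlds there and lets us borrow \Cref{cor:Datalog-SuPM-leq-s-SuTS}. Once that bridge is built, the minimality bookkeeping in the two directions is routine.
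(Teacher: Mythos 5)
Your proof is correct and follows essentially the same route as the paper's: both pass to the least fixpoint \(\lfp{P}\), exploit its negativity to identify the stable and supported notions (\Cref{cor:neg-Datalog-StPM-SuPM}, \Cref{cor:neg-Datalog-StTS-SuTS}), and ultimately rest on \Cref{lem:BN-TS-inclusion-CoTS} via \Cref{cor:Datalog-SuPM-leq-s-SuTS}. The only difference is organizational: the paper packages the minimality transfer into a single chain of equivalences through \Cref{cor:Datalog-min-s-SuTS-min-s-SuPM}, whereas you isolate an explicit approximation-from-below lemma for stable trap spaces of \(P\) and do the minimality bookkeeping by hand; both are sound.
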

\begin{proof}
	Let \(\lfp{P}\) be the least fixpoint of \(P\). \\
	We have \(I\) is a regular model of \(P\) \\
	iff \(I\) is a regular model of \(\lfp{P}\) by~\Cref{theo:Datalog-lfp-model-equivalence} \\
	iff \(I\) is a \(\leq_s\)-minimal stable partial model of \(\lfp{P}\) by definition \\
	iff \(I\) is a \(\leq_s\)-minimal supported partial model of \(\lfp{P}\) by~\Cref{cor:neg-Datalog-StPM-SuPM} and the fact that \(\lfp{P}\) is negative \\
	iff \(I\) is a \(\leq_s\)-minimal supported trap space of \(\lfp{P}\) by~\Cref{cor:Datalog-min-s-SuTS-min-s-SuPM} \\
	iff \(I\) is a \(\leq_s\)-minimal stable trap space of \(\lfp{P}\) by~\Cref{cor:neg-Datalog-StTS-SuTS} \\
	iff \(I\) is a \(\leq_s\)-minimal stable trap space of \(P\) by~\Cref{cor:Datalog-lfp-StTS}.
\end{proof}

By~\Cref{prop:Datalog-two-min-s-TS-inconsistent} and~\Cref{theo:Datalog-RegM-min-s-StTS}, we deduce that two distinct regular models are separated.
Plus~\Cref{cor:Datalog-TS-cover-strict-C}, we deduce that the number of regular models of \(P\) is a lower bound for the number of strict stable classes of \(P\).
To the best of our knowledge, both insights are new in relationships between regular models and stable classes.
In addition, \Cref{prop:Datalog-exist-StTS-SuTS} implies that every \pname \(P\) has at least one regular model.
Note that the proof of the existence of a regular model in an \nlp relies on that of a stable partial model~\citep{YY1994}.

\subsection{Discussion}

In summary, we can first conclude that the notion of stable or supported trap space is a natural extension of the stable (supported) model semantics and the stable (supported) partial model semantics.
It is easy to see that for any \pname, the set of two-valued stable (resp.\ supported) trap spaces coincides with the set of stable (resp.\ supported) models.
By~\Cref{prop:Datalog-StPM-is-StTS} (resp.\ \Cref{cor:Datalog-SuPM-is-SuTS}), a stable (resp.\ supported) partial model is also a stable (resp.\ supported) trap space.

Second, the notion of stable trap space can also be viewed as an intermediate between model-theoretic semantics (the regular model semantics) and dynamical semantics (the stable class semantics).
The regular model semantics somewhat generalizes the main other model-theoretic semantics for \pnames, namely the stable model semantics, the well-founded model semantics, and the stable partial model semantics~\citep{YY1994,P1994}.
It also imposes the principle of minimal undefinedness, i.e., the undefined value should be used only when it is necessary~\citep{YY1994}.
By~\Cref{prop:Datalog-StPM-is-StTS}, the set of stable trap spaces includes the set of stable partial models, and thus also includes the set of regular models.
In addition, by~\Cref{theo:Datalog-RegM-min-s-StTS}, the set of \(\leq_s\)-minimal stable trap spaces coincides with the set of regular models.
Hence, the trap space semantics possesses both the model-theoretic aspect and the principle of minimal undefinedness inherent in the regular model semantics.
The stable class semantics expresses the dynamical aspect of a \pname~\citep{BS1992,BS1993}.
It is also characterized by the stable transition graph of the program~\citep{BS1992}.
The notion of stable trap space is defined based on the stable transition graph of a \pname as well.
Note that a stable class may not be a stable trap space due to the requirement for three-valued interpretations.
However, by~\Cref{cor:Datalog-TS-cover-strict-C}, we know that a stable (resp.\ supported) trap space contains at least one strict stable class, which represents a minimal oscillation between two-valued interpretations, and all the meaningful stable classes of a \pname are strict~\citep{IS2012}.
In particular, the notion of stable trap space reveals a deeper relationship between the regular model semantics and the stable class semantics, namely, that a regular model covers at least one strict stable class and the number of regular models is a lower bound for the number of strict stable classes of a \pname. 

Third, the relationships between \pnames and abstract argumentation have been deeply studied~\citep{Dung1995,CG2009,WCG2009,CSAD2015,CS2017,ASA2019}.
Abstract Argumentation Frameworks (AFs) are the most prominent formalisim for formal argumentation research~\citep{Dung1995,BCG2011}.
Abstract Dialectical Frameworks (ADFs) are more general than AFs, and have attracted much attention~\citep{BCG2011}.
However, some extension-based semantics that exist in AFs or ADFs do not have corresponding counterparts in \pnames.
The new notion of stable or supported trap space helps us to fill this gap.
It has been shown that the admissible sets of an AF correspond to the trap spaces of the respective \acbn~\citep{DDK2024,TBR2025}, and thus by~\Cref{theo:Datalog-SuTS-BN-TS}, the admissible sets of an AF correspond to the supported trap spaces of the respective \pname.
It has been shown that the admissible interpretations of an ADF coincide with the trap spaces of the respective \acbn~\citep{AGRZ2024,HKL2024}, and thus by~\Cref{theo:Datalog-SuTS-BN-TS}, the admissible interpretations of an ADF coincide with the supported trap spaces of the respective \pname.

\section{Conclusion}\label{sec:conclusion}

In this paper, we have established a formal link between \pnames and Boolean network theory in terms of both semantics and structure.
This connection has enabled us to import key concepts and results from the study of discrete dynamical systems into the theory and analysis of \pnames.

By analyzing the atom dependency graph of a \pname, we have identified structural conditions—specifically, the absence of odd or even cycles—that guarantee desirable semantic properties.
In particular, we have proved that:
(i) in the absence of odd cycles, the regular models coincide with the stable models, ensuring their existence;
(ii) in the absence of even cycles, the stable partial models are unique, which entails the \unitext of regular models.
Key to our proofs is the established connection and the existing graphical analysis results in Boolean network theory.
We have also revisited earlier claims made by~\cite{YY1994} regarding (i) and the regular model part of (ii) in normal logic programs.
While their intuition was partially correct, we have identified issues in their formal definition of well-founded stratification.
We have then clarified the scope of applicability to negative normal logic programs, thereby refining the theoretical landscape.

Beyond these structural insights, we have introduced several upper bounds on the number of stable, stable partial, and regular models based on the cardinality of a feedback vertex set in the atom dependency graph of a \pname. 
This provides a novel complexity measure grounded in graph-theoretic properties of \pnames.

Furthermore, we have obtained several stronger graphical analysis results on a subclass of \pnames, namely uni-rule \pnames~\citep{SS1997,CSAD2015}, which are important in the theory of \pnames, as well as being closely related to abstract argumentation frameworks~\citep{CSAD2015}.
These stronger results rely on the notion of delocalizing triple in signed directed graphs~\citep{RR2013}.

Finally, our investigation has led to a conceptual enrichment of \pnames through the notions of stable and supported trap spaces, borrowed from the notion of trap space in Boolean network theory.
We have formalized supported and stable trap spaces in both model-theoretic and dynamical settings, shown their basic properties, and demonstrated their relationships to other existing semantics, in particular, shown that the \(\leq_s\)-minimal stable trap spaces coincide with the regular models.
This correspondence offers a new perspective on the dynamics of \pnames and may open the door to new algorithms for the enumeration of canonical models.

\bibliography{main}

\end{document}